\definecolor{darkblue}{rgb}{0, 0, 0.699219}
\definecolor{darkgreen}{rgb}{0, 0.699219, 0}
\definecolor{darkred}{rgb}{0.699219, 0, 0}
\definecolor{purple}{rgb}{0.5, 0, 0.25}
\definecolor{webbrown}{rgb}{0.601562, 0, 0}
\definecolor{webgreen}{rgb}{0, 0.398438, 0}
\newtheorem{lemma}{{\bf Lemma}}\newtheorem{proposition}{{\bf Proposition}}\newtheorem{corollary}{{\bf Corollary}}\newtheorem{defn}{{\bf Definition}}\newtheorem{obs}{{\bf Observation}}\newtheorem{claim}{{\bf Claim}}\newtheorem{remark}{{\bf Remark}}
\renewcommand{\thesublemma}{\thelemma\Alph{sublemma}}
\newcommand{\tlb}{\underline\theta}
\newcommand{\tub}{\overline{\theta}}
\newcommand{\qopt}{q^{\rm {{OPT}}}}
\newcommand{\qbm}{q^{\rm {{BM}}}}
\newcommand{\popt}{p^{\rm {{OPT}}}}
\newcommand{\uopt}{u^{\rm {{OPT}}}}
\newcommand{\Lopt}{\Lambda^{\rm {OPT}}}
\newcommand{\muopt}{\mu^{\rm {OPT}}}
\title{{\bf Robust Procurement: 
Bayesian Design under Worst-Case Approval Constraints}\thanks{{An earlier version of the paper was circulated under the title ``Robust Procurement Design."} For comments and useful suggestions, we are grateful to seminar participants and discussants at various conferences and workshops where the paper was presented. We are also thankful to the referees
of the 26th ACM Conference on Economics and Computation (EC’25). Debasis Mishra acknowledges financial support from ANRF through Project ANRF/ARGM/2025/00572/QSS.}}
\author{\href{https://www.isid.ac.in/~dmishra/}{Debasis Mishra}, \href{https://sites.google.com/view/sanketrpatil}{Sanket Patil}, and \href{https://faculty.wcas.northwestern.edu/apa522/}{Alessandro Pavan}\thanks{Mishra: Indian Statistical Institute, Delhi, \texttt{dmishra@isid.ac.in}. Patil: Indian Institute of Management Bangalore, \texttt{sanket.patil@iimb.ac.in}. Pavan: Northwestern University, \texttt{alepavan@northwestern.edu}.}}
\begin{document}
\allowdisplaybreaks \maketitle 
\begin{abstract}
We study optimal procurement when a Bayesian designer must obtain
approval from a non-Bayesian authority that shares the designer's
objective but is uncertain about the value of the good and the supplier's
cost. The designer uses a conjectured model to compute expected payoffs
but is constrained to select among mechanisms delivering the largest
payoff guarantee to the authority. This robustness requirement reshapes
the tradeoff between efficiency and rent extraction: it reduces procurement
from sellers with intermediate costs but may increase it from those
with a high cost. When the good is sold in a market, we show that
quantity regulation dominates price regulation if markups under the
conjectured model are large, whereas price regulation dominates when
demand uncertainty is substantial. 
\end{abstract}
\noindent\textsc{Keywords:} Bayesian mechanism design, worst-case
scenarios, robustness, uncertainty, procurement, regulation.

\noindent\textsc{JEL Classification:} D82, L51

\newpage{}


\section{Introduction}

The standard approach to contract design assumes that the designer
has a conjectured model of the environment, derived from data or subjective
beliefs, and selects the mechanism maximizing expected payoff subject
to incentive and participation constraints (the so-called Bayesian
approach). In many applications, however, the designer must obtain
approval from an authority that is skeptical of this model. Such authorities
often require contracts to perform satisfactorily across a range of
scenarios. For example, procurement contracts in the oil and gas sector
must remain viable under conservative price assumptions.

In this paper, we study contract design when a Bayesian designer must
secure approval from a max-min authority. The designer first identifies
mechanisms that maximize the authority’s worst-case payoff over a
set of admissible models and then selects among them using her conjectured
model. We refer to the mechanisms solving this problem as \emph{robustly
optimal}. When the designer and the authority share the same objective,
this formulation also captures a designer who seeks robustness against
her own model mis-specification.

We conduct our analysis in a procurement setting, where buyers (governments
but also private institutions) typically face suppliers with private
information about costs, and where contracts typically require approval
from authorities---regulators, supervisors, or senior management---who
often do not share the designer’s confidence in the model used to
estimate the value of the good and the technology determining the
supplier's cost.\footnote{While the need for approval is particularly salient in procurement,
the robustness approach developed in this paper whereby the designer
uses a conjectured model to select among worst-case optimal mechanisms
applies also to other design problems.}

We show that robustness reshapes the tradeoff between efficiency and
rent extraction. We begin with environments where uncertainty only
concerns the seller’s cost distribution. When the conjectured model
satisfies standard regularity conditions, the robustly optimal mechanism
is as in \citet{BM82} but with a quantity floor: all types must supply
at least the efficient output of the highest-cost type. Relative to
the Bayesian benchmark, robustness increases the procurement from
high-cost sellers and entails efficiency at both extremes of the cost
distribution. These changes protect against the possibility that high-cost
firms occur more frequently than conjectured. Because downward distortions
serve to limit rents for low-cost types, their value diminishes under
cost uncertainty. Procurement from low-cost types, instead, is the
same as under the Bayesian optimum, as welfare when contracting with
these types exceeds the worst-case level even when the conjectured
cost distribution is wrong.

When uncertainty also affects demand, this mechanism remains optimal
under suitable conditions, which we characterize. Otherwise, robustness
leads to reduced procurement from intermediate-cost types, reflecting
the need to guard against demand overestimation. At these cost levels,
the screening motive and the need for approval work in the same direction,
leading to further downward distortions relative to the Bayesian benchmark.

We then study environments with downstream markets, where demand can
be learned by asking sellers to post prices. We characterize robustly
optimal price regulations, under which sellers can choose from a menu
of markups computed under the designer's conjectured model, and where
the final transfer to the seller responds to the realized demand,
making the mechanism ex-post individually rational and incentive compatible.
The markups are the same as under the Bayesian optimum but with a
cap binding for high-cost types. The cap protects the buyer against
the possibility that high-cost sellers are more prevalent than conjectured
by the designer.

Finally, we compare price and quantity regulation. Under standard
Bayesian analysis, price regulation is preferred because it allows
quantities to adjust to realized demand. Under robustness, however,
the two instruments deliver the same worst-case payoff guarantee but
differ in their expected performance under the designer’s model. Quantity
regulation dominates when optimal markups are high, whereas price
regulation performs better when demand uncertainty is substantial.
The intuition is as follows. When markups are high, the designer seeks
to procure little from most sellers, so fixing quantities effectively
limits the buyer’s exposure to the designer's demand mis-specification.
By contrast, price regulation, by capping the seller’s price to protect
the buyer against unexpectedly high costs, forces the designer to
over-procure from high-cost sellers relative to quantity regulation,
thereby reducing welfare.

\medskip{}
 \textbf{Organization}. We conclude the introduction with a brief
discussion of the relevant literature. Section \ref{sec:Model} presents
the environment and the designer's problem. Section \ref{sec:robustly-quantity-mechanisms}
characterizes robustly optimal mechanisms. Section \ref{sec:price_vs_quantity}
extends the analysis to settings with a downstream market, characterizes
robustly optimal price regulations, and identifies conditions under
which quantity regulation dominates (or is dominated by) price regulation.
Section \ref{sec:Conclusions} concludes. Proofs not in the main text
are in the \nameref{Sec:appendix} at the end of the document. \nameref{Sec:OS}
contains extensions of our results to more permissive robustness requirements
and alternative forms of cost uncertainty. \nameref{Sec:additional_supplement}
contains extra material, namely, proof of existence and undomination
of robustly optimal mechanisms, and additional comparative statics
of robustly optimal mechanisms.

\medskip{}

\textbf{Related literature}. A vast literature in information economics
studies optimal mechanisms when agents hold private information and
the designer optimizes over all mechanisms that are incentive compatible
and individually rational under a conjectured model (the Bayesian
approach). The closest work to ours is \citet{BM82}.

Recent work relaxes the key assumptions underlying the Bayesian framework
to develop robust approaches to contract design; \citet{C19} provides
an excellent survey. Closely related contributions include \citet{G14},
\citet{BHM25}, \citet{GS24}, and \citet{K26}. \citet{G14} characterizes
optimal contracts when the designer lacks information about a manager’s
disutility from effort in the \citet{LT86} model. \citet{BHM25}
study robust procurement contracts evaluated by worst-case competitive
ratios. \citet{GS24} analyze minimax-regret design and identify conditions
for the optimality of price-cap regulation under private information
about both demand and costs.\footnote{See also \citet{BS08} for an earlier analysis of minimax-regret design
in monopoly pricing and \citet{S03} for the design of multi-unit
auctions when demand is unknown.} \citet{K26} studies lexicographic worst-case optimality across multiple
beliefs. Our analysis is also lexicographic---restricting the designer's
choice to worst-case optimal mechanisms---but differs in that second-order
beliefs are given by the designer’s conjectured model, a primitive
of the environment. This distinction is important because it implies,
among other things, that the tension between efficiency and rent extraction
persists under robustness concerns. None of the above papers identify
conditions under which quantity regulation outperforms price regulation
under robustness, which is one of our main contributions. Related
are also \citet{BLW25} and \citet{MP25}, who study undominated mechanisms
in auctions and regulation.\footnote{See \nameref{Sec:additional_supplement} for a discussion of the connection
between robust optimality (as defined in this paper) and undomination.} In these papers, the designer treats all admissible models symmetrically,
whereas in our analysis the designer uses the conjectured model to
select among mechanisms that maximize the payoff guarantee. This feature
aligns our approach with \citet{DP22}, who study robust information
design without transfers or screening.

Our paper is also related to the literature on model misspecification
and robust control (see \citet{CVHMM} for an excellent overview).
In that literature, alternative models are weighted by their distance
from the conjectured one.\footnote{See also \citet{BK26} for an alternative approach to robustness where
payoff guarantees are required to be continuous with respect to the
set of admissible models.} In contrast, our designer treats all admissible models symmetrically
when computing the payoff guarantee, but then uses the conjectured
model to select among the worst-case optimal ones. In this respect,
our approach also differs from \citet{MP17}, who study misspecification
in monopolistic screening and show that small errors can generate
large losses.

Finally, in environments with downstream markets, our analysis connects
to \citet{VSP25}, who study markets as instruments for implementation.
\citet{Kang26} compares the worst-case performance of price versus quantity
regulation when the correlation between willingness to pay and marginal
externalities is unknown. By contrast, our comparison hinges on the
equivalence of price and quantity regulation under worst-case optimality
and their divergence under the designer’s conjectured model. See also
\citet{W74} for the classic analysis of price versus quantity regulation
without private information or robustness concerns.

\section{Model}

\label{sec:Model}

\subsection{Environment}

\label{subsec:Environment}

A buyer (a government agency or a private organization) delegates
to an expert the design of a contract for the procurement of a product
or service. The good is supplied by a monopolistic seller who incurs
a cost of $\theta{\rm q}$ for supplying ${\rm q}\ge0$ units of the
good, where $\theta\ge0$ is private information to the seller.\footnote{The results extend to a setting with a fixed cost $c\in(0,\overline{c})$,
where the bound $\overline{c}\in\mathbb{R}_{++}$ ensures that
it is optimal for the buyer to procure a positive quantity even when
the marginal cost is the largest and the value of the good is the
lowest.}

The designer has a model $(V^{\star},F^{\star})$, where the function
$V^{\star}:\mathbb{R}_{+}\rightarrow\mathbb{R}_{+}$ represents the
designer's estimate for the value from procuring the good. The function
$F^{\star}$ represents the estimate for the cost technology, i.e.,
the distribution from which the seller's marginal cost $\theta$ is
drawn; $F^{\star}$ is absolutely continuous over $\mathbb{R}$ with
density $f^{\star}$ strictly positive over $\Theta\equiv[\underline{\theta},\overline{\theta}]\subset\mathbb{R}_{++}$
and $F^{\star}(\underline{\theta})=0$, $F^{\star}(\overline{\theta})=1$,
and the virtual cost $z^{\star}(\theta)\equiv\theta+F^{\star}(\theta)/f^{\star}(\theta)$
is continuous and increasing over $\Theta$.\footnote{Throughout, a function $g$ is said to be increasing (alternatively,
weakly increasing) if $g(x)>g(x')$ (alternatively, $g(x)\ge g(x')$)
whenever $x>x'$. The terms \textit{decreasing} and \textit{weakly
decreasing} are analogously defined. Similarly, when we say that a
function is concave, convex, or quasi-concave, we mean strictly.} The designer is an expected-utility maximizer and evaluates the performance
of any mechanism under the conjectured model $(V^{\star},F^{\star})$.

The buyer does not trust the designer's confidence in the model. She
is concerned that the true value of the good may be given by a function
$V$ different from $V^{\star}$ and the true technology describing
the seller's cost may be given by a distribution $F$ different from
$F^{\star}$. Specifically, the buyer has a set of admissible models
$\mathcal{A}\equiv\mathcal{V}\times\mathcal{F}$, where $\mathcal{V}$
is a set of (gross) value functions and $\mathcal{F}$ is a set of
distributions over $\Theta$. The set $\mathcal{F}$ is the set of
all cdfs whose support is contained in $\Theta$; that is, each $F\in\mathcal{F}$
is a weakly increasing, right-continuous function $F:\mathbb{R}\rightarrow[0,1]$
such that $F(\theta)=0$ for all $\theta<\underline{\theta}$, and
$F(\theta)=1$ for all $\theta\geq\overline{\theta}$. \nameref{Sec:OS}
considers the case in which $\mathcal{F}$ does not coincide with
the set of all cdfs whose support is contained in $\Theta$ and discusses
how the results depend on the lowest element of such a set (in the
main text, the lowest element is a Dirac assigning probability one
to $\overline{\theta}$; in the supplement, we allow the lowest element
of $\mathcal{F}$ to be a non-degenerate cdf).

We assume that, for each $V\in\mathcal{V}$, there is an integrable
function $P:\mathbb{R}_{+}\rightarrow\mathbb{R}_{+}$, interpreted
as the inverse demand function corresponding to $V$, such that, for
all ${\rm q}\ge0$, 
\begin{equation}
V({\rm q})=\int^{{\rm q}}_{0}P(s){\rm d}s.\label{eq:gross-surplus}
\end{equation}
We let $\mathcal{P}$ represent the set of inverse demand functions
and $\mathcal{D}$ the set of direct demand functions corresponding
to the elements of the set $\mathcal{V}$ of value functions. For
each $P\in\mathcal{P}$, there exists a $\mathrm{q}_{P}\in\mathbb{R}_{++}\cup\{+\infty\}$
such that $P$ is decreasing and continuous over $[0,\mathrm{q}_{P})$
and $P(\mathrm{q})=0$ if $\mathrm{q}>\mathrm{q}_{P}$. For each $P\in\mathcal{P}$,
let $D$ denote the corresponding direct demand. For all $\text{p }\in(0,P(0)]$,
$D(\text{p})\equiv P^{-1}(\text{p})$; for all $\mathrm{p}>P(0)$,
$D(\text{p})=0$; and $\lim_{\text{p}\rightarrow0}D(\text{p})=\mathrm{q}_{P}$.

Representing each $V$ as the integral of its inverse demand function
$P$ facilitates the application to monopoly regulation in Section
\ref{sec:price mechanisms}. We assume that $V^{\star}\in\mathcal{V}$
and then denote by $P^{\star}\in\mathcal{P}$ and $D^{\star}\in\mathcal{D}$,
respectively, the inverse and direct demands associated with $V^{\star}$.

The seller can provide any quantity ${\rm q}\in[0,\bar{{\rm q}}]$,
where $\bar{{\rm q}}\in\mathbb{R}_{++}$ is such that $D(\underline{\theta})<\bar{{\rm q}}$
for all $D\in\mathcal{D}$. This assumption ensures it is never optimal
for the buyer to procure more than $\bar{{\rm q}}$ and that the seller's
equilibrium payoff satisfies a familiar envelope-theorem representation.
We also assume that $D^{\star}(z^{\star}(\theta))>0$ for all $\theta\in\Theta$;
the assumption simplifies the exposition by guaranteeing that the
quantity traded under the Bayesian optimal mechanism is strictly positive
for all types.

We assume that there exists a ``smallest'' inverse demand $\underline{P}\in\mathcal{P}$
such that $P({\rm q})\geq\underline{P}({\rm q})$ for all ${\rm q}\geq0$
and all $P\in\mathcal{P}$. The function $\underline{P}$ provides
a lower bound on marginal value and is decreasing and continuous.
The smallest inverse demand function $\underline{P}$ induces the
smallest value function $\underline{V}$. We also assume that $\overline{\theta}<\lim_{{\rm q}\downarrow0}\underline{P}({\rm q})$.
Together with the continuity of $\underline{P}$, this assumption
ensures that there are gains from trade no matter the seller's cost
and the inverse demand.

Lastly, we assume that, once the buyer learns $V$, ex-post adjustments
to output are either infeasible or not worthwhile. Such adjustments
may be prohibitively costly, or the marginal value of additional output
may be too low to justify the extra production cost. In the absence
of these frictions, uncertainty over $V$ is inconsequential.

In summary, $(V^{\star},F^{\star})$ represents the designer's conjectured
model, with $V^{\star}$ capturing the value of procuring output and
$F^{\star}$ capturing the cost technology. The conjectured model
belongs to the admissible set, i.e., $(V^{\star},F^{\star})\in\mathcal{A}$
(this also means that $P^{\star}\in\mathcal{P}$ and $D^{\star}\in\mathcal{D}$).
The buyer fears, however, that the true model may be some alternative
$(V,F)\in\mathcal{A}$.

The designer and the buyer share the same objective of maximizing
the buyer's value from the procured good net of the transfer to the
firm. However, while the designer evaluates any mechanism (described
below) under the conjectured model $(V^{\star},F^{\star})$, the buyer
evaluates it under the worst-case scenario computed over $\mathcal{A}$.
The designer seeks the buyer's approval. She understands that any
mechanism for which the payoff guarantee is not the highest will not
be approved. She thus selects a mechanism that maximizes her expected
payoff (under the model $(V^{\star},F^{\star})$) among those for
which the buyer's payoff guarantee is the highest.

\subsection{Procurement mechanisms}

\label{subsec:Procurement-mechanisms}

We start by considering environments where there are no downstream
markets that can be used to generate information about the value of
the good procured, as in the case of private procurement or the public
procurement of certain types of military equipment not traded in downstream
markets. In these environments, the only viable mechanisms are quantity
mechanisms. In this subsection, we formally define these mechanisms
and then analyze their properties until we introduce price mechanisms
in Section \ref{sec:price mechanisms}.

To elicit the seller's private information and discipline the procurement,
the designer offers a (direct) mechanism $M=(q,t)$. The \textsl{quantity
schedule} $q:\Theta\rightarrow[0,\bar{{\rm q}}]$ specifies the procured
quantity as a function of the reported cost, whereas the \textsl{transfer
schedule} $t:\Theta\rightarrow\mathbb{R}$ specifies the total payment.\footnote{The focus on deterministic mechanisms is without loss of optimality
in our setting. This is because, for any given $\theta$, the buyer's
welfare is concave in $q$. We also adopt a conservative approach
by assuming that Nature's choice of $(V,F)$ can be made contingent
on the designer's choice of the mechanism. As it will become clear
from the analysis below, this assumption, however, is not important.
The mechanisms that solve our design problem remain optimal even if
Nature is expected to choose $(V,F)$ simultaneously with the designer's
choice of $M$. This is because the solution to the designer's problem
is a saddle point.}

The mechanism $M=(q,t)$ is \textsl{incentive compatible} (IC) if,
for all $\theta,\theta'\in\Theta$, 
\[
u(\theta)\equiv t(\theta)-\theta q(\theta)\geq t(\theta')-\theta q(\theta')=u(\theta')+(\theta'-\theta)q(\theta').
\]
It is \textsl{individually rational} (IR) if $u(\theta)\ge0$ for
all $\theta\in\Theta$. Because, given the quantity schedule $q$,
there is a bijection between the transfer schedule $t$ and the utility/rent
schedule $u$, we will often refer to a mechanism by $(q,u)$ instead
of $(q,t)$.

As is standard, $M=(q,u)$ is IC and IR if and only if (a) $q$ is
weakly decreasing, and (b) for all $\theta\in\Theta$, $u(\theta)=u(\overline{\theta})+\intop^{\overline{\theta}}_{\theta}q(y){\rm d}y$,
with $u(\overline{\theta})\ge0$.

Let $\mathcal{M}$ be the set of all IC and IR mechanisms. If the
buyer's gross value is $V\in\mathcal{V}$ and the technology is $F\in\mathcal{F}$,
the buyer's \textsl{ex-ante welfare} under the mechanism $M\equiv(q,u)\in\mathcal{M}$
is 
\[
W(M;V,F)\equiv\int\left(V(q(\theta))-\theta q(\theta)-u(\theta)\right)F(\mathrm{d}\theta),
\]
where $V(q(\theta))-\theta q(\theta)-u(\theta)=V(q(\theta))-t(\theta)$
is the buyer's net value when the cost is $\theta$.

\subsection{Designer's problem}

\label{subsec:Buyer's-problem}

The designer faces a two-step optimization problem.

\noindent\textbf{Step 1 (guarantee maximization)}. The designer first
computes the shortlist of mechanisms for which the guarantee is the
highest. The \textbf{welfare guarantee} of any IC and IR mechanism
$M\in\mathcal{M}$ is given by 
\begin{align*}
G(M)\equiv & \inf_{V\in\mathcal{V},F\in\mathcal{F}}W(M;V,F).
\end{align*}
The \textbf{shortlist} of mechanisms delivering the highest guarantee
is given by 
\[
\mathcal{M}^{{\rm SL}}\equiv\arg\max_{M\in\mathcal{M}}G(M).
\]

Any mechanism $M\in\mathcal{\mathcal{M}^{{\rm SL}}}$ in the shortlist
is thus\textbf{ worst-case optimal}. In \nameref{Sec:OS}, we consider
a more permissive definition of shortlist comprising all mechanisms
for which the guarantee is no smaller than a fraction $\gamma\in[0,1]$
of the maximal attainable guarantee (formally, $\mathcal{M}^{{\rm SL}}(\gamma)\equiv\{M\in\mathcal{M}:G(M)\ge\gamma G(M')~\forall~M'\in\mathcal{M}\}$).\footnote{See also \citet{AC25} for a different problem (without screening)
featuring selection by an expected-utility maximizer over a shortlist
of alternatives defined by a worst-case threshold.}

\noindent\textbf{Step 2 (selection under conjectured model)}. The
designer then selects the mechanism from the shortlist that maximizes
expected welfare under the conjectured model $(V^{\star},F^{\star})$.

\begin{defn}A mechanism $M^{{\rm OPT}}$ is \textbf{robustly optimal}
if 
\begin{align*}
M^{{\rm OPT}}\in\arg\max_{M\in\mathcal{M}^{{\rm SL}}}W(M;V^{\star},F^{\star}).
\end{align*}
\end{defn}

As anticipated in the Introduction, this two-step procedure may either
reflect the need to secure approval from a max-min authority (a regulator,
a policy maker, or a supervisor) facing uncertainty over the set $\mathcal{A}=\mathcal{V}\times\mathcal{F}$
of admissible models, or the designer's response to her own uncertainty
over $\mathcal{A}$, whereby the designer first seeks robustness by
evaluating any mechanism under the worst-case scenario, and then uses
the conjectured model $(V^{\star},F^{\star})$ as a tool to select
from the shortlist of worst-case optimal mechanisms.

\section{Robustly optimal mechanisms}

\label{sec:robustly-quantity-mechanisms}

We begin by deriving the maximal guarantee of an arbitrary IC and
IR mechanism and show that the worst-case welfare need not occur under
the technology that places all probability mass at $\overline{\theta}$,
but always arises under the lowest possible value function $\underline{V}$.\footnote{The optimal mechanisms in this section are also the solution to an
alternative problem in which the designer and the buyer disagree about
the value of the good and their values are equal to $V^{\star}$ and
$\underline{V}$ respectively. The designer is Bayesian with subjective
belief $F^{\star}$ over $\Theta$, whereas the buyer faces uncertainty
over $\Theta$ and has max-min preferences.}

Let 
\[
{\rm q}_{\ell}\equiv\arg\max_{\text{{\rm q}\ensuremath{\in}[0,\ensuremath{\bar{{\rm q}}}]}}\left\{ \underline{V}({\rm q})-\overline{\theta}{\rm q}\right\} =\underline{D}(\overline{\theta})
\]
be the unique quantity that maximizes total surplus when $V=\underline{V}$
and $\theta=\overline{\theta}$; i.e., ${\rm q}_{\ell}$ is the efficient
quantity at the lowest demand and highest cost. Define 
\[
G^{*}\equiv\underline{V}({\rm q}_{\ell})-\overline{\theta}{\rm q}_{\ell},
\]
the total surplus when gross value is lowest, cost is highest, and
the buyer procures ${\rm q}_{\ell}$.

\begin{lemma}[welfare guarantee] \label{lemma-guarantee} For any
IC and IR mechanism $M\equiv(q,u)\in\mathcal{M}$, 
\begin{align}
G(M) & =\inf_{\theta\in\Theta}\left\{ \underline{V}(q(\theta))-\theta q(\theta)-u(\theta)\right\} ~~~~\textrm{and}~~~~G(M)\le G^{*}.\label{eq:wg1}
\end{align}
\end{lemma}

The first part of Lemma \ref{lemma-guarantee} highlights that Nature
can reduce the buyer's welfare more effectively by selecting a cost
$\theta<\overline{\theta}$. Since $q$ is weakly decreasing, the
buyer procures more from lower-cost types; thus, when Nature selects
an inverse demand below $P^{\star}$, the welfare loss from over-procurement
can be larger at low $\theta$.

The second part states that the welfare guarantee of any IC and IR
mechanism cannot exceed the total surplus obtained from procuring
the efficient output ${\rm q}_{\ell}$ when demand is lowest and cost
is highest. This follows because Nature can always choose the lowest
demand and the Dirac distribution that puts probability mass one at
$\tub$, in which case the buyer obtains the maximal welfare by procuring
the quantity ${\rm q}_{\ell}$.

The next lemma shows that the upper bound on the maximal welfare guarantee
is tight and fully characterizes the shortlist $\mathcal{M}^{{\rm SL}}$.

\begin{lemma}[shortlist characterization] \label{lemma-SL-characterization}Take
any IC and IR mechanism $M\equiv(q,u)\in\mathcal{M}$. Then, $M\in\mathcal{M}^{{\rm SL}}$
if and only if (a) $u(\overline{\theta})=0$, and (b), for all $\theta\in\Theta$,
\begin{align}
\underline{V}(q(\theta))-\theta q(\theta)-\int\limits^{\overline{\theta}}_{\theta}q(y){\rm d}y\ge G^{*}.\label{SL-constraint}
\end{align}
\end{lemma}

Worst-case optimality therefore imposes two additional constraints
beyond IC and IR. First, the highest-cost type $\overline{\theta}$
must earn zero rent: otherwise, Nature can select $\underline{V}$
and the Dirac distribution that puts probability mass one at $\tub$,
reducing welfare strictly below $G^{*}$ regardless of $q(\overline{\theta})$.
Consequently, for any $M\in\mathcal{M}^{{\rm SL}}$, the rent schedule
$u$ is pinned down by the output schedule $q$. Second, ex-post welfare
under the lowest possible demand function and zero rent for $\overline{\theta}$
must be weakly above the maximal guarantee $G^{*}$, \emph{for any
cost} $\theta$. Sufficiency follows from Lemma \ref{lemma-guarantee};
necessity is shown by constructing a simple constant mechanism that
attains $G^{*}$, namely one that procures ${\rm q}_{\ell}$ from
all types. This constant mechanism, however, is only one element of
a continuum in the shortlist. To see this, it is useful to reformulate
the robustness constraints in terms of majorization inequalities:

\noindent\begin{lemma}[majorizations] \label{lemma:majorizations}
Take any weakly decreasing function $q:\Theta\rightarrow\mathbb{R}_{+}$.
(1) For every $\theta\in\Theta$, constraint (\ref{SL-constraint})
is equivalent to 
\begin{align}
\int\limits^{\overline{\theta}}_{\theta}q(y){\rm d}y\leq\int\limits^{\overline{\theta}}_{\theta}\underline{D}(y){\rm d}y-\underset{\mathrm{\underline{DWL}}(\theta,q(\theta))\geq0}{\underbrace{\int\limits^{\underline{P}(q(\theta))}_{\theta}\left(\underline{D}(y)-q(\theta)\right){\rm d}y}},\label{eq:strong-major}
\end{align}
where $\mathrm{\underline{DWL}}(\theta,q(\theta))$ is the deadweight
loss incurred under $\underline{V}$ when output $q(\theta)$ is procured
instead of $\underline{D}(\theta)$. (2) The following statements
are equivalent: (a) the inequality in (\ref{eq:strong-major}) holds
for all $\theta\in\Theta$; (b) the inequality in (\ref{eq:strong-major})
holds for $\theta\in\{\underline{\theta},\overline{\theta}\}$ and,
for all $\theta\in(\underline{\theta},\overline{\theta})$, 
\begin{align}
\int\limits^{\overline{\theta}}_{\theta}q(y){\rm d}y\leq\int\limits^{\overline{\theta}}_{\theta}\underline{D}(y){\rm d}y.\label{eq:weak-major}
\end{align}
\end{lemma}

Because $\mathrm{\underline{DWL}}(\theta,q(\theta))\geq0$, the weaker
majorization in (\ref{eq:weak-major}) is implied by the stronger
one in (\ref{eq:strong-major}). However, when (\ref{eq:strong-major})
holds for the lowest and the highest type, and, in addition, the weaker
constraints are satisfied for all the remaining types, all the robustness
constraints are met. The intuition is the following. Because rents
$u(\theta)$ are decreasing in type, when the robustness constraint
is met at the extremes of the type distribution and the quantity procured
from intermediate types is small relative to $\underline{D}$ (in
the sense of (\ref{eq:strong-major})), welfare under the lowest demand
$\underline{V}(q(\theta))-\theta q(\theta)-\int\limits^{\overline{\theta}}_{\theta}q(y){\rm d}y$
exceeds the guarantee $G^{*}$ for all types, even if it is not necessarily
monotone in $\theta$.

It is easy to see that the schedule $q=\underline{D}$ satisfies the
robustness constraint in (\ref{eq:strong-major}). The following observation
is also useful:

\begin{obs}[quantity bound] \label{obs:qell} If $(q,u)\in\mathcal{M}^{{\rm SL}}$,
then $q(\theta)\ge{\rm q}_{\ell}$ for all $\theta\in\Theta$, with
equality at $\overline{\theta}$. \end{obs}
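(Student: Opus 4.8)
The plan is to obtain both assertions essentially as corollaries of Lemma~\ref{lemma-SL-characterization} together with the monotonicity of $q$ implied by incentive compatibility. I would start with the claimed equality at $\overline{\theta}$. Evaluating the robustness constraint \eqref{SL-constraint} at $\theta=\overline{\theta}$ and using $\int_{\overline{\theta}}^{\overline{\theta}}q(y)\,dy=0$, part (b) of Lemma~\ref{lemma-SL-characterization} reads
\[
\underline{V}(q(\overline{\theta}))-\overline{\theta}\,q(\overline{\theta})\ \ge\ G^{*}.
\]
On the other hand, by the definitions of ${\rm q}_{\ell}$ and $G^{*}$, the quantity ${\rm q}_{\ell}=\underline{D}(\overline{\theta})$ is the \emph{unique} maximizer of ${\rm q}\mapsto\underline{V}({\rm q})-\overline{\theta}{\rm q}$ over $[0,\bar{{\rm q}}]$, and $G^{*}=\underline{V}({\rm q}_{\ell})-\overline{\theta}{\rm q}_{\ell}$ is the corresponding maximal value; hence $\underline{V}(q(\overline{\theta}))-\overline{\theta}\,q(\overline{\theta})\le G^{*}$, with equality if and only if $q(\overline{\theta})={\rm q}_{\ell}$. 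Combining the two displays forces $q(\overline{\theta})={\rm q}_{\ell}$.

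For the remaining types I would invoke incentive compatibility: every $M=(q,u)\in\mathcal{M}$ has $q$ weakly decreasing. Since $\overline{\theta}$ is the largest element of $\Theta$, weak monotonicity gives $q(\theta)\ge q(\overline{\theta})$ for all $\theta\in\Theta$, and substituting $q(\overline{\theta})={\rm q}_{\ell}$ from the first step yields $q(\theta)\ge{\rm q}_{\ell}$ for all $\theta\in\Theta$, which completes the argument.

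There is no substantive obstacle here; the observation is a direct consequence of the short-list characterization. The only point meriting a line of care is the uniqueness of the maximizer of $\underline{V}({\rm q})-\overline{\theta}{\rm q}$, which I would justify by the paper's convention that $\underline{V}$ is strictly concave (so the first-order condition $\underline{P}({\rm q})=\overline{\theta}$ has at most one solution), together with the standing assumptions $\overline{\theta}<\lim_{{\rm q}\downarrow0}\underline{P}({\rm q})$ and $\underline{D}(\underline{\theta})<\bar{{\rm q}}$, which place ${\rm q}_{\ell}=\underline{D}(\overline{\theta})$ in the interior of $[0,\bar{{\rm q}}]$ and make it well-defined.
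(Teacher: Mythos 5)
Your proposal is correct and follows essentially the same argument as the paper: evaluate the short-list constraint at $\overline{\theta}$, pair it with the fact that ${\rm q}_{\ell}$ is the unique maximizer of $\underline{V}({\rm q})-\overline{\theta}{\rm q}$ (so $G^{*}$ bounds the surplus from above), conclude $q(\overline{\theta})={\rm q}_{\ell}$, and then use the weak monotonicity of $q$ from incentive compatibility. Your extra remark on why the maximizer is unique is a harmless elaboration of a point the paper takes as given in the definition of ${\rm q}_{\ell}$.
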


Clearly, for the robustness constraint (\ref{SL-constraint}) to hold
at $\overline{\theta}$, it must be that $q(\overline{\theta})={\rm q}_{\ell}$.
The observation then follows from this property along with the fact
that $q$ must be weakly decreasing.

Thus, any mechanism $(q,u)\in\mathcal{M}^{{\rm SL}}$ procures no
less than ${\rm q}_{\ell}$ from any type. For any weakly decreasing
schedule $q$ that procures $q(\theta)\in[{\rm {q}_{\ell}},\underline{D}(\theta)]$
from each type $\theta$, the robustness constraint in (\ref{eq:strong-major})
reduces to\footnote{To see this, note that the RHS in (\ref{eq:strong-major}) reduces
to $\int^{\underline{P}(q(\theta))}_{\theta}q(\theta){\rm d}y$ over
the interval $(\theta,\underline{P}(q(\theta)))$ and to $\int^{\tub}_{\underline{P}(q(\theta))}\underline{D}(y){\rm d}y$
over the interval $(\underline{P}(q(\theta)),\tub)$.} 
\[
\int\limits^{\overline{\theta}}_{\theta}q(y){\rm d}y\le\int\limits^{\overline{\theta}}_{\theta}\min\{\underline{D}(y),q(\theta)\}{\rm d}y,
\]
which is always satisfied, and, consequently, such a $q$ corresponds
to a mechanism in the shortlist. Moreover, $\mathcal{M}^{{\rm SL}}$
also includes mechanisms that procure more than the efficient output
$\underline{D}(\theta)$ for some types---a feature that plays a
key role in the analysis below.

Using Lemma \ref{lemma-SL-characterization} and the standard ``virtual
surplus'' representation of total welfare under the conjectured model
$(V^{\star},F^{\star})$ (see \citet{BM82}), the designer's problem
reduces to\footnote{Existence of a solution to the program in (\ref{opt:ropt}) follows
from the extreme-value theorem after establishing that the set of
quantity schedules $q$ satisfying the constraints is sequentially
compact and the objective function in (\ref{opt:ropt}) is sequentially
continuous under the point-wise convergence topology. See Section
\ref{sec:existence} in \nameref{Sec:additional_supplement}.} 
\begin{align}
 & ~~~~\max_{q}\int\limits^{\overline{\theta}}_{\underline{\theta}}\Big[V^{\star}(q(\theta))-z^{\star}(\theta)q(\theta)\Big]F^{\star}(\text{d}\theta)~~~~\tag{\textbf{ROPT}}\label{opt:ropt}\\
\textrm{} & \textrm{\textrm{subject to~\ensuremath{q}~} weakly decreasing,}~\mathrm{\textrm{and}}\nonumber \\
 & \underline{V}(q(\theta))-\theta q(\theta)-\int\limits^{\overline{\theta}}_{\theta}q(y){\rm d}y\ge G^{*}\qquad\forall~\theta\in\Theta.\label{eq:robustness}
\end{align}
Hereafter, we will often use Lemma \ref{lemma:majorizations} to express
the robustness constraints in (\ref{eq:robustness}) as majorization
constraints. Relative to \citet{BM82}, this program incorporates
the additional robustness constraint requiring that, for each $\theta\in\Theta$,
ex-post welfare under the lowest possible demand exceeds the guarantee
$G^{*}$. This constraint is not standard because it combines the
quantity procured from type $\theta$ with the integral of the quantity
procured from all higher types.

We denote a quantity schedule solving (\ref{opt:ropt}) as $\qopt$
and let $M^{{\rm OPT}}\equiv(\qopt,\uopt)$ be the mechanism corresponding
to it, with $\uopt$ given by $\uopt(\theta)=\int\limits^{\overline{\theta}}_{\theta}\qopt(y){\rm d}y$
for all $\theta$.

\subsection{Baron-Myerson-with-quantity-floor}

\noindent We start with a natural candidate solution of (\ref{opt:ropt})
and provide necessary and sufficient conditions under which it is
robustly optimal. Let $\qbm$ be the \emph{Baron-Myerson} quantity
schedule, defined, for all $\theta$, by 
\[
\qbm(\theta)\equiv\arg\max_{{\rm q}\in[0,\bar{{\rm q}}]}\Big[V^{\star}({\rm q})-z^{\star}(\theta){\rm q}\Big]=D^{\star}(z^{\star}(\theta)).
\]

\begin{defn} The \textbf{Baron-Myerson-with-quantity-floor} mechanism
$M^{\star}\equiv(q^{\star},u^{\star})$ is such that, for all $\theta$,\footnote{\noindent Observe that $\qbm(\underline{\theta})=D^{\star}(\underline{\theta})\ge\underline{D}(\underline{\theta})>\underline{D}(\overline{\theta})={\rm q}_{\ell}$.
On the other hand, $\qbm(\overline{\theta})=D^{\star}(z^{\star}(\overline{\theta}))$
can be smaller than $\underline{D}(\overline{\theta})={\rm q}_{\ell}$
given that $z^{\star}(\overline{\theta})>\overline{\theta}$.} 
\begin{equation}
q^{\star}(\theta)=\max\{q^{{\rm {BM}}}(\theta),{\rm q}_{\ell}\}\label{eq:BM-floor}
\end{equation}
and $u^{\star}(\theta)=\intop^{\overline{\theta}}_{\theta}q^{\star}(y){\rm d}y$.
\end{defn}

\begin{proposition}[optimality of Baron-Myerson-with-quantity-floor]
\label{prop:BM-floor} The mechanism $M^{\star}\equiv(q^{\star},u^{\star})$
is robustly optimal if and only if 
\begin{align}
\int\limits^{\overline{\theta}}_{\underline{\theta}}q^{\star}(y){\rm d}y & \leq\int\limits^{\overline{\theta}}_{\underline{\theta}}\underline{D}(y){\rm d}y-\int\limits^{\underline{P}(q^{\star}(\underline{\theta}))}_{\underline{\theta}}\Big[\underline{D}(y)-q^{\star}(\underline{\theta})\Big]{\rm d}y,\label{eq:robustness-bottom}\\
\int\limits^{\overline{\theta}}_{\theta}q^{\star}(y){\rm d}y & \le\int\limits^{\overline{\theta}}_{\theta}\underline{D}(y){\rm d}y~\qquad~\forall~\theta\in\Theta.\label{eq:weak-majorization}
\end{align}
\end{proposition}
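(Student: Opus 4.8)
The plan is to establish two equivalences: (i) $M^\star$ is robustly optimal if and only if $M^\star\in\mathcal{M}^{\rm SL}$; and (ii) $M^\star\in\mathcal{M}^{\rm SL}$ if and only if both \eqref{eq:robustness-bottom} and \eqref{eq:weak-majorization} hold. For (i), note first that $q^\star=\max\{\qbm,{\rm q}_{\ell}\}$ is weakly decreasing and $u^\star(\overline{\theta})=0$, so $M^\star\in\mathcal{M}$, and the ``only if'' part is then immediate from the definition of robust optimality. For ``if'', recall that every $M=(q,u)\in\mathcal{M}^{\rm SL}$ has $u(\overline{\theta})=0$ by Lemma \ref{lemma-SL-characterization}, hence $u(\theta)=\int_\theta^{\overline{\theta}}q(y)\,dy$, so that the standard virtual-surplus representation gives $W(M;V^\star,F^\star)=\int_{\underline{\theta}}^{\overline{\theta}}\big[V^\star(q(\theta))-z^\star(\theta)q(\theta)\big]F^\star(\mathrm{d}\theta)$; moreover, $q$ takes values in $[{\rm q}_{\ell},\bar{{\rm q}}]$ by Observation \ref{obs:qell}. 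Since $q\mapsto V^\star(q)-z^\star(\theta)q$ is concave with unconstrained maximizer $\qbm(\theta)=D^\star(z^\star(\theta))\le D^\star(\underline{\theta})<\bar{{\rm q}}$, its maximizer over $[{\rm q}_{\ell},\bar{{\rm q}}]$ is exactly $q^\star(\theta)$; hence if $M^\star\in\mathcal{M}^{\rm SL}$ it maximizes $W(\cdot;V^\star,F^\star)$ over $\mathcal{M}^{\rm SL}$, i.e.\ it is robustly optimal.

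For (ii), I would use Lemma \ref{lemma-SL-characterization} together with the reformulation \eqref{SL-constraint-rewritten} (Lemma \ref{lemma:majorizations}) to restate $M^\star\in\mathcal{M}^{\rm SL}$ as: $\psi(\theta)\ge G^*$ for all $\theta\in\Theta$, where $\psi(\theta)\equiv\underline{V}(q^\star(\theta))-\theta q^\star(\theta)-\int_\theta^{\overline{\theta}}q^\star(y)\,dy$. The ``$\Rightarrow$'' direction is immediate: $\psi(\underline{\theta})\ge G^*$ is precisely \eqref{eq:robustness-bottom}, and discarding the nonnegative deadweight-loss term in \eqref{SL-constraint-rewritten} yields \eqref{eq:weak-majorization}. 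For ``$\Leftarrow$'', I would first observe that \eqref{eq:weak-majorization} applied near $\overline{\theta}$ forces $\qbm(\overline{\theta})\le{\rm q}_{\ell}$ (otherwise $q^\star=\qbm>\underline{D}$ on a left-neighborhood of $\overline{\theta}$, contradicting \eqref{eq:weak-majorization}); consequently there is a threshold $\theta^\dagger\in(\underline{\theta},\overline{\theta}]$ with $q^\star=\qbm$ on $[\underline{\theta},\theta^\dagger]$ and $q^\star\equiv{\rm q}_{\ell}$ on $[\theta^\dagger,\overline{\theta}]$, and on the flat part, substituting $\underline{P}({\rm q}_{\ell})=\overline{\theta}$ gives $\psi\equiv G^*$.

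It then remains to control $\psi$ on $[\underline{\theta},\theta^\dagger]$, where $q^\star=\qbm$. A Riemann--Stieltjes computation---legitimate because $\underline{V}$ is $C^1$ and $q^\star$ is continuous and of bounded variation---gives $\psi(\theta_2)-\psi(\theta_1)=\int_{\theta_1}^{\theta_2}\big(\underline{P}(q^\star(s))-s\big)\,dq^\star(s)$; since $\operatorname{sign}\big(\underline{P}(q^\star(s))-s\big)=\operatorname{sign}\big(\underline{D}(s)-\qbm(s)\big)$ and $dq^\star\le 0$, the map $\psi$ is non-decreasing on intervals where $\qbm\ge\underline{D}$ and non-increasing where $\qbm\le\underline{D}$. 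The key point is that wherever $\qbm(\theta)=\underline{D}(\theta)$ the deadweight-loss term in \eqref{SL-constraint-rewritten} vanishes, so there $\psi(\theta)=G^*+\int_\theta^{\overline{\theta}}\big(\underline{D}(y)-q^\star(y)\big)\,dy\ge G^*$ by \eqref{eq:weak-majorization}. Given any $\theta_0$, move toward larger $\theta$ if $\qbm(\theta_0)\le\underline{D}(\theta_0)$ and toward smaller $\theta$ otherwise; along this path $\psi$ is non-increasing until it first reaches a point where $\qbm=\underline{D}$, or the endpoint $\underline{\theta}$, or $\theta^\dagger$---at all of which $\psi\ge G^*$ (by \eqref{eq:weak-majorization}, by \eqref{eq:robustness-bottom}, and by direct computation, respectively). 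Hence $\psi(\theta_0)\ge G^*$, completing (ii) and the proposition.

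I expect the main obstacle to be the ``$\Leftarrow$'' direction of (ii): $\qbm$ may cross $\underline{D}$ several times, so $\psi$ is not globally monotone, and one must verify that \emph{every} interior local minimum of $\psi$ is at least $G^*$. This is exactly what the ``for all $\theta$'' form of \eqref{eq:weak-majorization} delivers, since those minima occur precisely at the points where $\qbm=\underline{D}$, i.e.\ where the deadweight-loss term is zero; a single inequality at $\underline{\theta}$ would not be enough. A secondary, purely technical point is to justify the monotonicity of $\psi$ without differentiability of $\underline{P}$ or $\qbm$, which is why I would phrase the argument through the Stieltjes identity rather than through a derivative of $\psi$.
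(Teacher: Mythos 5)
Your proof is correct, and its skeleton coincides with the paper's: you reduce robust optimality to membership of $M^{\star}$ in $\mathcal{M}^{{\rm SL}}$ by noting that $q^{\star}$ pointwise maximizes virtual surplus over $[{\rm q}_{\ell},\bar{{\rm q}}]$ (the paper phrases this as $q^{\star}$ solving the relaxed program (\ref{opt:rel}), using Observation \ref{obs:qell}), and your necessity direction is the same drop-the-deadweight-loss argument. Where you genuinely depart from the paper is the sufficiency step. The paper invokes its general Lemma \ref{lemma:majorizations} (equivalence of conditions (1)--(3) for an \emph{arbitrary} weakly decreasing $q$), whose proof runs through the monotonicity Lemma \ref{Lem-Mono} and the integral-comparison Claims \ref{cl:case1}--\ref{cl:case2}, and then only needs the extra observation that \eqref{eq:weak-majorization} forces $q^{\star}(\overline{\theta})={\rm q}_{\ell}$ (your remark that it forces $\qbm(\overline{\theta})\le{\rm q}_{\ell}$ is the same point). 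You instead prove sufficiency directly for $q^{\star}$, exploiting its continuity and the explicit $\max\{\qbm,{\rm q}_{\ell}\}$ structure: the Stieltjes identity $d\psi=(\underline{P}(q^{\star}(\theta))-\theta)\,dq^{\star}(\theta)$ reproduces the content of Lemma \ref{Lem-Mono} in one line (legitimate here, since $q^{\star}$ is continuous and of bounded variation and $\underline{V}$ is $C^{1}$), and your travel-to-a-crossing-point argument---with $\psi\ge G^{*}$ at crossings of $\underline{D}$ (where the deadweight loss vanishes, so \eqref{eq:weak-majorization} applies), at $\underline{\theta}$ (by \eqref{eq:robustness-bottom}), and on the flat part $[\theta^{\dagger},\overline{\theta}]$ (where $\psi\equiv G^{*}$)---replaces the paper's Claims \ref{cl:case1} and \ref{cl:case2}. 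The trade-off: the paper's lemma covers arbitrary, possibly discontinuous, decreasing schedules and is reused in the proofs of Proposition \ref{prop:robust-quantity-mechanism-general} and Lemma \ref{lem:boundthetam}, whereas your argument is shorter and self-contained but hinges on the continuity of $q^{\star}$ (guaranteed by regularity), so it delivers exactly this proposition rather than the reusable general statement.
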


The two conditions in the proposition are equivalent to the requirement
that the quantity schedule $q^{\star}$ satisfies all the robustness
constraints in (\ref{SL-constraint}). Because $q^{\star}$ maximizes
virtual surplus over all weakly decreasing schedules $q$ satisfying
$q(\theta)\geq{\rm q}_{\ell}$ for all $\theta$, Baron-Myerson-with-quantity-floor
is robustly optimal when, and only when, these conditions hold.

Note that the majorization constraints in (\ref{eq:weak-majorization})
are easier to verify than the constraints in (\ref{SL-constraint}).
In practice, it suffices to verify them at those points in which the
schedule $q^{\star}$ crosses $\underline{D}$ from below.\footnote{In fact, because $q^{\star}$ is weakly decreasing, Lemma \ref{Lem-Mono}
in the \nameref{Sec:appendix} implies that the function $\underline{W}(\theta,q^{\star})=\underline{V}(q^{\star}(\theta))-\theta q^{\star}(\theta)-\int\limits^{\overline{\theta}}_{\theta}q^{\star}(y){\rm d}y$
is weakly decreasing (alternatively, weakly increasing) over an interval
of types $I\subset\Theta$ if $0<q^{\star}(\theta)\le\underline{D}(\theta)$
(alternatively, $q^{\star}(\theta)>\underline{D}(\theta)$) for all
$\theta\in I$. Because, for all $\theta\in\Theta$, $q^{\star}(\theta)>0$,
when Condition (\ref{eq:robustness-bottom}) holds, to verify that
Condition (\ref{eq:weak-majorization}) also holds, it suffices to
check that it holds at points in which the schedule $q^{\star}$ crosses
$\underline{D}$ from below. If there are no such points, and (\ref{eq:robustness-bottom})
holds, then all the robustness constraints in (\ref{SL-constraint})
hold and $M^{\star}$ is robustly optimal.} For example, in Figure \ref{fig:BM-floor}, it suffices to verify
that $q^{\star}$, in addition to satisfying Condition (\ref{eq:robustness-bottom}),
satisfies Condition (\ref{eq:weak-majorization}) at $\theta_{1}$.

\begin{figure}
\centering \includegraphics[width=0.6\linewidth]{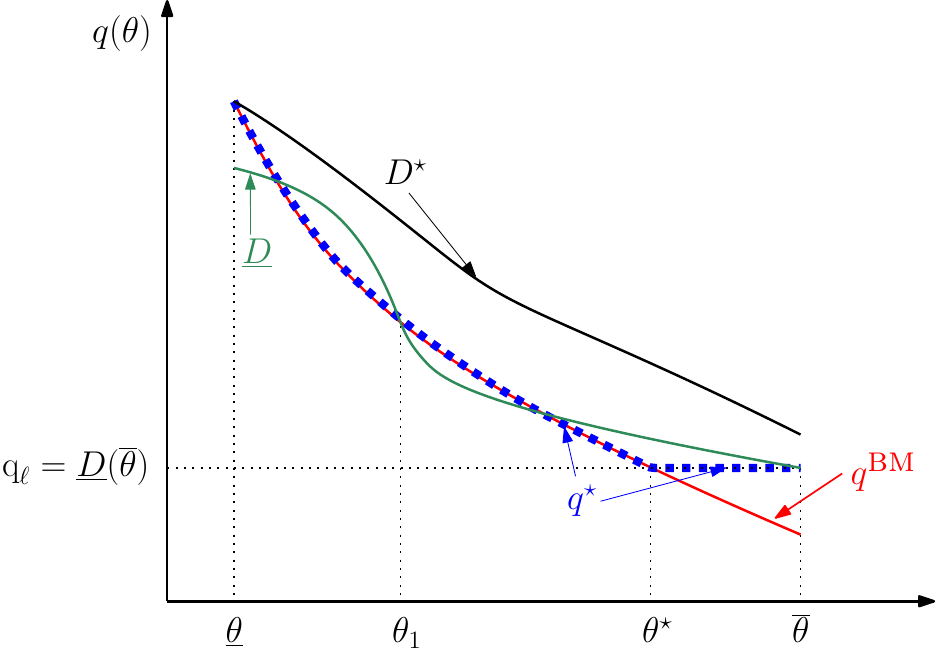}
\caption{Baron-Myerson-with-quantity-floor.}
\label{fig:BM-floor} 
\end{figure}

\begin{corollary}[uniqueness of output schedule] \label{cor:qstaropt}
If $M^{\star}=(q^{\star},u^{\star})$ is robustly optimal, then any
robustly optimal mechanism $M^{{\rm OPT}}=(\qopt,\uopt)$ satisfies
$\qopt(\theta)=q^{\star}(\theta)$ for all $\theta>\underline{\theta}$.
\end{corollary}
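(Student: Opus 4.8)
The plan is to exploit the virtual-surplus representation behind \eqref{opt:ropt}: on the short list, welfare under the conjectured model is an integral of a pointwise-separable, strictly concave integrand against a strictly positive density, so the optimizer is essentially unique. First I would record the key pointwise fact. For each fixed $\theta$, the map ${\rm q}\mapsto V^\star({\rm q})-z^\star(\theta){\rm q}$ is strictly concave (as $V^\star$ is strictly concave), with unconstrained maximizer over $[0,\bar{{\rm q}}]$ equal to $\qbm(\theta)=D^\star(z^\star(\theta))$. Hence $q^\star(\theta)=\max\{\qbm(\theta),{\rm q}_{\ell}\}$ is the \emph{unique} maximizer of this map over the interval $[{\rm q}_{\ell},\bar{{\rm q}}]$: if $\qbm(\theta)\ge{\rm q}_{\ell}$ this is the unconstrained maximizer, which lies in the interval; if $\qbm(\theta)<{\rm q}_{\ell}$, strict concavity makes the map strictly decreasing on $[{\rm q}_{\ell},\bar{{\rm q}}]$, so the constrained maximizer is ${\rm q}_{\ell}$. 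I would also note that $q^\star$ is continuous, being the pointwise maximum of the constant ${\rm q}_{\ell}$ and the composition $D^\star\circ z^\star$ of continuous functions (continuity of $D^\star$ follows from $P^\star$ being continuous and strictly decreasing, continuity of $z^\star$ from regularity of $F^\star$).

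Next, fix any robustly optimal $M^{{\rm OPT}}=(\qopt,\uopt)$. Since $M^\star$ is robustly optimal by hypothesis, both $M^\star$ and $M^{{\rm OPT}}$ lie in $\mathcal{M}^{{\rm SL}}$, so by Lemma \ref{lemma-SL-characterization} they both have zero rent at $\overline\theta$; consequently, by the computation leading to \eqref{opt:ropt}, for each of them $W(\cdot;V^\star,F^\star)=\int_{\Theta}\big[V^\star(q(\theta))-z^\star(\theta)q(\theta)\big]f^\star(\theta)\,d\theta$. Robust optimality of both mechanisms forces these two integrals to be equal. By Observation \ref{obs:qell}, $\qopt(\theta)\ge{\rm q}_{\ell}$ for all $\theta$, and of course $\qopt(\theta)\le\bar{{\rm q}}$, so by the pointwise fact above the integrand associated with $\qopt$ is, at every $\theta$, no larger than the one associated with $q^\star$, with equality precisely when $\qopt(\theta)=q^\star(\theta)$. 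Therefore the nonnegative function $\theta\mapsto\big[V^\star(q^\star(\theta))-z^\star(\theta)q^\star(\theta)\big]-\big[V^\star(\qopt(\theta))-z^\star(\theta)\qopt(\theta)\big]$ has zero integral against $f^\star>0$ and hence vanishes Lebesgue-almost everywhere on $\Theta$; that is, $\qopt=q^\star$ a.e.

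It remains to upgrade this to pointwise equality on $(\underline\theta,\overline\theta]$. For $\theta\in(\underline\theta,\overline\theta)$, choose sequences $s_n\uparrow\theta$ (with $s_n>\underline\theta$) and $t_n\downarrow\theta$ along which $\qopt$ and $q^\star$ agree, which is possible because the exceptional set is null. Monotonicity of $\qopt$ (IC) gives $q^\star(s_n)=\qopt(s_n)\ge\qopt(\theta)\ge\qopt(t_n)=q^\star(t_n)$; letting $n\to\infty$ and invoking continuity of $q^\star$ squeezes $\qopt(\theta)=q^\star(\theta)$. At $\theta=\overline\theta$, Observation \ref{obs:qell} directly gives $\qopt(\overline\theta)={\rm q}_{\ell}=q^\star(\overline\theta)$. (The endpoint $\underline\theta$ is genuinely exceptional: there are no admissible points to its left for the squeeze, and $\qopt(\underline\theta)$ is pinned down only up to monotonicity and the single robustness constraint at $\underline\theta$.) I expect the only delicate step to be this last measure-theoretic upgrade — together with the accompanying observation that $q^\star$ is continuous; everything else is the routine ``pointwise maximization of a separable strictly concave objective'' argument.
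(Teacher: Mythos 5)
Your proposal is correct and takes essentially the same route as the paper's own proof: the paper also observes that $q^{\star}$ solves the relaxed program in which the robustness constraints are replaced by $q(\theta)\ge{\rm q}_{\ell}$ (using Observation \ref{obs:qell}), so that strict concavity of the pointwise virtual surplus forces any robustly optimal $\qopt$ to coincide with $q^{\star}$ almost everywhere, and then upgrades this to all $\theta>\underline{\theta}$ via continuity of $q^{\star}$ (together with monotonicity of $\qopt$). The paper is simply terser; your two-sided squeeze and the endpoint treatment at $\overline{\theta}$ are exactly the steps it leaves implicit.
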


\noindent \emph{Proof}: If $q^{\star}$ solves (\ref{opt:ropt}), then it also
solves the relaxed program where the objective function is the same
as in (\ref{opt:ropt}) and the constraints in (\ref{opt:ropt}) are
replaced by $q(\theta)\ge{\rm q}_{\ell}$ for all $\theta\in\Theta$
(see Observation \ref{obs:qell}). Any solution to this relaxed program
must coincide with $q^{\star}$ at almost all $\theta$. Because $q^{\star}$
is continuous, it must in fact coincide at all $\theta>\underline{\theta}$.
$\hfill Q.E.D.$

Proposition \ref{prop:BM-floor} can also be used to obtain easy-to-verify
sufficient conditions for the Baron-Myerson-with-quantity-floor mechanism
to be robustly optimal. For every $\theta\in\Theta$, let 
\begin{align}
d(\theta) & \equiv P^{\star}(\underline{D}(\theta))-\theta\label{eq:def-d}
\end{align}
denote the distance between the two \emph{inverse demand} functions
$P^{\star}$ and $\underline{P}$ at ${\rm q}=\underline{D}(\theta)$.\footnote{Note that $d(\theta)=P^{\star}(\underline{D}(\theta))-\underline{P}(\underline{D}(\theta))$.
Equivalently, $d(\theta)$ is the unique real solution to $D^{\star}(\theta+x)=\underline{D}(\theta)$.}

\begin{corollary}[large downward distortions] \label{cor:rotate0}
Suppose $F^{\star}(\theta)/f^{\star}(\theta)\ge d(\theta)$ for all
$\theta\in\Theta$. Then, $M^{\star}=(q^{\star},u^{\star})$ is robustly
optimal. \end{corollary}

\noindent \emph{Proof}: By definition, $d(\theta)\ge0$ for all $\theta\in\Theta$.
Note that $F^{\star}(\underline{\theta})/f^{\star}(\underline{\theta})=0$.
The condition in the corollary thus implies that $d(\underline{\theta})=0$.
Hence, $P^{\star}(\underline{D}(\underline{\theta}))=\underline{\theta}$,
and therefore, $\underline{D}(\underline{\theta})=D^{\star}(\underline{\theta})$.
Because $q^{\star}(\underline{\theta})=\qbm(\underline{\theta})=D^{\star}(\underline{\theta})=\underline{D}(\underline{\theta})$,
we have that $\underline{P}(q^{\star}(\underline{\theta}))=\underline{\theta}$,
and the two conditions in Proposition \ref{prop:BM-floor} reduce
to the requirement that $\int\limits^{\overline{\theta}}_{\theta}[\underline{D}(y)-q^{\star}(y)]{\rm d}y\geq0$
for all $\theta\in\Theta$. Next, observe that, for all $\theta\in\Theta$,
\[
\qbm(\theta)=D^{\star}\Big(\theta+\frac{F^{\star}(\theta)}{f^{\star}(\theta)}\Big)\le D^{\star}\Big(\theta+d(\theta)\Big)=\underline{D}(\theta),
\]
where the inequality follows because $D^{\star}$ is decreasing and
$F^{\star}(\theta)/f^{\star}(\theta)\ge d(\theta)$, whereas the last
equality follows from the definition of $d(\theta)$. Thus, for every
$\theta\in\Theta$, we have that $q^{\star}(\theta)=\max\{\qbm(\theta),{\rm q}_{\ell}\}=\max\{\qbm(\theta),\underline{D}(\overline{\theta})\}\le\underline{D}(\theta)$.
Hence, $q^{\star}$ satisfies the conditions in Proposition \ref{prop:BM-floor}.
$\hfill Q.E.D.$

The following is also an immediate implication of the above results:

\begin{corollary}[no demand uncertainty] \label{cor:onlycost}
If $V^{\star}=\underline{V}$ (which is the case when there is no
demand uncertainty, i.e., when $\mathcal{V}=\{V^{\star}\}$), then
$M^{\star}=(q^{\star},u^{\star})$ is robustly optimal. In this case,
the optimal mechanism features efficiency both at ``the top'' $(\underline{\theta})$
and ``the bottom'' $(\overline{\theta})$ of the type distribution.
\end{corollary}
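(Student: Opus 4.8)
The plan is to deduce the corollary directly from Proposition \ref{prop:BM-floor}: when $V^\star=\underline V$, I verify that the schedule $q^\star$ satisfies both (\ref{eq:robustness-bottom}) and (\ref{eq:weak-majorization}), which by the proposition is equivalent to robust optimality of $M^\star$. Since $V^\star=\underline V$ means $P^\star=\underline P$ and $D^\star=\underline D$, the Baron--Myerson schedule becomes $\qbm(\theta)=D^\star(z^\star(\theta))=\underline D(z^\star(\theta))$. The first step is to note that $z^\star(\theta)=\theta+F^\star(\theta)/f^\star(\theta)\ge\theta$, so that, $\underline D$ being weakly decreasing, $\qbm(\theta)\le\underline D(\theta)$ for all $\theta$; combined with ${\rm q}_\ell=\underline D(\overline\theta)\le\underline D(\theta)$ for all $\theta\in\Theta$, this gives the pointwise bound $q^\star(\theta)=\max\{\qbm(\theta),{\rm q}_\ell\}\le\underline D(\theta)$. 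Integrating from $\theta$ to $\overline\theta$ immediately yields (\ref{eq:weak-majorization}).

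The second step handles (\ref{eq:robustness-bottom}). The key point is that the Baron--Myerson schedule is undistorted at the top: because $F^\star(\underline\theta)=0$ we have $z^\star(\underline\theta)=\underline\theta$, hence $\qbm(\underline\theta)=\underline D(\underline\theta)$, and since $\underline D(\underline\theta)>\underline D(\overline\theta)={\rm q}_\ell$ this forces $q^\star(\underline\theta)=\underline D(\underline\theta)$. Therefore $\underline P(q^\star(\underline\theta))=\underline P(\underline D(\underline\theta))=\underline\theta$, so the deadweight-loss correction on the right-hand side of (\ref{eq:robustness-bottom}) is an integral over the degenerate interval $[\underline\theta,\underline\theta]$ and vanishes. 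Condition (\ref{eq:robustness-bottom}) thus reduces to $\int_{\underline\theta}^{\overline\theta}q^\star(y)dy\le\int_{\underline\theta}^{\overline\theta}\underline D(y)dy$, which is (\ref{eq:weak-majorization}) evaluated at $\theta=\underline\theta$ and is therefore already established. Proposition \ref{prop:BM-floor} then delivers robust optimality of $M^\star$.

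The last step records the efficiency statement. Efficiency at the top is immediate (and in fact holds for any $V^\star$): $q^\star(\underline\theta)=\qbm(\underline\theta)=D^\star(\underline\theta)$, the total-surplus-maximizing output for type $\underline\theta$ under $V^\star$. Efficiency at the bottom uses $V^\star=\underline V$: since $z^\star(\overline\theta)\ge\overline\theta$ gives $\qbm(\overline\theta)=\underline D(z^\star(\overline\theta))\le{\rm q}_\ell$, we have $q^\star(\overline\theta)={\rm q}_\ell=\underline D(\overline\theta)=D^\star(\overline\theta)$, the surplus-maximizing output for type $\overline\theta$ under $V^\star$.

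I do not expect a genuine obstacle, since the statement is a corollary of Proposition \ref{prop:BM-floor}; the only step requiring a moment's care is recognizing that the deadweight-loss term in (\ref{eq:robustness-bottom}) disappears precisely because, when the conjectured demand coincides with the worst-case demand, there is no downward distortion at $\underline\theta$ and hence $\underline P(q^\star(\underline\theta))=\underline\theta$.
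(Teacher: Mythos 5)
Your proposal is correct and follows essentially the same route as the paper: verify conditions (\ref{eq:robustness-bottom}) and (\ref{eq:weak-majorization}) of Proposition \ref{prop:BM-floor}, using $z^{\star}(\theta)\ge\theta$ to get $q^{\star}\le\underline{D}$ pointwise and no distortion at the top to get $q^{\star}(\underline{\theta})=\underline{D}(\underline{\theta})$. Your only addition is to spell out explicitly that the deadweight-loss term in (\ref{eq:robustness-bottom}) vanishes because $\underline{P}(q^{\star}(\underline{\theta}))=\underline{\theta}$, a detail the paper leaves implicit.
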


\subsection{Beyond Baron-Myerson-with-quantity-floor}

\label{subsec:Beyond-Baron-Myerson-with-quanti}

We now characterize the qualitative properties of robustly optimal
mechanisms when they differ from the Baron-Myerson-with-quantity-floor
(the precise characterization is in \nameref{Sec:OS}).

Let $\theta^{\star}$ be the unique solution to $\qbm(\theta^{\star})={\rm q}_{\ell}$
when such a solution exists (which is the case when, and only when,
$\qbm(\overline{\theta})<{\rm q}_{\ell}$); else, let $\theta^{\star}\equiv\overline{\theta}$.
Next, let $\underline{W}(\cdot,q^{\star})$ be the function defined,
for all $\theta\in\Theta$, by 
\begin{equation}
\underline{W}(\theta,q^{\star})\equiv\underline{V}(q^{\star}(\theta))-\theta q^{\star}(\theta)-\int\limits^{\overline{\theta}}_{\theta}q^{\star}(y)dy\label{eq:W-function-q-star}
\end{equation}
and let $\theta^{m}\equiv\max\{\theta:\theta\in\arg\min_{y\in\Theta}\underline{W}(y,q^{\star})\}.$
Under regularity, $\theta^{m}$ is well defined and is the largest
cost at which the function $\underline{W}(\cdot,q^{\star})$ attains
a minimum.\footnote{Under regularity, $q^{\star}$ is continuous. Because $\Theta$ is
compact and $\underline{W}(\cdot,q^{\star})$ is continuous over $\Theta$,
the set $\{\theta:\underline{W}(\theta,q^{\star})\le\underline{W}(\theta',q^{\star})~\forall~\theta'\in\Theta\}$
is non-empty and compact.} Type $\theta^{m}$ plays an important role in our analysis. Indeed,
Lemma \ref{lemm-theta_m-star} in the \nameref{Sec:appendix} shows
that the Baron-Myerson-with-quantity-floor mechanism is robustly
optimal if and only if $\theta^{m}=\overline{\theta}$. Proposition
\ref{prop:robust-quantity-mechanism-general} below uses $\theta^{m}$
and $\theta^{\star}$ to characterize the main qualitative properties
of robustly optimal mechanisms. Section \ref{subsec:Stronger-Prop-2}
in \nameref{Sec:OS} provides a sharper characterization.

\begin{proposition}[robust optimality: general case]\label{prop:robust-quantity-mechanism-general}
Suppose $M^{\star}$ is not robustly optimal. (1) Then, $\theta^{m}<\theta^{\star}$
and 
\begin{align*}
\qopt(\theta) & =\qbm(\theta)~\qquad~\forall~\theta\in(\tlb,\theta^{m}),\\
\qopt(\theta) & \le\qbm(\theta)~\qquad~\forall~\theta\in(\theta^{m},\theta^{\star}),\\
\qopt(\theta) & ={\rm q}_{\ell}~\qquad~\forall~\theta\in[\theta^{\star},\tub].
\end{align*}
(2) Furthermore, if $\theta^{\star}<\tub$, then $\qopt(\theta)<\qbm(\theta)$
for a subset of $(\theta^{m},\theta^{\star})$ of positive Lebesgue
measure. If, instead, $\theta^{\star}=\tub$, there exists $\theta^{\star\star}\in[\theta^{m},\theta^{\star}]$
such that $\qopt(\theta)<\qbm(\theta)$ for a subset of $(\theta^{m},\theta^{\star\star})$
of positive Lebesgue measure and $\qopt(\theta)=\underline{D}(\theta)$
for all $\theta\in[\theta^{\star\star},\tub]$. \end{proposition}



Figure \ref{fig:alpha0robust} illustrates the structure of robustly
optimal quantity schedules when they differ from the quantity schedule
in Baron-Myerson-with-quantity-floor and $\theta^{\star}<\tub$.
In this case, robustness entails an increase in procurement from high-cost
sellers and a reduction in procurement from intermediate-cost sellers,
relative to the Bayesian optimum. The increase (from $\qbm(\theta)$
to ${\rm q}_{\ell}$) for high-cost sellers prevents welfare losses
that arise if Nature selects high-cost sellers with a higher probability
than what is conjectured by the designer. Recall that the downward
distortions in the Baron-Myerson schedule $\qbm$ serve to limit
the rents of low-cost types. When high-cost types are more likely,
these distortions lose their rationale. In contrast, the reduction
in procurement (from $\qbm(\theta)$ to $\qopt(\theta)<\qbm(\theta)$)
for intermediate-cost sellers limits welfare losses when actual demand
is lower than conjectured. Section \ref{subsec:Stronger-Prop-2} in
\nameref{Sec:OS} extends the result in part (2) by showing that the
strict inequality $\qopt(\theta)<\qbm(\theta)$ applies to all $\theta\in(\theta^{m},\min\{\theta^{\star},\theta^{\star\star}\})$.
The proof uses Lagrangian methods for infinite-dimensional constraints
along with ironing techniques to establish that, for $\theta\in(\theta^{m},\min\{\theta^{\star},\theta^{\star\star}\})$,
$\qopt(\theta)=D^{\star}(\bar{z}^{M}(\theta))$, where $\bar{z}^{M}$
is an ironed virtual cost that accounts for robustness.

\begin{figure}
\centering \includegraphics[width=0.5\linewidth]{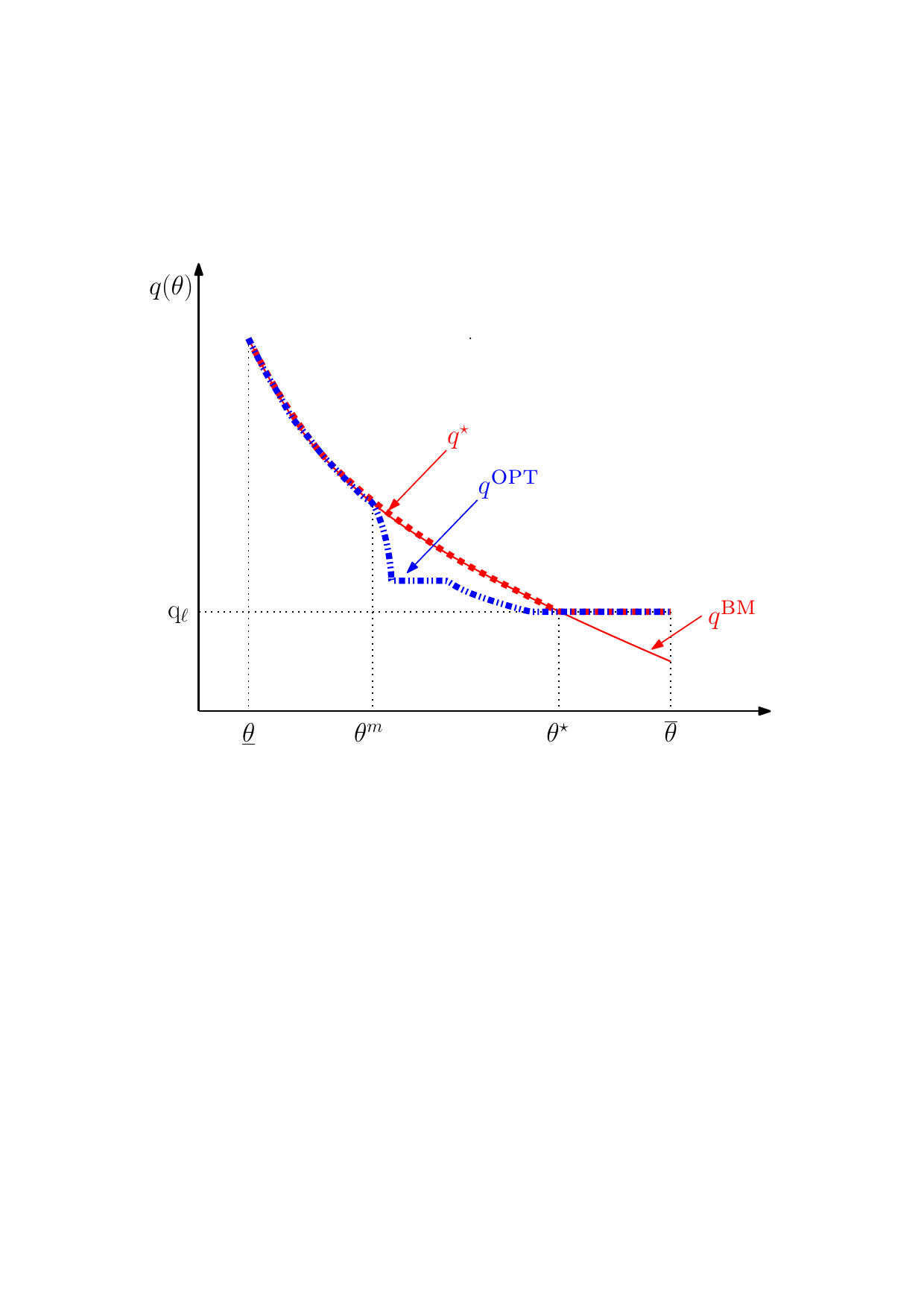}
\caption{Illustration of Proposition \ref{prop:robust-quantity-mechanism-general}.}
\label{fig:alpha0robust} 
\end{figure}

\section{Downstream markets and monopoly regulation}

\label{sec:price_vs_quantity}

Now suppose there exists a downstream market for the good supplied
by the seller, and interpret $D$ as the demand curve in such a market
(with inverse $P$ and associated value $V$). Recall that $D$ is
continuous over $\mathbb{R}_{++}$, decreasing on $(0,P(0)]$, and
satisfies $D({\rm p})=0$ for all ${\rm p}>P(0)$. The mechanisms
in the previous section can then be interpreted as \emph{quantity
regulation}, whereby the buyer (acting as a regulator maximizing consumer
surplus) directly controls the quantity supplied to consumers.\footnote{If the revenue $P(q(\theta))q(\theta)$ is accrued to the seller,
the mechanism $(q,u)$ is implemented through a transfer schedule
$\tilde{t}(\theta,P)=t(\theta)-P(q(\theta))q(\theta)$, where $t$
is the total transfer schedule corresponding to the mechanism $(q,u)$,
and where $P(q(\theta))$ is the observed market price. Such a transfer
schedule fully insulates the seller from any uncertainty about the
demand $D$ (alternatively, the price function $P$).}

The regulator can alternatively influence market outcomes by regulating
prices rather than quantities. Price regulation has the advantage
of allowing the quantity to adjust to the realized demand. However,
as shown in Subsection \ref{subsec:Price-vs-quantity} below, it also
entails disadvantages that make its overall desirability vis-\`a-vis
quantity regulation unclear.\footnote{Under Bayesian analysis (i.e., in the absence of worst-case approval
constraints), price regulation always dominates.} Below, we first characterize robust price regulation, and then identify
conditions under which each form of regulation dominates the other.
Consistently with the analysis in the previous section, we assume
that the regulator delegates to a Bayesian designer the choice of
the price mechanism but requires that the welfare guarantee (over
the admissible set $\mathcal{D}\times\mathcal{F}$) be the highest.

\subsection{Price regulation}

\label{sec:price mechanisms}

Price regulation consists of a price function $p$ and a transfer
function $t$. The price function $p:\Theta\rightarrow\mathbb{R}$
specifies the price the seller must set for each reported cost. The
seller must then supply any quantity demanded by consumers at that
price. Because demand is uncertain, price mechanisms expose the seller
to uncertainty in profits. To ensure that the seller participates
and reports truthfully regardless of its beliefs or attitude toward
uncertainty, the designer conditions the total transfer $t$ to the
seller on the realized demand $D\in\mathcal{D}$, which we assume
becomes known ex post.\footnote{In some cases, (e.g., when the set $\mathcal{P}$ of inverse demand
functions is fully ordered so that for all $P,P'\in\mathcal{P}$,
we have that $P(\mathrm{q})\neq P'(\mathrm{q})$ for all $\mathrm{q}\in[0,\bar{\mathrm{q}}]$)
the demand $D$ can be learned by observing the quantity $D(p(\theta))$
traded at the posted price $p(\theta)$. The assumption that $D$
is learned ex-post gives price regulation the ``best shot'' at beating
quantity regulation. As we show in Subsection \ref{subsec:Price-vs-quantity},
even under such a favorable scenario, price regulation may be dominated
by quantity regulation.} We maintain that, once the quantity $D(p(\theta))$ demanded at price
$p(\theta)$ is observed by the designer and the seller (which can
happen either concurrently with or before learning the full demand
$D$), it is too late to make adjustments to the price or the output
produced. As explained in the previous section, in the absence of
these natural frictions, demand uncertainty is inconsequential.

\begin{defn} A \textbf{price regulation} $\widetilde{M}=(p,t)$ is
a pair of mappings $p:\Theta\rightarrow\mathbb{R}_{+}~\mathrm{and}~~t:\Theta\times\mathcal{D}\rightarrow\mathbb{R}_{+}$
where $p(\theta)$ is the price the regulator asks the seller to post
and $t(\theta,D)$ is the total transfer to the seller when the reported
cost is $\theta$ and the realized demand is $D$.\footnote{Again, if the revenue $p(\theta)D(p(\theta))$ is accrued to the seller,
the regulator commits to transfer an additional amount $t^{\dag}(\theta,D)=t(\theta,D)-p(\theta)D(p(\theta))$
to the seller. This transfer is funded with consumers' money. The
total monetary cost to the consumers is thus $t(\theta,D)$. } \end{defn}

The price regulation $\widetilde{M}=(p,t)$ is \emph{ex-post} \textsl{incentive
compatible }(EPIC) if, for all $\theta,\theta'\in\Theta$ and $D\in\mathcal{D}$,
$t(\theta,D)-\theta D(p(\theta))\ge t(\theta',D)-\theta D(p(\theta'))$.
It is \textsl{ex-post individually rational }(EPIR) if, for all $\theta\in\Theta$
and $D\in\mathcal{D}$, $\tilde{u}(\theta,D)\equiv t(\theta,D)-\theta D(p(\theta))\ge0$.

Let $\mathcal{\widetilde{M}}$ be the set of all EPIC and EPIR price
regulations. For any $\widetilde{M}\in\mathcal{\widetilde{M}}$, demand
$D\in\mathcal{D}$, and technology $F\in\mathcal{F}$, welfare (consumer
surplus) is given by 
\[
\widetilde{W}(\widetilde{M};D,F)\equiv\int\limits^{\overline{\theta}}_{\underline{\theta}}\widetilde{w}(\theta,\widetilde{M};D)F(\mathrm{d}\theta),
\]
where, for all $\theta\in\Theta$, $\widetilde{M}\in\mathcal{\widetilde{M}}$,
and $D\in\mathcal{D}$, 
\[
\widetilde{w}(\theta,\widetilde{M};D)\equiv\intop^{D(p(\theta))}_{0}D^{-1}(y){\rm d}y-\theta D(p(\theta))-\tilde{u}(\theta,D).
\]
The \textbf{welfare guarantee} of any price regulation $\widetilde{M}\in\mathcal{\widetilde{M}}$
is given by 
\[
G(\widetilde{M})\equiv\inf_{D\in\mathcal{D},F\in\mathcal{F}}\widetilde{W}(\widetilde{M};D,F).
\]
The \textbf{shortlist} of approved price regulations is given by 
\[
\mathcal{\widetilde{M}}^{{\rm SL}}\equiv\arg\max_{\widetilde{M}\in\mathcal{\widetilde{M}}}G(\widetilde{M}).
\]
Recall that the maximal welfare guarantee for quantity regulations
is $G^{*}\equiv\underline{V}({\rm q}_{\ell})-\overline{\theta}{\rm q}_{\ell}$,
as shown in Lemma \ref{lemma-guarantee}, where ${\rm q}_{\ell}\equiv\underline{D}(\overline{\theta})$
is the efficient quantity when $\theta=\overline{\theta}$ and $D=\underline{D}$
(equivalently, when $P=\underline{P}$ and $V=\underline{V}$). This
same guarantee applies to price regulations:

\begin{lemma}[shortlist of price regulations] \label{lemma-SL-price-mechanisms}
If $\widetilde{M}\in\widetilde{\mathcal{M}}^{{\rm SL}}$, then $G(\widetilde{M})=G^{*}$.
Moreover, $\widetilde{M}\equiv(p,t)\in\widetilde{\mathcal{M}}^{{\rm SL}}$
if and only if (a) $p$ is weakly increasing, (b) for all $\theta\in\Theta$
and $D\in\mathcal{D}$, 
\begin{equation}
\tilde{u}(\theta,D)=\tilde{u}(\overline{\theta},D)+\int\limits^{\overline{\theta}}_{\theta}D(p(y)){\rm d}y\label{eq:rent-price-mechanism}
\end{equation}
with $\tilde{u}(\overline{\theta},D)\geq0$ and $\tilde{u}(\overline{\theta},\underline{D})=0$,
(c) $p(\overline{\theta})=\overline{\theta}$, and (d) for all $\theta\in\Theta$
and $D\in\mathcal{D}$, $\widetilde{w}(\theta,\widetilde{M};D)\geq G^{*}$.
\end{lemma}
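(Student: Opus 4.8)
The plan is to adapt the proofs of Lemma \ref{lemma-guarantee} and Lemma \ref{lemma-SL-characterization} to the ex-post price-regulation environment. First I would establish the analogue of the standard IC/IR characterization. Fix $D\in\mathcal{D}$ and treat $\theta\mapsto D(p(\theta))$ as the (weakly decreasing, once $p$ is weakly increasing) ``allocation''. Standard arguments give that EPIC holds at $D$ if and only if $\theta\mapsto D(p(\theta))$ is weakly decreasing and $\tilde{u}(\cdot,D)$ has the envelope representation in \eqref{eq:rent-price-mechanism} with $\tilde u(\overline\theta,D)$ as the integration constant. Ranging over $D\in\mathcal{D}$: each $D$ is weakly decreasing, so $p$ weakly increasing implies all these monotonicity conditions; for the converse I would use that $\underline{D}\in\mathcal{D}$ is strictly decreasing on $[0,\underline{P}(0)]$ and that $\overline{\theta}<\lim_{{\rm q}\downarrow0}\underline P({\rm q})=\underline P(0)$, so that on the relevant price range monotonicity of $\underline D(p(\cdot))$ forces $p$ to be weakly increasing. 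Finally, since $\tilde u(\cdot,D)$ is weakly decreasing by \eqref{eq:rent-price-mechanism}, EPIR reduces to $\tilde u(\overline\theta,D)\ge0$. Hence $\widetilde M=(p,t)\in\widetilde{\mathcal M}$ if and only if (a) and the part of (b) not involving $\tilde u(\overline\theta,\underline D)=0$ hold.

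Next I would reduce the guarantee. Since $\widetilde W(\widetilde M;D,F)=\int\widetilde w(\theta,\widetilde M;D)F(\mathrm d\theta)$ is linear in $F$ and $\mathcal F$ is the set of all cdfs on $\Theta$ (extreme points the Dirac measures), $G(\widetilde M)=\inf_{D\in\mathcal D}\inf_{\theta\in\Theta}\widetilde w(\theta,\widetilde M;D)$. Evaluating at $D=\underline D$, $\theta=\overline\theta$, and using $\underline D^{-1}=\underline P$ with $\int_0^{{\rm q}}\underline P(y)dy=\underline V({\rm q})$, EPIR ($\tilde u(\overline\theta,\underline D)\ge0$), and the fact that ${\rm q}_\ell=\underline D(\overline\theta)$ uniquely maximizes $\underline V(\cdot)-\overline\theta(\cdot)$ with maximal value $G^{*}$, I get
\[
\widetilde w(\overline\theta,\widetilde M;\underline D)=\underline V(\underline D(p(\overline\theta)))-\overline\theta\,\underline D(p(\overline\theta))-\tilde u(\overline\theta,\underline D)\le G^{*},
\]
so $G(\widetilde M)\le G^{*}$ for every $\widetilde M\in\widetilde{\mathcal M}$. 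For tightness I would exhibit the constant price regulation $p\equiv\overline\theta$ with $\tilde u(\overline\theta,D)=0$ for all $D$ (so $\tilde u(\theta,D)=(\overline\theta-\theta)D(\overline\theta)$); this is EPIC (the allocation is independent of the report) and EPIR. Writing $V_D({\rm q})\equiv\int_0^{{\rm q}}D^{-1}(y)dy$ and using $D^{-1}=P\ge\underline P$ (hence $V_D\ge\underline V$ pointwise), for every $D$ one has $\widetilde w(\theta,\widetilde M;D)=V_D(D(\overline\theta))-\overline\theta D(\overline\theta)=\max_{{\rm q}}\{V_D({\rm q})-\overline\theta{\rm q}\}\ge V_D({\rm q}_\ell)-\overline\theta{\rm q}_\ell\ge\underline V({\rm q}_\ell)-\overline\theta{\rm q}_\ell=G^{*}$, with equality at $D=\underline D$. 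Thus this mechanism has guarantee exactly $G^{*}$, so the maximal attainable guarantee is $G^{*}$ and $G(\widetilde M)=G^{*}$ for every $\widetilde M\in\widetilde{\mathcal M}^{{\rm SL}}$, which is the first claim.

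For the ``if and only if'', fix $\widetilde M\in\widetilde{\mathcal M}$ (so (a) and the base part of (b) hold). Then $\widetilde M\in\widetilde{\mathcal M}^{{\rm SL}}$ iff $G(\widetilde M)=G^{*}$ iff $\widetilde w(\theta,\widetilde M;D)\ge G^{*}$ for all $\theta\in\Theta$, $D\in\mathcal D$, which is exactly (d). When (d) holds, applying it at $(\overline\theta,\underline D)$ and comparing with the displayed chain of inequalities above forces equality throughout: $\underline V(\underline D(p(\overline\theta)))-\overline\theta\,\underline D(p(\overline\theta))=G^{*}$, whence $\underline D(p(\overline\theta))={\rm q}_\ell$ and, by strict monotonicity of $\underline D$, $p(\overline\theta)=\overline\theta$ (condition (c)); and $\tilde u(\overline\theta,\underline D)=0$, completing (b). Conversely, (a) and (b) give $\widetilde M\in\widetilde{\mathcal M}$, and (d) gives $G(\widetilde M)\ge G^{*}$, which with $G(\widetilde M)\le G^{*}$ yields $G(\widetilde M)=G^{*}$, i.e., $\widetilde M\in\widetilde{\mathcal M}^{{\rm SL}}$ (conditions (c) and $\tilde u(\overline\theta,\underline D)=0$ are then automatically implied and need not be separately imposed).

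The routine part is the Myersonian EPIC/EPIR characterization in the first step; the only point requiring a little care there is the equivalence between ``$\theta\mapsto D(p(\theta))$ weakly decreasing for all $D\in\mathcal D$'' and ``$p$ weakly increasing'', which I would handle using $\overline\theta<\underline P(0)$ together with $p(\overline\theta)=\overline\theta$ and monotonicity of $p$ to confine the relevant prices to the strictly-decreasing range of $\underline D$. The conceptual content—the only genuinely non-mechanical observation—is that the worst case is still demand $\underline D$ with mass at $\overline\theta$ and that a constant price $p\equiv\overline\theta$ already attains $G^{*}$; identifying $\int_0^{D(p)}D^{-1}=V_D$ and $V_D\ge\underline V$ is the key elementary fact that makes the guarantee computation go through.
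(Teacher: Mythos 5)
Your proposal is correct and follows essentially the same route as the paper: bound the guarantee by evaluating at $(\overline{\theta},\underline{D})$ using EPIR, attain $G^{*}$ with the constant price regulation $p\equiv\overline{\theta}$ and zero rent at $\overline{\theta}$, then derive $p(\overline{\theta})=\overline{\theta}$, $\tilde{u}(\overline{\theta},\underline{D})=0$, and (d) from equality at the worst case, with sufficiency following from (a)--(b) (EPIC/EPIR) plus (d). The only blemish is the slightly circular phrasing in your closing remark on monotonicity (invoking monotonicity of $p$ to establish monotonicity of $p$), but this concerns exactly the step the paper itself dispatches with ``standard arguments,'' and your own use of condition (d) already confines all prices below $\underline{P}(0)$ non-circularly, so the gap is cosmetic.
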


The following is an immediate implication of the previous lemma: 

\begin{corollary}[equivalence under worst-case optimality]\label{Cor-equivalence-p-and-q}The
maximal welfare guarantee over all price regulations equals that under
quantity regulations.\end{corollary}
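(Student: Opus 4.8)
The plan is to show that the maximal welfare guarantee over price regulations equals $G^*$: by Lemma~\ref{lemma-guarantee} we have $G(M)\le G^*$ for every quantity regulation $M$, and this bound is attained by the constant mechanism $q\equiv{\rm q}_{\ell}$ (for which $u(\overline\theta)=0$ and \eqref{SL-constraint} holds with equality at every $\theta$, as noted after Lemma~\ref{lemma-SL-characterization}), so the maximal guarantee over quantity regulations is exactly $G^*$. Hence it suffices to prove that the maximal guarantee over price regulations is also $G^*$, which I would do by establishing the two inequalities separately.

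For the direction ``$\le G^*$'', I would let Nature play the lowest demand $\underline D$ together with the Dirac cost distribution at $\overline\theta$ (which belongs to $\mathcal F=\textsc{CDF}(\Theta)$). For any EPIR price regulation $\widetilde M=(p,t)$ this gives
\[
G(\widetilde M)\ \le\ \widetilde w(\overline\theta,\widetilde M;\underline D)\ =\ \underline V\big(\underline D(p(\overline\theta))\big)-\overline\theta\,\underline D(p(\overline\theta))-\tilde u(\overline\theta,\underline D)\ \le\ \max_{{\rm q}\in[0,\bar{{\rm q}}]}\big\{\underline V({\rm q})-\overline\theta{\rm q}\big\}=G^*,
\]
where the last step uses $\tilde u(\overline\theta,\underline D)\ge0$ together with the definitions of ${\rm q}_{\ell}$ and $G^*$. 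Since this holds for every $\widetilde M\in\widetilde{\mathcal M}$, the maximal guarantee over price regulations is at most $G^*$.

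For the direction ``$\ge G^*$'', I would exhibit the constant-price regulation $p(\theta)=\overline\theta$, $t(\theta,D)=\overline\theta\,D(\overline\theta)$ for all $\theta\in\Theta$ and $D\in\mathcal D$. A direct check shows it is EPIC (both sides of the incentive constraint equal $(\overline\theta-\theta)D(\overline\theta)$) and EPIR, and that its welfare does not depend on $F$, with
\[
\widetilde w(\theta,\widetilde M;D)=\int_0^{D(\overline\theta)}D^{-1}(y)\,dy-\overline\theta\,D(\overline\theta)=\max_{{\rm q}\in[0,\bar{{\rm q}}]}\Big\{\int_0^{{\rm q}}D^{-1}(y)\,dy-\overline\theta{\rm q}\Big\},
\]
because $D(\overline\theta)$ is exactly the quantity at which the inverse demand $D^{-1}$ equals $\overline\theta$. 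Since every $P\in\mathcal P$ satisfies $P\ge\underline P$, the integral term dominates $\underline V$ pointwise, so the right-hand side is $\ge\max_{{\rm q}}\{\underline V({\rm q})-\overline\theta{\rm q}\}=G^*$ for every $D$, with equality at $D=\underline D$. Hence $G(\widetilde M)=G^*$, and combining the two bounds gives the claimed equality.

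The only step needing real care is the lower-bound computation: one must verify that pinning the price at the highest cost $\overline\theta$ induces precisely the efficient trade under every demand, and that the worst demand for this mechanism is $\underline D$; the remaining verifications are the envelope/IR bookkeeping already recorded in Lemma~\ref{lemma-SL-price-mechanisms}. (Equivalently, one could simply check that this constant-price regulation satisfies conditions (a)--(d) of Lemma~\ref{lemma-SL-price-mechanisms}, establishing $\widetilde{\mathcal M}^{{\rm SL}}\neq\emptyset$ and hence that the maximal guarantee is $G^*$; but this amounts to essentially the same computation.)
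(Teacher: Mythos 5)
Your proposal is correct and follows essentially the same route as the paper: the upper bound via Nature playing $\underline{D}$ together with the Dirac at $\overline{\theta}$, and the lower bound via the constant-price regulation $p\equiv\overline{\theta}$ with zero rent at $\overline{\theta}$ (your transfer $t(\theta,D)=\overline{\theta}D(\overline{\theta})$ is exactly the paper's mechanism $\underline{\widetilde{M}}$ written explicitly), combined with the fact that $G^*$ is also the maximal guarantee over quantity regulations from Lemmas \ref{lemma-guarantee} and \ref{lemma-SL-characterization}. These are precisely the steps in the paper's proof of Lemma \ref{lemma-SL-price-mechanisms}, from which the corollary is drawn.
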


The result follows from the fact that Nature can always select the
lowest possible demand $\underline{D}$ and highest cost $\overline{\theta}$.
The maximal welfare at $(\underline{D},\overline{\theta})$ is attained
when the monopolist supplies the efficient output ${\rm q}_{\ell}$.
Whether the regulator enforces ${\rm q}_{\ell}$ by fixing the price
at marginal cost or by directly imposing that the monopolist supply
that quantity is immaterial. However, as shown in Subsection \ref{subsec:Price-vs-quantity},
identical guarantees do not imply that the two mechanisms yield the
same welfare under the designer's conjectured model.

A simple way to attain $G^{*}$ is to set $p(\theta)=\overline{\theta}$
for all $\theta\in\Theta$ and $\tilde{u}(\overline{\theta},D)=0$
for all $D\in\mathcal{D}$. Yet many other price regulations yield
the same guarantee, allowing the designer to use her conjectured model
$(D^{\star},F^{\star})$ to select among them. Notably, as indicated
in the last lemma, any approved price regulation must set $p(\overline{\theta})=\overline{\theta}$
and leave zero rent when $\theta=\overline{\theta}$ and $D=\underline{D}$,
as only these choices achieve $G^{*}$ in the worst case. Unlike quantity
regulation, however, the price function does not uniquely determine
the transfers, because $\tilde{u}(\overline{\theta},D)$ may be positive
for $D\neq\underline{D}$.

\begin{defn}\label{def:BM-with-price-cap} A \textbf{Baron-Myerson-with-price-cap}
regulation is a pair $\widetilde{M}=(p,t)$ such that 
\begin{equation}
p(\theta)=\min\{z^{\star}(\theta),\overline{\theta}\}\label{p-optimal}
\end{equation}
for all $\theta\in\Theta$, and the induced rent schedule $\tilde{u}(\theta,D)\equiv t(\theta,D)-\theta D(p(\theta))$
satisfies $\tilde{u}(\theta,D)=\tilde{u}(\overline{\theta},D)+\intop^{\overline{\theta}}_{\theta}D(p(y))dy$
for all $\theta\in\Theta$ and $D\in\mathcal{D}$, with $\tilde{u}(\overline{\theta},\underline{D})=\tilde{u}(\overline{\theta},D^{\star})=0$
and, for all $D\in\mathcal{D}\setminus\{\underline{D},D^{\star}\}$,
\begin{align}
0\le\tilde{u}(\overline{\theta},D) & \le\int\limits^{D(\overline{\theta})}_{0}D^{-1}(y)dy-\overline{\theta}D(\overline{\theta})-G^{*}.\label{eq:eee1}
\end{align}
\end{defn}

All such regulations share the same price schedule (for $\theta>\underline{\theta}$)
but differ in their transfer schedules.

\begin{proposition}[optimality of Baron-Myerson-with-price-cap]\label{prop:robustly-optimal-price-mechanism}
Every Baron-Myerson-with-price-cap regulation is robustly optimal.
Moreover, in every robustly optimal price regulation $\widetilde{M}^{\textsc{OPT}}=(p^{\textsc{OPT}},t^{{\rm OPT}})$,
for any $\theta>\underline{\theta}$, $p^{\textsc{OPT}}(\theta)=\min\{z^{\star}(\theta),\overline{\theta}\}$,
and $\widetilde{u}^{\textsc{OPT}}(\overline{\theta},D^{\star})=0$.\end{proposition}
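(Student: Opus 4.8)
The plan is to combine the short-list characterization of Lemma \ref{lemma-SL-price-mechanisms} with a pointwise ``virtual-surplus'' argument, mirroring the proof of Proposition \ref{prop:BM-floor}; the only genuinely new ingredient is checking that the price cap $\min\{z^{\star}(\theta),\overline\theta\}$ never violates the robustness constraint, which is what makes the result hold unconditionally. First I would rewrite expected welfare under the conjectured model for an arbitrary price regulation $\widetilde{M}=(p,t)$ in the short list. Substituting the envelope rent (\ref{eq:rent-price-mechanism}) with $D=D^{\star}$ into $\widetilde{W}(\widetilde{M};D^{\star},F^{\star})$ and switching the order of integration in the double integral of $\tilde{u}(\cdot,D^{\star})$, one obtains
\[
\widetilde{W}(\widetilde{M};D^{\star},F^{\star})=\int_{\underline\theta}^{\overline\theta}\Big[V^{\star}\big(D^{\star}(p(\theta))\big)-z^{\star}(\theta)\,D^{\star}(p(\theta))\Big]F^{\star}(\mathrm{d}\theta)-\tilde{u}(\overline\theta,D^{\star}).
\]
Since $\tilde{u}(\overline\theta,D^{\star})\ge0$, any robustly optimal price regulation sets it to zero (as does every Baron--Myerson-with-price-cap regulation), after which the objective depends on $\widetilde{M}$ only through $p$.

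Next I would optimize pointwise. Fixing $\theta$ and writing $q=D^{\star}(p)$, the map $q\mapsto V^{\star}(q)-z^{\star}(\theta)q$ is strictly concave with maximizer $\qbm(\theta)=D^{\star}(z^{\star}(\theta))$, so, since $D^{\star}$ is strictly decreasing, $p\mapsto V^{\star}(D^{\star}(p))-z^{\star}(\theta)D^{\star}(p)$ is strictly single-peaked with peak at $p=z^{\star}(\theta)$. Every short-list price schedule is weakly increasing and, by part (c) of Lemma \ref{lemma-SL-price-mechanisms}, satisfies $p(\overline\theta)=\overline\theta$, hence $p(\theta)\le\overline\theta$ for all $\theta$; so the best feasible price at $\theta$ is $\min\{z^{\star}(\theta),\overline\theta\}$, and this schedule is feasible because $z^{\star}$ is increasing (making $\min\{z^{\star}(\cdot),\overline\theta\}$ weakly increasing) and $z^{\star}(\overline\theta)\ge\overline\theta$ (making it equal to $\overline\theta$ at $\overline\theta$). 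Consequently, \emph{provided} $\min\{z^{\star}(\cdot),\overline\theta\}$ lies in the short list, it is, up to modification on a null set, the unique welfare-maximizing price schedule among short-list regulations: this forces every robustly optimal price regulation to coincide with it a.e., and it shows that every Baron--Myerson-with-price-cap regulation---which has this price schedule, sets $\tilde{u}(\overline\theta,D^{\star})=0$, and satisfies parts (a)--(c) of Lemma \ref{lemma-SL-price-mechanisms} by construction---is robustly optimal.

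The main step is therefore to verify part (d) of Lemma \ref{lemma-SL-price-mechanisms} for $p(\theta)=\min\{z^{\star}(\theta),\overline\theta\}$ and any rent schedule admissible under Definition \ref{def:BM-with-price-cap}: for every $D\in\mathcal{D}$ and $\theta\in\Theta$, $\widetilde{w}(\theta,\widetilde{M};D)\ge G^{\star}$. The key observation is that $p(\theta)\ge\theta$ for all $\theta$, since $z^{\star}(\theta)=\theta+F^{\star}(\theta)/f^{\star}(\theta)\ge\theta$ and $\overline\theta\ge\theta$; hence $\theta\mapsto D(p(\theta))$ is weakly decreasing. Inserting $\tilde{u}(\theta,D)=\tilde{u}(\overline\theta,D)+\int_\theta^{\overline\theta}D(p(y))\,\mathrm{d}y$ into $\widetilde{w}(\theta,\widetilde{M};D)$ and differentiating in $\theta$, one finds $\tfrac{\mathrm{d}}{\mathrm{d}\theta}\widetilde{w}(\theta,\widetilde{M};D)=\big(p(\theta)-\theta\big)\tfrac{\mathrm{d}}{\mathrm{d}\theta}D(p(\theta))\le0$ (the level terms $\pm D(p(\theta))$ coming from differentiating $-\theta D(p(\theta))$ and $-\int_\theta^{\overline\theta}D(p(y))\,\mathrm{d}y$ cancel); rigorously, since $D\circ p$ is of bounded variation, $\widetilde{w}(\theta',\widetilde{M};D)-\widetilde{w}(\theta,\widetilde{M};D)=\int_\theta^{\theta'}\big(p(s)-s\big)\,\mathrm{d}\big[D(p(s))\big]\le0$ for $\theta\le\theta'$ as a Lebesgue--Stieltjes integral. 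Thus $\widetilde{w}(\cdot,\widetilde{M};D)$ is weakly decreasing, so it suffices to check $\widetilde{w}(\overline\theta,\widetilde{M};D)\ge G^{\star}$. Using $p(\overline\theta)=\overline\theta$,
\[
\widetilde{w}(\overline\theta,\widetilde{M};D)=\int_0^{D(\overline\theta)}D^{-1}(y)\,\mathrm{d}y-\overline\theta D(\overline\theta)-\tilde{u}(\overline\theta,D),
\]
which equals $\underline{V}({\rm q}_{\ell})-\overline\theta{\rm q}_{\ell}=G^{\star}$ for $D=\underline{D}$ (as $\tilde{u}(\overline\theta,\underline{D})=0$), equals $\max_{\rm q}\{V^{\star}({\rm q})-\overline\theta{\rm q}\}\ge V^{\star}({\rm q}_{\ell})-\overline\theta{\rm q}_{\ell}\ge G^{\star}$ for $D=D^{\star}$ (as $\tilde{u}(\overline\theta,D^{\star})=0$ and $V^{\star}\ge\underline{V}$), and is $\ge G^{\star}$ for every other $D$ precisely by the upper bound imposed on $\tilde{u}(\overline\theta,D)$ in (\ref{eq:eee1}). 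This establishes part (d) and hence completes the argument of the previous paragraph.

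Finally I would upgrade ``coincides a.e.'' to ``coincides for every $\theta>\underline\theta$''. Both $\widetilde{p}^{\textsc{OPT}}$ and $\hat{p}\equiv\min\{z^{\star}(\cdot),\overline\theta\}$ are weakly increasing, they agree a.e., and $\hat{p}$ is continuous because $z^{\star}$ is continuous; for $\theta_0\in(\underline\theta,\overline\theta)$, monotonicity of $\widetilde{p}^{\textsc{OPT}}$ together with a.e.\ agreement squeezes $\widetilde{p}^{\textsc{OPT}}(\theta_0)$ between the one-sided limits of $\hat{p}$ at $\theta_0$, which coincide with $\hat{p}(\theta_0)$, while $\widetilde{p}^{\textsc{OPT}}(\overline\theta)=\overline\theta=\hat{p}(\overline\theta)$ by part (c) of Lemma \ref{lemma-SL-price-mechanisms}. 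I expect the monotonicity argument in the main step to be the crux: it is exactly what lets the price cap satisfy the robustness constraint for \emph{every} demand without any side condition, in contrast with the quantity regulation of Proposition \ref{prop:BM-floor}, which additionally needs (\ref{eq:robustness-bottom})--(\ref{eq:weak-majorization})---intuitively, under price regulation the traded quantity automatically contracts when demand turns out lower than conjectured.
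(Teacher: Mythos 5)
Your proposal is correct and follows essentially the same route as the paper: the virtual-surplus representation with $-\tilde{u}(\overline{\theta},D^{\star})$, pointwise maximization over prices capped at $\overline{\theta}$ yielding $\min\{z^{\star}(\theta),\overline{\theta}\}$, and verification of constraint (d) by showing $\widetilde{w}(\cdot,\widetilde{M};D)$ is weakly decreasing because $p(\theta)\ge\theta$ keeps output below the efficient level. The only cosmetic difference is that you re-derive this monotonicity directly via a Stieltjes computation where the paper invokes Lemma \ref{Lem-Mono}, and you spell out the a.e.-to-everywhere step for the uniqueness claim slightly more carefully.
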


Under the conjectured model $(D^{\star},F^{\star})$ with gross value
function $V^{\star}$, the virtual surplus $V^{\star}(D^{\star}(\text{p}))-z^{\star}(\theta)D^{\star}(\text{p})$
is quasi-concave in $\text{p}$ and maximized at $\mathrm{p}=z^{\star}(\theta)$
for all $\theta\in\Theta$. The constants $\{{\tilde{u}}^{{\rm OPT}}(\overline{\theta},D)\}$
can then be chosen to satisfy all the constraints in Lemma \ref{lemma-SL-price-mechanisms}
(for instance, by choosing $\widetilde{u}(\tub,D)=0$ for all $D\in\mathcal{D}$).

\begin{corollary}[robustly optimal price schedule]\label{Corollary:optimal-price}In
the interval $(\tlb,\tub]$, the robustly optimal price schedule is
unique and is invariant to both the designer's conjectured demand
$D^{\star}$, and the admissible set $\mathcal{D}\times\mathcal{F}$
of demands and cost technologies the regulator considers plausible.
It sets a markup of $F^{\star}(\theta)/f^{\star}(\theta)$ at each
cost $\theta>\tlb$, capped at $(\overline{\theta}-\theta)$. \end{corollary}

Figure \ref{fig:rob_opt_price} illustrates the result. By committing
to rent payments contingent on realized demand and setting a cost-dependent
markup---based solely on the conjectured cost distribution $F^{\star}$---capped
at $(\overline{\theta}-\theta)$, the designer maximizes welfare (under
the conjectured model $(D^{\star},F^{\star})$) regardless of the
regulator's uncertainty over demand or technology.

\begin{figure}
\centering \includegraphics[width=0.75\linewidth]{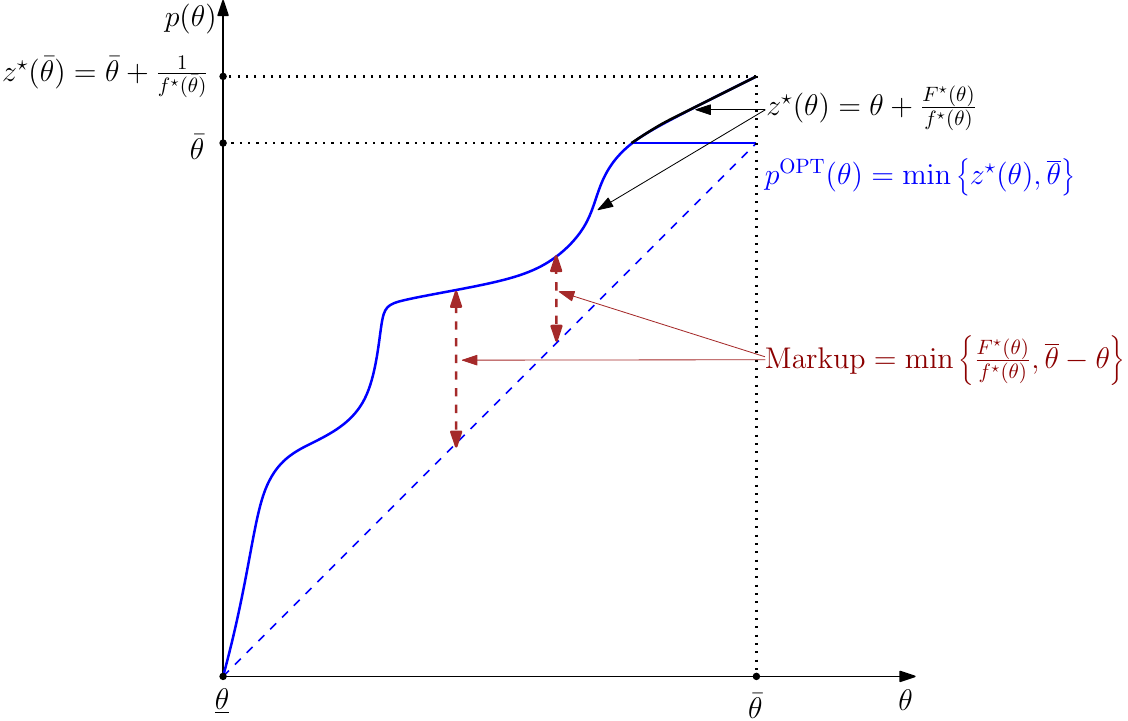}
\caption{Robustly optimal price schedule.}
\label{fig:rob_opt_price} 
\end{figure}

\subsection{Quantity vs price regulation}

\label{subsec:Price-vs-quantity}

Corollary \ref{Cor-equivalence-p-and-q} establishes that the maximal
welfare guarantee---the highest welfare achievable under the worst-case
scenario---is the same for both price and quantity regulation. However,
the maximal welfare attainable under the designer's conjectured model
$(D^{\star},F^{\star})$ over the shortlist of approved worst-case-optimal
regulations may differ between the two types of regulation. The following
definition formalizes what it means for one type of regulation to
dominate the other, from the perspective of the designer:

\begin{defn} Price regulation dominates quantity regulation if 
\begin{equation}
\widetilde{W}(\widetilde{M}^{\textsc{OPT}};D^{\star},F^{\star})\ge W(M^{\textsc{OPT}};V^{\star},F^{\star})\label{ranking-p-vs-q}
\end{equation}
(strictly if the inequality is strict). Conversely, quantity regulation
dominates price regulation if 
\begin{equation}
W(M^{\textsc{OPT}};V^{\star},F^{\star})\geq\widetilde{W}(\widetilde{M}^{\textsc{OPT}};D^{\star},F^{\star})\label{eq:quantity-wins}
\end{equation}
(strictly if the inequality is strict). Price and quantity regulation
are equivalent if (\ref{ranking-p-vs-q}) and (\ref{eq:quantity-wins})
jointly hold. \end{defn}

\begin{proposition}[quantity vs price regulation]\label{prop:comp_reg}
(1) If the Baron-Myerson-with-quantity-floor mechanism is robustly
optimal, quantity regulation dominates price regulation (strictly
if $D^{\star}(\overline{\theta})>\underline{D}(\overline{\theta})$).
(2) If $D^{\star}(\overline{\theta})=\underline{D}(\overline{\theta})$,
price regulation dominates quantity regulation. Furthermore, the domination
is strict if the Baron-Myerson-with-quantity-floor mechanism is not
robustly optimal. \end{proposition}

Panel A of Figure \ref{fig:comp_reg} illustrates the induced quantity
schedules $\qopt$ and $D^{\star}(\popt)$ under the designer's conjectured
demand $D^{\star}$ for the first part of Proposition \ref{prop:comp_reg},
when $D^{\star}(\overline{\theta})>\underline{D}(\overline{\theta})={\rm q}_{\ell}$.
Panel B illustrates the second part, when $D^{\star}(\overline{\theta})=\underline{D}(\overline{\theta})={\rm q}_{\ell}$.

\begin{figure}
\centering %
\begin{minipage}[b]{0.49\linewidth}%
\includegraphics[scale=0.53]{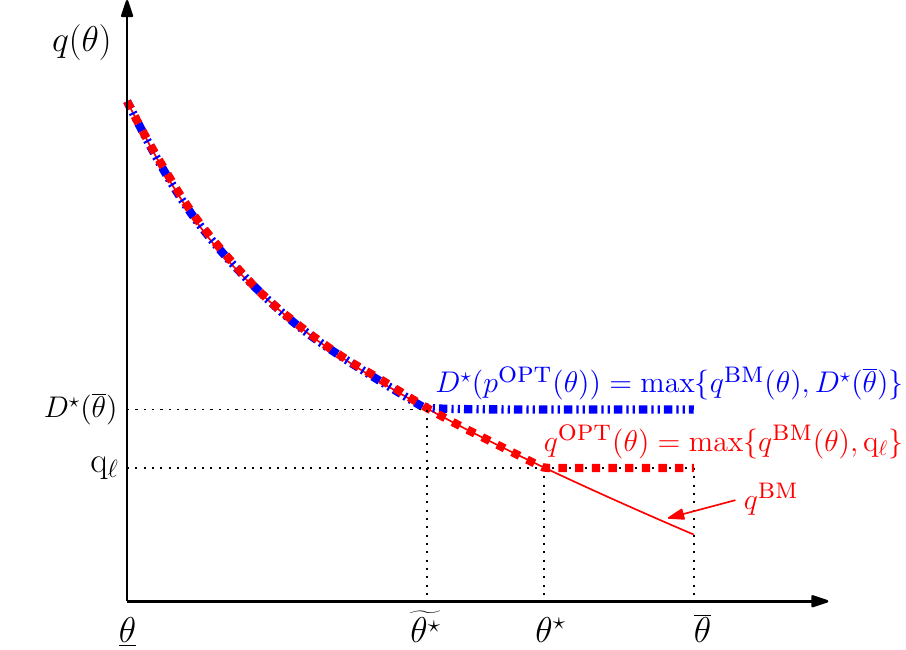} \caption*{\centering A: Part 1.}
\end{minipage}%
\begin{minipage}[b]{0.49\linewidth}%
\includegraphics[scale=0.54]{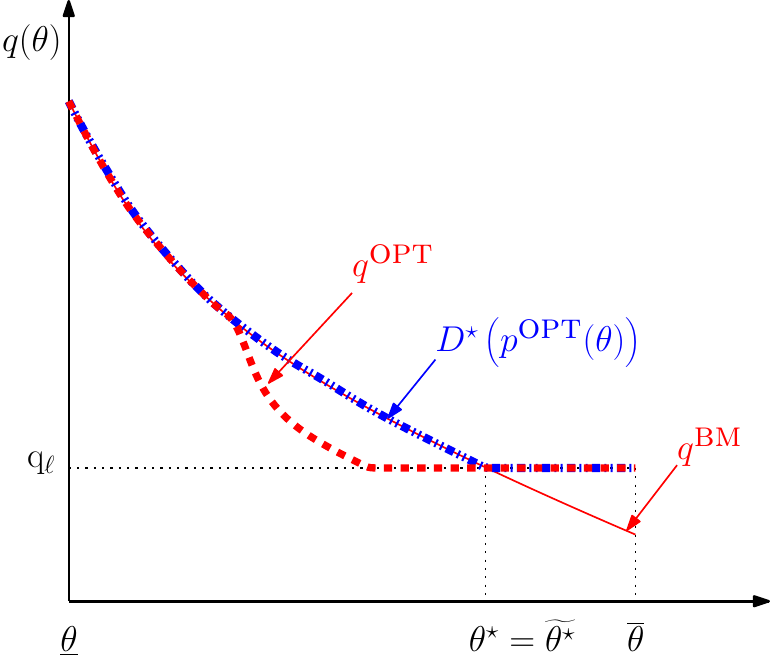} \caption*{\centering B: Part 2.}
\end{minipage}\caption{Graphical illustration of Proposition \ref{prop:comp_reg}.}
\label{fig:comp_reg} 
\end{figure}

The intuition for the first part is as follows. When the Baron-Myerson-with-quantity-floor
mechanism is robustly optimal, both regulations induce the same volume
of trade at low costs under the conjectured model. For high costs,
however, the output traded under the optimal price regulation can
exceed that under the optimal quantity regulation. This occurs because
the price is capped at $\overline{\theta}$, leading the seller to
supply $D^{\star}(\overline{\theta})$ under the conjectured demand,
which exceeds ${\rm {q}_{\ell}}$, the quantity supplied under the
optimal quantity regulation, when $D^{\star}(\overline{\theta})>\underline{D}(\overline{\theta})={\rm q}_{\ell}$.
This oversupply increases the distance from $D^{\star}(z^{\star}(\theta))$,
the Bayesian optimal quantity under the conjectured model $(D^{\star},F^{\star})$,
reducing welfare relative to quantity regulation (see Panel A of Figure
\ref{fig:comp_reg}).

For the second part, when the Baron-Myerson-with-quantity-floor mechanism
is not robustly optimal, quantity regulation requires downward adjustments
in output for intermediate costs to avoid excessive trading under
low realized demand (see Proposition \ref{prop:robust-quantity-mechanism-general}).
These adjustments reduce welfare under the conjectured model. Price
regulation, by contrast, automatically protects the regulator by adjusting
output to the realized demand, sparing the designer from the need
to reduce trade for intermediate-cost sellers, thus avoiding the welfare
reduction. Provided the price cap ${\rm p}=\overline{\theta}$ does
not induce over-procurement for high costs---which is the case when
$D^{\star}(\overline{\theta})-\underline{D}(\overline{\theta})$ is
small---price regulation strictly dominates.

The following is an implication of the previous results and can be
seen directly by combining Corollary \ref{cor:onlycost} and Proposition
\ref{prop:robustly-optimal-price-mechanism}.

\begin{corollary}[equivalence in the absence of downward demand
uncertainty]\label{Corollary:equivalence-price-quantity} If the
regulator faces no downward demand uncertainty (i.e., $\underline{D}=D^{\star}$),
price and quantity regulations are equivalent. \end{corollary}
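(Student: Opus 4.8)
The plan is to exploit the fact that, absent demand uncertainty, $\mathcal{V}=\{V^\star\}$ forces $\underline{V}=V^\star$, hence $\underline{P}=P^\star$, $\underline{D}=D^\star$, and ${\rm q}_\ell=\underline{D}(\overline\theta)=D^\star(\overline\theta)$, so that the quantity floor at ${\rm q}_\ell$ and the price cap at $\overline\theta$ implement exactly the same allocation (and the same rents) under the conjectured model. First, by Corollary \ref{cor:onlycost} (equivalently, by Proposition \ref{prop:BM-floor}, whose two conditions (\ref{eq:robustness-bottom})--(\ref{eq:weak-majorization}) hold trivially once $D^\star=\underline{D}$, as noted in the proof of that corollary), the Baron--Myerson-with-quantity-floor mechanism $M^\star=(q^\star,u^\star)$ is robustly optimal; hence, by definition of robust optimality, $W(M^{\textsc{OPT}};V^\star,F^\star)=W(M^\star;V^\star,F^\star)$. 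Moreover, since $D^\star$ is decreasing and ${\rm q}_\ell=D^\star(\overline\theta)$,
\[
q^\star(\theta)=\max\{D^\star(z^\star(\theta)),\,D^\star(\overline\theta)\}=D^\star\!\big(\min\{z^\star(\theta),\overline\theta\}\big)\qquad\forall\,\theta\in\Theta.
\]

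Next I would invoke Proposition \ref{prop:robustly-optimal-price-mechanism}: every robustly optimal price regulation $\widetilde{M}^{\textsc{OPT}}=(\widetilde{p}^{\textsc{OPT}},\widetilde{u}^{\textsc{OPT}})$ has $\widetilde{p}^{\textsc{OPT}}(\theta)=\min\{z^\star(\theta),\overline\theta\}$ for all $\theta>\underline\theta$, so under the conjectured demand the quantity it induces at almost every $\theta$ is $D^\star(\widetilde{p}^{\textsc{OPT}}(\theta))=q^\star(\theta)$. The rents then coincide as well: since $D^\star=\underline{D}$, Lemma \ref{lemma-SL-price-mechanisms}(b) (equivalently, Definition \ref{def:BM-with-price-cap}) gives $\widetilde{u}^{\textsc{OPT}}(\overline\theta,D^\star)=\widetilde{u}^{\textsc{OPT}}(\overline\theta,\underline{D})=0$, whence $\widetilde{u}^{\textsc{OPT}}(\theta,D^\star)=\int_\theta^{\overline\theta}D^\star(\widetilde{p}^{\textsc{OPT}}(y))\,dy=\int_\theta^{\overline\theta}q^\star(y)\,dy=u^\star(\theta)$.

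Finally I would compare the two welfare functionals type by type. Using $(D^\star)^{-1}=P^\star$ together with $V^\star({\rm q})=\int_0^{{\rm q}}P^\star(s)\,ds$, the per-type consumer surplus under $\widetilde{M}^{\textsc{OPT}}$ and demand $D^\star$, namely $\int_0^{D^\star(\widetilde{p}^{\textsc{OPT}}(\theta))}P^\star(y)\,dy-\theta D^\star(\widetilde{p}^{\textsc{OPT}}(\theta))-\widetilde{u}^{\textsc{OPT}}(\theta,D^\star)$, equals $V^\star(q^\star(\theta))-\theta q^\star(\theta)-u^\star(\theta)$, which is exactly the per-type net value under $M^\star$. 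Integrating against $F^\star$ gives $\widetilde{W}(\widetilde{M}^{\textsc{OPT}};D^\star,F^\star)=W(M^\star;V^\star,F^\star)=W(M^{\textsc{OPT}};V^\star,F^\star)$, so both (\ref{ranking-p-vs-q}) and (\ref{eq:quantity-wins}) hold with equality and the two regulations are equivalent. There is no substantive obstacle: Proposition \ref{prop:robustly-optimal-price-mechanism} and Corollary \ref{cor:qstaropt} pin down the price and quantity schedules only for $\theta>\underline\theta$, but $\{\underline\theta\}$ has $F^\star$-measure zero and so does not affect either integral; the whole content of the statement is the elementary observation that, when $D^\star=\underline{D}$, capping the price at $\overline\theta$ reproduces the quantity floor at ${\rm q}_\ell$ under the conjectured model.
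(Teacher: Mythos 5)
Your proposal is correct and follows exactly the route the paper has in mind: the paper offers no separate proof but states the corollary "can be seen directly by combining Propositions \ref{prop:BM-floor} and \ref{prop:robustly-optimal-price-mechanism}", which is precisely what you do---use $D^{\star}=\underline{D}$ (so Corollary \ref{cor:onlycost} makes $M^{\star}$ robustly optimal and ${\rm q}_{\ell}=D^{\star}(\overline{\theta})$) and note that the capped price schedule $\min\{z^{\star}(\theta),\overline{\theta}\}$ induces the same quantities and rents as $q^{\star}$ under the conjectured model, hence equal welfare. Your handling of the $\theta=\underline{\theta}$ point and of the zero rent $\widetilde{u}^{\textsc{OPT}}(\overline{\theta},D^{\star})=0$ is careful and correct, so no gap remains.
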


In the absence of downward demand uncertainty, it does not matter
whether the designer induces the desired output (under the conjectured
model $(D^{\star},F^{\star})$) by fixing the price or specifying
the quantity.

Note that Corollary \ref{cor:rotate0} provides a condition on primitives
($F^{\star}(\theta)/f^{\star}(\theta)\ge d(\theta)$ for all $\theta\in\Theta$,
where $d(\theta)=P^{\star}(\underline{D}(\theta))-\theta$), under
which the Baron-Myerson-with-quantity-floor mechanism is robustly
optimal. Proposition \ref{prop:comp_reg} then implies that, under
the same condition, quantity regulation dominates price regulation.
The next result provides further primitive conditions under which
each regulation dominates the other.

For every $\theta\in\Theta$, let $\delta(\theta)\equiv D^{\star}(\theta)-\underline{D}(\theta)$
denote the distance between the conjectured and the lowest \emph{direct
demands}, and then let $\underline{\delta}\equiv\inf\{\delta(\theta):\theta\in\Theta\}$,
$\bar{\delta}\equiv\sup\{\delta(\theta):\theta\in\Theta\}$, $\underline{f}\equiv\inf\{f^{\star}(\theta):\theta\in\Theta\}$
and $\bar{f}\equiv\sup\{f^{\star}(\theta):\theta\in\Theta\}$.

\begin{proposition}[quantity vs price regulation: markup-based conditions]\label{prop:primitives}
Suppose $D^{\star}$ is bi-Lipschitz continuous over $[P^{\star}({\rm \bar{q})},P^{\star}(0)]$
with constants $0<k\le K$.\footnote{That is, for any $p,p'\in[P^{\star}({\rm \bar{q})},P^{\star}(0)]$,
with $p'>p$, $D^{\star}(p)-D^{\star}(p')$$\in[k(p'-p),K(p'-p)]$.} 
\begin{enumerate}
\item[(a)] Quantity regulation dominates price regulation if, for all $\theta\in\Theta$,
$F^{\star}(\theta)/f^{\star}(\theta)\geq\delta(\theta)/k$, with the
domination strict if $\delta(\overline{\theta})>0$. 
\item[(b)] When, instead, for all $\theta\in\Theta$, $F^{\star}(\theta)/f^{\star}(\theta)\leq\delta(\theta)/K[1+\frac{\bar{\delta}}{\underline{\delta}}\sqrt{2\bar{f}/\underline{f}}]$
and $\underline{\delta}>0$, then price regulation strictly dominates
quantity regulation. 
\end{enumerate}
\end{proposition}

The proposition thus establishes that the type of regulation that
dominates depends on the size of the markup that the designer would
like to apply to each cost, relative to the difference between the
two demands. Quantity regulation dominates when the markups $F^{\star}(\theta)/f^{\star}(\theta)$
are large relative to the downstream demand uncertainty $\delta(\theta)$
faced by the regulator. This is because the quantity that the designer
would like to procure from each type is small and hence, in the eyes
of the regulator, the danger of over-procurement in case the designer's
model is wrong is also small. 

The second part of the proposition, in turn, establishes that price
regulation strictly dominates when the markups are small relative
to the difference between the two demands. When this is the case,
in the eyes of the regulator, there is significant danger of over-procurement
from intermediate-cost types under quantity regulation. As a consequence,
if the designer were to opt for quantity regulation, the regulator
would require a cut to the quantity procured from intermediate types,
as established in the previous section. The designer can avoid these
cuts through price regulation (which guards the regulator by letting
the quantity adjust to the realized demand). As explained above, the
drawback of this regulation, in the eyes of the designer, is that
it may result in over-procurement from high-cost types. However, the
conditions in part (b) imply that the cost of such over-procurement
to the designer is small compared to the cost of reducing the procurement
from intermediate-cost sellers under quantity regulation, thus making
price regulation superior.

\section{Conclusions}

\label{sec:Conclusions}

We study the problem of a Bayesian designer who must obtain approval
from a max-min authority that does not share the designer’s confidence
in the model and requires satisfactory performance across alternative
models.

We conduct the analysis in a procurement setting, where approval by
conservative regulators, policymakers, or supervisors is common. We
show that robustness reshapes the efficiency-rent-extraction trade-off,
calling for increased procurement from high-cost sellers and reduced
procurement from intermediate-cost sellers relative to the Bayesian
optimum. Quantity floors protect against higher-than-expected costs,
while reducing procurement from intermediate-cost sellers guards against
demand overestimation. When the good is sold in a downstream market,
our approach also delivers a novel comparison of price and quantity regulation:
the two instruments provide identical welfare guarantees but yield
different payoffs under the designer’s conjectured model. Which instrument
dominates depends on the desired markups and the extent of demand
uncertainty.

We expect the approach to be relevant in other applications. In auction
markets, for example, sellers often rely on experts to design mechanisms
(e.g., for spectrum allocation), yet rarely share the designers’ confidence
in the underlying model. This lack of trust may stem from limited
understanding of the estimation methods or from concerns that the
environment has changed. Similarly, in public good provision, local
governments often need approval from state or federal authorities
that do not share their confidence in the model used to design the
relevant contracts. The same applies to taxation, where reforms of
fiscal rules at the local level typically require approval by conservative
central authorities. Extending the approach developed here to these
settings is a promising direction for future research.

\begin{appendix}

\section{Appendix}
\label{Sec:appendix}

The Appendix is divided into two section. Sections \ref{sec:proofsec3} and \ref{sec:proofsec4} contain the missing proofs for the results from Sections \ref{sec:robustly-quantity-mechanisms} and \ref{sec:price_vs_quantity} respectively.

\subsection{Proofs of Section \ref{sec:robustly-quantity-mechanisms}}
\label{sec:proofsec3}

\textbf{Proof of Lemma \ref{lemma-guarantee}}. For any $(V,F)\in\mathcal{V}\times\mathcal{F}$,
\begin{align*}
\int\left[V(q(\theta))-\theta q(\theta)-u(\theta)\right]F(\mathrm{d}\theta)\ge\inf_{\theta\in\Theta}\{\underline{V}(q(\theta))-\theta q(\theta)-u(\theta)\},
\end{align*}
where the inequality follows from the definition of $\underline{V}$.
Hence, $G(M)\ge\inf_{\theta\in\Theta}\{\underline{V}(q(\theta))-\theta q(\theta)-u(\theta)\}.$
Because $\underline{V}\in\mathcal{V}$ and, for each $\theta$, the
Dirac distribution that puts probability mass one at $\theta$ is
in $\mathcal{F}$, $G(M)\leq\inf_{\theta\in\Theta}\{\underline{V}(q(\theta))-\theta q(\theta)-u(\theta)\}$.
Combining the last two inequalities, we obtain the first part of Condition
(\ref{eq:wg1}). To see the second part, observe that $G(M)\le\underline{V}(q(\overline{\theta}))-\overline{\theta}q(\overline{\theta})-u(\overline{\theta})\le G^{*}$,
where the first inequality holds because Nature can always choose
$\underline{V}\in\mathcal{V}$ and the Dirac distribution that puts
probability mass one at $\overline{\theta}$. The second inequality
follows from the fact that $u(\overline{\theta})\ge0$ along with
the definition of $G^{*}$.\hfill{}Q.E.D.\smallskip{}

\noindent\textbf{Proof of Lemma \ref{lemma-SL-characterization}}.
First, we show that there exists an IC and IR mechanism that delivers
the guarantee upper bound in (\ref{eq:wg1}). Consider the constant
mechanism $M_{L}\equiv(q_{L},u_{L})$ in which $q_{L}(\theta)={\rm q}_{\ell}$
and $t_{L}(\theta)=\overline{\theta}{\rm q}_{\ell}$, for every $\theta\in\Theta$
(yielding a rent $u_{L}(\theta)=(\overline{\theta}-\theta){\rm q}_{\ell}$
to each $\theta$). The mechanism $M_{L}$ is clearly IC and IR. Furthermore,
$\inf_{\theta}\left\{ \underline{V}(q(\theta))-\theta q(\theta)-u(\theta)\right\} =G^{*}$.
The first part of Condition (\ref{eq:wg1}) in Lemma \ref{lemma-guarantee}
then implies that $G(M_{L})=G^{*}$. The second part of Condition
(\ref{eq:wg1}) in Lemma \ref{lemma-guarantee} in turn implies that
$M_{L}\in\mathcal{M}^{{\rm SL}}$. Therefore, for any $M\in\mathcal{M}^{{\rm SL}}$,
$G(M)=G^{*}$.

For a mechanism $M=(q,u)$ to be IC and IR, it must be that $q$ is
weakly decreasing and, for all $\theta$, $u(\theta)=u(\overline{\theta})+\int\limits^{\overline{\theta}}_{\theta}q(y)dy$,
with $u(\overline{\theta})\geq0$. Moreover, if $M\in\mathcal{M}^{{\rm SL}}$,
then, for all $\theta$, $\underline{V}(q(\theta))-\theta q(\theta)-u(\theta)\geq G^{*}$.
This is possible only if $u(\overline{\theta})=0$ (else, the constraint
is violated at $\overline{\theta}$) and, for any $\theta$, constraint
(\ref{SL-constraint}) holds. We conclude that any $M\in\mathcal{M}^{{\rm SL}}$
must satisfy the constraint (\ref{SL-constraint}).

\noindent Now, let $M$ be any IC and IR mechanism with $u(\overline{\theta})=0$
satisfying constraint (\ref{SL-constraint}). By the first part of
Condition (\ref{eq:wg1}) in Lemma \ref{lemma-guarantee}, $G(M)\geq G^{*}$.
Because every mechanism in $\mathcal{M}^{{\rm SL}}$ has a welfare
guarantee of $G^{*}$, we conclude that $M\in\mathcal{M}^{{\rm SL}}$.
\hfill{} Q.E.D.

\smallskip{}

\noindent\textbf{Proof of Lemma \ref{lemma:majorizations}}. \textbf{Part
(1)}. Observe that, for any $\theta\in\Theta$, 
\begin{align*}
\underline{V}(q(\theta))-\theta q(\theta) & =\int^{\infty}_{\theta}\underline{D}(y){\rm d}y-\underline{{\rm DWL}}(\theta,q(\theta)),
\end{align*}
where, for any ${\rm q}$, $\underline{{\rm DWL}}(\theta,{\rm q})\equiv\int^{\underline{P}({\rm q})}_{\theta}\big(\underline{D}(y)-{\rm q}\big){\rm d}y\ge0$.
The equivalence between the constraints in (\ref{SL-constraint})
and (\ref{eq:strong-major}) follows from this observation together
with the fact that $G^{*}\equiv\underline{V}({\rm q}_{\ell})-\overline{\theta}{\rm q}_{\ell}=\underline{V}(\underline{D}(\overline{\theta}))-\overline{\theta}\underline{D}(\overline{\theta})=\int\limits^{\infty}_{\overline{\theta}}\underline{D}(y){\rm d}y.$

\noindent\textbf{Part (2).} That (a) implies (b) is immediate given
that, for all $\theta$, $\underline{{\rm DWL}}(\theta,q(\theta))\ge0$.
To complete the proof, it thus suffices to show that, when Condition
(\ref{eq:strong-major}) holds for $\theta\in\{\underline{\theta},\overline{\theta}\}$
and, in addition, Condition (\ref{eq:weak-major}) holds for all $\theta\in(\underline{\theta},\overline{\theta})$,
the following inequality 
\begin{equation}
\int\limits^{\overline{\theta}}_{\theta}\underline{D}(y)dy-\int\limits^{\overline{\theta}}_{\theta}q(y)dy-\int\limits^{\underline{P}(q(\theta))}_{\theta}\big(\underline{D}(y)-q(\theta)\big)dy\geq0\label{eq:robut-ineq-manipulated}
\end{equation}
holds for all $\theta\in(\underline{\theta},\overline{\theta})$.
We consider two cases, which are covered in Claims \ref{cl:case1}
and \ref{cl:case2} below.

\begin{claim} \label{cl:case1} Suppose that $q(\theta)\le\underline{D}(\theta)$
and, for all $\theta'\in\Theta$, $\int\limits^{\overline{\theta}}_{\theta'}q(y){\rm d}y\leq\int\limits^{\overline{\theta}}_{\theta'}\underline{D}(y){\rm d}y$.
Then the inequality in (\ref{eq:robut-ineq-manipulated}) holds for
$\theta$.\end{claim}

\noindent\emph{Proof of Claim \ref{cl:case1}}. Observe that $\theta\le\underline{P}(q(\theta))\le\overline{\theta}$.
The first inequality follows because $q(\theta)\le\underline{D}(\theta)$.
The second inequality follows from Condition (\ref{eq:strong-major})
applied to $\theta=\overline{\theta}$ which gives $q(\overline{\theta})={\rm q}_{\ell}=\underline{D}(\overline{\theta})$;
because $q$ is weakly decreasing, we then have that $q(\theta)\geq\underline{D}(\overline{\theta})$.
The left-hand-side of (\ref{eq:robut-ineq-manipulated}) is thus equivalent
to 
\begin{equation}
\int\limits^{\overline{\theta}}_{\underline{P}(q(\theta))}(\underline{D}(y)-q(y)){\rm d}y+\int\limits^{\underline{P}(q(\theta))}_{\theta}(q(\theta)-q(y)){\rm d}y.\label{eq:inequality-claim-1}
\end{equation}
The first integral in (\ref{eq:inequality-claim-1}) is non-negative;
this follows from the second condition in Claim \ref{cl:case1} applied
to $\theta'=\underline{P}(q(\theta))$. The second integral is non-negative
because $q$ is weakly decreasing. This completes the proof of Claim
\ref{cl:case1}. \hfill{} $\blacksquare$

\begin{claim} \label{cl:case2} Suppose that $q(\theta)>\underline{D}(\theta)$,
and, for all $\theta'\in\Theta$, $\int\limits^{\overline{\theta}}_{\theta'}q(y){\rm d}y\leq\int\limits^{\overline{\theta}}_{\theta'}\underline{D}(y){\rm d}y$.
Then the inequality in (\ref{eq:robut-ineq-manipulated}) holds for
$\theta$. \end{claim}

\noindent\emph{Proof of Claim \ref{cl:case2}.} Because $q(\theta)>\underline{D}(\theta)$,
$\underline{P}(q(\theta))<\theta$, which implies that the left-hand-side
of the inequality in (\ref{eq:robut-ineq-manipulated}) is equal to
$\int\limits^{\overline{\theta}}_{\theta}(\underline{D}(y)-q(y)){\rm d}y-\int\limits^{\theta}_{\underline{P}(q(\theta))}(q(\theta)-\underline{D}(y)){\rm d}y$.
Let $\theta^{\sharp}\equiv\inf\{y\le\theta:q(s)\geq\underline{D}(s)\:\mathrm{for\:all}\:s\in(y,\theta]\}$
and note that $\theta^{\sharp}<\theta$. Suppose that $\theta^{\sharp}>\underline{\theta}$,
which implies that $q(\theta^{\sharp})=\underline{D}(\theta^{\sharp})$.
Applying the second condition in the statement of Claim \ref{cl:case2}
to $\theta'=\theta^{\sharp}$, we have that $\int\limits^{\overline{\theta}}_{\theta^{\sharp}}\left(\underline{D}(y)-q(y)\right)dy\ge0$.
Hence, $\int\limits^{\overline{\theta}}_{\theta}\left(\underline{D}(y)-q(y)\right)dy\ge\int\limits^{\theta}_{\theta^{\sharp}}\left(q(y)-\underline{D}(y)\right)dy.$
It follows that 
\begin{align}
\int\limits^{\overline{\theta}}_{\theta}\left(\underline{D}(y)-q(y)\right)dy-\int\limits^{\theta}_{\underline{P}(q(\theta))}\big(q(\theta)-\underline{D}(y)\big)dy\ge\int\limits^{\theta}_{\theta^{\sharp}}\left(q(y)-\underline{D}(y)\right)dy-\int\limits^{\theta}_{\underline{P}(q(\theta))}\big(q(\theta)-\underline{D}(y)\big)dy.\label{eq:new-ineq-claim-2}
\end{align}
Because $q(\theta)\le q(\theta^{\sharp})=\underline{D}(\theta^{\sharp})$,
we have that $\underline{P}(q(\theta))\ge\theta^{\sharp}$. This property,
together with the fact that $q(y)>\underline{D}(y)$ for all $y\in(\theta^{\sharp},\theta)$,
implies that $\int\limits^{\theta}_{\theta^{\sharp}}(q(y)-\underline{D}(y))dy\geq\int\limits^{\theta}_{\underline{P}(q(\theta))}(q(y)-\underline{D}(y))dy$.
In turn, this means that the right-hand-side of the inequality in
(\ref{eq:new-ineq-claim-2}) is greater than 
\begin{align*}
\int\limits^{\theta}_{\underline{P}(q(\theta))}\left(q(y)-\underline{D}(y)\right)dy-\int\limits^{\theta}_{\underline{P}(q(\theta))}\big(q(\theta)-\underline{D}(y)\big)dy=\int\limits^{\theta}_{\underline{P}(q(\theta))}\left(q(y)-q(\theta)\right)dy\ge0,
\end{align*}
where the last inequality follows from the fact that $q$ is weakly
decreasing. We thus conclude that the inequality in (\ref{eq:robut-ineq-manipulated})
holds, as claimed.

Next, suppose that $\theta^{\sharp}=\underline{\theta}$. That Condition
(\ref{eq:strong-major}) holds for $\theta=\underline{\theta}$ means
that $\underline{W}(\underline{\theta},q)\ge G^{*}$, where, for any
$\theta\in\Theta$, 
\begin{equation}
\underline{W}(\theta,q)\equiv\underline{V}(q(\theta))-\theta q(\theta)-\int\limits^{\overline{\theta}}_{\theta}q(y)dy.\label{eq:W-fn}
\end{equation}
Because $q(y)\geq\underline{D}(y)$ for all $y\in[\underline{\theta},\theta]$,
the function $\underline{W}(\cdot,q)$ is weakly increasing over $[\underline{\theta},\theta]$
(see Lemma \ref{Lem-Mono} below) and hence $\underline{W}(\theta,q)\ge\underline{W}(\underline{\theta},q)\ge G^{*}$.
As shown above, this means that (\ref{eq:robut-ineq-manipulated})
holds. This completes the proof of Claim \ref{cl:case2}. \hfill{}
$\blacksquare$

Claims \ref{cl:case1} and \ref{cl:case2} cover all cases and jointly
establish Lemma \ref{lemma:majorizations}. \hfill{}Q.E.D.\smallskip{}

\begin{lemma}[monotonicity]\label{Lem-Mono} Suppose $M\equiv(q,u)$
is an IC mechanism and $I\subseteq\Theta$ is an interval. Let $\underline{W}(\cdot,q)$
be the function defined, for all $\theta\in\Theta$, by (\ref{eq:W-fn}).

\noindent{[}A{]}. Suppose $0<q(\theta)\le\underline{D}(\theta)$
for all $\theta\in I$. Then $\underline{W}(\cdot,q)$ is weakly decreasing
over $I$. If, in addition, $q$ is decreasing with $q(\theta)<\underline{D}(\theta)$
for all $\theta\in I$, then $\underline{W}(\cdot,q)$ is decreasing
over $I$.

\noindent{[}B{]}. Suppose $q(\theta)>\underline{D}(\theta)$ for
all $\theta\in I$. Then, $\underline{W}(\cdot,q)$ is weakly increasing
over $I$. If, in addition, $q$ is decreasing over $I$, then $\underline{W}(\cdot,q)$
is increasing over $I$. \end{lemma}

\noindent\textbf{Proof of Lemma \ref{Lem-Mono}.} Pick $\theta,\theta'\in I$,
with $\theta'<\theta$. Note that 
\begin{align}
\underline{W}(\theta',q)-\underline{W}(\theta,q) & =\int^{q(\theta')}_{q(\theta)}\underline{P}(y)dy-\theta'q(\theta')+\theta q(\theta)-\int^{\theta}_{\theta'}q(y)dy.\label{eq:d1}
\end{align}
\textbf{\textsc{Proof of Part (A)}}. We consider two cases.

\noindent\textbf{Case 1}: $\underline{P}(q(\theta'))\ge\theta>\theta'$.
Note that the right-hand-side of (\ref{eq:d1}) is equal to
\begin{equation}
\int^{q(\theta')}_{q(\theta)}(\underline{P}(y)-\theta)dy+\left\{ (\theta-\theta')q(\theta')-\int^{\theta}_{\theta'}q(y)dy\right\} .\label{eq:d1-BIS}
\end{equation}
The first term in (\ref{eq:d1-BIS}) is non-negative because, for
all $y\in(q(\theta),q(\theta'))$, $\underline{P}(y)>\underline{P}(q(\theta'))\ge\theta$,
which follows from $\underline{P}$ being decreasing. Furthermore,
if $q(\theta')>q(\theta)$, then the first term in (\ref{eq:d1-BIS})
is positive. Next, observe that, because $q$ is weakly decreasing,
the expression in curly brackets in (\ref{eq:d1-BIS}) is non-negative.
Thus, $\underline{W}(\theta',q)\ge\underline{W}(\theta,q)$, i.e.,
$\underline{W}(\cdot,q)$ is weakly decreasing over $I$ (decreasing
when $q$ is decreasing over $I$).

\noindent\textbf{Case 2}: $\theta>\underline{P}(q(\theta'))\ge\theta'$.
Use Panel A in Figure \ref{fig:mono} to observe that the sum of the
first four terms in (\ref{eq:d1}) is equal to 
\begin{align}
\int^{\underline{P}(q(\theta'))}_{\theta'}\Big(q(\theta')-q(y)\Big)dy+\int^{\theta}_{\underline{P}(q(\theta'))}\Big(\underline{D}(y)-q(y)\Big)dy+\int^{\underline{P}(q(\theta))}_{\theta}\Big(\underline{D}(y)-q(\theta)\Big)dy.\label{eq:d2}
\end{align}

\begin{figure}
\centering %
\begin{minipage}[b]{0.49\linewidth}%
\includegraphics[scale=0.4]{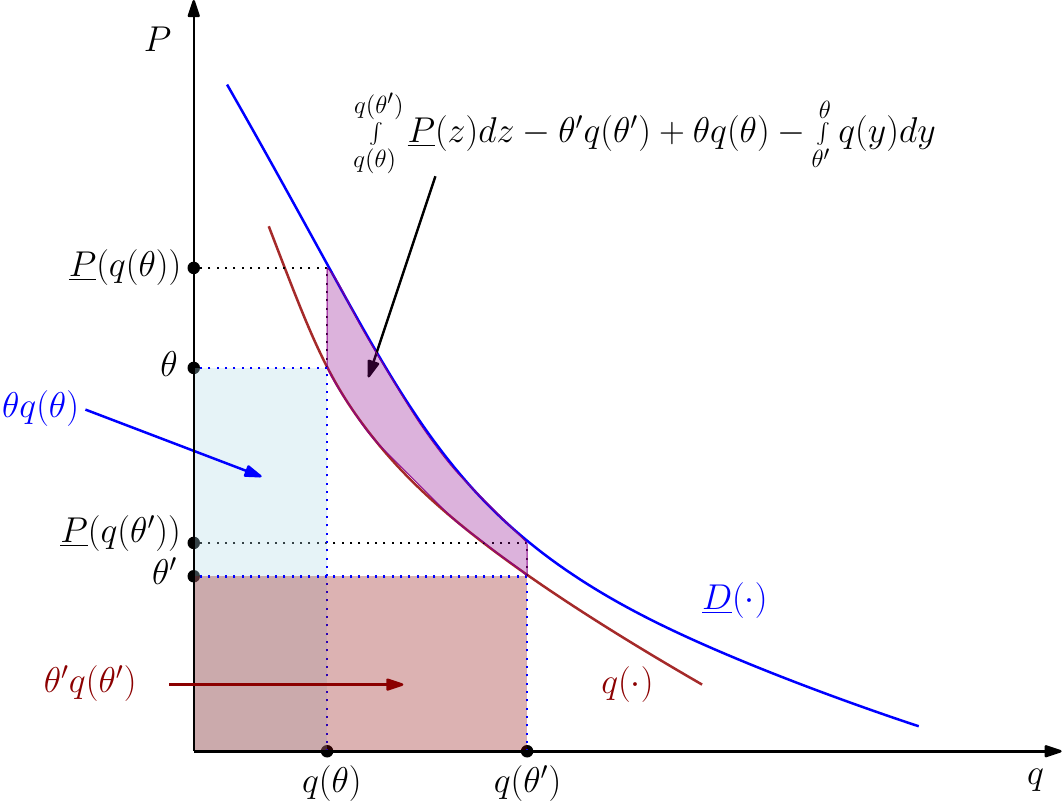} \caption*{\centering A: Case 2 in Part A.}
\end{minipage}%
\begin{minipage}[b]{0.49\linewidth}%
\includegraphics[scale=0.4]{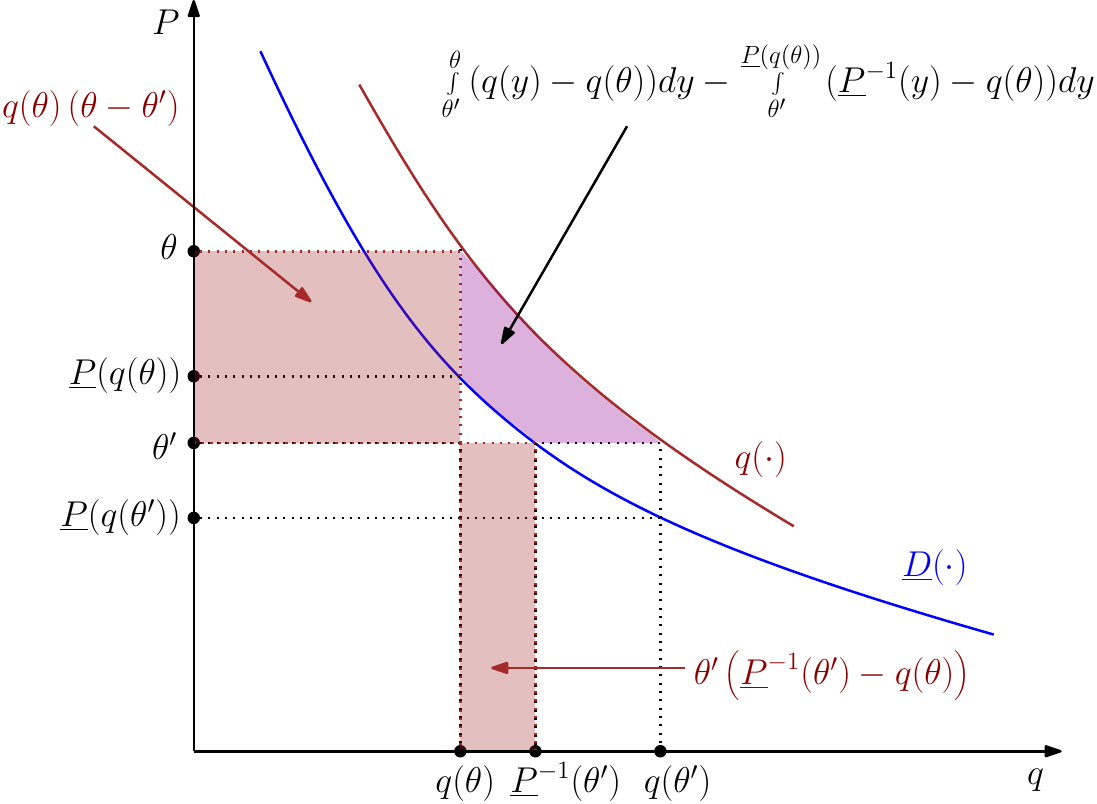} \caption*{\centering B: Case 2 in Part B.}
\end{minipage}\caption{Illustration for Lemma \ref{Lem-Mono} }
\label{fig:mono} 
\end{figure}


The first term in (\ref{eq:d2}) is non-negative because $q$ is weakly
decreasing. Next observe that, for all $y\in(\underline{P}(q(\theta')),\theta),$
$\underline{D}(y)\geq q(y)$. Hence, the second term in (\ref{eq:d2})
is also non-negative. Finally, the last term in (\ref{eq:d2}) is
also non-negative because, for any $y\in(\theta,\underline{P}(q(\theta)))$,
$\underline{D}(y)\ge q(\theta)$, which follows from $\underline{P}$
being decreasing. We conclude that $\underline{W}(\cdot,q)$ is weakly
decreasing over $I$ (decreasing when $q$ is decreasing and such
that $q(y)<\underline{D}(y)$ for all $y\in I$).

\noindent\textbf{\textsc{Proof of Part (B)}}: The difference in welfare
$\underline{W}(\theta,q)-\underline{W}(\theta',q)$ across the two
types is given by the (negative of) expression in (\ref{eq:d1}),
which can be rewritten as 
\begin{align}
\underline{W}(\theta,q)-\underline{W}(\theta',q) & =\int^{\theta}_{\theta'}\Big(q(y)-q(\theta)\Big)dy+\int^{q(\theta')}_{q(\theta)}\Big(\theta'-\underline{P}(z)\Big)dz.\label{eq:d111}
\end{align}
We consider two cases.

\noindent\textbf{Case 1}: $\underline{P}(q(\theta))\leq\theta'$.
In this case, $\underline{P}(z)<\theta'$ for all $z>q(\theta)$.
This implies that the second integral in (\ref{eq:d111}) is non-negative.
The first integral is non-negative because $q$ is weakly decreasing.
If $q$ is decreasing, both integrals are positive.

\noindent\textbf{Case 2}: $\theta'<\underline{P}(q(\theta))<\theta$.
We then have that $q(\theta')>\underline{D}(\theta')>q(\theta)$.
Hence, using (\ref{eq:d111}), we have that 
\begin{align}
\underline{W}(\theta,q)-\underline{W}(\theta',q) & \geq\int^{\theta}_{\theta'}\Big(q(y)-q(\theta)\Big)dy-\int^{\underline{D}(\theta')}_{q(\theta)}\Big(\underline{P}(z)-\theta'\Big)dz.\label{eq:diff_final}
\end{align}
Changing the variable of integration, the second integral can be written
as 
\[
\int^{\underline{D}(\theta')}_{q(\theta)}\Big(\underline{P}(z)-\theta'\Big)dz=\int^{\underline{P}(q(\theta))}_{\theta'}\Big(\underline{D}(y)-q(\theta)\Big)dy.
\]
Thus, the right-hand-side of (\ref{eq:diff_final}) reduces to (see
Panel B in Figure \ref{fig:mono} for an illustration) 
\begin{equation}
\int^{\theta}_{\theta'}\Big(q(y)-q(\theta)\Big)dy-\int^{\underline{P}(q(\theta))}_{\theta'}\Big(\underline{D}(y)-q(\theta)\Big)dy,\label{eq:laststep-M}
\end{equation}
which is non-negative because $\underline{P}(q(\theta))<\theta$ and
$\underline{D}(y)<q(y)$ for all $y\in I$. 

Thus, $\underline{W}(\theta,q)\ge\underline{W}(\theta',q)$, i.e.,
$\underline{W}(\cdot,q)$ is weakly increasing over $I$. The inequality
above also reveals that, when $q$ is decreasing, the expression in
(\ref{eq:laststep-M}) is positive, implying that $\underline{W}(\cdot,q)$
is increasing over $I$. This completes the proof of the lemma. \hfill{}Q.E.D.\smallskip{}

\noindent\textbf{Proof of Proposition \ref{prop:BM-floor}.} By Observation
\ref{obs:qell}, if $M=(q,u)\in\mathcal{M}^{{\rm SL}}$, then $q(\theta)\ge{\rm q}_{\ell}$
for all $\theta\in\Theta$. The following is thus a relaxation of
problem (\ref{opt:ropt}): 
\begin{align}
 & ~~~~\max_{q}\int\limits^{\overline{\theta}}_{\underline{\theta}}\Big[V^{\star}(q(\theta))-z^{\star}(\theta)q(\theta)\Big]F^{\star}(\text{d}\theta)~~~~\tag{\textbf{RP}}\label{opt:rel}\\
\textrm{subject to} & ~q~\textrm{ weakly decreasing}~\textrm{and}~q(\theta)\ge{\rm q}_{\ell}~\forall~\theta\in\Theta.\nonumber 
\end{align}
That the quantity schedule $q^{\star}$ satisfies the constraints
in the relaxed problem (\ref{opt:rel}) follows from the fact that,
when $F^{\star}$ is regular, $\qbm$ is decreasing and hence $q^{\star}$
is weakly decreasing. That $q^{\star}$ also satisfies the other constraint
of the relaxed problem follows directly from its definition. Next,
observe that, for any $\theta$, the function $V^{\star}({\rm q})-z^{\star}(\theta){\rm q}$
is concave in ${\rm q}$ and attains a maximum at $\qbm(\theta)$.
Therefore, the quantity schedule $q^{\star}$ maximizes the objective
function in the relaxed program over all weakly decreasing functions
$q$ satisfying $q(\theta)\ge{\rm q}_{\ell}$ for all $\theta$.

To complete the proof, it suffices to show that $q^{\star}$ satisfies
the robustness constraints in (\ref{SL-constraint}) if and only if
Conditions (\ref{eq:robustness-bottom}) and (\ref{eq:weak-majorization})
in the proposition hold. This follows from Lemma \ref{lemma:majorizations}
along with the continuity of $q^{\star}$ and $\underline{D}$ which
implies that, if $q^{\star}$ satisfies Condition (\ref{eq:weak-majorization}),
then $q^{\star}(\overline{\theta})={\rm {q}_{\ell}}$.  \hfill{}Q.E.D.\smallskip{}

\noindent\textbf{Proof of Proposition \ref{prop:robust-quantity-mechanism-general}}.\textsc{
Part 1}. The result follows from Lemmas \ref{lem:weakqoptbound}-\ref{lem:leftofthetam}
below.

\noindent\begin{lemma} \label{lem:weakqoptbound} Every $M^{\textsc{OPT}}\equiv(\qopt,\uopt)$
is such that $\qopt(\theta)\le q^{\star}(\theta)$ for all $\theta\in(\tlb,\tub]$
and $\qopt(\theta)={\rm q}_{\ell}$ for all $\theta\in[\theta^{\star},\tub]$.\end{lemma}

\noindent\textbf{Proof of Lemma \ref{lem:weakqoptbound}.} By Observation
\ref{obs:qell}, $\qopt(\tub)={\rm q}_{\ell}$. Assume, towards a
contradiction, that $\qopt(\theta)>q^{\star}(\theta)$ for some $\theta\in(\tlb,\tub)$.
The continuity of $q^{\star}$ over $(\tlb,\tub)$ along with the
fact that $\qopt$ is weakly decreasing, imply that there is a positive
measure set of types $I\subseteq(\tlb,\tub)$ such that $\qopt(y)>q^{\star}(y)$
for all $y\in I$. Then consider the quantity schedule $\tilde{q}(y)=\min\{\qopt(y),q^{\star}(y)\}$
for all $y\in(\tlb,\tub)$ with $\tilde{q}(y)=\qopt(y)$ for all $y\in\{\tlb,\tub\}$.
Clearly, $\tilde{q}$ is weakly decreasing. Furthermore, because $\qopt$
satisfies constraint (\ref{eq:strong-major}) for $\theta\in\{\tlb,\tub\}$,
and constraint (\ref{eq:weak-major}) for all $\theta\in(\tlb,\tub)$,
and because $\tilde{q}(y)\le\qopt(y)$ for all $y$, with equality
at $\tlb$ and $\tub$, $\tilde{q}$ also satisfies constraint (\ref{eq:strong-major})
for $\theta\in\{\tlb,\tub\}$, and constraint (\ref{eq:weak-major})
for all $\theta\in(\tlb,\tub)$. By Lemma \ref{lemma:majorizations},
this means that $\tilde{q}$ satisfies the robustness constraint in
(\ref{opt:ropt}) for all $\theta\in\Theta$.

Next observe that (a) $\qbm(y)\le q^{\star}(y)$ for all $y\in\Theta$
and (b) $\qbm$ maximizes the objective function in (\ref{opt:ropt})
point-wise. The concavity of $V^{\star}({\rm q})-z^{\star}(\theta){\rm q}$
in ${\rm q}$ then implies that the objective function in (\ref{opt:ropt})
is strictly higher under $\tilde{q}$ than under $\qopt$, where the
strict inequality follows because $I$ has positive Lebesgue measure.
This contradicts the optimality of $\qopt$. We conclude that $\qopt(\theta)\le q^{\star}(\theta)$
for all $\theta\in(\tlb,\tub]$. Because $q^{\star}(\theta)={\rm q}_{\ell}$
for all $\theta\in[\theta^{\star},\tub]$, we thus have that $\qopt(\theta)\le{\rm q}_{\ell}$
for all $\theta\in[\theta^{\star},\tub]$. However, by Observation
\ref{obs:qell}, $\qopt(\tub)={\rm q}_{\ell}$. Because $\qopt$ is
weakly decreasing, it must be that $\qopt(\theta)={\rm q}_{\ell}$
for all $\theta\in[\theta^{\star},\tub]$. $\hfill\blacksquare{}$

The next lemma compares $\theta^{m}$ and $\theta^{\star}$ when $M^{\star}$
is not robustly optimal.

\begin{lemma} \label{lemm-theta_m-star} The mechanism $M^{\star}$
is robustly optimal if and only if $\theta^{m}=\tub$. Furthermore,
if $M^{\star}$ is not robustly optimal, then $\theta^{m}<\theta^{\star}$.
\end{lemma}

\noindent\textbf{Proof of Lemma \ref{lemm-theta_m-star}.} Suppose
$M^{\star}$ is robustly optimal. Then, $M^{\star}\in\mathcal{M}^{{\rm SL}}$.
Observation \ref{obs:qell} then implies that $q^{\star}(\tub)={\rm q}_{\ell}$.
This implies that $\underline{W}(\tub,q^{\star})=G^{*}$. Lemma \ref{lemma-SL-characterization}
in turn implies that $\underline{W}(\theta,q^{\star})\ge G^{*}$ for
all $\theta\in\Theta$. Hence, $\underline{W}(\theta,q^{\star})\ge\underline{W}(\tub,q^{\star})$
for all $\theta.$ This implies that $\theta^{m}=\tub$.

\noindent For the converse, suppose that $\theta^{m}=\tub$. Then,
by definition of $\theta^{m}$, $\underline{W}(\theta,q^{\star})\ge\underline{W}(\tub,q^{\star})$
for all $\theta\in\Theta$. This implies that $q^{\star}(\tub)={\rm q}_{\ell}$.
To see this, suppose, towards a contradiction, that $q^{\star}(\tub)>{\rm q}_{\ell}$.
This means that $q^{\star}(\tub)=\qbm(\tub)>{\rm q}_{\ell}=\underline{D}(\tub)$.
By continuity of $\qbm$ and $\underline{D}$, there exists $\epsilon>0$
such that $q^{\star}(\theta)>\underline{D}(\theta)$ for all $\theta\in(\tub-\epsilon,\tub]$.
Because $\qbm$ is decreasing over $(\tub-\epsilon,\tub]$, Lemma
\ref{Lem-Mono} implies that $\underline{W}(\cdot,q^{\star})$ is
increasing over $(\tub-\epsilon,\tub]$, a contradiction to the assumption
that $\theta^{m}=\tub$. Thus, $q^{\star}(\tub)={\rm q}_{\ell}$.
But then $\underline{W}(\tub,q^{\star})=G^{*}$. Together with the
definition of $\theta^{m}=\tub$, this implies that $\underline{W}(\theta,q^{\star})\ge G^{*}$
for all $\theta\in\Theta$. Hence, $M^{\star}\in\mathcal{M}^{{\rm SL}}$.
Proposition \ref{prop:BM-floor} then implies that $M^{\star}$ is
robustly optimal.

Finally, to see that, when $M^{\star}$ is not robustly optimal, $\theta^{m}<\theta^{\star}$,
observe that, when $\theta^{\star}=\tub$, the result follows from
the fact that $\theta^{m}<\tub$. When, instead, $\theta^{\star}<\tub$,
then $\underline{W}(\theta,q^{\star})=G^{*}$ for all $\theta\in[\theta^{\star},\tub]$.
Because $M^{\star}$ is not robustly optimal, it must be that $\underline{W}(\theta^{m},q^{\star})<G^{*}$.
Hence, $\theta^{m}<\theta^{\star}$. $\hfill\blacksquare{}$

The next Lemma compares $q^{\star}(\theta^{m})$ with $\underline{D}(\theta^{m})$.

\begin{lemma} \label{lem:qstardlowint} Suppose $M^{\star}$ is not
robustly optimal. Then, $q^{\star}(\theta^{m})\ge\underline{D}(\theta^{m})$,
with the inequality holding as equality if $\theta^{m}>\tlb$. \end{lemma}

\noindent\textbf{Proof of Lemma \ref{lem:qstardlowint}.} By Lemma
\ref{lemm-theta_m-star}, $\theta^{m}<\theta^{\star}$. This means
that $q^{\star}(\theta^{m})=\qbm(\theta^{m})$. Assume, towards a
contradiction, that $\qbm(\theta^{m})<\underline{D}(\theta^{m})$.
Because $\qbm$ and $\underline{D}$ are continuous and $\theta^{m}<\theta^{\star}$,
there exists $\epsilon>0$ such that $\theta^{m}+\epsilon<\theta^{\star}$
and $\qbm(\theta)<\underline{D}(\theta)$ for all $\theta\in[\theta^{m},\theta^{m}+\epsilon]$.
Because $\qbm$ (and thus $q^{\star}$) is decreasing on this interval,
Lemma \ref{Lem-Mono} implies that $\underline{W}(\cdot,q^{\star})$
is decreasing on $[\theta^{m},\theta^{m}+\epsilon]$. This contradicts
the definition of $\theta^{m}$. Hence, $q^{\star}(\theta^{m})\ge\underline{D}(\theta^{m})$.

By the same argument, if $\theta^{m}>\tlb$ and $q^{\star}(\theta^{m})>\underline{D}(\theta^{m})$,
there exists $\epsilon>0$ such that $\underline{W}(\cdot,q^{\star})$
is increasing on $[\theta^{m}-\epsilon,\theta^{m}]$, contradicting
the definition of $\theta^{m}$. Hence, $q^{\star}(\theta^{m})=\underline{D}(\theta^{m})$
if $\theta^{m}>\tlb$.$\hfill\blacksquare{}$

The next lemma uses the property in Lemma \ref{lem:qstardlowint}
to complete the proof of Part 1 by establishing that $\qopt$ and
$\qbm$ agree on $(\tlb,\theta^{m})$.

\begin{lemma} \label{lem:leftofthetam} Suppose $M^{\star}$ is not
robustly optimal. If $\theta^{m}>\tlb$, then in every robustly optimal
mechanism, $\qopt(\theta)=\qbm(\theta)$ for all $\theta\in(\tlb,\theta^{m})$.\end{lemma}

\noindent\textbf{Proof of Lemma \ref{lem:leftofthetam}.} Because
$M^{\star}$ is not robustly optimal, by Lemma \ref{lemm-theta_m-star},
$\theta^{m}<\theta^{\star}$. Hence, $q^{\star}(\theta)=\qbm(\theta)$
for all $\theta\in(\tlb,\theta^{m})$. Furthermore, by Lemma \ref{lem:weakqoptbound},
$\qopt(\theta)\le\qbm(\theta)$ for all $\theta\in(\tlb,\theta^{m})$.
Now, assume, towards a contradiction, that $\qopt(\theta)<\qbm(\theta)$
for some $\theta\in(\tlb,\theta^{m})$. By continuity of $\qbm$ and
the fact that $\qopt$ is weakly decreasing, there exists a set of
types $I\subseteq(\tlb,\theta^{m})$ of positive Lebesgue measure
such that $\qopt(y)<\qbm(y)$ for all $y\in I$. Let $\widetilde{M}\equiv(\tilde{q},\tilde{u})$
be the mechanism with quantity schedule given by 
\begin{align*}
\tilde{q}(\theta)=\begin{cases}
\qbm(\theta) & \textrm{if}~\theta\in[\underline{\theta},\theta^{m})\\
\qopt(\theta) & \textrm{otherwise}
\end{cases}
\end{align*}
and $\tilde{u}(\theta)=\int\limits^{\tub}_{\theta}\tilde{q}(y){\rm d}y$
for all $\theta\in\Theta$. By Lemma \ref{lem:weakqoptbound}, $\qopt(\theta^{m})\le\qbm(\theta^{m})$,
and hence $\tilde{q}$ is weakly decreasing. Furthermore, because
$\tilde{u}$ satisfies the above envelope formula with $\tilde{u}(\tub)=0$,
$\widetilde{M}$ is IC and IR. Because $\qbm$ maximizes the objective
function of (\ref{opt:ropt}) point-wise and, for any $\theta$, $V^{\star}({\rm q})-z^{\star}(\theta){\rm q}$
in strictly concave in ${\rm q}$, and $I\subseteq(\tlb,\theta^{m})$
has positive Lebesgue measure, the objective function in (\ref{opt:ropt})
is strictly higher under $\tilde{q}$ than under $\qopt$. Below we
show that $\tilde{q}$ satisfies the robustness constraints in (\ref{opt:ropt}),
thus contradicting the robust optimality of $\qopt$. Clearly, $\tilde{q}$
satisfies the robustness constraints in (\ref{opt:ropt}) for all
$\theta\in[\theta^{m},\tub)$ because $\tilde{q}$ agrees with $\qopt$
on $[\theta^{m},\tub)$ and $\qopt$ satisfies these constraints by
its definition. Now, pick $\theta\in[\tlb,\theta^{m})$. Then, $\tilde{q}(\theta)=\qbm(\theta)=q^{\star}(\theta)$.
Hence, 
\begin{align*}
\underline{W}(\theta,\tilde{q})-\underline{W}(\theta,q^{\star}) & =\int\limits^{\tub}_{\theta^{m}}q^{\star}(y){\rm d}y-\int\limits^{\tub}_{\theta^{m}}\qopt(y){\rm d}y\\
 & \ge\Big[\underline{V}(\qopt(\theta^{m}))-\theta^{m}\qopt(\theta^{m})\Big]-\Big[\underline{V}(q^{\star}(\theta^{m}))-\theta^{m}q^{\star}(\theta^{m})\Big]\\
 & +\int\limits^{\tub}_{\theta^{m}}q^{\star}(y){\rm d}y-\int\limits^{\tub}_{\theta^{m}}\qopt(y){\rm d}y\\
 & =\underline{W}(\theta^{m},\qopt)-\underline{W}(\theta^{m},q^{\star}).
\end{align*}
The first equality follows from the definition of $\tilde{q}$. The
inequality follows because (a) $q^{\star}(\theta^{m})=\underline{D}(\theta^{m})$
(by Lemma \ref{lem:qstardlowint}) and (b) $\underline{V}({\rm q})-\theta^{m}{\rm q}$
is maximized at $\underline{D}(\theta^{m})$. The last equality is
by definition of the $\underline{W}$ function. As a consequence,
$\underline{W}(\theta,\tilde{q})-\underline{W}(\theta^{m},\qopt)\ge\underline{W}(\theta,q^{\star})-\underline{W}(\theta^{m},q^{\star})\ge0$,
where the inequality follows from the definition of $\theta^{m}$.
Hence, $\underline{W}(\theta,\tilde{q})\ge\underline{W}(\theta^{m},\qopt)\ge G^{*}$,
where the last inequality follows from the fact that $\qopt$ is a
solution to (\ref{opt:ropt}) and hence satisfies the robustness constraints
in (\ref{opt:ropt}) for all $\theta$. We conclude that $\underline{W}(\theta,\tilde{q})\ge G^{*}$
for all $\theta\in[\tlb,\tub]$, implying that $\tilde{q}$ satisfies
the robustness constraints in (\ref{opt:ropt}) for all $\theta\in\Theta$,
thus contradicting the robust optimality of $\qopt$. $\hfill\blacksquare{}$

\noindent\textsc{Part 2}. Let 
\begin{align*}
\theta^{\star\star} & =\inf\{\theta\in[\theta^{m},\tub]:\qopt(y)=\underline{D}(y)~\forall~y\in[\theta,\tub]\}.
\end{align*}
By Observation \ref{obs:qell}, $\qopt(\tub)={\rm q}_{\ell}=\underline{D}(\tub)$
and therefore $\theta^{\star\star}$ is well defined. Furthermore,
when $\theta^{\star}<\tub$, Lemma \ref{lem:weakqoptbound} along
with the assumption that $\underline{D}$ is decreasing implies that
$\theta^{\star\star}=\tub$. If $\theta^{\star\star}=\tlb$, then
by the definition of $\theta^{\star\star}$, $\theta^{m}=\tlb$ and
there is nothing to prove. Hence, we will assume that $\theta^{\star\star}>\tlb$.

\begin{lemma} \label{lem:along-D-low} If $\theta^{\star\star}>\underline{\theta}$,
then $\qopt(\theta^{\star\star})=\underline{D}(\theta^{\star\star})$.\end{lemma}

\noindent\textbf{Proof of Lemma \ref{lem:along-D-low}.} If $\theta^{\star\star}=\overline{\theta}$,
the result follows from the fact that $\qopt(\overline{\theta})=\underline{D}(\overline{\theta})$.
Thus suppose $\theta^{\star\star}<\overline{\theta}$. Because $\qopt$
is weakly decreasing and $\underline{D}$ is continuous, it cannot
be that $\qopt(\theta^{\star\star})<\underline{D}(\theta^{\star\star})$.
The same properties imply that, if $\qopt(\theta^{\star\star})>\underline{D}(\theta^{\star\star})$,
there exists $\epsilon>0$ such that $\qopt(\theta)>\underline{D}(\theta)$
for all $\theta\in[\theta^{\star\star}-\epsilon,\theta^{\star\star}]$.
This however implies that, for any $\theta\in[\theta^{\star\star}-\epsilon,\theta^{\star\star})$,
$\int\limits^{\tub}_{\theta}\qopt(y){\rm d}y>\int\limits^{\tub}_{\theta}\underline{D}(y){\rm d}y$
which, by virtue of Lemma \ref{lemma:majorizations}, contradicts
the assumption that $\qopt$ satisfies the robustness constraints.
Hence, $\qopt(\theta^{\star\star})=\underline{D}(\theta^{\star\star})$.
$\hfill\blacksquare{}$

If $\theta^{\star\star}=\theta^{m}$, the result in Part 2 thus follows
directly from Lemma \ref{lem:along-D-low}. If, instead, $\theta^{\star\star}>\theta^{m}$,
the result follows from the next lemma.

\begin{lemma} \label{lem:generictheta2} Suppose $M^{\star}$ is
not robustly optimal and $\theta^{\star\star}>\theta^{m}$. Then,
there exists a set of types $I\subseteq(\theta^{m},\min\{\theta^{\star},\theta^{\star\star}\})$
of positive Lebesgue measure such that $\qopt(\theta)<\qbm(\theta)$
for all $\theta\in I$. 
\end{lemma}

\noindent\textbf{Proof of Lemma \ref{lem:generictheta2}.} We consider
two cases.

\noindent\textsc{Case 1}: $\theta^{\star}<\overline{\theta}$. As
explained above, in this case, $\min\{\theta^{\star},\theta^{\star\star}\}=\theta^{\star}$.
Furthermore, by Lemma \ref{lemm-theta_m-star}, $\theta^{m}<\theta^{\star}$.
Lemma \ref{lem:weakqoptbound} in turn implies that, for all $\theta\in(\theta^{m},\theta^{\star})$,
$\qopt(\theta)\le q^{\star}(\theta)=\qbm(\theta)$, whereas, from
Lemma \ref{lem:leftofthetam}, $\qopt(\theta)=\qbm(\theta)$ for all
$\theta\in(\tlb,\theta^{m})$. Now suppose, towards a contradiction,
that $\qopt(\theta)=\qbm(\theta)$ for almost all $\theta\in(\tlb,\theta^{\star})$.
Because $\qbm$ is continuous and $\qopt$ and $\qbm$ are weakly
decreasing, it must be that $\qopt(\theta)=\qbm(\theta)$ for all
$\theta\in(\tlb,\theta^{\star})$. Lemma \ref{lem:weakqoptbound}
then implies that $\qopt(\theta)=q^{\star}(\theta)$ for all $\theta>\tlb$.
Furthermore, because $M^{\star}$ is not robustly optimal, $\underline{W}(\theta^{m},q^{\star})<G^{*}$.
Because $\theta^{m}<\theta^{\star}$ and $\underline{W}(\cdot,q^{\star})$
is continuous, there must exist some $\theta'>\theta^{m}$ such that
$\underline{W}(\theta',q^{\star})<G^{*}$. Because $\qopt(\theta)=q^{\star}(\theta)$
for all $\theta>\tlb$, $\underline{W}(\theta',\qopt)=\underline{W}(\theta',q^{\star})<G^{*}$,
which contradicts the assumption that $\qopt$ satisfies the robustness
constraints of (\ref{opt:ropt}). We thus conclude that there must
exist a set of types $I\subseteq(\theta^{m},\theta^{\star})$ of positive
Lebesgue measure such that $\qopt(\theta)<\qbm(\theta)$ for all $\theta\in I$,
as claimed.

\noindent\textsc{Case 2}: $\theta^{\star}=\overline{\theta}$. In
this case, $\min\{\theta^{\star},\theta^{\star\star}\}=\theta^{\star\star}$.
Again, assume towards a contradiction that $\qopt(\theta)=\qbm(\theta)$
for almost all $\theta\in(\theta^{m},\theta^{\star\star})$. 

\noindent First, suppose that $\theta^{m}>\tlb$. Lemma \ref{lem:leftofthetam}
then implies that $\qopt(\theta)=\qbm(\theta)$ for all $\theta\in(\tlb,\theta^{m})$.
Because $\qbm$ is continuous and $\qopt$ is weakly decreasing, the
assumption that $\qopt(\theta)=\qbm(\theta)$ for almost all $\theta\in(\theta^{m},\theta^{\star\star})$
then implies that $\qopt(\theta^{m})=\qbm(\theta^{m})$. Next, suppose
that $\theta^{m}=\underline{\theta}$. Again, because $\qbm$ is continuous
and $\qopt$ is weakly decreasing, the assumption that $\qopt(\theta)=\qbm(\theta)$
for almost all $\theta\in(\theta^{m},\theta^{\star\star})$ implies
that $\qopt(\tlb)\ge\qbm(\tlb)$. Lastly, recall that the assumption
that $\theta^{\star}=\tub$ implies that $\qbm(\theta)=q^{\star}(\theta)$
for all $\theta\in(\tlb,\theta^{\star\star})$. The assumption that
$\qopt(\theta)=\qbm(\theta)$ for almost all $\theta\in(\theta^{m},\theta^{\star\star})$
then implies that 
\begin{align}
\underline{W}(\theta^{m},\qopt)-\underline{W}(\theta^{m},q^{\star}) & =\Big[\underline{V}(\qopt(\theta^{m}))-\theta^{m}\qopt(\theta^{m})\Big]-\Big[\underline{V}(\qbm(\theta^{m}))-\theta^{m}\qbm(\theta^{m})\Big]\label{eq:diff}\\
 & +\int\limits^{\tub}_{\theta^{m}}\big(q^{\star}(y)-\qopt(y)\big){\rm d}y\nonumber \\
 & =\Big[\underline{V}(\qopt(\theta^{m}))-\theta^{m}\qopt(\theta^{m})\Big]-\Big[\underline{V}(\qbm(\theta^{m}))-\theta^{m}\qbm(\theta^{m})\Big]\nonumber \\
 & +\int\limits^{\tub}_{\theta^{\star\star}}\big(q^{\star}(y)-\qopt(y)\big){\rm d}y.\nonumber 
\end{align}
If $\theta^{m}>\tlb$, the first two bracketed terms in (\ref{eq:diff})
cancel out because, in this case, $\qbm(\theta^{m})=\qopt(\theta^{m})$,
as established above. If, instead, $\theta^{m}=\tlb$, because $\qbm(\tlb)=D^{\star}(\tlb)\ge\underline{D}(\tlb)$,
and because, as established above, $\qopt(\tlb)\ge\qbm(\tlb)$, by
the concavity of $\underline{V}({\rm q})-\theta^{m}{\rm q}$ in ${\rm q}$,
the sum of the two square brackets in (\ref{eq:diff}) is non-positive.
Hence, no matter whether $\theta^{m}>\tlb$ or $\theta^{m}=\tlb$,
in either case, we have that 
\begin{align}
\underline{W}(\theta^{m},\qopt)-\underline{W}(\theta^{m},q^{\star}) & \le\int\limits^{\tub}_{\theta^{\star\star}}\big(q^{\star}(y)-\qopt(y)\big){\rm d}y.\label{eq:temp1}
\end{align}
By Lemma \ref{lem:along-D-low}, $\qopt(\theta^{\star\star})=\underline{D}(\theta^{\star\star})$.
Hence, 
\begin{align}
\Big[\underline{V}(\qopt(\theta^{\star\star}))-\theta^{\star\star}\qopt(\theta^{\star\star})\Big]-\Big[\underline{V}(\qbm(\theta^{\star\star}))-\theta^{\star\star}\qbm(\theta^{\star\star})\Big] & \ge0,\label{eq:new-ineq}
\end{align}
where the inequality follows from the fact that $\underline{D}(\theta^{\star\star})$
maximizes $\underline{V}({\rm q})-\theta^{\star\star}{\rm q}$ over
$[0,{\rm \bar{q}}]$. Combining (\ref{eq:new-ineq}) with (\ref{eq:temp1}),
we thus have that 
\begin{align*}
\underline{W}(\theta^{m},\qopt)-\underline{W}(\theta^{m},q^{\star}) & \le\Big[\underline{V}(\qopt(\theta^{\star\star}))-\theta^{\star\star}\qopt(\theta^{\star\star})-\int\limits^{\tub}_{\theta^{\star\star}}\qopt(y){\rm d}y\Big]\\
 & -\Big[\underline{V}(\qbm(\theta^{\star\star}))-\theta^{\star\star}\qbm(\theta^{\star\star})-\int\limits^{\tub}_{\theta^{\star\star}}q^{\star}(y){\rm d}y\Big]\\
 & =\underline{W}(\theta^{\star\star},\qopt)-\underline{W}(\theta^{\star\star},q^{\star}).
\end{align*}
Because $\theta^{m}<\theta^{\star\star}$, $\underline{W}(\theta^{m},q^{\star})<\underline{W}(\theta^{\star\star},q^{\star})$.
We thus have that $\underline{W}(\theta^{m},\qopt)<\underline{W}(\theta^{\star\star},\qopt)$.
The definition of $\theta^{\star\star}$, along with Lemma \ref{lem:along-D-low},
in turn imply that $\qopt(\theta)=\underline{D}(\theta)$ for all
$\theta\ge\theta^{\star\star}$. Because $\underline{W}(\tub,\qopt)=G^{*}$,
this last property in turn implies that $\underline{W}(\theta^{\star\star},\qopt)=G^{*}$.
Hence, $\underline{W}(\theta^{m},\qopt)<G^{*}$, a contradiction to
the robust optimality of $\qopt$. We thus conclude that there must
exist a set of types $I\subseteq(\theta^{m},\theta^{\star\star})$
of positive Lebesgue measure such that $\qopt(\theta)<\qbm(\theta)$
for all $\theta\in I$. $\hfill\blacksquare{}$

\noindent This completes the proof of Proposition \ref{prop:robust-quantity-mechanism-general}.
\hfill{}Q.E.D. \smallskip{}

\subsection{Proofs of Section \ref{sec:price_vs_quantity}}
\label{sec:proofsec4}

\noindent\textbf{Proof of Lemma }\ref{lemma-SL-price-mechanisms}.
First, we show that $G(\widetilde{M})\le G^{*}$ for any $\widetilde{M}\equiv(p,t)\in\widetilde{\mathcal{M}}$.
To establish this, we represent the mechanism as $\widetilde{M}\equiv(p,\tilde{u})$,
where $\tilde{u}(\theta,D)\equiv t(\theta,D)-\theta D(p(\theta))$
for all $\theta\in\Theta$ and $D\in\mathcal{D}$. Observe that, for
any $\widetilde{M}\in\mathcal{\widetilde{M}}$, 
\begin{align*}
G(\widetilde{M})\le_{(i)}\underline{V}(\underline{D}(p(\overline{\theta})))-\overline{\theta}\underline{D}(p(\overline{\theta}))-\tilde{u}(\overline{\theta},\underline{D}) & \le_{(ii)}\underline{V}(\underline{D}(p(\overline{\theta})))-\overline{\theta}\underline{D}(p(\overline{\theta}))\\
 & \le_{(iii)}\underline{V}(\underline{D}(\overline{\theta}))-\overline{\theta}\underline{D}(\overline{\theta})=G^{*}.
\end{align*}
Inequality $(i)$ holds because the right-hand-side is just expected
welfare under a Dirac distribution that puts probability one at $\overline{\theta}$,
when the demand is $\underline{D}$. Inequality $(ii)$ holds because
$\tilde{u}(\overline{\theta},\underline{D})\ge0$ as $\widetilde{M}$
is EPIR. Inequality $(iii)$ holds because $\overline{\theta}=\arg\max_{\text{p}}\left\{ \underline{V}(\underline{D}(\text{p}))-\overline{\theta}\underline{D}(\text{p})\right\} $.

Next, to prove that $G(\widetilde{M})=G^{*}$ for any $\widetilde{M}\in\mathcal{\widetilde{M}}^{{\rm SL}}$,
it suffices to note that there exists a mechanism $\underline{\widetilde{M}}\in\mathcal{\widetilde{M}}$
such that $G(\underline{\widetilde{M}})=G^{*}$. Let $\underline{\widetilde{M}}\equiv(p,t)$
be the price mechanism such that $p(\theta)=\overline{\theta}$ for
all $\theta\in\Theta$, and where $t(\theta,D)$ satisfies Condition
(\ref{eq:rent-price-mechanism}) with $\tilde{u}(\overline{\theta},D)=0$
for all $\theta\in\Theta$, all $D\in\mathcal{D}$. Under such a mechanism,
for all $\theta\in\Theta$, all $D\in\mathcal{D}$, welfare is equal
to 
\begin{align*}
V(D(\overline{\theta}))-\theta D(\overline{\theta})-\int\limits^{\overline{\theta}}_{\theta}D(\overline{\theta})dy & =V(D(\overline{\theta}))-\overline{\theta}D(\overline{\theta})\ge\underline{V}(\underline{D}(\overline{\theta}))-\overline{\theta}\underline{D}(\overline{\theta})=G^{*}.
\end{align*}
This means that, for all $F\in\mathcal{F}$ and $D\in\mathcal{D}$,
$\widetilde{W}(\underline{\widetilde{M}};D,F)\ge G^{*}$, and hence,
$G(\underline{\widetilde{M}})=G^{*}$.

\noindent We now prove that, if $\widetilde{M}\equiv(p,\tilde{u})\in\mathcal{\widetilde{M}}^{{\rm SL}}$,
Conditions (a)-(d) in the lemma must hold. Because $\widetilde{M}$
is EPIC and EPIR, $p$ is weakly increasing and $\tilde{u}$ satisfies
Condition (\ref{eq:rent-price-mechanism}) for all $\theta\in\Theta$
and $D\in\mathcal{D}$, with $\tilde{u}(\overline{\theta},D)\ge0$.
That $p(\overline{\theta})=\overline{\theta}$ and $\tilde{u}(\overline{\theta},\underline{D})=0$
follows from the fact that the only way welfare can be equal to $G^{*}$
when Nature selects $D=\underline{D}$ and a technology that selects
$\theta=\overline{\theta}$ with probability one is by inducing efficient
output by setting a price $p(\overline{\theta})=\overline{\theta}$
and giving no rent to the monopolist, which amounts to setting $\tilde{u}(\overline{\theta},\underline{D})=0$.
That $\tilde{w}(\theta,\widetilde{M};D)\ge G^{*}$ must hold for all
$\theta\in\Theta$ and all $D\in\mathcal{D}$ follows from the fact that,
if this is not the case, then $G(\widetilde{M})<G^{*}$, a contradiction
to $\widetilde{M}\in\mathcal{\widetilde{M}}^{{\rm SL}}$.

\noindent Finally, to see that, jointly, Conditions (a)-(d) imply
that $\widetilde{M}\in\mathcal{\widetilde{M}}^{{\rm SL}}$, observe
that Conditions (a)-(c) imply that $\widetilde{M}$ is EPIC and EPIR.
Furthermore, Condition (d) implies that $G(\widetilde{M})\ge G^{*}$.
Since $G(\widetilde{M})\le G^{*}$, we conclude $G(\widetilde{M})=G^{*}$.
The first part of the lemma then implies that $\widetilde{M}\in\mathcal{\widetilde{M}}^{{\rm SL}}$.\hfill{}Q.E.D.
\smallskip{}

\noindent\textbf{Proof of Proposition \ref{prop:robustly-optimal-price-mechanism}.}
By standard arguments, for any EPIC and EPIR price regulation $\widetilde{M}\equiv(p,t)$,
expected welfare under the conjectured model $(D^{\star},F^{\star})$
is equal to 
\begin{align*}
\int\limits^{\overline{\theta}}_{\underline{\theta}}\Big[V^{\star}(D^{\star}(p(\theta)))-z^{\star}(\theta)D^{\star}(p(\theta))\Big]F^{\star}(\textrm{d}\theta)-\widetilde{u}(\overline{\theta},D^{\star}).
\end{align*}
Any robustly optimal price regulation thus maximizes this expression
subject to Conditions (a)-(d) in Lemma \ref{lemma-SL-price-mechanisms}.

\noindent Consider a relaxed program where (1) Conditions (a) and
(c) in Lemma \ref{lemma-SL-price-mechanisms} are replaced by the
requirement that $p(\theta)\le\overline{\theta}$ for all $\theta\in\Theta$,
(2) the constraints on ${\{\tilde{u}(\theta,D)\}}_{D\in\mathcal{D}}$
in Condition (b) in Lemma \ref{lemma-SL-price-mechanisms} are replaced
by the requirement that $\widetilde{u}(\overline{\theta},D^{\star})\ge0$;
and (3) the constraints in Condition (d) in Lemma \ref{lemma-SL-price-mechanisms}
are dropped. Any solution to this relaxed program is such that $\widetilde{u}(\overline{\theta},D^{\star})=0$
and, for any $\theta>\underline{\theta}$, $p(\theta)$ is as in the
Baron-Myerson-with-price-cap regulation. This is because, for any
$\theta\in\Theta$, the expression $V^{\star}(D^{\star}(\text{p}))-z^{\star}(\theta)D^{\star}(\text{p})$
is quasi-concave in $\text{p}$. The unique maximizer of this expression
over $[0,\overline{\theta}]$ is thus $p(\theta)=\min\{z^{\star}(\theta),\overline{\theta}\}$.
Because the function $p:\Theta\rightarrow\mathbb{R}$ given by $p(\theta)=\min\{z^{\star}(\theta),\overline{\theta}\}$
for all $\theta\in\Theta$ is weakly increasing, any solution to the
relaxed program is such that $p(\theta)=\min\{z^{\star}(\theta),\overline{\theta}\}$
for all $\theta>\underline{\theta}$.

Equipped with this result, we now show that any mechanism $\widetilde{M}\equiv(p,\tilde{u})$
in which (A) $p$ is weakly increasing and such that $p(\theta)=\min\{z^{\star}(\theta),\overline{\theta}\}$
for all $\theta>\underline{\theta}$, and (B) $\tilde{u}$ satisfies
the conditions in Definition \ref{def:BM-with-price-cap} satisfies
all the conditions of Lemma \ref{lemma-SL-price-mechanisms}, implying
that $\widetilde{M}\in\mathcal{\widetilde{M}}^{{\rm SL}}$. First
note that Condition (b) in Lemma \ref{lemma-SL-price-mechanisms}
trivially holds. Because $z^{\star}$ is increasing, Condition (a)
is also satisfied. Because $z^{\star}(\overline{\theta})>\overline{\theta}$,
Condition (c) also holds. To complete the proof, it thus suffices
to show that, for any $D\in\mathcal{D}$, Condition (d) holds. That
$\tilde{u}$ satisfies the conditions in Definition \ref{def:BM-with-price-cap}
implies that $\widetilde{w}(\overline{\theta},\widetilde{M}^{\textsc{}};D)\ge G^{*}$.
Now, pick any $\theta<\overline{\theta}$. If $z^{\star}(\theta)\ge\overline{\theta}$,
then $\widetilde{w}(\theta,\widetilde{M}^{\textsc{}};D)=\widetilde{w}(\overline{\theta},\widetilde{M}^{\textsc{}};D)\ge G^{*}$.
If, instead, $z^{\star}(\theta)<\overline{\theta}$, then $p(\theta)=z^{\star}(\theta)<\overline{\theta}$.
That $z^{\star}(\theta)\geq\theta$ implies that $D(z^{\star}(\theta))\le D(\theta)$.
Arguments identical to those in the proof of Lemma \ref{Lem-Mono}
but applied to the quantity schedule $q=D(p^{\star}(\cdot))$ and
with welfare evaluated under the demand $D$ instead of the demand
$\underline{D}$ then imply that the welfare function $\widetilde{w}(\cdot,\widetilde{M}^{\textsc{}};D)$
is weakly decreasing in $\theta$. Thus, $\widetilde{w}(\theta,\widetilde{M}^{\textsc{}};D)\ge\widetilde{w}(\overline{\theta},\widetilde{M};D)\ge G^{*}$.
This means that Condition (d) in Lemma \ref{lemma-SL-price-mechanisms}
also holds. 

Jointly the above properties imply the result in the proposition.\hfill{}Q.E.D.
\smallskip{}

\noindent\textbf{Proof of Proposition \ref{prop:comp_reg}. }The
proof is in two parts, each establishing the corresponding claim in
the proposition.

\noindent\textbf{Part 1.} Suppose $M^{\star}$ is robustly optimal.
Corollary \ref{cor:qstaropt} then implies that in any robustly optimal
quantity regulation $M^{\textsc{OPT}}$, $\qopt(\theta)=\max\{\qbm(\theta),{\rm q}_{\ell}\}$
for all $\theta>\underline{\theta}$. Under any robustly optimal price
regulation $\widetilde{M}^{\textsc{OPT}}=(\popt,\widetilde{u}^{\textsc{OPT}})$,
when the demand is $D^{\star}$ and the cost is $\theta>\underline{\theta}$,
the monopolist sells a quantity $D^{\star}(\popt(\theta))=\max\{\qbm(\theta),\tilde{{\rm q}}_{\ell}\}$,
where $\tilde{{\rm q}}_{\ell}\equiv D^{\star}(\overline{\theta})\ge\underline{D}(\overline{\theta})={\rm q}_{\ell}$.
Hence, there exists $\tilde{\theta}^{\star}\le\theta^{\star}$ such
that $D^{\star}(\popt(\theta))=\tilde{{\rm q}}_{\ell}$ if $\theta\ge\tilde{\theta}^{\star}$
and $D^{\star}(\popt(\theta))=\qbm(\theta)$ if $\theta\in(\underline{\theta},\tilde{\theta}^{\star})$.
See Panel A of Figure \ref{fig:comp_reg} for the illustration.

For $\theta\in(\underline{\theta},\tilde{\theta}^{\star})$, we have
that $D^{\star}(\popt(\theta))=\qopt(\theta)=\qbm(\theta)$. However,
for $\theta\ge\tilde{\theta}^{\star}$, we have that $D^{\star}(\popt(\theta))=\tilde{{\rm q}}_{\ell}\ge\qopt(\theta)\ge\qbm(\theta)$.
Because, for any $\theta$, virtual surplus $V^{\star}({\rm q})-z^{\star}(\theta){\rm q}$
is quasi-concave in ${\rm q}$, reaching a maximum at $\qbm(\theta)$,
we thus have that, for any $\theta\ge\tilde{\theta}^{\star}$, $V^{\star}(D^{\star}(\popt(\theta)))-z^{\star}(\theta)D^{\star}(\popt(\theta))\leq V^{\star}(\qopt(\theta))-z^{\star}(\theta)\qopt(\theta)$.
Because $F^{\star}$ does not have an atom at $\theta=\underline{\theta}$,
and because $\widetilde{u}^{\textsc{OPT}}(\overline{\theta},D^{\star})=u^{\textsc{OPT}}(\overline{\theta})=0$,
we conclude that $\widetilde{W}(\widetilde{M}^{\textsc{OPT}};D^{\star},F^{\star})\le W(M^{\textsc{OPT}};V^{\star},F^{\star})$,
with the inequality strict if, and only if, $D^{\star}(\overline{\theta})>\underline{D}(\overline{\theta})$.

\noindent\textbf{Part 2.} If $D^{\star}(\overline{\theta})=\underline{D}(\overline{\theta})$,
then ${\rm q}_{\ell}=\tilde{{\rm q}}_{\ell}$. In this case, for any
$\theta>\underline{\theta}$, the quantity traded under the conjectured
model $(D^{\star},F^{\star})$ when running any robustly optimal price
regulation is $D^{\star}(\popt(\theta))=\max\{\qbm(\theta),{\rm q}_{\ell}\}=q^{\star}(\theta)$.
Again, because $F^{\star}$ does not have any atom at $\underline{\theta}$,
this means that, by running any robustly optimal price regulation
$\widetilde{M}^{\textsc{OPT}}$, the regulator obtains the same welfare
as by running the Baron-Myerson-with-quantity-floor regulation $M^{\star}$,
i.e., 
\begin{align}
\widetilde{W}(\widetilde{M}^{\textsc{OPT}};D^{\star},F^{\star})=W(M^{\star};V^{\star},F^{\star}).\label{eq:welfare_comparison-BMF-vs-Price}
\end{align}
As shown in the proof of Proposition \ref{prop:BM-floor}, $M^{\star}$
is the solution to a relaxation of the full program yielding the robustly
optimal quantity regulation, implying that 
\begin{equation}
W(M^{\star};V^{\star},F^{\star})\geq W(M^{\textsc{OPT}};V^{\star},F^{\star}).\label{BMF-vs-quantity-robust-mech}
\end{equation}
When $M^{\star}$ is not robustly optimal, the inequality in (\ref{BMF-vs-quantity-robust-mech})
is strict. 

\noindent Together, (\ref{eq:welfare_comparison-BMF-vs-Price}) and
(\ref{BMF-vs-quantity-robust-mech}) imply that, when $D^{\star}(\overline{\theta})=\underline{D}(\overline{\theta})$,
price regulation dominates quantity regulation, strictly if $M^{\star}$
is not robustly optimal. \hfill{}Q.E.D.\smallskip{}

\noindent\textbf{Proof of Proposition \ref{prop:primitives}.} The
proof is in two parts, each establishing the corresponding claim in
the proposition.

\noindent\textbf{Part (a)}. The bi-Lipschitz continuity of $D^{\star}$
implies that, for any $y\in\Theta$, 
\begin{align*}
D^{\star}(y)-\qbm(y)=D^{\star}(y)-D^{\star}\left(y+\frac{F^{\star}(y)}{f^{\star}(y)}\right) & \geq k\frac{F^{\star}(y)}{f^{\star}(y)}\ge\delta(y)=D^{\star}(y)-\underline{D}(y).
\end{align*}
This implies that $\underline{D}(y)\geq\qbm(y)$ for all $y\in\Theta$.
Decreasingness of $\underline{D}$ in turn implies that $\underline{D}(y)\ge\underline{D}(\overline{\theta})={\rm q}_{\ell}$.
Hence, for all $y\in\Theta$, $q^{\star}(y)\leq\underline{D}(y)$.
Condition (\ref{eq:weak-majorization}) in Proposition \ref{prop:BM-floor}
thus holds for all $\theta\in\Theta$. Furthermore, because, for any
$\theta\in\Theta$, $\delta(\theta)\geq0$, the assumption in part
(a) of the proposition that $F^{\star}(\theta)/f^{\star}(\theta)\geq\delta(\theta)/k$
implies that $\delta(\underline{\theta})=0$. Hence, $q^{\star}(\underline{\theta})=\qbm(\underline{\theta})=D^{\star}(\underline{\theta})=\underline{D}(\underline{\theta})$.
In turn, this implies that Condition (\ref{eq:robustness-bottom})
in Proposition \ref{prop:BM-floor} reduces to Condition (\ref{eq:weak-majorization})
evaluated at $\theta=\underline{\theta}$. We conclude that $M^{\star}$
is robustly optimal. The result in part (a) then follows from Proposition
\ref{prop:comp_reg}.

\noindent\textbf{Part (b)}. The proof is in three steps.

\noindent\textsc{Step 1.} Let $\Delta\equiv\int\limits^{\overline{\theta}}_{\underline{\theta}}\Big(\delta(\theta)-K\frac{F^{\star}(\theta)}{f^{\star}(\theta)}\Big){\rm d}\theta$.
The bi-Lipschitz continuity of $D^{\star}$ implies that, for any
$\theta\in\Theta$, 
\begin{align}
\qbm(\theta)=D^{\star}\left(\theta+\frac{F^{\star}(\theta)}{f^{\star}(\theta)}\right)\ge D^{\star}(\theta)-K\frac{F^{\star}(\theta)}{f^{\star}(\theta)}=\underline{D}(\theta)+\delta(\theta)-K\frac{F^{\star}(\theta)}{f^{\star}(\theta)}.\label{eq:comp_lemma_1}
\end{align}
Hence, 
\begin{align}
\int^{\overline{\theta}}_{\underline{\theta}}\bigl(\qbm(\theta)-\qopt(\theta)\bigr){\rm d}\theta & \ge\int^{\overline{\theta}}_{\underline{\theta}}\bigl(\underline{D}(\theta)-\qopt(\theta)\bigr){\rm d}\theta+\Delta\geq\Delta,\label{eq:LB-1}
\end{align}
where the last inequality holds because of Lemma \ref{lemma:majorizations}.

\noindent\textsc{Step 2.} We now use the lower bound in Step 1 to
establish a lower bound on $W(M^{{\rm BM}};V^{\star},F^{\star})-W(M^{{\rm OPT}};V^{\star},F^{\star})$.
Before proceeding with the proof of this Step, we establish a technical
lemma that we use below.

\noindent\begin{lemma}[strong concavity]\label{lem:strong-concavity}
If $D^{\star}$ is $K$-Lipschitz continuous on $[P^{\star}({\rm \bar{q})},P^{\star}(0)]$,
then $V^{\star}$ is $(1/K)$-strongly concave on $[0,{\rm \bar{q}}]$,
in the sense that, for all ${\rm q}_{1},{\rm q}_{2}\in[0,{\rm \bar{q}}]$,
$V^{\star}({\rm q}_{2})-V^{\star}({\rm q}_{1})+P^{\star}({\rm q}_{2})({\rm q}_{1}-{\rm q}_{2})\ge\frac{1}{2K}({\rm q}_{1}-{\rm q}_{2})^{2}$.
\end{lemma}

\noindent\textbf{Proof of Lemma \ref{lem:strong-concavity}}. By
the $K$-Lipschitz continuity of $D^{\star}$, for any ${\rm q}_{1},{\rm q}_{2}\in[0,{\rm \bar{q}}]$,
we have that $|{\rm q}_{1}-{\rm q}_{2}|=|D^{\star}(P^{\star}({\rm q}_{1}))-D^{\star}(P^{\star}({\rm q}_{2}))|\le K|P^{\star}({\rm q}_{1})-P^{\star}({\rm q}_{2})|$.
Because $P^{\star}$ is decreasing, irrespective of the sign of $({\rm q}_{1}-{\rm q}_{2})$,
\begin{equation}
\bigl(P^{\star}({\rm q}_{2})-P^{\star}({\rm q}_{1})\bigr)({\rm q}_{1}-{\rm q}_{2})\ge\frac{1}{K}({\rm q}_{1}-{\rm q}_{2})^{2}.\label{eq:strong-monotonicity-P}
\end{equation}
Fix ${\rm q}_{1},{\rm q}_{2}\in[0,\bar{{\rm q}}]$, and let $\phi(t)\equiv V^{\star}({\rm q}_{2}+t({\rm q}_{1}-{\rm q}_{2}))$
for all $t\in[0,1]$. Then, for all $t\in[0,1]$, $\phi'(t)=P^{\star}({\rm q}_{2}+t({\rm q}_{1}-{\rm q}_{2}))({\rm q}_{1}-{\rm q}_{2}).$
Furthermore, using (\ref{eq:strong-monotonicity-P}), we have that
\[
\Big(P^{\star}({\rm q}_{2})-P^{\star}({\rm q}_{2}+t({\rm q}_{1}-{\rm q}_{2}))\Big)t({\rm q}_{1}-{\rm q}_{2})\ge\frac{1}{K}t^{2}({\rm q}_{1}-{\rm q}_{2})^{2}
\]
which implies that 
\begin{align}
\phi'(t)=P^{\star}({\rm q}_{2}+t({\rm q}_{1}-{\rm q}_{2}))({\rm q}_{1}-{\rm q}_{2}) & \le P^{\star}({\rm q}_{2})({\rm q}_{1}-{\rm q}_{2})-\frac{1}{K}t({\rm q}_{1}-{\rm q}_{2})^{2}.\label{eq:eee1-1}
\end{align}
Using (\ref{eq:eee1-1}), we thus have that 
\begin{align*}
V^{\star}({\rm q}_{1})-V^{\star}({\rm q}_{2})=\phi(1)-\phi(0)=\int^{1}_{0}\phi'(t){\rm d}t & \le P^{\star}({\rm q}_{2})({\rm q}_{1}-{\rm q}_{2})-\frac{1}{K}({\rm q}_{1}-{\rm q}_{2})^{2}\int^{1}_{0}t{\rm d}t\\
 & =P^{\star}({\rm q}_{2})({\rm q}_{1}-{\rm q}_{2})-\frac{1}{2K}({\rm q}_{1}-{\rm q}_{2})^{2},
\end{align*}
from which we obtain that $V^{\star}({\rm q}_{2})-V^{\star}({\rm q}_{1})+P^{\star}({\rm q}_{2})({\rm q}_{1}-{\rm q}_{2})\ge\frac{1}{2K}({\rm q}_{1}-{\rm q}_{2})^{2}$.
\hfill{} $\blacksquare$

Now, for each $\theta$ and each ${\rm q}$, let 
\[
\psi({\rm q};\theta)\equiv V^{\star}({\rm q})-\left(\theta+\frac{F^{\star}(\theta)}{f^{\star}(\theta)}\right){\rm q}.
\]
Observe that, for any $\theta\in\Theta$, $\psi(\cdot;\theta)$ reaches
a maximum at $\qbm(\theta)$. Furthermore, 
\begin{align}
\psi(\qbm(\theta);\theta)-\psi(\qopt(\theta);\theta) & =V^{\star}(\qbm(\theta))-V^{\star}(\qopt(\theta))+\left(\theta+\frac{F^{\star}(\theta)}{f^{\star}(\theta)}\right)\left(\qopt(\theta)-\qbm(\theta)\right)\nonumber \\
 & \ge\frac{1}{2K}\bigl(\qbm(\theta)-\qopt(\theta)\bigr)^{2},\label{eq:intermezzo}
\end{align}
where the inequality follows from the fact that $P^{\star}(\qbm(\theta))=\theta+\frac{F^{\star}(\theta)}{f^{\star}(\theta)}$
along with Lemma \ref{lem:strong-concavity}. Hence, 
\begin{align}
 & W(M^{{\rm BM}};V^{\star},F^{\star})-W(M^{{\rm OPT}};V^{\star},F^{\star})=\int\limits^{\overline{\theta}}_{\underline{\theta}}\left(\psi(\qbm(\theta);\theta)-\psi(\qopt(\theta);\theta)\right)f^{\star}(\theta){\rm d}\theta\nonumber \\
 & \ge\frac{\underline{f}}{2K}\int\limits^{\overline{\theta}}_{\underline{\theta}}\bigl(\qbm(\theta)-\qopt(\theta)\bigr)^{2}{\rm d}\theta\ge\frac{\underline{f}}{2K(\overline{\theta}-\underline{\theta})}\Big(\int\limits^{\overline{\theta}}_{\underline{\theta}}\bigl(\qbm(\theta)-\qopt(\theta)\bigr){\rm d}\theta\Big)^{2}\geq\frac{\Delta^{2}\underline{f}}{2K(\overline{\theta}-\underline{\theta})},\label{eq:LB-2}
\end{align}
where the first inequality follows from (\ref{eq:intermezzo}) and
the definition of $\underline{f}$, the second inequality follows
from Jensen's inequality, and the third inequality follows from (\ref{eq:LB-1}).

\noindent\textsc{Step 3.} Finally, we compare the welfare generated
by any price regulation $\widetilde{M}^{{\rm OPT}}\equiv(\popt,\widetilde{u}^{\textsc{OPT}})$
to that generated by the quantity regulation $M^{{\rm BM}}\equiv(\qbm,u^{{\rm BM}})$,
under the designer's model. For any $\theta>\underline{\theta}$,
let $\tilde{q}(\theta)=D^{\star}\bigl(\min\{z^{\star}(\theta),\overline{\theta}\}\bigr)$
be the quantity traded under any robustly optimal price regulation
$\widetilde{M}^{{\rm OPT}}$, when the demand is the one in the designer's
model. For every $\theta>\underline{\theta}$ such that $z^{\star}(\theta)\le\overline{\theta}$,
$\tilde{q}(\theta)=\qbm(\theta)$, implying that 
\begin{align}
\psi(\qbm(\theta);\theta)-\psi(\tilde{q}(\theta);\theta)=0.\label{eq:t1}
\end{align}

\noindent On the other hand, for any $\theta>\underline{\theta}$
such that $z^{\star}(\theta)>\overline{\theta}$, $\tilde{q}(\theta)=D^{\star}(\overline{\theta})\geq\qbm(\theta)$
and hence 
\begin{align}
\psi(\qbm(\theta);\theta)-\psi(\tilde{q}(\theta);\theta) & =V^{\star}(\qbm(\theta))-V^{\star}(D^{\star}(\overline{\theta}))+z^{\star}(\theta)[D^{\star}(\overline{\theta})-\qbm(\theta)].\label{eq:ea11}
\end{align}
Because $V^{\star}({\rm q})-\overline{\theta}{\rm q}$ is maximized
at $D^{\star}(\overline{\theta})$, for any $\theta$, 
\begin{align}
V^{\star}(D^{\star}(\overline{\theta}))-\overline{\theta}D^{\star}(\overline{\theta}) & \ge V^{\star}(\qbm(\theta))-\overline{\theta}\qbm(\theta).\label{eq:ea22}
\end{align}
Combining (\ref{eq:ea11}) with (\ref{eq:ea22}), we have that, for
any $\theta>\underline{\theta}$ such that $z^{\star}(\theta)>\overline{\theta}$,
\begin{align*}
 & \psi(\qbm(\theta);\theta)-\psi(\tilde{q}(\theta);\theta)\le(z^{\star}(\theta)-\overline{\theta})[D^{\star}(\overline{\theta})-\qbm(\theta)]\le(z^{\star}(\theta)-\overline{\theta})[D^{\star}(\theta)-\qbm(\theta)]\\
 & \le(z^{\star}(\theta)-\theta)\left[D^{\star}(\theta)-D^{\star}\left(\theta+\frac{F^{\star}(\theta)}{f^{\star}(\theta)}\right)\right]\le K\Big(\frac{F^{\star}(\theta)}{f^{\star}(\theta)}\Big)^{2}\leq\frac{\delta(\theta)^{2}}{KN^{2}},
\end{align*}
where $N\equiv1+\frac{\bar{\delta}}{\underline{\delta}}\sqrt{\frac{2\bar{f}}{\underline{f}}}$.
The penultimate inequality follows from the bi-Lipschitz continuity
of $D^{\star}$ whereas the last inequality follows from the fact
that, by assumption, $F^{\star}(\theta)/f^{\star}(\theta)\leq\delta(\theta)/(KN)$.
By (\ref{eq:t1}), this inequality trivially holds when $z^{\star}(\theta)\le\tub$.
As a consequence for any $\theta>\underline{\theta}$,
\begin{align*}
\psi(\qbm(\theta);\theta)-\psi(\tilde{q}(\theta);\theta)\le\frac{\delta(\theta)^{2}}{KN^{2}}.
\end{align*}
Integrating over all types, we thus have that 
\begin{equation}
\begin{array}{c}
W(M^{{\rm BM}};V^{\star},F^{\star})-\widetilde{W}(\widetilde{M}^{{\rm OPT}};D^{\star},F^{\star})=\int\limits^{\overline{\theta}}_{\underline{\theta}}\Big(\psi(\qbm(\theta);\theta)-\psi(\tilde{q}(\theta);\theta)\Big)f^{\star}(\theta){\rm d}\theta\\
\le\frac{1}{KN^{2}}\int\limits^{\overline{\theta}}_{\underline{\theta}}\delta(\theta)^{2}f^{\star}(\theta){\rm d}\theta\le\frac{\bar{f}(\overline{\theta}-\underline{\theta})\bar{\delta}^{2}}{KN^{2}}.
\end{array}\label{eq:ub2}
\end{equation}
Combining (\ref{eq:LB-2}) with (\ref{eq:ub2}), we have that 
\begin{align}
\widetilde{W}(\widetilde{M}^{{\rm OPT}};D^{\star},F^{\star})-W(M^{{\rm OPT}};V^{\star},F^{\star}) & \ge\frac{\Delta^{2}\underline{f}}{2K(\overline{\theta}-\underline{\theta})}-\frac{\overline{f}(\overline{\theta}-\underline{\theta})\bar{\delta}^{2}}{K{N^{2}}}.\label{eq:inequality-primitives-price-dominates}
\end{align}
To establish that price regulation strictly dominates quantity regulation,
it thus suffices to show that, under the condition in part (b) of
the proposition, $\Delta>\sqrt{\frac{2\bar{f}}{\underline{f}}}\left(\frac{\bar{\delta}(\overline{\theta}-\underline{\theta})}{N}\right)$.
To see this, note that
\begin{align}
\Delta\equiv\int\limits^{\overline{\theta}}_{\underline{\theta}}\Big(\delta(\theta)-K\frac{F^{\star}(\theta)}{f^{\star}(\theta)}\Big){\rm d}\theta>\int\limits^{\overline{\theta}}_{\underline{\theta}}\delta(\theta)\frac{N-1}{N}{\rm d}\theta\ge\underline{\delta}(\overline{\theta}-\underline{\theta})\frac{N-1}{N}\ge\sqrt{\frac{2\bar{f}}{\underline{f}}}\left(\frac{\bar{\delta}(\overline{\theta}-\underline{\theta})}{N}\right).\label{eq:delta-inequality}
\end{align}
The first inequality follows from the condition in the proposition
along with the fact that $F^{\star}(\tlb)/f^{\star}(\tlb)=0<\underline{\delta}/(KN)$
and the continuity of $F^{\star}/f^{\star}$. Jointly these properties
imply that $KF^{\star}(\theta)/f^{\star}(\theta)<\delta(\theta)/N$
on a right-neighborhood of $\underline{\theta}$. The second inequality
follows from the definition of $\underline{\delta}$. The last inequality
is obtained by using the fact that $N-1=\frac{\bar{\delta}}{\underline{\delta}}\sqrt{\frac{2\bar{f}}{\underline{f}}}$.
\hfill{}Q.E.D.
\end{appendix}

{\small{}  \bibliographystyle{ecta}
\bibliography{references}
 }{\small\par}

\newpage{} 
\appendix
{
\setcounter{section}{18}
\section{Supplement I}\label{Sec:OS}
\global\long\def\thesublemma{\thelemma\Alph{sublemma}}%
\global\long\def\thelemma{S.\arabic{lemma}}%
\global\long\def\thesection{S.\arabic{section}}%
\global\long\def\thesubsection{S.\arabic{subsection}}%
\global\long\def\theproposition{S.\arabic{proposition}}%
\global\long\def\thedefn{S.\arabic{defn}}%
\global\long\def\thetheorem{S.\arabic{theorem}}%
\global\long\def\thecorollary{S.\arabic{corollary}}%
\global\long\def\theremark{S.\arabic{remark}}%
\global\long\def\theclaim{S.\arabic{claim}}%
\global\long\def\thefigure{S.\arabic{figure}}%
\global\long\def\theequation{S.\arabic{equation}}%
\global\long\def\theobs{S.\arabic{obs}}%

\renewcommand{\thesubsection}{S.\arabic{subsection}}
\setcounter{subsection}{0}

The notation in this supplement is the same as in the original article.
All sections, definitions, displayed conditions, and results specific
to this document have the prefix ``S'' to avoid confusion with the
corresponding parts in the main text. Section \ref{subsec:Stronger-Prop-2}
contains the proof of a stronger version of the result in Part 2 of
Proposition \ref{prop:robust-quantity-mechanism-general} in the main
text. Section \ref{sec:gamma} considers more permissive robustness
requirements. Section \ref{sec:Arbitrary-Cost-Uncertainty} considers
alternative forms of cost uncertainty.

\subsection{Stronger version of Part 2 of Proposition \ref{prop:robust-quantity-mechanism-general}.}

\label{subsec:Stronger-Prop-2} Part (2) of Proposition \ref{prop:robust-quantity-mechanism-general}
in the main text establishes that there exists $\theta^{\star\star}\in[\theta^{m},\tub]$
such that $\qopt(\theta)\leq\qbm(\theta)$ for all $\theta\in(\theta^{m},\min\{\theta^{\star},\theta^{\star\star}\})$
with the inequality strict for a subset of $(\theta^{m},\min\{\theta^{\star},\theta^{\star\star}\})$
of positive Lebesgue measure. Proposition \ref{prop:part2_strong}
below establishes a stronger version of this result. The proof of
this proposition further sharpens the characterization of the robustly
optimal mechanism by showing that the optimal quantity schedule in
the range $(\theta^{m},\min\{\theta^{\star},\theta^{\star\star}\})$
satisfies $\qopt(\theta)=D^{\star}(\bar{z}^{M}(\theta))$, where $\bar{z}^{M}$
is an ironed virtual cost that accounts for robustness.

\begin{proposition}\label{prop:part2_strong} Suppose $M^{\star}$
is not robustly optimal. Then, in any robustly optimal mechanism $M^{{\rm OPT}}=(\qopt,\uopt)$,
\begin{align}
\qopt(\theta)<\qbm(\theta)~\qquad~\forall~\theta\in(\theta^{m},\min\{\theta^{\star},\theta^{\star\star}\})\label{eq:strong-prop2}
\end{align}
where $\theta^{\star\star}=\inf\{\theta\in[\theta^{m},\tub]:\qopt(y)=\underline{D}(y)~\forall~y\in[\theta,\tub]\}$.
\end{proposition} Recall that, when $\theta^{\star}<\overline{\theta}$,
$\theta^{\star\star}=\tub$, and therefore $\min\{\theta^{\star},\theta^{\star\star}\}=\theta^{\star}$.
On the other hand, when $\theta^{\star}=\overline{\theta}$, $\min\{\theta^{\star},\theta^{\star\star}\}=\theta^{\star\star}$.

\noindent\textbf{Proof of Proposition \ref{prop:part2_strong}.}
The proof proceeds in five steps. Step 1 sets up a sub-program that
permits us to use Lagrangian methods with infinite-dimensional constraints
to identify necessary conditions for the optimal policy. Specifically,
it considers an auxiliary problem in which the policy is held fixed
over an interval of types $[\theta^{\dag},\bar{\theta}]$, with $\theta^{\dag}\in(\theta^{m},\min\{\theta^{\star},\theta^{\star\star}\}]$
and the optimization is over the shape of the policy over the interval
$[\tlb,\theta^{\dag})$. Step 2 identifies a few properties of the
solution to the sub-program of Step 1. Step 3 uses ironing techniques
to sharpen the characterization in Step 2. Step 4 shows how to select
$\theta^{\dag}$ to guarantee constraint qualification thus validating
the conditions identified in the previous steps, and then establishes
the result in the proposition by continuity. Finally, Step 5 proves
two technical lemmas used in the ironing arguments of Step 3.


\noindent\textbf{Step 1. Lagrangian method.}

\noindent Let $\theta^{\dag}\in(\theta^{m},\min\{\theta^{\star},\theta^{\star\star}\}]$
and define $\mathcal{Q}^{\dag}\equiv\{q:[\tlb,\tub]\rightarrow[0,{\rm \overline{q}}]:~q~\textrm{is weakly decreasing and}~q(\theta)=\qopt(\theta)~\forall~\theta\in[\theta^{\dag},\tub]\}.$
Note that $\mathcal{Q}^{\dag}$ is a convex set. Clearly, $\qopt$
must solve the following sub-program:\footnote{Recall from Lemma \ref{lemma:majorizations} in the main text that
$q$ satisfies the robustness constraints in (\ref{SL-constraint})
for all $\theta\in\Theta$ if and only if (a) $R(q)\geq0$, (b) $S(\theta;q)\geq0$
for all $\theta\in(\tlb,\tub)$, and (c) $q(\tub)={\rm q}_{\ell}$.
Imposing that $S(\theta;q)\geq0$ also for $\theta=\tlb$ is inconsequential
because this constraint is implied by $R(q)\geq0$. Omitting the constraints
$S(\theta;q)\geq0$ for $\theta>\theta^{\dag}$ and the constraint
$q(\tub)={\rm q}_{\ell}$ is motivated by the fact that $q$ is required
to agree with $\qopt$ over $[\theta^{\dag},\tub]$ along with the
fact that $\qopt$ satisfies these constraints because it is robustly
optimal.} 
\begin{align}
\qopt\in\arg\max_{q\in\mathcal{Q}^{\dag}}\int\limits^{\theta^{\dag}}_{\tlb}\Big[V^{\star}(q(\theta))-z^{\star}(\theta)q(\theta)\Big]f^{\star}(\theta){\rm d}\theta & \tag{\textbf{FP}\ensuremath{^{\dag}}}\label{opt:fpdag}\\
\textrm{subject to}~~~~S(\theta;q)\equiv\int\limits^{\tub}_{\theta}(\underline{D}(y)-q(y){\rm )d}y & \ge0~\qquad~\forall~\theta\in[\tlb,\theta^{\dag}],\nonumber \\
R(q)\equiv\underline{V}(q(\tlb))-\tlb q(\tlb)-\int\limits^{\tub}_{\tlb}q(y){\rm d}y-G^{*} & \ge0.\nonumber 
\end{align}

\begin{defn}\label{def:interior} The sub-program (\ref{opt:fpdag})
satisfies the \textbf{Interiority Condition} if there exists $q^{\dag}\in\mathcal{Q}^{\dag}$
such that $R(q^{\dag})>0$ and $\underset{\theta\in[\underline{\theta},\theta^{\dag}]}{\inf}S(\theta;q^{\dag})>0$.
\end{defn}

\noindent As anticipated above, in Step 4 we show how to select a
sequence of $\theta^{\dag}$ satisfying the Interiority Condition
and converging to $\min\{\theta^{\star},\theta^{\star\star}\}$. Along
the sequence, the properties identified in Steps 2-3 are necessary
for optimality. Taking the limit for $\theta^{\dag}\rightarrow\min\{\theta^{\star},\theta^{\star\star}\}$
then permits us to establish (\ref{eq:strong-prop2}). In Steps 2-4,
we focus on arbitrary $\theta^{\dag}$ satisfying the Interiority
Condition and use Lagrangian methods with infinite-dimensional constraints
to solve the sub-program (\ref{opt:fpdag}).

When the sub-program (\ref{opt:fpdag}) satisfies the Interiority
Condition of Definition \ref{def:interior}, Theorem 1 of \citet{Lu97}
(page 217) implies existence of a weakly increasing, right-continuous
function $\Lopt$, satisfying $\Lopt(\tlb)=0$, and a non-negative
scalar $\muopt\geq0$ such that 
\begin{align}
\qopt & \in\arg\max_{q\in\mathcal{Q}^{\dag}}\mathcal{L}(q;\Lopt,\muopt),\label{eq:lobj}
\end{align}
where, for any $q\in\mathcal{Q}^{\dag}$, 
\begin{align*}
\mathcal{L}(q;\Lopt,\muopt) & \equiv\int\limits^{\theta^{\dag}}_{\tlb}\Big[V^{\star}(q(\theta))-z^{\star}(\theta)q(\theta)\Big]f^{\star}(\theta){\rm d}\theta+\int\limits^{\theta^{\dag}}_{\tlb}\Bigg[\int\limits^{\tub}_{\theta}(\underline{D}(y)-q(y)){\rm d}y\Bigg]{\rm d}\Lopt(\theta)\\
 & +\muopt\Big[\underline{V}(q(\tlb))-\tlb q(\tlb)-\int\limits^{\tub}_{\tlb}q(y){\rm d}y-G^{*}\Big]
\end{align*}
is the Lagrangian function associated with the above sub-program (\ref{opt:fpdag}).
Furthermore, complementary slackness must hold: 
\begin{equation}
\int\limits^{\theta^{\dag}}_{\tlb}S(\theta;\qopt){\rm d}\Lopt(\theta)=0,~~\textrm{and}~~R(\qopt)\muopt=0.\label{eq:cs1}
\end{equation}
Using integration by parts and rearranging terms, we have that 
\begin{align}
\mathcal{L}(q;\Lopt,\muopt) & =\int\limits^{\theta^{\dag}}_{\tlb}\Big[V^{\star}(q(\theta))-\Big(z^{\star}(\theta)+\frac{\Lopt(\theta)+\muopt}{f^{\star}(\theta)}\Big)q(\theta)\Big]f^{\star}(\theta){\rm d}\theta\label{eq:l1}\\
 & +\int\limits^{\theta^{\dag}}_{\tlb}\Lopt(\theta)\underline{D}(\theta){\rm d}\theta+\Lopt(\theta^{\dag})\int\limits^{\tub}_{\theta^{\dag}}[\underline{D}(\theta)-q(\theta)]{\rm d}\theta\label{eq:l2}\\
 & +\muopt\Big[\underline{V}(q(\tlb))-\tlb q(\tlb)-\int\limits^{\tub}_{\theta^{\dag}}q(y){\rm d}y-G^{*}\Big].\label{eq:l3}
\end{align}
Note that the terms in (\ref{eq:l2}) are independent of the value
that $q$ takes on $[\tlb,\theta^{\dag}]$ whereas the terms in (\ref{eq:l3})
depend on the value that $q$ takes on $[\tlb,\theta^{\dag}]$ only
through $q(\tlb)$. We take advantage of these properties below.

\medskip{}

\noindent\textbf{Step 2. Miscellaneous properties of the optimum. }

\begin{obs}\label{obs:leftof_tmin} Suppose there exists $\tilde{\theta}\in(\tlb,\theta^{\dag})$
such that $\Lopt(\theta)+\muopt=0$ for all $\theta\in(\tlb,\tilde{\theta})$.
Then, $\qopt(\theta)=\qbm(\theta)$ for all $\theta\in(\tlb,\tilde{\theta})$.
\end{obs}

\noindent\textbf{Proof of Observation \ref{obs:leftof_tmin}.} Recall
that $\qopt(\theta)\leq\qbm(\theta)$ for all $\theta\in(\tlb,\tilde{\theta})$
by virtue of Lemma \ref{lem:weakqoptbound} in the main text along
with the fact that $\tilde{\theta}\leq\theta^{\dag}\leq\min\{\theta^{\star},\theta^{\star\star}\}$.
Now assume, towards a contradiction, that there exists a $\theta\in(\tlb,\tilde{\theta})$
such that $\qopt(\theta)<\qbm(\theta)$. The continuity of $\qbm$
along with the fact that $\qopt$ is weakly decreasing, then imply
that there exists an interval $I\subset(\tlb,\tilde{\theta})$ of
positive Lebesgue measure such that $\qopt(y)<\qbm(y)$ for all $y\in I$.
Then consider the quantity schedule 
\[
\tilde{q}(\theta)=\begin{cases}
\qbm(\theta) & \mbox{if}~\theta\in[\underline{\theta},\tilde{\theta})\\
\qopt(\theta) & \mbox{if}~\theta\in[\tilde{\theta},\overline{\theta}].
\end{cases}
\]
Observe that $\tilde{q}$ is weakly decreasing because $\qbm$ is
decreasing, $\qopt$ is weakly decreasing, and $\qopt(\tilde{\theta})\le\qbm(\tilde{\theta})$
by Lemma \ref{lem:weakqoptbound} in the main text. Moreover, $\mathcal{L}(\tilde{q};\Lambda^{{\rm OPT}},\muopt)>\mathcal{L}(\qopt;\Lambda^{{\rm OPT}},\muopt)$,
where the inequality follows because $\qbm(\theta)$ is the unique
maximizer of $V^{\star}({\rm q})-z^{\star}(\theta){\rm q}$ and $\qopt(\theta)<\qbm(\theta)$
on the interval $I$ of positive Lebesgue measure. Because $\tilde{q}\in Q^{\dagger}$,
this strict inequality contradicts Condition (\ref{eq:lobj}) and
the robust optimality of $\qopt$. We therefore conclude that $\qopt(\theta)=\qbm(\theta)$
for all $\theta\in(\tlb,\tilde{\theta})$. $\hfill\blacksquare{}$

Next, let 
\[
\theta_{\min}\equiv\inf\{\theta\in(\tlb,\theta^{\dag}):\Lopt(\theta)+\muopt>0\}.
\]
Step 4 will establish existence of $\theta^{\dag}$ for which $\{\theta\in(\tlb,\theta^{\dag}):\Lopt(\theta)+\muopt>0\}\neq\emptyset$
and hence $\theta_{\min}$ is well-defined. Until then, we will assume
that $\theta_{\min}$ exists and derive a few key properties of $\theta_{\min}$
instrumental to the result in the proposition.

\begin{lemma} \label{lem:tmin2} $\theta_{\min}\le\theta^{m}$. \end{lemma}

\noindent\textbf{Proof of Lemma \ref{lem:tmin2}.} If $\theta_{\min}=\tlb$,
the claim is trivially true. Therefore, suppose that $\theta_{\min}>\tlb$.
We first establish two important properties of $\qopt(\theta_{\min})$:
(a) $\underline{W}(\theta_{\min},\qopt)=G^{*}$, and (b) $\qopt(\theta_{\min})=\underline{D}(\theta_{\min})$.
To see this, note that, because $S(\theta;\qopt)\ge0$ for all $\theta\in(\tlb,\theta^{\dag})$,
the first complementary slackness condition in (\ref{eq:cs1}) implies
that $S(\theta;\qopt)=0$ for $\Lopt$-almost every $\theta\in(\tlb,\theta^{\dag})$.
Hence, if $\Lopt$ has a jump at $\theta_{\min}$, that is, if 
\begin{align*}
\Delta\Lopt(\theta_{\min})\equiv\Lopt(\theta_{\min})-\lim_{\theta\uparrow\theta_{\min}}\Lopt(\theta)>0
\end{align*}
then $S(\theta_{\min},\qopt)=0$. If, instead, there is no jump, i.e.,
if $\Delta\Lopt(\theta_{\min})=0$, then by the definition of $\theta_{\min}$
and the fact that $\theta_{\min}>\tlb$, it must be that $\muopt=0$.
Because $\Lopt$ is weakly increasing, for every $\epsilon\in(0,\theta^{\dag}-\theta_{\min})$,
the interval $(\theta_{\min},\theta_{\min}+\epsilon]$ has strictly
positive $\Lambda^{{\rm OPT}}$-measure. Because $S(\theta;q^{{\rm OPT}})=0$
for $\Lambda^{{\rm OPT}}$-almost every $\theta\in(\tlb,\theta^{\dag})$, each such interval
contains some $\theta'>\theta_{\min}$ such that $S(\theta';\qopt)=0$.
Then, let $(\theta_{n})$ be a monotone sequence such that $\theta_{n}\downarrow\theta_{\min}$,
and $S(\theta_{n};\qopt)=0$ for all $n$. By continuity of $S$,
we then have that $S(\theta_{\min};\qopt)=0$. By Lemma \ref{lemma:majorizations}
in the main text, we have that 
\begin{align*}
\underline{W}(\theta_{\min},\qopt)-G^{*}=S(\theta_{\min};\qopt)-{\rm \underline{{\rm DWL}}}(\theta_{\min},\qopt(\theta_{\min}))\le0.
\end{align*}
Because robust optimality implies that $\underline{W}(\theta,\qopt)-G^{*}\ge0$,
we conclude that (a) $\underline{W}(\theta_{\min},\qopt)=G^{*}$ and
${\rm \underline{{\rm DWL}}}(\theta_{\min},\qopt(\theta_{\min}))=0$.
The last equality (deadweight loss equal to zero) implies that (b)
$\qopt(\theta_{\min})=\underline{D}(\theta_{\min})$.

Now, assume, towards a contradiction to what is claimed in the proposition,
that $\theta_{\min}>\theta^{m}$. Because $\theta_{\min}<\theta^{\dag}$,
Observation \ref{obs:leftof_tmin} implies that $\qopt(\theta)=\qbm(\theta)$
for all $\theta\in(\tlb,\theta_{\min})$. This implies that $\qopt(\theta^{m})=\qbm(\theta^{m})$
when $\theta^{m}>\tlb$. When, instead, $\theta^{m}=\tlb$, $\qopt(\theta^{m})\ge\qbm(\theta^{m})$
because $\qopt$ is weakly decreasing and $\qopt(\theta)=\qbm(\theta)$
for all $\theta\in(\theta^{m},\theta_{\min})$. Thus, we have that
$\qopt(\theta^{m})\ge\qbm(\theta^{m})$. Also, as established earlier,
$\qopt(\theta_{\min})=\underline{D}(\theta_{\min})$. Now, observe
that 
\begin{align*}
\underline{W}(\theta_{\min},\qopt)-\underline{W}(\theta^{m},\qopt) & =\Big[\underline{V}(\qopt(\theta_{\min}))-\theta_{\min}\qopt(\theta_{\min})\Big]\\
 & -\Big[\underline{V}(\qopt(\theta^{m}))-\theta^{m}\qopt(\theta^{m})\Big]+\int\limits^{\theta_{\min}}_{\theta^{m}}\qopt(y){\rm d}y\\
 & \ge_{(i)}\Big[\underline{V}(\qbm(\theta_{\min}))-\theta_{\min}\qbm(\theta_{\min})\Big]\\
 & -\Big[\underline{V}(\qbm(\theta^{m}))-\theta^{m}\qbm(\theta^{m})\Big]+\int\limits^{\theta_{\min}}_{\theta^{m}}\qbm(y){\rm d}y\\
 & =\underline{W}(\theta_{\min},q^{\star})-\underline{W}(\theta^{m},q^{\star})>0,
\end{align*}
where the strict inequality follows from the fact that $\theta_{\min}>\theta^{m}$
and the definition of $\theta^{m}$. To see why inequality $(i)$
holds, first observe that Observation \ref{obs:leftof_tmin} implies
that $\qopt(\theta)=\qbm(\theta)$ for all $\theta\in(\tlb,\theta_{\min})$.
Moreover, $\underline{V}({\rm q})-\theta{\rm q}$ is strictly concave
in ${\rm q}$ with a unique maximum at $\underline{D}(\theta)$, and
(a) $\qbm(\theta_{\min})\ge\qopt(\theta_{\min})=\underline{D}(\theta_{\min})$,
and (b) $\qopt(\theta^{m})\ge\qbm(\theta^{m})\ge\underline{D}(\theta^{m})$,
where the second inequality follows from Lemma \ref{lem:qstardlowint}
in the main text. Because $\qopt$ satisfies the robustness constraints,
$\underline{W}(\theta^{m},\qopt)\ge G^{*}$. Hence, we have that $\underline{W}(\theta_{\min},\qopt)>G^{*}$,
a contradiction to what we established earlier.$\hfill\blacksquare{}$


\noindent\textbf{Step 3. Ironing.} Let 
\[
z^{M}(\theta)\equiv z^{\star}(\theta)+\frac{\Lopt(\theta)+\muopt}{f^{\star}(\theta)}
\]
be the \textsl{modified virtual cost}. Note that $z^{M}$ need not
be weakly increasing. Below, we thus consider an auxiliary optimization
problem, defined on the interval $(\tlb,\theta^{\dag})$, which allows
us to use ironing techniques similar to \citet{BM82} to identify
further properties of the robustly optimal schedule $\qopt$. Let
$\mathcal{Q}^{\dag-}\subset\mathcal{Q}^{\dag}$ be the subset of $\mathcal{Q}^{\dag}$
given by 
\begin{align*}
\mathcal{Q}^{\dag-} & \equiv\{q\in\mathcal{Q}^{\dag}:~\ensuremath{q(\tlb)=\qopt(\tlb)}\}.
\end{align*}
Again, $\mathcal{Q}^{\dag-}$ is a convex set. Taking advantage of
the fact that the terms in (\ref{eq:l2}) and (\ref{eq:l3}) are independent
of the value that $q$ takes on $(\tlb,\theta^{\dag})$, we have that
a necessary condition for $\qopt\in\arg\max_{q\in\mathcal{Q}^{\dag}}\mathcal{L}(q;\Lopt,\muopt)$
is that 
\begin{align}
\qopt\in\arg\max_{q\in\mathcal{Q}^{\dag-}}\int\limits^{\theta^{\dag}}_{\tlb}\Big[V^{\star}(q(\theta))-z^{M}(\theta)q(\theta)\Big]f^{\star}(\theta){\rm d}\theta.\tag{\textbf{UNCONST}}\label{opt:unconst}
\end{align}

The terms in the square brackets in the integrand in (\ref{opt:unconst})
are separable in $q(\theta)$ and $\theta$ in the sense of \citet{T11}.
The solution can thus be obtained by ironing $z^{M}$. For any $\phi\in[0,F^{\star}(\theta^{\dag})]$,
let 
\begin{align*}
h(\phi)\equiv z^{M}((F^{\star})^{-1}(\phi))~~~~~\textrm{and}~~~~H(\phi)\equiv\int\limits^{\phi}_{0}h(y){\rm d}y.
\end{align*}
Let $\overline{H}\equiv{\rm conv}(H)$ be the convex hull of $H$,
i.e., the highest convex function on $[0,F^{\star}(\theta^{\dag})]$
such that $\overline{H}\le H$. Moreover, $\overline{H}(0)=H(0)$
and $\overline{H}(F^{\star}(\theta^{\dag}))=H(F^{\star}(\theta^{\dag}))$.
Because $\overline{H}$ is convex, it is continuously differentiable
almost everywhere. At all $\phi$ where $\overline{H}$ is differentiable,
let $\overline{h}(\phi)\equiv{\rm d}\overline{H}(\phi)/{\rm d}\phi$
and then extend $\overline{h}$ to all $[0,F^{\star}(\theta^{\dag})]$
by right continuity (i.e., taking the right derivative of $\overline{H}$).
Then let 
\begin{align*}
\overline{z}^{M}(\theta) & \equiv\overline{h}(F^{\star}(\theta)),~\qquad~\forall~\theta\in(\tlb,\theta^{\dag})
\end{align*}
be the\textsl{ ironed modified virtual cost}. By Theorem 3.7 in \citet{T11},
$\qopt$ solves (\ref{opt:unconst}) if and only if\footnote{\label{footnote}If $\qopt(\tlb)=\qopt(\theta^{\dag})$, then for
any $\tlb<\theta<\theta'<\theta^{\dag}$, we have $\qopt(\theta)=\qopt(\theta')\le\qbm(\theta')<\qbm(\theta)$,
where the weak inequality follows from Lemma \ref{lem:weakqoptbound}
in the main text whereas the strict inequality follows from the fact
that $\qbm$ is decreasing. Hence, $\qopt(\theta)<\qbm(\theta)$ for
all $\theta\in(\tlb,\theta^{\dag})$, and, by Lemma \ref{lem:leftofthetam}
in the main text, $\theta^{m}=\tlb$. We conclude that, in this case,
$\qopt(\theta)<\qbm(\theta)$ for all $\theta\in(\theta^{m},\theta^{\dag})$.
The rest of the proof then follows from the arguments in Step 4. In
the rest of Step 3 below, we thus assume that $\qopt(\theta^{\dag})<\qopt(\tlb)$.} 
\begin{enumerate}
\item[(a)] for almost all $\theta\in(\tlb,\theta^{\dag})$, 
\begin{align*}
\qopt(\theta)\in\arg\max_{{\rm q}\in[\qopt(\theta^{\dag}),\qopt(\tlb)]}\Big[V^{\star}({\rm q})-\overline{z}^{M}(\theta){\rm q}\Big];
\end{align*}
\item[(b)] for all open intervals $I\subseteq(\tlb,\theta^{\dag})$ such that
$\overline{H}(F^{\star}(\theta))<H(F^{\star}(\theta))$ for all $\theta\in I$,
$\bar{z}^{M}$ is constant over $I$, and therefore, $\qopt$ is also
constant over $I$ (pooling property). 
\end{enumerate}
The following technical lemma, whose proof is in Step 5, establishes
a relationship between $\bar{z}^{M}$ and $\qopt$ which we use below:

\begin{lemma} \label{lem:barz-continuous-middle}If $\bar{z}^{M}$
is discontinuous at $\theta\in(\tlb,\theta^{\dag})$, then $\qopt$
is continuous at $\theta$. \end{lemma}

For any $\theta\in(\tlb,\theta^{\dag})$, the function $V^{\star}({\rm q})-\overline{z}^{M}(\theta){\rm q}$
is strictly concave in ${\rm q}$, with a unique maximum at $D^{\star}(\overline{z}^{M}(\theta))$.
Because $\overline{z}^{M}$ is weakly increasing, the function $D^{\star}(\overline{z}^{M}(\cdot))$
is weakly decreasing. We divide the interval $(\tlb,\theta^{\dag})$
into three parts. Let 
\[
\theta_{1}\equiv\sup\{\theta\in(\tlb,\theta^{\dag}):D^{\star}(\overline{z}^{M}(\theta))>\qopt(\tlb)\}
\]
if $\{\theta\in(\tlb,\theta^{\dag}):D^{\star}(\overline{z}^{M}(\theta))>\qopt(\tlb)\}\neq\emptyset$
and otherwise let $\theta_{1}\equiv\tlb$. Similarly, let 
\[
\theta_{2}\equiv\inf\{\theta\in(\tlb,\theta^{\dag}):D^{\star}(\overline{z}^{M}(\theta))<\qopt(\theta^{\dag})\}
\]
if $\{\theta\in(\tlb,\theta^{\dag}):D^{\star}(\overline{z}^{M}(\theta))<\qopt(\theta^{\dag})\}\neq\emptyset$
and otherwise let $\theta_{2}\equiv\theta^{\dag}$.

\begin{claim} \label{cl:theta12} $\theta_{1}\le\theta_{2}$. If
$\theta_{1}=\theta_{2}=\hat{\theta}$, then $\hat{\theta}\in\{\tlb,\theta^{\dag}\}$.
\end{claim} 

\noindent \textbf{Proof of Claim \ref{cl:theta12}.} Assume, towards a contradiction
to what is claimed, that $\theta_{1}>\theta_{2}$. Then choose $\theta\in(\theta_{2},\theta_{1})$.
By the definition of $\theta_{1}$ and $\theta_{2}$, $\qopt(\tlb)<D^{\star}(\bar{z}^{M}(\theta))<\qopt(\theta^{\dag})$,
a contradiction to the fact that $\qopt$ is weakly decreasing. Thus
$\theta_{1}\le\theta_{2}$, as claimed.

\noindent Next, consider the second part of the claim. That is, suppose
that $\theta_{1}=\theta_{2}=\hat{\theta}$. We want to establish that
$\hat{\theta}\in\{\tlb,\theta^{\dag}\}$. The proof is again by contradiction.
Suppose that $\hat{\theta}\in(\tlb,\theta^{\dag})$. Then, for any
$\theta\in[\tlb,\hat{\theta})$ and $\theta'\in(\hat{\theta},\theta^{\dag}]$,
$\qopt(\theta)=\qopt(\tlb)>\qopt(\theta^{\dag})=\qopt(\theta')$,
where the inequality follows from Footnote \ref{footnote}, i.e.,
from the fact that we are considering the case in which 
\[
\Delta\equiv\qopt(\tlb)-\qopt(\theta^{\dag})>0.
\]
Hence, $\qopt$ is discontinuous at $\hat{\theta}$. By Lemma \ref{lem:barz-continuous-middle},
$\bar{z}^{M}$ must thus be continuous at $\hat{\theta}$. Now consider
a decreasing sequence of types $(\theta_{n})$ converging to $\hat{\theta}=\theta_{2}$.
Using the definition of $\theta_{2}$, we have that $D^{\star}(\bar{z}^{M}(\theta_{n}))\le\qopt(\theta^{\dag})=\qopt(\tlb)-\Delta$.
The continuity of $D^{\star}\circ\bar{z}^{M}$ at $\hat{\theta}$
implies that $D^{\star}(\bar{z}^{M}(\hat{\theta}))=\lim_{n\rightarrow\infty}D^{\star}(\bar{z}^{M}(\theta_{n}))\le\qopt(\tlb)-\Delta<\qopt(\tlb)$.
The same continuity in turn implies existence of $\epsilon>0$ such
that $D^{\star}(\bar{z}^{M}(\hat{\theta}-\epsilon))<\qopt(\tlb)$,
which contradicts the definition of $\theta_{1}$ because $\hat{\theta}=\theta_{1}$.
We conclude that, if $\theta_{1}=\theta_{2}=\hat{\theta}$, then $\hat{\theta}\in\{\tlb,\theta^{\dag}\}$,
as claimed. $\hfill\blacksquare{}$

The optimal schedule $\qopt$ must thus satisfy the following properties:
\begin{align}
\qopt(\theta) & =\qopt(\tlb)~\qquad~\forall~\theta\in(\tlb,\theta_{1}),\label{eq:iron3}\\
\qopt(\theta) & =D^{\star}(\overline{z}^{M}(\theta))~\qquad~\textrm{for almost all}~\theta\in(\theta_{1},\theta_{2}),\label{eq:iron2}\\
\qopt(\theta) & =\qopt(\theta^{\dag})~\qquad~\forall~\theta\in(\theta_{2},\theta^{\dag}).\label{eq:iron1}
\end{align}

\begin{obs} \label{obs:inttheta12} (1) Suppose $\theta_{1}>\tlb$.
Then, $\qopt(\theta)<\qbm(\theta)$ for all $\theta\in(\tlb,\theta_{1})$
and $\theta^{m}=\tlb$; (2) Suppose $\theta_{2}<\theta^{\dag}$. Then,
$\qopt(\theta)<\qbm(\theta)$ for all $\theta\in(\theta_{2},\theta^{\dag})$
and $\theta^{m}\le\theta_{2}$. \end{obs}

\noindent\textbf{Proof of Observation \ref{obs:inttheta12}.} Because
$\theta^{\dag}\le\theta^{\star}$, by Lemma \ref{lem:weakqoptbound}
in the main text, $\qopt(\theta)\le\qbm(\theta)$ for all $\theta\in(\tlb,\theta^{\dag})$.

\noindent\emph{Part (1).} By (\ref{eq:iron3}), for any $\tlb<\theta<\theta'<\theta_{1}$,
$\qopt(\tlb)=\qopt(\theta)=\qopt(\theta')\le\qbm(\theta')<\qbm(\theta)$,
where the first inequality follows from Lemma \ref{lem:weakqoptbound}
in the main text, whereas the second one follos from the fact that
$\qbm$ is decreasing. Hence, $\qopt(\theta)<\qbm(\theta)$ for all
$\theta\in(\tlb,\theta_{1})$. Lemma \ref{lem:leftofthetam} in the
main text then implies that $\theta^{m}=\tlb$.

\noindent\emph{Part (2).} The proof is identical to that of Part
(1): Pick $\theta_{2}<\theta<\theta'<\theta^{\dag}$. By (\ref{eq:iron1}),
$\qopt(\theta^{\dag})=\qopt(\theta)=\qopt(\theta')\le\qbm(\theta')<\qbm(\theta)$.
Hence, $\qopt(\theta)<\qbm(\theta)$ for all $\theta\in(\theta_{2},\theta^{\dag})$.
Lemma \ref{lem:leftofthetam} in the main text then implies that $\theta^{m}\le\theta_{2}$.
$\hfill\blacksquare{}$

Now recall that we want to show that $\qopt(\theta)<\qbm(\theta)$
for all $\theta\in(\theta^{m},\theta^{\dag})$. By virtue of Claim
\ref{cl:theta12} and Observation \ref{obs:inttheta12}, it only remains
to show that, when $\theta^{m}<\theta_{2}$ and $\theta_{1}<\theta_{2}$,
this inequality holds for all $\theta\in I^{\dag}$, where\footnote{To see this, note that, when $\theta_{1}=\theta_{2}=\theta^{\dag}$,
Part (1) of Observation \ref{obs:inttheta12} implies that $\theta^{m}=\tlb$
and $\qopt(\theta)<\qbm(\theta)$ for all $\theta\in(\theta^{m},\theta^{\dag})$.
Similarly, when $\theta_{1}=\theta_{2}=\tlb$, Part (2) of Observation
\ref{obs:inttheta12} implies that $\theta^{m}=\tlb$, and $\qopt(\theta)<\qbm(\theta)$
for all $\theta\in(\theta^{m},\theta^{\dag})$. Claim \ref{cl:theta12}
in turn rules out the possibility that $\theta_{1}=\theta_{2}=\hat{\theta}\in(\tlb,\theta^{\dag})$.
Hence, it only remains to show that, when $\theta^{m}<\theta_{2}$
and $\theta_{1}<\theta_{2}$, the same inequality holds for all $\theta\in I^{\dag}$.} 
\[
I^{\dag}\equiv\left((\theta^{m},\theta_{2})\cap[\theta_{1},\theta_{2})\right)\cup\left(\{\theta_{2}\}\cap(\tlb,\theta^{\dag})\right).
\]
Note that $I^{\dag}$ is the union of the set $(\theta^{m},\theta_{2})\cap[\theta_{1},\theta_{2})$
with the singleton $\{\theta_{2}\}$, when $\theta_{2}<\theta^{\dag}$.
See Figure \ref{fig:step3fig} for an illustration. 
\begin{figure}
\centering \includegraphics[width=0.5\linewidth]{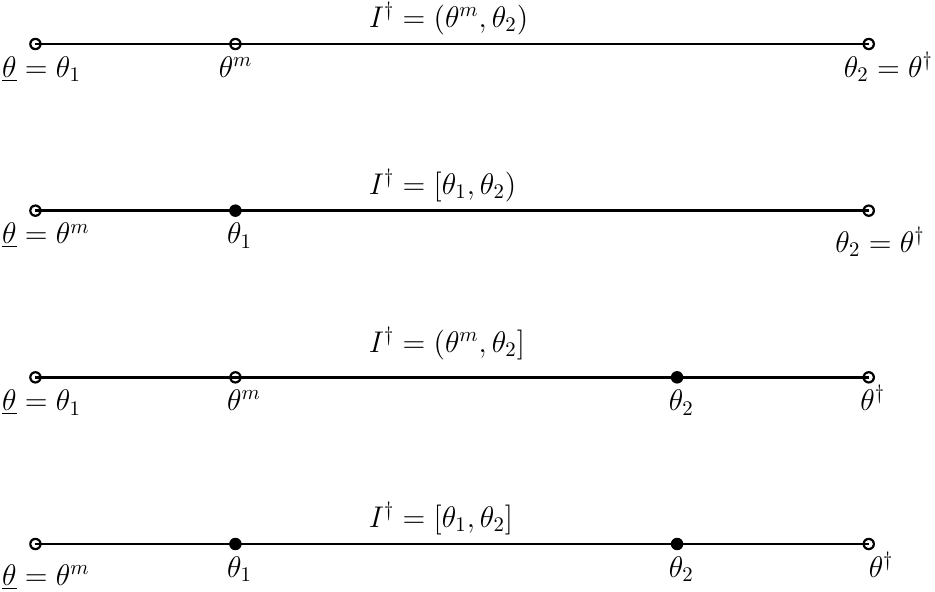}
\caption{Interval $I^{\protect\dag}$ where it remains to show that $\qopt(\theta)<\qbm(\theta)$.}
\label{fig:step3fig} 
\end{figure}

We make use of this additional technical lemma whose proof is relegated
to Step 5 below.

\begin{lemma} \label{lem:iron-ae} For all $\theta\in I^{\dag}$,
$\bar{z}^{M}(\theta)>z^{\star}(\theta)$. \end{lemma}

Now recall that we are considering the case $\theta_{1}<\theta_{2}$.
Condition (\ref{eq:iron2}) then implies that the set $E\equiv\{\theta\in(\theta_{1},\theta_{2}):\qopt(\theta)=D^{\star}(\bar{z}^{M}(\theta))\}$
has strictly positive Lebesgue measure. To show that, when $I^{\dag}\neq\emptyset$,
$\qopt(\theta)<\qbm(\theta)$ for all $\theta\in I^{\dag}$, we consider
two cases.

\noindent\textsc{Case A}: $\bar{z}^{M}$ is discontinuous at $\theta\in I^{\dag}$.

\noindent By Lemma \ref{lem:barz-continuous-middle}, $\qopt$ is
then continuous at $\theta$. First suppose that $\theta\in(\theta^{m},\theta_{2})\cap[\theta_{1},\theta_{2})$.
Let $(\theta_{n})$ be a decreasing sequence such that $\theta_{n}\in E$
for all $n$ and $\lim_{n\rightarrow\infty}\theta_{n}=\theta$. By
the right-continuity of $\bar{z}^{M}$ and the continuity of $D^{\star}$,
we have that $\lim_{n\rightarrow\infty}D^{\star}(\bar{z}^{M}(\theta_{n}))=D^{\star}(\bar{z}^{M}(\theta))$.
The continuity of $\qopt$ at $\theta$ in turn implies that $\lim_{n\rightarrow\infty}\qopt(\theta_{n})=\qopt(\theta)$.
Because, for every $n$, $\theta_{n}\in E$, Condition (\ref{eq:iron2})
implies that $\qopt(\theta_{n})=D^{\star}(\bar{z}^{M}(\theta_{n}))$.
As a result, $D^{\star}(\bar{z}^{M}(\theta))=\qopt(\theta)$. By Lemma
\ref{lem:iron-ae}, $\bar{z}^{M}(\theta)>z^{\star}(\theta)$, and
hence, $\qopt(\theta)=D^{\star}(\bar{z}^{M}(\theta))<D^{\star}(z^{\star}(\theta))=\qbm(\theta)$.
Next suppose that $\theta=\theta_{2}<\theta^{\dag}$. By Condition
(\ref{eq:iron1}), $\qopt(\theta)=\qopt(\theta^{\dag})$ for all $\theta\in(\theta_{2},\theta^{\dag})$.
Because $\qopt$ is continuous at $\theta_{2}$, $\qopt(\theta_{2})=\qopt(\theta^{\dag})$.
This means that there exists $\epsilon>0$ such that $\qopt(\theta_{2})=\qopt(\theta_{2}+\epsilon)<\qbm(\theta_{2}+\epsilon)<\qbm(\theta_{2})$,
where the first inequality follows from Observation \ref{obs:inttheta12}
and the second inequality follows from the fact that $\qbm$ is decreasing.
We conclude that, when $\bar{z}^{M}$ is discontinuous at $\theta\in I^{\dag}$,
no matter whether $\theta\in(\theta^{m},\theta_{2})\cap[\theta_{1},\theta_{2})$
or $\theta=\theta_{2}<\theta^{\dag}$, $\qopt(\theta)<\qbm(\theta)$.

\noindent\textsc{Case B: }$\bar{z}^{M}$ is continuous at $\theta\in I^{\dag}$. 

\noindent First suppose that $\theta\in I^{\dag}$ with $\theta\neq\theta_{1}$.
Then let $(\theta_{n})$ be an increasing sequence such that $\theta_{n}\in E$
for all $n$ and $\lim_{n\rightarrow\infty}\theta_{n}=\theta$. Thus,
for every $n$, $\qopt(\theta_{n})=D^{\star}(\bar{z}^{M}(\theta_{n}))$.
Next, suppose that $\theta\in I^{\dag}$ with $\theta=\theta_{1}$.
Note that this means that $\theta_{1}>\tlb$. Then let $(\theta_{n})$
be any sequence such that $\theta_{n}\in(\tlb,\theta_{1})$ for all
$n$ and $\lim_{n\rightarrow\infty}\theta_{n}=\theta_{1}.$ Condition
(\ref{eq:iron3}) then implies that, along such a sequence $\qopt(\theta_{n})=\qopt(\tlb)\le D^{\star}(\bar{z}^{M}(\theta_{n}))$
for every $n$. No matter which of the two sequences one considers,
the continuity of $\bar{z}^{M}$ along with the continuity of $D^{\star}$,
imply that $\lim_{n\rightarrow\infty}D^{\star}(\bar{z}^{M}(\theta_{n}))=D^{\star}(\bar{z}^{M}(\theta))$.
Furthermore, because $\qopt$ is weakly decreasing, $\qopt(\theta)\le\lim_{n\rightarrow\infty}\qopt(\theta_{n})\le\lim_{n\rightarrow\infty}D^{\star}(\bar{z}^{M}(\theta_{n}))=D^{\star}(\bar{z}^{M}(\theta))<D^{\star}(z^{\star}(\theta))=\qbm(\theta)$,
where the strict inequality follows from Lemma \ref{lem:iron-ae}
(which establishes that $\bar{z}^{M}(\theta)>z^{\star}(\theta)$).

\medskip{}

\noindent\textbf{Step 4. Choosing $\theta^{\dag}$ to satisfy the
Interiority Condition, the existence of $\theta_{\min}$, and the
property that $\qopt(\theta)<\qbm(\theta)$ for all $\theta\in(\theta^{m},\min\{\theta^{\star},\theta^{\star\star}\})$.}

The steps above have established that $\qopt(\theta)<\qbm(\theta)$
for all $\theta\in(\theta^{m},\theta^{\dag})$. To establish the result
in the proposition it thus suffices to construct a sequence of cut-offs
$\theta^{\dag}$, denoted by $(\theta^{\dag}_{n})$, converging to
$\min\{\theta^{\star},\theta^{\star\star}\}$, satisfying the Interiority
Condition of Definition \ref{def:interior}, and such that $\{\theta\in(\tlb,\theta^{\dag}):\Lopt(\theta)+\muopt>0\}\neq\emptyset$,
thus implying that the threshold $\theta_{\min}$ in the previous
step is well defined.

\noindent\textbf{Case 1.} $\theta^{\star}<\tub$. In this case, the
sequence $(\theta^{\dag}_{n})$ can be taken to be degenerate with
$\theta^{\dag}_{n}=\theta^{\dag}=\theta^{\star}$ for all $n$. Because
$\underline{D}(\theta^{\star})>{\rm q}_{\ell}=\qopt(\theta^{\star})$,
to establish that such a $\theta^{\dag}$ satisfies the Interiority
Condition of Definition \ref{def:interior}, consider the schedule
$q^{\dag}$ given by $q^{\dag}(\theta)\equiv\underline{D}(\theta)$
for all $\theta\in[\tlb,\theta^{\dag})$ and $q^{\dag}(\theta)\equiv{\rm q}_{\ell}=\qopt(\theta)$
for all $\theta\in[\theta^{\dag},\overline{\theta}]$. Clearly, $S(\theta;q^{\dag})=\int^{\tub}_{\theta^{\dag}}(\underline{D}(y)-{\rm q}_{\ell}){\rm d}y>0$
for all $\theta\in[\tlb,\theta^{\dag}]$. Furthermore, because $q^{\dag}(\tlb)=\underline{D}(\tlb)$,
${\rm \underline{{\rm DWL}}}(\tlb,q^{\dag}(\tlb))=0$. Lemma \ref{lemma:majorizations}
in the main text then implies that $R(q^{\dag})=S(\tlb;q^{\dag})>0$.
Hence, the sub-program (\ref{opt:fpdag}) with $\theta^{\dag}=\theta^{\star}$
satisfies the Interiority Condition of Definition \ref{def:interior}.

Setting $\theta^{\dag}\equiv\theta^{\star}$ also ensures that $\{\theta\in(\tlb,\theta^{\dag}):\Lopt(\theta)+\muopt>0\}\neq\emptyset$
and hence that $\theta_{\min}$ is well-defined. To see why, suppose,
towards a contradiction, that $\Lopt(\theta)+\muopt=0$ for all $\theta\in(\tlb,\theta^{\star})$.
The optimality condition in (\ref{eq:lobj}) then implies that $\qopt(\theta)=\qbm(\theta)$
for all $\theta\in(\tlb,\theta^{\star})$. Along with the fact that
$\qopt(\theta)={\rm q}_{\ell}$ for all $\theta\in[\theta^{\star},\tub]$,
this means that $\qopt(\theta)=q^{\star}(\theta)$ for all $\theta\in(\tlb,\tub]$.
This contradicts the fact that Baron-Myerson-with-quantity-floor is
not robustly optimal. We thus conclude that $\{\theta\in(\tlb,\theta^{\dag}):\Lopt(\theta)+\muopt>0\}\neq\emptyset$
and hence that $\theta_{\min}$ is well-defined. Because $\min\{\theta^{\star},\theta^{\star\star}\}=\theta^{\star}$,
the results in Step 3 then imply that $\qopt(\theta)<\qbm(\theta)$
for all $\theta\in(\theta^{m},\min\{\theta^{\star},\theta^{\star\star}\})$,
as claimed.

\noindent\textbf{Case 2.} $\theta^{\star}=\tub$. In this case, $\min\{\theta^{\star},\theta^{\star\star}\}=\theta^{\star\star}.$
Recall that we are considering the case in which $\theta^{\star\star}>\theta^{m}$.
The definition of $\theta^{\star\star}$ along with the fact that
$\qopt$ satisfies the constraint $S(\theta;\qopt)\geq0$ for all
$\theta$, implies that for every $\delta\in(0,\theta^{\star\star}-\theta^{m})$,
there exists $\theta^{\dag}\in(\theta^{\star\star}-\delta,\theta^{\star\star})$
such that $\qopt(\theta^{\dag})<\underline{D}(\theta^{\dag})$ and
$S(\theta^{\dag};\qopt)>0$. Now, for any such $\theta^{\dag}$, let
$q^{\dag}\in\mathcal{Q}^{\dag}$ be the schedule defined by 
\[
q^{\dag}(\theta)=\begin{cases}
\underline{D}(\theta) & \mbox{for all}~\theta<\theta^{\dag}\\
\qopt(\theta) & \mbox{otherwise}.
\end{cases}
\]
Clearly, for any $\theta<\theta^{\dag}$, $S(\theta;q^{\dag})=S(\theta^{\dag};\qopt)>0$.
Moreover, because $q^{\dag}(\tlb)=\underline{D}(\tlb)$, ${\rm \underline{{\rm DWL}}}(\tlb,q^{\dag}(\tlb))=0$.
Lemma \ref{lemma:majorizations} in the main text then implies that
$R(q^{\dag})=S(\tlb;q^{\dag})>0$. Hence, the sub-program (\ref{opt:fpdag})
with such a chosen $\theta^{\dag}$ satisfies the Interiority Condition
of Definition \ref{def:interior}.

Next, observe that Lemma \ref{lem:generictheta2} in the main text
implies that $\qopt(\theta)<\qbm(\theta)$ on a set $I\subset(\theta^{m},\theta^{\star\star})$
of positive Lebesgue measure. When combined with Observation \ref{obs:leftof_tmin}
above and the fact that $\Lopt$ is weakly increasing, this property
implies that there exists $\tilde{\delta}\in(0,\theta^{\star\star}-\theta^{m})$
such that, for any $\theta^{\dag}\in$ $\big(\theta^{\star\star}-\tilde{\delta},\theta^{\star\star}\big)$,
the set $\{\theta\in(\tlb,\theta^{\dag}):\Lopt(\theta)+\muopt>0\}\neq\emptyset$.

The above results imply existence of a monotone sequence $(\theta^{\dag}_{n})$,
with $\theta^{\dag}_{n}\uparrow\theta^{\star\star}$ as $n\rightarrow\infty$,
such that, for any $n$, (a) the sub-program (\ref{opt:fpdag}) satisfies
the Interiority Condition of Definition \ref{def:interior}, and (b)
$\{\theta\in(\tlb,\theta^{\dag}_{n}):\Lopt(\theta)+\muopt>0\}\neq\emptyset$.
The result that $\qopt(\theta)<\qbm(\theta)$ for all $\theta\in(\theta^{m},\min\{\theta^{\star},\theta^{\star\star}\})$
then follows from these properties along with what we established
in Step 3.

\medskip{}

\noindent\textbf{Step 5: Proofs of technical lemmas in Step 3.}

\noindent We conclude the proof of (\ref{eq:strong-prop2}) by proving
the two technical lemmas in Step 3.

\noindent\textbf{Proof of Lemma \ref{lem:barz-continuous-middle}}.
We use the following notation in the proof. For any function $g:(\tlb,\theta^{\dag})\rightarrow\mathbb{R}$,
and any $\theta\in(\tlb,\theta^{\dag})$, $g_{+}(\theta)\equiv\lim_{y\downarrow\theta}g(y)$
and $g_{-}(\theta)\equiv\lim_{y\uparrow\theta}g(y)$. The proof is
in three steps.

\noindent\textbf{\textsc{Step A.}} We show that if $\bar{z}^{M}$
is discontinuous at $\theta\in(\tlb,\theta^{\dag})$, then $z^{M}$
is also discontinuous at $\theta$. Let $\varphi\equiv F^{\star}(\theta)$.
Because $F^{\star}$ is continuous and increasing, it suffices to
show that if $\bar{h}\equiv\bar{z}^{M}\circ(F^{\star})^{-1}$ is discontinuous
at $\varphi$, then $h\equiv z^{M}\circ(F^{\star})^{-1}$ is also
discontinuous at $\varphi$. To see this, suppose, towards a contradiction,
that $h$ is continuous at $\varphi$. Then $H$ is differentiable
at $\varphi$. On the other hand, the assumption that $\bar{h}$ is
discontinuous at $\varphi$ along with the fact that $\bar{h}$ is
right-continuous and weakly increasing implies that $\bar{h}_{-}(\varphi)<\bar{h}_{+}(\varphi)=\bar{h}(\varphi)$.
Note that this means that $\overline{H}$ is not differentiable at
$\varphi$.

\noindent Next observe that $\overline{H}(\varphi)=H(\varphi)$. To
see this, recall that, by definition, $\overline{H}(\varphi)\leq H(\varphi)$.
Now suppose that $\overline{H}(\varphi)<H(\varphi)$. The continuity
of $H$ and $\overline{H}$ then implies that there is an open interval
$I$ containing $\varphi$ such that, for all $\tilde{\varphi}\in I$,
$\overline{H}(\tilde{\varphi})<H(\tilde{\varphi})$. The pooling property
then implies that $\overline{H}$ is affine on $I$, contradicting
its non-differentiability at $\varphi$. Hence, $\overline{H}(\varphi)=H(\varphi)$.

\noindent Now observe that, because $\overline{H}\le H$ and $\overline{H}(\varphi)=H(\varphi)$,
for every $x>\varphi$, 
\[
\frac{\overline{H}(x)-\overline{H}(\varphi)}{x-\varphi}\le\frac{H(x)-H(\varphi)}{x-\varphi}.
\]
Taking the limit as $x\downarrow\varphi$, we have that the left-hand
side converges to $\bar{h}_{+}(\varphi)=\bar{h}(\varphi)$. The right-hand
side, instead, converges to $h(\varphi)$, as $H$ is differentiable
at $\varphi$. Thus, $\bar{h}(\varphi)\le h(\varphi)$. Similarly,
for every $x<\varphi$, 
\[
\frac{\overline{H}(\varphi)-\overline{H}(x)}{\varphi-x}\ge\frac{H(\varphi)-H(x)}{\varphi-x}.
\]
Taking the limit as $x\uparrow\varphi$, the left-hand side converges
to $\bar{h}_{-}(\varphi)$ whereas the right-hand side converges to
$h(\varphi)$. Hence $\bar{h}_{-}(\varphi)\ge h(\varphi)$. Therefore,
$\bar{h}_{-}(\varphi)\ge h(\varphi)\ge\bar{h}(\varphi)$. On the other
hand, the convexity of $\overline{H}$ implies that $\bar{h}_{-}(\varphi)\le\bar{h}(\varphi)$.
Hence $\bar{h}_{-}(\varphi)=\bar{h}(\varphi)=\bar{h}_{+}(\varphi)$.
But this contradicts the fact that, as argued above, $\bar{h}_{-}(\varphi)<\bar{h}_{+}(\varphi)=\bar{h}(\varphi)$.
We thus conclude that $h$ must be discontinuous at $\varphi$, which
implies discontinuity of $z^{M}$ at $\theta$, as claimed.

\noindent\textbf{\textsc{Step B.}} We now argue that the discontinuity
of $z^{M}$ at $\theta$ implies that $S(\theta;\qopt)=0$. To see
this, note that, because $z^{\star}$ and $f^{\star}$ are continuous,
the discontinuity must be due to an upward jump of $\Lambda^{{\rm OPT}}$
at $\theta$. The complementary slackness conditions in (\ref{eq:cs1})
then imply that $S(\theta;\qopt)=0$.

\noindent\textbf{\textsc{Step C.}} We now use the fact that $S(\cdot;\qopt)$
is continuous at $\theta$ to establish that $\qopt$ is continuous
at $\theta$. We first argue that $q^{{\rm OPT}}_{+}(\theta)\ge\underline{D}(\theta)$.
To see why, observe that the right-derivative of $S(\cdot;\qopt)$
at $\theta$ is equal to 
\[
\lim_{x\downarrow\theta}\frac{S(x;\qopt)-S(\theta;\qopt)}{x-\theta}=-\lim_{x\downarrow\theta}\frac{\int^{x}_{\theta}\left(\underline{D}(y)-\qopt(y)\right){\rm d}y}{x-\theta}=q^{{\rm OPT}}_{+}(\theta)-\underline{D}(\theta).
\]
That $\qopt$ satisfies the robustness constraints implies that $S(y;\qopt)\ge0$
for any $y$ in a neighborhood of $\theta$. Because $S(\theta;\qopt)=0$,
it must be that $S(\cdot;\qopt)$ has a minimum at $\theta$. In turn,
this means that the right derivative of $S(\cdot;\qopt)$ is nonnegative
at $\theta$, which implies that $q^{{\rm OPT}}_{+}(\theta)\ge\underline{D}(\theta)$.
Similarly, by considering the left derivative of $S(\cdot;\qopt)$
at $\theta$, we obtain that $\underline{D}(\theta)\ge q^{{\rm OPT}}_{-}(\theta)$.
Combining the last two inequalities, we have that $q^{{\rm OPT}}_{+}(\theta)\ge\underline{D}(\theta)\ge q^{{\rm OPT}}_{-}(\theta)$.
Because $\qopt$ is weakly decreasing, $q^{{\rm OPT}}_{-}(\theta)\ge q^{{\rm OPT}}_{+}(\theta)$.
This implies that $q^{{\rm OPT}}_{+}(\theta)=q^{{\rm OPT}}_{-}(\theta)=\underline{D}(\theta)$.
This last result, together with the fact that $\qopt$ is weakly decreasing,
implies that $\qopt$ is continuous at $\theta$, as claimed. $\hfill\blacksquare{}$

\medskip{}

\noindent\textbf{Proof of Lemma \ref{lem:iron-ae}}. First, we show
that $\theta_{\min}\le\theta_{2}$. Next, we show that, when $\theta_{\min}<\theta_{2}$,
$\bar{z}^{M}(\theta)>z^{\star}(\theta)$ for all $\theta\in(\theta_{\min},\theta_{2})\cap[\theta_{1},\theta_{2})$.
We then show that $\theta_{\min}=\theta^{m}$. Finally, we show that,
when $\theta_{2}<\theta^{\dag}$, $\bar{z}^{M}(\theta_{2})>z^{\star}(\theta_{2})$.
We proceed in five steps.

\noindent\textbf{\textsc{Step A.}} In this step, we establish that
$\theta_{\min}\le\theta_{2}$. The result follows from Lemma \ref{lem:tmin2}
(establishing that $\theta_{\min}\le\theta^{m}$) along with Observation
\ref{obs:inttheta12} (establishing that $\theta^{m}\le\theta_{2}$
if $\theta_{2}<\theta^{\dag}$) and the maintained assumption that
$\theta^{m}\leq\theta^{\dag}$.

\noindent\textbf{\textsc{Step B.}} Next, we establish that the weaker
inequality $\bar{z}^{M}(\theta)\geq z^{\star}(\theta)$ holds for
almost all $\theta\in(\theta_{1},\theta_{2})$. To do so, observe
that, because $\theta_{2}\le\theta^{\dag}\le\min\{\theta^{\star},\theta^{\star\star}\}$,
Lemma \ref{lem:weakqoptbound} in the main text implies that $\qopt(\theta)\le\qbm(\theta)$
for any $\theta\in(\tlb,\theta_{2})$. By (\ref{eq:iron2}), $\qopt(\theta)=D^{\star}(\bar{z}^{M}(\theta))$
for almost all $\theta\in(\theta_{1},\theta_{2})$. Combining the
two properties, we have that $D^{\star}(\bar{z}^{M}(\theta))\le D^{\star}(z^{\star}(\theta))$
for almost all $\theta\in(\theta_{1},\theta_{2})$. Because $D^{\star}$
is decreasing, this means that $\bar{z}^{M}(\theta)\ge z^{\star}(\theta)$
for almost all $\theta\in(\theta_{1},\theta_{2})$.

\noindent\textbf{\textsc{Step C.}} Equipped with the results in the
previous two steps, we now establish that, when $\theta_{\min}<\theta_{2}$,
the inequality $\bar{z}^{M}(\theta)\geq z^{\star}(\theta)$ is strict
for all $\theta\in(\theta_{\min},\theta_{2})\cap[\theta_{1},\theta_{2})$.
Let $\varphi_{\min}\equiv F^{\star}(\theta_{\min})$, $\varphi_{1}\equiv F^{\star}(\theta_{1})$
and $\varphi_{2}\equiv F^{\star}(\theta_{2})$. Next take any $\varphi\in(\varphi_{\min},\varphi_{2})\cap[\varphi_{1},\varphi_{2})$,
and let $\theta\equiv F^{\star-1}(\varphi)$. The proof considers
two cases.

\noindent\textbf{\textsc{Case C.1:}} There exists $\delta>0$ such
that $\overline{H}(\varphi')<H(\varphi')$ for all $\varphi'\in\tilde{I}\equiv(\varphi,\varphi+\delta)$. 

\noindent By the pooling property, $\bar{h}$ is constant on $\tilde{I}$.
Because $\bar{h}$ is right-continuous, this constant value is equal
to $\bar{h}(\varphi)$. Using $\bar{h}(\varphi)=\bar{z}^{M}((F^{\star})^{-1}(\varphi))=\bar{z}^{M}(\theta)$,
we thus have that, for all $\varphi'\in\tilde{I}$, $\bar{z}^{M}((F^{\star})^{-1}(\varphi'))=\bar{z}^{M}(\theta)$.
Now, choose $\epsilon>0$ such that $F^{\star}(\theta+\epsilon)\in\tilde{I}$
and $\bar{z}^{M}(\theta+\epsilon)\ge z^{\star}(\theta+\epsilon)$.
Note that, because $\bar{z}^{M}(\theta')\ge z^{\star}(\theta')$ for
almost all $\theta'\in(\theta_{1},\theta_{2})$ (by Step B), such
an $\epsilon$ exists. Then, $\bar{z}^{M}(\theta)=\bar{z}^{M}(\theta+\epsilon)\ge z^{\star}(\theta+\epsilon)>z^{\star}(\theta)$,
where the last inequality follows from the fact that $z^{\star}$
is increasing.

\noindent\textbf{\textsc{Case C.2.}} For every $\delta>0$, there
exists $\varphi'\in(\varphi,\varphi+\delta)\cap(\varphi,\varphi_{2})$
such that $\overline{H}(\varphi')=H(\varphi')$. 

\noindent Note that, in this case, $\overline{H}(\varphi)=H(\varphi)$.
If this was not true, the continuity of $\overline{H}$ and $H$ (which
follows from their definitions), would imply existence of a right
neighborhood $I'$ of $\varphi$ such that $\overline{H}(\varphi')<H(\varphi')$
for all $\varphi'\in I'$, contradicting the assumption defining Case
C.2. Hence, there exists a non-increasing sequence $(\varphi_{n})$
such that $\overline{H}(\varphi_{n})=H(\varphi_{n})$ for all $n$
and with $\varphi_{n}\rightarrow\varphi$ as $n\rightarrow\infty$.
By taking the right-derivatives of $\overline{H}$ and $H$ along
this sequence and using the fact that $\bar{h}$ and $h$ are right-continuous,
we have that 
\begin{align*}
\bar{h}(\varphi)=\lim_{n\rightarrow\infty}\frac{\overline{H}(\varphi_{n})-\overline{H}(\varphi)}{\varphi_{n}-\varphi}=\lim_{n\rightarrow\infty}\frac{H(\varphi_{n})-H(\varphi)}{\varphi_{n}-\varphi}=h(\varphi).
\end{align*}
Hence, $\bar{z}^{M}(\theta)=\bar{h}(\varphi)=h(\varphi)=z^{M}(\theta)>z^{\star}(\theta)$,
where the last inequality follows because $\theta>\theta_{\min}$,
and therefore, $\Lambda^{{\rm OPT}}(\theta)+\muopt>0$.

Combining Case C.1 with Case C.2, we thus conclude that, for any $\theta\in(\theta_{\min},\theta_{2})\cap[\theta_{1},\theta_{2})$,
$\bar{z}^{M}(\theta)>z^{\star}(\theta)$, as claimed.

\noindent\textbf{\textsc{Step D.}} Next, we establish that $\theta^{m}=\theta_{\min}$.
Assume, towards a contradiction, that $\theta^{m}\ne\theta_{\min}$.
Lemma \ref{lem:tmin2} (establishing that $\theta_{\min}\le\theta^{m}$)
then implies that $\tlb\leq\theta_{\min}<\theta^{m}$. By Lemma \ref{lem:leftofthetam}
in the main text, 
\begin{align}
\qopt(\theta)=\qbm(\theta)~\qquad~\forall~\theta\in(\tlb,\theta^{m}).\label{eq:temp1-2}
\end{align}
Hence, Observation \ref{obs:inttheta12} (establishing that $\qopt(\theta)<\qbm(\theta)$
for all $\theta\in(\tlb,\theta_{1})$ if $\theta_{1}>\tlb$) implies
that $\theta_{1}=\tlb$. The arguments in Step A above establish that
$\theta^{m}\le\theta_{2}$. Hence, $(\theta_{\min},\theta^{m})\subset(\theta_{1},\theta_{2})$.
Step C above then implies $\bar{z}^{M}(\theta)>z^{\star}(\theta)$
for all $\theta\in(\theta_{\min},\theta^{m})$. Condition (\ref{eq:iron2})
in turn implies that there exists $\theta'\in(\theta_{\min},\theta^{m})$
such that $\qopt(\theta')=D^{\star}(\bar{z}^{M}(\theta'))$. Combining
the two properties, we conclude that $\qopt(\theta')=D^{\star}(\bar{z}^{M}(\theta'))<D^{\star}(z^{\star}(\theta'))=\qbm(\theta')$,
which contradicts (\ref{eq:temp1-2}). Hence, $\theta^{m}=\theta_{\min}$.

\noindent\textbf{\textsc{Step E.}} We conclude by showing that, when
$\theta_{2}<\theta^{\dag}$, $\bar{z}^{M}(\theta_{2})>z^{\star}(\theta_{2})$.
First, we argue that $\qopt$ is continuous at $\theta_{2}$. If $\bar{z}^{M}$
is discontinuous at $\theta_{2}$, this follows from Lemma \ref{lem:barz-continuous-middle}.
If, instead, $\bar{z}^{M}$ is continuous at $\theta_{2}$, the continuity
of $\qopt$ at $\theta_{2}$ follows from the following arguments.
Because $D^{\star}$ is continuous, $D^{\star}(\bar{z}^{M}(\cdot))$
is also continuous at $\theta_{2}$. Then consider an increasing sequence
$(\theta_{n})$ such that $\theta_{n}\in E$ for all $n$ and $\lim_{n\rightarrow\infty}\theta_{n}=\theta_{2}$.
Along such a sequence $\qopt(\theta_{n})=D^{\star}(\bar{z}^{M}(\theta_{n}))$
and hence 
\begin{align*}
\lim_{n\rightarrow\infty}\qopt(\theta_{n})=\lim_{n\rightarrow\infty}D^{\star}(\bar{z}^{M}(\theta_{n}))=D^{\star}(\bar{z}^{M}(\theta_{2})),
\end{align*}
where the last equality follows from continuity of $D^{\star}(\bar{z}^{M}(\cdot))$.
Furthermore, because $D^{\star}(\bar{z}^{M}(\cdot))$ is continuous,
the definition of $\theta_{2}$ along with (\ref{eq:iron1}) implies
that $D^{\star}(\bar{z}^{M}(\theta_{2}))=\qopt(\theta^{\dag})$. The
last two properties jointly imply that $\lim_{n\rightarrow\infty}\qopt(\theta_{n})=\qopt(\theta^{\dag})$.
Because $\qopt$ is weakly decreasing and $\qopt(\theta)=\qopt(\theta^{\dag})$
for all $\theta\in(\theta_{2},\theta^{\dag})$, we have that $\qopt(\theta_{2})=\qopt(\theta^{\dag})$
and hence $\qopt$ is continuous at $\theta_{2}$.

\noindent Next, observe that there exists $\epsilon>0$ such that
\[
D^{\star}(\bar{z}^{M}(\theta_{2}))\leq\qopt(\theta^{\dag})=\qopt(\theta_{2})=\qopt(\theta_{2}+\epsilon)<\qbm(\theta_{2}+\epsilon)<\qbm(\theta_{2})=D^{\star}(z^{\star}(\theta_{2})).
\]
The first inequality follows from the definition of $\theta_{2}$,
along with the property that $\bar{z}^{M}$ is right-continuous. The
first and second equalities follow from the continuity of $\qopt$
at $\theta_{2}$, along with Condition (\ref{eq:iron1}). The second
inequality follows from Observation \ref{obs:inttheta12}. The third
inequality follows from the fact that $\qbm$ is decreasing. The last
equality follows from the definition of the $\qbm$ schedule. Because
$D^{\star}$ is decreasing, we conclude that $\bar{z}^{M}(\theta_{2})>z^{\star}(\theta_{2})$,
as claimed $\hfill\blacksquare{}$

This completes the proof of Proposition \ref{prop:part2_strong}. $\hfill Q.E.D.$

\subsection{More permissive robustness constraints}
\label{sec:gamma}

We consider more permissive robustness constraints whereby the designer
can choose over mechanisms whose guarantee is not too small relative
to the maximal one. Formally, the shortlist of approved mechanisms
is $\mathcal{M}^{{\rm SL}}_{\gamma}\equiv\{M\in\mathcal{M}:G(M)\ge\gamma G(M')~\forall~M'\in\mathcal{M}\}$,
where $\gamma\in[0,1]$. The analysis in the main text corresponds
to the case $\gamma=1$. Here, we consider the case $\gamma\in[0,1)$.
The proofs of all results in this section are omitted as they build
on arguments similar to those for the corresponding results in the
main text and are available upon request. Lemma \ref{lemma-guarantee}
in the main text remains unchanged. Lemma \ref{lemma-SL-characterization}
instead is generalized as follows:

\begin{lemma}[Shortlist characterization] \label{lemma-SL-characterization-gamma}Take
any IC and IR mechanism $M\equiv(q,u)\in\mathcal{M}$. Then $M\in\mathcal{M}^{{\rm SL}}_{\gamma}$
if and only if 
\begin{align}
\underline{V}(q(\theta))-\theta q(\theta)-u(\theta)\ge\gamma G^{*}\qquad\forall~\theta\in\Theta.\label{SL-constraint-gamma}
\end{align}
\end{lemma}

There are two important distinctions from the case $\gamma=1$ in
the main text. First, when $\gamma<1$ the rent $u(\overline{\theta})$
for the least efficient type is not necessarily zero. Second, $q(\overline{\theta})$
is not restricted to be equal to ${\rm q}_{\ell}$. An immediate implication
of the last lemma is that the shortlist for $\gamma'>\gamma$ is
a subset of the shortlist for $\gamma$. Below, we focus on mechanisms
in $\mathcal{M}^{{\rm SL}}_{\gamma}$ for which $u(\overline{\theta})=0$.
The reason for focusing on this subset of $\mathcal{M}^{{\rm SL}}_{\gamma}$
is that robustly optimal mechanisms always feature no rent for the
highest-cost type. The next lemma generalizes Lemma \ref{lemma:majorizations}
in the main text.

\noindent\begin{lemma} \label{lemma:majorizations-gamma} Take any
weakly decreasing function $q:\Theta\rightarrow\mathbb{R}_{+}$. (1)
For any $\theta\in\Theta$, constraint (\ref{SL-constraint-gamma})
is equivalent to 
\begin{align}
\int^{\overline{\theta}}_{\theta}q(y)dy\leq\int^{\overline{\theta}}_{\theta}\underline{D}(y)dy-\int^{\underline{P}(q(\theta))}_{\theta}\left(\underline{D}(y)-q(\theta)\right)dy+(1-\gamma)G^{*}.\label{eq:general-majorization}
\end{align}
(2) The following statements are equivalent: (a) the inequality in
(\ref{eq:general-majorization}) holds for all $\theta\in\Theta$;
(b) the inequality in (\ref{eq:general-majorization}) holds for $\theta\in\{\underline{\theta},\overline{\theta}\}$
and, for all $\theta\in(\underline{\theta},\overline{\theta})$, 
\begin{align*}
\int^{\overline{\theta}}_{\theta}q(y)dy\leq\int^{\overline{\theta}}_{\theta}\underline{D}(y)dy+(1-\gamma)G^{*}.
\end{align*}
\end{lemma}

We now modify the definition of Baron-Myerson-with-quantity-floor
mechanism to account for the more permissive shortlist. Because the
function $\underline{V}({\rm q})-\overline{\theta}{\rm q}$ is concave
in ${\rm q}$, there are two solutions to the equation $\underline{V}({\rm q})-\overline{\theta}{\rm q}=\gamma G^{*}$.
Denote these two solutions by ${\rm \underline{q}}^{\gamma}_{\ell}$
and ${\rm \bar{q}}^{\gamma}_{\ell}$, with ${\rm \bar{q}}^{\gamma}_{\ell}>{\rm \underline{q}}^{\gamma}_{\ell}$.
The generalization of Baron-Myerson-with-quantity-floor mechanism
uses the lowest of the two floors, ${\rm \underline{q}}^{\gamma}_{\ell}$.
This is because, as shown in Proposition \ref{prop:robust-quantity-mechanism-general-gamma}
below, the relevant constraint imposed by robustness is the one requiring
that $q(\theta)\geq{\rm \underline{q}}^{\gamma}_{\ell}$ for all $\theta\in\Theta$.

\begin{defn} The \textbf{Baron-Myerson-with-quantity-floor} mechanism
is the mechanism $M^{\star}_{\gamma}\equiv(q^{\star}_{\gamma},u^{\star}_{\gamma})$
where $q^{\star}_{\gamma}$ is the quantity schedule defined, for
all $\theta$, by 
\begin{equation*}
q^{\star}_{\gamma}(\theta)\equiv\max\{q^{{\rm {BM}}}(\theta),{\rm \underline{q}}^{\gamma}_{\ell}\}
\end{equation*}
and where $u^{\star}_{\gamma}$ is given by $u^{\star}_{\gamma}(\theta)=\intop^{\overline{\theta}}_{\theta}q^{\star}_{\gamma}(y)dy$
for all $\theta$.\end{defn}

\begin{proposition}[Optimality of Baron-Myerson-with-quantity-floor]
\label{prop:BM-floor-gamma} The mechanism $M^{\star}_{\gamma}$ is
robustly optimal if and only if the following conditions jointly hold:
\begin{align*}
\int^{\overline{\theta}}_{\underline{\theta}}q^{\star}_{\gamma}(y)dy\leq\int^{\overline{\theta}}_{\underline{\theta}}\underline{D}(y)dy-\int^{\underline{P}(q^{\star}_{\gamma}(\underline{\theta}))}_{\underline{\theta}}\Big[\underline{D}(y)-q^{\star}_{\gamma}(\underline{\theta})\Big]dy+(1-\gamma)G^{*},
\end{align*}
\begin{align*}
\int^{\overline{\theta}}_{\theta}q^{\star}_{\gamma}(y)dy & \le\int^{\overline{\theta}}_{\theta}\underline{D}(y)dy+(1-\gamma)G^{*}\qquad\forall\theta>\underline{\theta}.
\end{align*}
\end{proposition}

Let $\theta^{\star}_{\gamma}$ be the threshold defined as follows.
If $\qbm(\overline{\theta})\le{\rm \underline{q}}^{\gamma}_{\ell}$,
by continuity of $\qbm$ along with the fact that $\qbm(\underline{\theta})>{\rm \underline{q}}^{\gamma}_{\ell}$,
let $\theta^{\star}_{\gamma}$ be the unique solution to $\qbm(\theta^{\star}_{\gamma})={\rm \underline{q}}^{\gamma}_{\ell}$.
If, instead, $\qbm(\overline{\theta})>{\rm \underline{q}}^{\gamma}_{\ell}$
(i.e., if $\qbm$ never crosses ${\rm \underline{q}}^{\gamma}_{\ell}$),
then let $\theta^{\star}_{\gamma}\equiv\overline{\theta}$. Next let
\begin{align*}
\theta^{m}_{\gamma} & \equiv\max\{\theta:\theta\in\arg\min_{y\in\Theta}\underline{W}(y,q^{\star}_{\gamma})\},
\end{align*}
where, as in the main text, given any function $q$, for all $\theta\in\Theta$,
$\underline{W}(\theta,q)\equiv\underline{V}(q(\theta))-\theta q(\theta)-\int^{\overline{\theta}}_{\theta}q(y)dy$.

The following proposition establishes a result similar to the one
in Proposition \ref{prop:robust-quantity-mechanism-general} in the
main text:\footnote{The difference between Proposition \ref{prop:robust-quantity-mechanism-general}
in the main text (with the stronger version of part (2) established
in Section \ref{subsec:Stronger-Prop-2} in this supplement) and Proposition
\ref{prop:robust-quantity-mechanism-general-gamma} stems from
the fact that, when $\gamma<1$, constraint qualification of the Lagrangian
methods used to establish that $\qopt(\theta)<\qbm(\theta)$ for \emph{all}
$\theta\in(\theta^{m}_{\gamma},\theta^{\star}_{\gamma})$ holds irrespective
of whether $\theta^{\star}_{\gamma}<\overline{\theta}$ or $\theta^{\star}_{\gamma}=\overline{\theta}$.}

\begin{proposition} \label{prop:robust-quantity-mechanism-general-gamma}
If Baron-Myerson-with-quantity-floor is not robustly optimal, then
$\theta^{m}_{\gamma}<\theta^{\star}_{\gamma}$, and every robustly
optimal mechanism $M^{{\rm {OPT}}}=(q^{{\rm {OPT}}},u^{{\rm {OPT}}})$
is such that $\qopt(\theta)=q^{{\rm BM}}(\theta)$ for all $\theta\in(\underline{\theta},\theta^{m}_{\gamma}]$,
$\qopt(\theta)<\qbm(\theta)$ for all $\theta\in(\theta^{m}_{\gamma},\theta^{\star}_{\gamma})$,
and $\qopt(\theta)={\rm \underline{q}}^{\gamma}_{\ell}$ for all $\theta\in(\theta^{\star}_{\gamma},\overline{\theta}]$.
\end{proposition}

Next, consider price regulation. One can then show that every robustly
optimal price regulation is such that, for all $\theta>\underline{\theta},$
the price is as in the Baron-Myerson-with-price-cap mechanism. That
is, $p^{\mathrm{OPT}}_{\gamma}(\theta)=\min(z^{\star}(\theta),\underline{P}(\underline{{\rm q}}^{\gamma}_{\ell}))$,
with $\underline{P}(\underline{{\rm q}}^{\gamma}_{\ell})>\overline{\theta}$.
Finally, one can show that a result analogous to Proposition \ref{prop:comp_reg}
in the main text (which compares price and quantity regulation)
holds. However, the conditions for quantity regulation to dominate
price regulation are more stringent because of the larger price-cap. 

\begin{proposition}[quantity vs price regulation]\label{prop:qty-vs-price-general-gamma}
(1) If the Baron-Myerson-with-quantity-floor mechanism $M^{\star}_{\gamma}$
is robustly optimal, quantity regulation dominates price regulation
(strictly if $D^{\star}(\underline{P}(\underline{{\rm q}}^{\gamma}_{\ell}))>q^{\star}_{\gamma}(\bar{\theta})$).
(2) If the Baron-Myerson-with-quantity-floor mechanism $M^{\star}_{\gamma}$
is not robustly optimal and $D^{\star}(\underline{P}(\underline{{\rm q}}^{\gamma}_{\ell}))=q^{\star}_{\gamma}(\bar{\theta})$,
price regulation strictly dominates quantity regulation.

\end{proposition}

\subsection{More general forms of technological uncertainty}

\label{sec:Arbitrary-Cost-Uncertainty}

In this section, we consider the case in which the set of feasible
technologies $\mathcal{F}$ need not coincide with the entire set
$\textsc{CDF}(\Theta)$ of cdfs supported on $\Theta$. Namely, we
assume there exists a cdf $\underline{F}\in\textsc{CDF}(\Theta)$
such that $\mathcal{F}$ is the set of all cdfs $F\in\textsc{CDF}(\Theta)$
with $F(\theta)\ge\underline{F}(\theta)$ for all $\theta\in\Theta$.
In the main text, we assumed $\underline{F}(\theta)$ is the Dirac
distribution that puts unit mass at $\overline{\theta}$. Here, instead,
we allow $\underline{F}$ to have support $[\theta_{s},\overline{\theta}]$,
where $\theta_{s}\in(\underline{\theta},\overline{\theta})$.\footnote{The results below do not hinge on the fact that the largest element
of the support of $\underline{F}$ coincides with that of $F^{\star}$,
i.e., with $\bar{\theta}$. They extend to the case where the support
of $\underline{F}$ is $[\theta_{s},\theta^{H})$, with $\theta^{H}\geq\bar{\theta}$
possibly equal to $\theta^{H}=+\infty$.}

\begin{defn}The cdf $\underline{F}$ is regular with respect to $\theta_{s}$
if it is absolutely continuous over $\mathbb{R}$ with density $f(\theta)>0$
if and only if $\theta\in[\theta_{s},\overline{\theta}]$ and with
$\underline{z}(\theta)\equiv\theta+\underline{F}(\theta)/\underline{f}(\theta)$
continuous and increasing over $[\theta_{s},\overline{\theta}]$.
\end{defn}

Let $\underline{M}^{\textsc{BM}}=(\underline{q}^{\textsc{BM}},\underline{u}^{\textsc{BM}})\in\mathcal{M}$
denote an arbitrary IC and IR mechanism that is optimal under the
model $(\underline{V},\underline{F})$, in the usual Bayesian sense.
Note that such a mechanism is not unique, but in any such mechanism
$\underline{q}^{\textsc{BM}}$ is weakly decreasing, $\underline{u}^{\textsc{BM}}(\bar{\theta})=0$,
and $\underline{u}^{\textsc{BM}}(\theta)=\intop^{\overline{\theta}}_{\theta}\underline{q}^{\textsc{BM}}(y)dy$
for all $\theta$. Then let 
\begin{align*}
G^{*}_{s}\equiv\int^{\overline{\theta}}_{\theta_{s}}\underline{W}(\theta,\underline{q}^{\textsc{BM}})\underline{F}(\text{d}\theta),
\end{align*}
where, as in the main text, for any weakly decreasing functions $q$
and all $\theta\in\Theta$, $\underline{W}(\theta,q)\equiv\underline{V}(q(\theta))-\theta q(\theta)-\int\limits^{\overline{\theta}}_{\theta}q(y)dy$.
Thus, $G^{*}_{s}$ denotes the buyer's expected welfare under the
mechanism $\underline{M}^{\textsc{BM}}$ when the gross value function
is $\underline{V}$ and the technology is $\underline{F}$. Formally,
$W(\underline{M}^{\textsc{BM}};\underline{V},\underline{F})=G^{*}_{s}$.
When $\theta_{s}=\bar{\theta}$, as in the main text, $G^{*}_{s}=G^{*}$.
Finally, for any IC and IR mechanism $M=(q,u)\in\mathcal{M}$, let
$\underline{w}_{q}\equiv\inf_{\theta\le\theta_{s}}\underline{W}(\theta,q)$.
The next proposition generalizes Lemma \ref{lemma-SL-characterization}
in the main text by providing a complete characterization of the short
list.

\begin{proposition}\label{prop:SL_local_uncertainty} Suppose $\underline{F}$
is regular with respect to $\theta_{s}$. (1) For any $M\in\mathcal{M}^{{\rm SL}}$,
$G(M)=G^{*}_{s}$. (2) A mechanism $M\equiv(q,u)\in\mathcal{M}^{{\rm SL}}$
if and only if the following conditions jointly hold: (a) $q$ is
weakly decreasing; (b) for all $\theta$, $u(\theta)=\intop^{\overline{\theta}}_{\theta}q(y)dy$;
(c) $q(\theta)=\underline{q}^{\textsc{BM}}(\theta)$ for all $\theta\in(\theta_{s},\overline{\theta})$;
(d) $\underline{V}(q(\theta))-\theta q(\theta)-\int^{\overline{\theta}}_{\theta}q(y)dy\geq G^{*}_{s}$
for all $\theta\le\theta_{s}$; and (e) $\underline{w}_{q}\underline{F}(\theta)+\int^{\overline{\theta}}_{\theta}\underline{W}(y,\underline{q}^{\textsc{BM}})\underline{F}(\textrm{d}y)\ge G^{*}_{s}$
for all $\theta\in[\theta_{s},\overline{\theta}]$.

\end{proposition}

The formal proof of the proposition is below. Before getting to it,
we sketch a few key arguments. Part (1) follows from the fact that
Nature can always pick $(\underline{V},\underline{F})$, which implies
that the guarantee of any IC and IR mechanism is bounded above by
the maximal welfare attainable under the lowest gross value function
$\underline{V}$ and the worst technology $\underline{F}$. This upper
bound on the guarantee can be achieved by offering a mechanism $M\equiv(q,u)$
in which $q(\theta)=\underline{q}^{\textsc{BM}}(\theta_{s})$ for
all $\theta\leq\theta_{s}$, $q($$\theta)=\underline{q}^{\textsc{BM}}(\theta)$
for all $\theta>\theta_{s}$, and $u(\theta)=\int^{\overline{\theta}}_{\theta}q(y)dy$
for all $\theta$. Against such a mechanism an adversarial Nature
cannot do better than selecting $(V,F)=(\underline{V},\underline{F})$,
yielding the buyer a payoff of $G^{*}_{s}$. Conditions (a)-(d) in
Part (2) are generalizations of the robustness constraints in Lemma
\ref{lemma-SL-characterization} in the main text. Condition (c) follows
from the fact that Nature can always choose the model $(\underline{V},\underline{F})$
in which case worst-case optimality dictates that the output procured
from types $\theta\in(\theta_{s},\overline{\theta})$ be the same
as under the mechanism $\underline{M}^{\textsc{BM}}\equiv(\underline{q}^{\textsc{BM}},\underline{u}^{\textsc{BM}})$.
Note that Condition (d) applies only to types $\theta\le\theta_{s}$
as Nature can now select a Dirac distribution at $\theta$ only when
$\theta\le\theta_{s}$. Condition (e) stems from the fact that Nature's
best response can be a non-Dirac distribution. To understand this
constraint, consider Figure \ref{fig:new_tech}. Let $\theta_{1}<\theta_{s}$
be a cost level at which the function $\underline{W}(\cdot,q)$ reaches
the minimum over $[\underline{\theta},\theta_{s}]$, i.e., $\underline{W}(\theta_{1},q)=\underline{w}_{q}$,
and let $\theta_{2}>\theta_{s}$ be a cost level such that $\underline{W}(\theta_{2},q)=\underline{W}(\theta_{2},\underline{q}^{\textsc{BM}})=\underline{w}_{q}$.
Suppose Nature picks a distribution $F_{1}$ with an atom at $\theta_{1}$
equal to $\underline{F}(\theta_{2})$ and which agrees with $\underline{F}$
on all $\theta\geq\theta_{2}$. Because $\underline{W}(\cdot,\underline{q}^{\textsc{BM}})$
is decreasing, even if $\underline{w}_{q}\geq G^{*}_{s}$, in the
absence of the new constraint in Condition (e), it may well be the
case that 
\[
\underline{w}_{q}F_{1}(\theta_{1})+\int^{\overline{\theta}}_{\theta_{2}}\underline{W}(\theta,\underline{q}^{\textsc{BM}})F_{1}(\textrm{d}\theta)=\underline{w}_{q}\underline{F}(\theta_{2})+\int^{\overline{\theta}}_{\theta_{2}}\underline{W}(\theta,\underline{q}^{\textsc{BM}})\underline{F}(\textrm{d}\theta)<G^{*}_{s},
\]
in which case the mechanism's guarantee is below $G^{*}_{s}$. To
account for such possibilities, worst-case optimality requires that
the constraint in Condition (e) holds.

\begin{figure}
\centering %
\begin{minipage}[b]{0.5\linewidth}%
\includegraphics[scale=0.54]{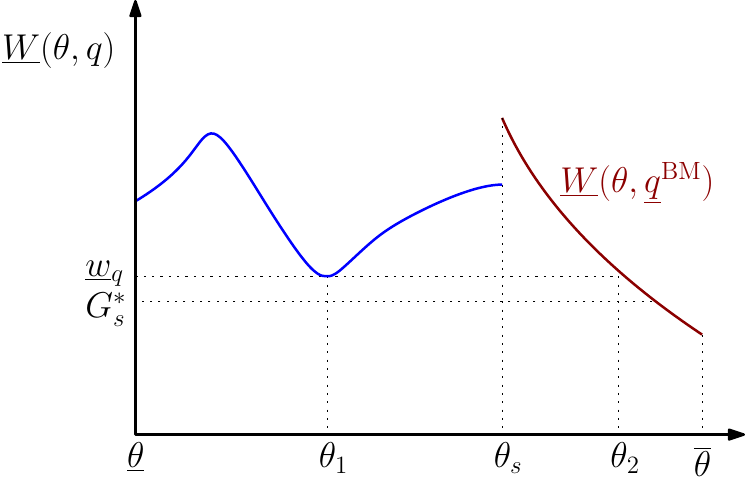} \caption*{\centering A: Function $\underline{W}(\cdot,q)$.}
\end{minipage}%
\begin{minipage}[b]{0.5\linewidth}%
\includegraphics[scale=0.54]{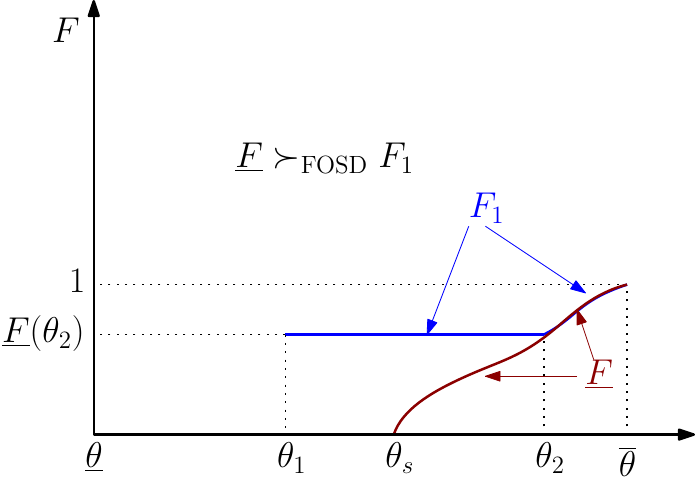} \caption*{\centering B: Technology $F_{1}$ yielding welfare below $G^{*}_{s}$.}
\end{minipage}\caption{Graphical illustration of robustness constraint in (e) of Proposition
\ref{prop:SL_local_uncertainty}.}
\label{fig:new_tech} 
\end{figure}

\noindent\textbf{Proof of Proposition \ref{prop:SL_local_uncertainty}}.
\textbf{Part 1}. For any IC and IR mechanism $M=(q,u)\in\mathcal{M}$,
\begin{align}
G(M)=\inf_{(V,F)\in\mathcal{V}\times\mathcal{F}}W(M;V,F)\leq W(M;\underline{V},\underline{F})\leq_{(i)}W(\underline{M}^{\textsc{BM}};\underline{V},\underline{F})=G^{*}_{s}.
\label{eq-general-guarantee}
\end{align}
Inequality $(i)$ holds because $\underline{M}^{\textsc{BM}}$ maximizes
$W(\cdot;\underline{V},\underline{F})$ over $\mathcal{M}$. We now
show that there exists an IC and IR mechanism $\underline{M}\in\mathcal{M}$
such that $G(\underline{M})=G^{*}_{s}$. Let $\underline{M}\equiv(\underline{q},\underline{u})$
be the mechanism in which 
\[
\underline{q}(\theta)=\begin{cases}
\underline{q}^{\textsc{BM}}(\theta_{s}) & \mbox{if}~\theta<\theta_{s}\\
\underline{q}^{\textsc{BM}}(\theta) & \mbox{if}~\theta\ge\theta_{s},
\end{cases}
\]
and $\underline{u}(\theta)=\int^{\overline{\theta}}_{\theta}\underline{q}(y)dy$
for all $\theta$. Clearly, $\underline{M}\in\mathcal{M}$. Furthermore,
because $\underline{F}$ has support $[\theta_{s},\overline{\theta}]$,
$W(\underline{M};\underline{V},\underline{F})=W(\underline{M}^{\textsc{BM}};\underline{V},\underline{F})$.
Now, recall that $\underline{M}^{\textsc{BM}}=(\underline{q}^{\textsc{BM}},\underline{u}^{\textsc{BM}})$
is the optimal mechanism for the model $(\underline{V},\underline{F})$.
When $\underline{F}$ is regular with respect to $\theta_{s}$, $\underline{q}^{\textsc{BM}}$
is such that, for all $\theta\ge\theta_{s}$, $\underline{q}^{\textsc{BM}}(\theta)=\underline{D}(\underline{z}(\theta))$,
where, for all $\theta\ge\theta_{s}$, $\underline{z}(\theta)\equiv\theta+\underline{F}(\theta)/\underline{f}(\theta)$.
Thus, $\underline{q}(\theta)\le\underline{D}(\theta)$ for all $\theta$,
with the inequality strict for $\theta\neq\theta_{s}$. Part A of Lemma \ref{Lem-Mono} in the main text then implies that
$\underline{W}(\cdot,\underline{q})$ is weakly decreasing over $\Theta$.
Furthermore, because, for all $F\in\mathcal{F}$, $\underline{F}\succ_{FOSD}F$,
we have that $W(\underline{M};\underline{V},F)\geq W(\underline{M};\underline{V},\underline{F})$.
Because, for any $V\in\mathcal{V}$ and any $F\in\mathcal{F}$ , $W(\underline{M};V,F)\geq W(\underline{M};\underline{V},F)$,
we thus have that $W(\underline{M};V,F)\geq W(\underline{M};\underline{V},\underline{F})$.
We conclude that $G(\underline{M})=W(\underline{M}^{\textsc{BM}};\underline{V},\underline{F})=G^{*}_{s}$.

\noindent\textbf{Part 2: Necessity}. If $M=(q,u)\in\mathcal{M}^{\textsc{SL}}$,
then $M$ is IC and IR, and, therefore, $q$ is weakly decreasing
and $u(\theta)=u(\overline{\theta})+\int^{\overline{\theta}}_{\theta}q(y)dy$
for all $\theta$. Furthermore, by the result in Part 1, it must be
that $G(M)=G^{*}_{s}$. Hence, $u(\overline{\theta})=0$.

Recall that, for any $\theta\geq\theta_{s}$, $\underline{q}^{\textsc{BM}}(\theta)=\arg\max_{\mathrm{q}\in[0,\mathrm{\bar{q}}]}\{\underline{V}(\mathrm{q})-\underline{z}(\theta)\mathrm{q}\}$.
If $q(\theta)\neq\underline{q}^{\textsc{BM}}(\theta)$ for a positive
Lebesgue measure subset of $[\theta_{s},\bar{\theta}]$, then inequality
$(i)$ in (\ref{eq-general-guarantee}) is strict and hence $W(M;\underline{V},\underline{F})<W(\underline{M}^{\textsc{BM}};\underline{V},\underline{F})=G^{\star}_{s}$.
This means that $G(M)<G^{*}_{s}$ and hence $M\notin\mathcal{M}^{\textsc{SL}}$,
a contradiction. Because $\underline{q}^{\textsc{BM}}$ is weakly
decreasing and continuous over $[\theta_{s},\bar{\theta}]$, we conclude
that $q(\theta)=\underline{q}^{\textsc{BM}}(\theta)$ for all $\theta\in(\theta_{s},\bar{\theta})$.

Next, observe that, for any $\theta<\theta_{s}$, $\mathcal{F}$ contains
a distribution $F$ corresponding to a Dirac measure at $\theta<\theta_{s}$
(indeed, $\underline{F}\succ_{{\rm FOSD}}F$). Welfare under the lowest
gross value function $\underline{V}$ and such an $F$ is $\underline{V}(q(\theta))-\theta q(\theta)-\int^{\overline{\theta}}_{\theta}q(y)dy=\underline{W}(\theta,q)$.
Hence, it must be that $\underline{W}(\theta,q)\ge G^{*}_{s}$.

Finally, observe that, by Condition (c), the inequality in the constraint
in Condition (e) holds with an equality for $\theta=\theta_{s}$.
Suppose there exists $\tilde{\theta}\in(\theta_{s},\overline{\theta}]$
such that 
\begin{align}
{\underline{w}}_{q}\underline{F}(\tilde{\theta})+\int^{\overline{\theta}}_{\tilde{\theta}}\underline{W}(y,\underline{q}^{\textsc{BM}})\underline{F}(\textrm{d}y)<G^{*}_{s}.\label{eq:temp1-1}
\end{align}
By definition of $\underline{w}_{q}$, there exists $\theta'\le\theta_{s}$
such that $\underline{W}(\theta',q)$ is arbitrarily close to ${\underline{w}}_{q}$.
Let $\widetilde{F}$ be the cdf given by 
\[
\widetilde{F}(\theta)=\begin{cases}
0 & \textrm{if}~\theta<\theta'\\
\underline{F}(\tilde{\theta}) & \textrm{if}~\theta\in[\theta',\tilde{\theta})\\
\underline{F}(\theta) & \textrm{if}~\theta\ge\tilde{\theta}.
\end{cases}
\]
Clearly, $\widetilde{F}(\theta)\ge\underline{F}(\theta)$ for all
$\theta$, and hence, $\widetilde{F}\in\mathcal{F}$. Welfare under
the mechanism $M=(q,u)$ when Nature selects the model $(\underline{V},\widetilde{F})$
is equal to 
\begin{align*}
W(M;\underline{V},\widetilde{F})=\underline{W}(\theta',q)\underline{F}(\tilde{\theta})+\int^{\overline{\theta}}_{\tilde{\theta}}\underline{W}(y,\underline{q}^{\textsc{BM}})\underline{F}(\textrm{d}y)<G^{*}_{s},
\end{align*}
where the inequality follows from Condition (\ref{eq:temp1-1}) and
the fact that $\underline{W}(\theta',q)$ is arbitrarily close to
${\underline{w}}_{q}$. This, however, is a contradiction to $M\in\mathcal{M}^{\textsc{SL}}$.

We conclude that properties (a)-(e) are jointly necessary for any
$M\in\mathcal{M}^{\textsc{SL}}$.

\noindent\textbf{Part 2: Sufficiency}. Take any mechanism $M$ satisfying
properties (a)-(e). By virtue of (a) and (b), $M$ is IC and IR. By
virtue of the result in Part 1 of the proposition, it thus suffices
to show that $W(M;V,F)\ge G^{*}_{s}$ for any model $(V,F)\in\mathcal{V}\times\mathcal{F}$.
First, suppose $F$ is a Dirac distribution on some $\theta\le\theta_{s}$.
Then, Condition (d) implies that 
\begin{align*}
W(M;V,F)\ge W(M;\underline{V},F)=\underline{V}(q(\theta))-\theta q(\theta)-\int^{\overline{\theta}}_{\theta}q(y)dy\ge G^{*}_{s}.
\end{align*}
Next consider any model $(V,F)\in\mathcal{V}\times\mathcal{F}$ in
which $F$ puts positive probability on $\theta<\theta_{s}$. Then,\footnote{The integration in any expression of the form $\underline{w}_{q}F(\theta)+\int^{\overline{\theta}}_{\theta}\underline{W}(\theta,\underline{q}^{{\rm BM}})F(\mathrm{d}\theta)$,
where $\theta\ge\theta_{s}$, should be interpreted to be over $(\theta,\overline{\theta}]$
to avoid the double counting of any atom at $\theta$.} 
\begin{align}
W(M;V,F) & \ge W(M;\underline{V},F)\ge\underline{w}_{q}F(\theta_{s})+\int^{\overline{\theta}}_{\theta_{s}}\underline{W}(\theta,q)F(\mathrm{d}\theta)=\underline{w}_{q}F(\theta_{s})+\int^{\overline{\theta}}_{\theta_{s}}\underline{W}(\theta,\underline{q}^{{\rm BM}})F(\mathrm{d}\theta),\label{eq:nn0}
\end{align}
where the second inequality follows from the definition of $\underline{w}_{q}$,
whereas the equality follows from the fact that $\ensuremath{q(\theta)=\underline{q}^{\textsc{BM}}(\theta)}$
for all $\theta\in(\theta_{s},\bar{\theta})$. Now partition $[\theta_{s},\overline{\theta}]$
into $\Theta_{1}\equiv\{\theta\in[\theta_{s},\overline{\theta}]:\underline{W}(\theta,\underline{q}^{{\rm BM}})\leq\underline{w}_{q}\}$
and $\Theta_{2}\equiv[\theta_{s},\overline{\theta}]\setminus\Theta_{1}$.
Note that $\underline{W}(\cdot,\underline{q}^{{\rm BM}})$ is decreasing
over $[\theta_{s},\overline{\theta}]$ and hence $\Theta_{1}$ is
a (possibly empty) interval. If $\Theta_{1}$ is empty, let $\hat{\theta}\equiv\theta_{s}$.
Else, let $\hat{\theta}$ be the left endpoint of $\Theta_{1}$. Using
(\ref{eq:nn0}), we have that 
\begin{align}
W(M;V,F) & \ge\underline{w}_{q}F(\theta_{s})+\int^{\hat{\theta}}_{\theta_{s}}\underline{W}(\theta,\underline{q}^{{\rm BM}})F(\mathrm{d}\theta)+\int^{\overline{\theta}}_{\hat{\theta}}\underline{W}(\theta,\underline{q}^{{\rm BM}})F(\mathrm{d}\theta)\nonumber \\
 & \ge\underline{w}_{q}F(\theta_{s})+\underline{w}_{q}\left(F(\hat{\theta})-F(\theta_{s})\right)+\int^{\overline{\theta}}_{\hat{\theta}}\underline{W}(\theta,\underline{q}^{\textsc{BM}})F(\mathrm{d}\theta)\label{eq:nn1}\\
 & =\underline{w}_{q}F(\hat{\theta})+\int^{\overline{\theta}}_{\hat{\theta}}\underline{W}(\theta,\underline{q}^{\textsc{BM}})F(\mathrm{d}\theta),\nonumber 
\end{align}
where the second inequality follows from the fact that $\ensuremath{\underline{W}(\cdot,\underline{q}^{\textsc{BM}})\ge\underline{w}_{q}}$
on $\ensuremath{[\theta_{s},\hat{\theta}]}$. Now, let $h:\Theta\rightarrow\mathbb{R}$
be the weakly decreasing function defined by 
\begin{align*}
h(\theta)=\begin{cases}
\underline{w}_{q} & \textrm{if}~\theta\le\hat{\theta}\\
\underline{W}(\theta,\underline{q}^{\textsc{BM}}) & \textrm{otherwise}.
\end{cases}
\end{align*}
From (\ref{eq:nn1}), we then have that 
\begin{align*}
W(M;V,F) & \ge\int^{\hat{\theta}}_{\underline{\theta}}h(\theta)F(\mathrm{d}\theta)+\int^{\overline{\theta}}_{\hat{\theta}}h(\theta)F(\mathrm{d}\theta)=\int^{\overline{\theta}}_{\underline{\theta}}h(\theta)F(\mathrm{d}\theta)\\
 & \ge\int^{\overline{\theta}}_{\underline{\theta}}h(\theta)\underline{F}(\mathrm{d}\theta)=\underline{w}_{q}\underline{F}(\hat{\theta})+\int^{\overline{\theta}}_{\hat{\theta}}\underline{W}(\theta,\underline{q}^{\textsc{BM}})\underline{F}(\mathrm{d}\theta)\ge G^{*}_{s},
\end{align*}
where the second inequality follows from the fact that $\underline{F}\succ_{{\rm FOSD}}F$,
the equality follows by the definition of $h$, and the last inequality
follows from Condition (e).

\noindent Hence, $G(M)\geq G^{*}_{s}$. Part 1 establishes that, for
any $M\in\mathcal{M}^{SL}$, $G(M)=G^{*}_{s}$. We conclude that any
mechanism satisfying Conditions (a)-(e) is in the shortlist. This
completes the proof of the proposition.\hfill{}Q.E.D.

The next proposition provides a characterization of properties of
robustly-optimal mechanisms along the lines of Propositions \ref{prop:BM-floor}
and \ref{prop:robust-quantity-mechanism-general} in the main text.
To do so, we first generalize the definitions of $\text{q}_{\ell}$,
$q^{\star}$, $\theta^{\star}$ and $\theta^{m}$ as follows. Let
$\text{q}^{s}_{\ell}\equiv\underline{D}(\theta_{s})$ denote the efficient
output when the inverse demand is $\underline{D}$ and the cost is
$\theta_{s}$. Then let $q^{\star}_{s}$ be the quantity schedule
defined by 
\begin{align}
q^{\star}_{s}(\theta)\equiv\begin{cases}
\max\{\qbm(\theta),\text{q}^{s}_{\ell}\} & \textrm{\ensuremath{\theta<\theta_{s}}}\\
\underline{q}^{\textsc{BM}}(\theta) & \textrm{\ensuremath{\theta\geq\theta_{s},}}
\end{cases}\label{eq:BM-floor-floor-1}
\end{align}
where $\qbm$ continues to denote the optimal quantity schedule of
Baron and Myerson (1982) when the model is $(V^{\star},F^{\star})$,
with $F^{\star}$ regular, whereas $\underline{q}^{\textsc{BM}}$
is the optimal quantity schedule of Baron and Myerson (1982) when
the model is $(\underline{V},\underline{F})$. The following mechanism
is a generalization of Baron-Myerson-with-quantity-floor in the main
text.

\begin{defn}\label{def-BM-bridge} The \textbf{Baron-Myerson-with-quantity-bridge}
is the mechanism $M^{\star}_{s}=(q^{\star}_{s},u^{\star}_{s})$ where
$q^{\star}_{s}$ is given by (\ref{eq:BM-floor-floor-1}) and $u^{\star}_{s}$
is given by $u^{\star}_{s}(\theta)=\int^{\overline{\theta}}_{\theta}q^{\star}_{s}(y)dy$
for all $\theta$. \end{defn}

Finally, let $\theta^{\star}_{s}$ be the threshold cost defined as
follows. If $\qbm(\theta_{s})\le\text{q}^{s}_{\ell}$, by continuity
of $\qbm$ along with the fact that $\qbm(\underline{\theta})>\text{q}^{s}_{\ell}$
(assured by the regularity of $F^{\star}$), $\theta^{\star}_{s}$
is the unique solution to $\qbm(\theta^{\star}_{s})=\text{q}^{s}_{\ell}$.
If, instead, $\qbm(\theta_{s})>\text{q}^{s}_{\ell}$ (i.e., if $\qbm$
never crosses $\text{q}^{s}_{\ell}$ over the interval $[\underline{\theta},\theta_{s}]$),
then $\theta^{\star}_{s}\equiv\theta_{s}$. In either case, $\theta^{\star}_{s}\leq\theta_{s}$.
Similarly, let $\theta^{m}_{s}\equiv\max\{\theta:\theta\in\arg\min_{y\in[\underline{\theta},\theta_{s}]}\underline{W}(y,q^{\star}_{s})\}.$
Thus, $\underline{w}_{q^{\star}_{s}}=\underline{W}(\theta^{m}_{s},q^{\star}_{s})$.
Finally, let 
\begin{align}
G^{**}_{s}\equiv\sup_{\theta\in(\theta_{s},\overline{\theta}]}\frac{1}{\underline{F}(\theta)}\int^{\theta}_{\theta_{s}}\underline{W}(y,\underline{q}^{\textsc{BM}})\underline{F}(\textrm{d}y).\label{eq:gstar}
\end{align}

\begin{proposition} \label{prop:robust-quantity-mechanism-general-cost}
Suppose $\underline{F}$ is regular with respect to $\theta_{s}$. 
\begin{enumerate}
\item The Baron-Myerson-with-quantity-bridge mechanism is robustly optimal
if and only if $\underline{W}(\theta^{m}_{s},q^{\star}_{s})\ge\max\{G^{*}_{s},G^{**}_{s}\}$.
\item If $\underline{W}(\theta^{m}_{s},q^{\star}_{s})<\max\{G^{*}_{s},G^{**}_{s}\}$,
then $\theta^{m}_{s}<\theta^{\star}_{s}$ and every robustly optimal
mechanism $M^{{\rm {OPT}}}=(q^{{\rm {OPT}}},u^{{\rm {OPT}}})$ satisfies
the following properties: (a) $\qopt(\theta)=\underline{q}^{{\rm BM}}(\theta)$
for all $\theta\in(\theta_{s},\overline{\theta})$, (b) $\qopt(\theta)={\rm q}^{s}_{\ell}$
for all $\theta\in(\theta^{\star}_{s},\theta_{s})$, (c) $\qopt(\theta)\le\qbm(\theta)$
for all $\theta\in(\underline{\theta},\theta^{\star}_{s})$, with
the inequality strict over a subset of types $I\subseteq (\theta^{m}_{s},\theta^{\star}_{s})$
of positive Lebesgue measure, and (d) $q^{{\rm OPT}}(\theta)=q^{{\rm BM}}(\theta)$
for all $\theta\in(\underline{\theta},\theta^{m}_{s})$. 
\end{enumerate}
\end{proposition} Before proving the result, we provide a few observations.

\begin{figure}
\centering %
\begin{minipage}[b]{0.5\linewidth}%
\includegraphics[scale=0.47]{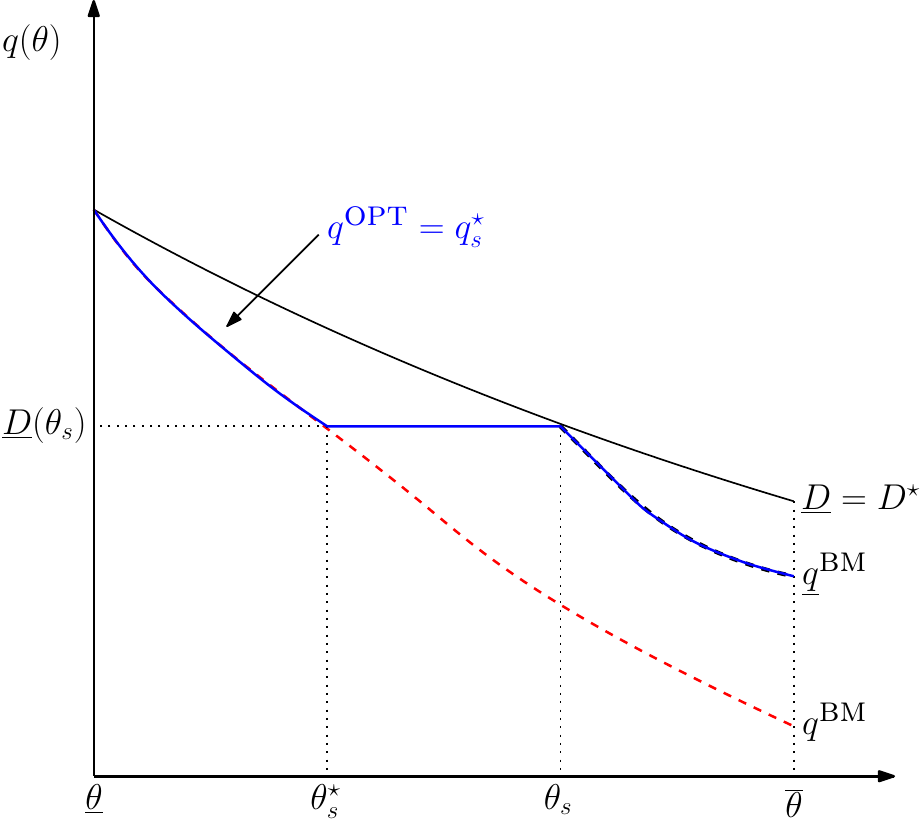} \caption*{\centering A: Baron-Myerson-with-quantity-bridge. \protect\protect\protect\protect\protect\protect\protect\protect\protect\protect\protect\protect\phantom{xxxx}}
\end{minipage}%
\begin{minipage}[b]{0.5\linewidth}%
\includegraphics[scale=0.47]{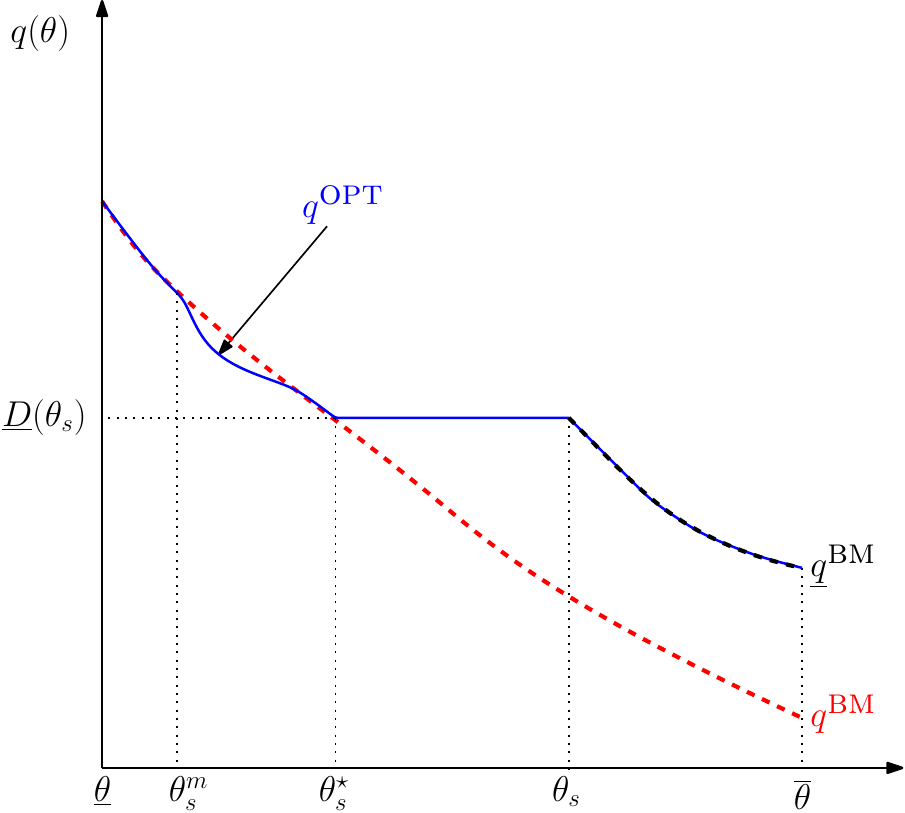} \caption*{\centering B: Robustly optimal mechanism when $V^{\star}\protect\neq\underline{V}$.}
\end{minipage}\caption{Graphical illustration of Proposition \ref{prop:robust-quantity-mechanism-general-cost}.}
\label{fig:general_tech} 
\end{figure}

\begin{remark}\label{rem:r2} {\rm If $V^{\star}=\underline{V}$ (as when
the only uncertainty is over the cost technology), then $\theta^{m}_{s}=\theta_{s}$,
and the condition in Part 1 of Proposition \ref{prop:robust-quantity-mechanism-general-cost}
is satisfied. In this case, the Baron-Myerson-with-quantity-bridge
mechanism, illustrated in Panel A of Figure \ref{fig:general_tech},
is a robustly optimal mechanism. Panel B illustrates the features
of robustly optimal mechanism when, instead, $V^{\star}\neq\underline{V}$.
A comparison of these figures with Figure \ref{fig:BM-floor} and
Figure \ref{fig:alpha0robust} in the main text illustrates that the
key features of the robustly optimal mechanisms continue to extend
to general technological uncertainty.} $\hfill\blacksquare{}$ \end{remark}

\begin{remark}\label{rem:r3} {\rm Proposition \ref{prop:robust-quantity-mechanism-general-cost}
further highlights that the key forces identified in the main text
continue to determine the shape of robustly optimal mechanisms to
the left of $\theta_{s}$. As in the main text, the quantity plateau
in robustly optimal mechanisms is determined by $\underline{D}(\theta_{s})$,
the efficient quantity at $\theta_{s}$ under the lowest possible
demand (recall that, in the main text, $\theta_{s}=\bar{\theta}$,
and $\underline{D}(\theta_{s})={\rm q}_{\ell}$).} $\hfill\blacksquare{}$
\end{remark}

\noindent\textbf{Proof of Proposition \ref{prop:robust-quantity-mechanism-general-cost}.}
The proof is in several parts, each corresponding to a part of the
proposition. Before proceeding with the proof, observe that for any
mechanism $M=(q,u)\in\mathcal{M}^{\textrm{SL}}$, Condition (d) in
Part 2 of Proposition \ref{prop:SL_local_uncertainty} implies that
$\underline{w}_{q}\ge G^{*}_{s}$. Condition (e) in the same proposition
in turn implies that, for all $\theta\ge\theta_{s}$, 
\begin{align*}
\underline{w}_{q}\underline{F}(\theta)+\int^{\overline{\theta}}_{\theta}\underline{W}(y,\underline{q}^{\textsc{BM}})\underline{F}(\textrm{d}y) & \ge G^{*}_{s}\equiv\int^{\overline{\theta}}_{\theta_{s}}\underline{W}(y,\underline{q}^{\textsc{BM}})\underline{F}(\textrm{d}y),
\end{align*}
which is equivalent to 
\begin{align*}
\underline{w}_{q} & \ge\sup_{\theta\in(\theta_{s},\overline{\theta}]}\frac{1}{\underline{F}(\theta)}\int^{\theta}_{\theta_{s}}\underline{W}(y,\underline{q}^{\textsc{BM}})\underline{F}(\textrm{d}y)\equiv G^{**}_{s}.
\end{align*}
Thus, Condition (d) and (e) in Proposition \ref{prop:SL_local_uncertainty}
can be equivalently written as $\underline{w}_{q}\ge\max\{G^{*}_{s},G^{**}_{s}\}$,
or alternatively, $\underline{W}(\theta,q)\ge\max\{G^{*}_{s},G^{**}_{s}\}$
for all $\theta\le\theta_{s}$.

\noindent\textbf{Part 1}. If $M^{\star}_{s}$ is robustly optimal,
then $\text{\ensuremath{M^{\star}_{s}}}\in\mathcal{M}^{{\rm SL}}$.
Observe that $\underline{w}_{q^{\star}_{s}}=\underline{W}(\theta^{m}_{s},q^{\star}_{s})$.
Thus, as argued above, Conditions (d) and (e) in Part 2 of Proposition
\ref{prop:SL_local_uncertainty} imply that $\underline{W}(\theta^{m}_{s},q^{\star}_{s})\ge\max\{G^{*}_{s},G^{**}_{s}\}$.

Next, suppose that $M^{\star}_{s}$ is such that $\underline{W}(\theta^{m}_{s},q^{\star}_{s})\ge\max\{G^{*}_{s},G^{**}_{s}\}$.
We now show that $M^{\star}_{s}$ is robustly optimal. By definition,
$M^{\star}_{s}$ satisfies Conditions (a)-(c) in Part 2 of Proposition
\ref{prop:SL_local_uncertainty}. As argued above, that $\text{\ensuremath{\underline{W}}(\ensuremath{\theta^{m}_{s}},\ensuremath{q^{\star}_{s}})\ensuremath{\ge\max}\{\ensuremath{G^{*}_{s}},\ensuremath{G^{**}_{s}}\}}$
implies that Conditions (d) and (e) in Part 2 of Proposition \ref{prop:SL_local_uncertainty}
are satisfied. This means that $M^{\star}_{s}\in\mathcal{M}^{{\rm SL}}$.
To see that $M^{\star}_{s}$ maximizes the buyer's payoff under the
conjectured model $(V^{\star},F^{\star})$ over $\mathcal{M}^{{\rm SL}}$,
first observe that $q^{\star}_{s}$ is weakly decreasing because $F^{\star}$
is regular. Second note that every mechanism $M=(q,u)$ in the short
list has a weakly decreasing quantity schedule $q$ that agrees with
$\underline{q}^{\textsc{BM}}$ over $(\theta_{s},\bar{\theta})$ (by
virtue of Condition (c) in Proposition \ref{prop:SL_local_uncertainty}).
This means that, in any such mechanism, $q(\theta)\geq{\rm q}^{s}_{\ell}$
for all $\theta\in[\underline{\theta},\theta_{s})$. Because, for
every $\theta\in[\underline{\theta},\theta_{s})$, 
\[
q^{\star}_{s}(\theta)=\arg\max_{{\rm q}\in[{\rm q}^{s}_{\ell},{\rm \bar{q}}]}\left\{ V^{\star}({\rm q})-z^{\star}(\theta){\rm q}\right\} ,
\]
we conclude that, for any $M=(q,u)\in\mathcal{M}^{{\rm SL}}$, 
\[
\int\limits^{\overline{\theta}}_{\underline{\theta}}\Big[V^{\star}(q(\theta))-z^{\star}(\theta)q(\theta)\Big]F^{\star}({\rm d}\theta)\leq\int\limits^{\overline{\theta}}_{\underline{\theta}}\Big[V^{\star}(q^{\star}_{s}(\theta))-z^{\star}(\theta)q^{\star}_{s}(\theta)\Big]F^{\star}({\rm d}\theta),
\]
implying that indeed $M^{\star}_{s}$ maximizes the buyer's payoff
(under $(V^{\star},F^{\star})$) over $\mathcal{M}^{{\rm SL}}$.

\noindent\textbf{Part 2. }We start with the following lemma:

\begin{lemma}\label{lem:thetam} Suppose $\underline{w}_{q^{\star}_{s}}<\max\{G^{*}_{s},G^{**}_{s}\}$.
Then the following are true: (1) $\theta^{m}_{s}<\theta^{\star}_{s}$,
(2) if $\theta^{m}_{s}>\underline{\theta}$, then $q^{\star}_{s}(\theta^{m}_{s})=\underline{D}(\theta^{m}_{s})$.
\end{lemma}

\noindent\textbf{Proof of Lemma \ref{lem:thetam}.} \textbf{Part
1.} We consider two cases. 

\noindent Case1: $q^{\textsc{BM}}(\theta_{s})>\text{q}^{s}_{\ell}=\underline{D}(\theta_{s})$. 

\noindent Then, $\theta^{\star}_{s}=\theta_{s}$. Because $\underline{D}$
and $q^{\textsc{BM}}$ are decreasing and continuous, there exists
a non-empty left neighborhood of $\theta_{s}$ where $q^{\textsc{BM}}(\theta)>\underline{D}(\theta)$.
Part B of Lemma \ref{Lem-Mono} in the main text then implies that
$\underline{W}(\cdot,q)$ is increasing on this interval implying
that $\theta^{m}_{s}<\theta_{s}=\theta^{\star}_{s}$.

Case 2: $q^{\textsc{BM}}(\theta_{s})\le\text{q}^{s}_{\ell}=\underline{D}(\theta_{s})$. 

Then, $\theta^{\star}_{s}\le\theta_{s}$, and for every $\theta\in[\theta^{\star}_{s},\theta_{s}]$,
$q^{\star}_{s}(\theta)=\text{q}^{s}_{\ell}$ and $\underline{W}(\theta,q^{\star}_{s})=\underline{W}(\theta_{s},q^{\star}_{s})\ge\max\{G^{*}_{s},G^{**}_{s}\}>\underline{w}_{q^{\star}_{s}}$.
The first inequality follows from the definitions of $G^{*}_{s}$
and $G^{**}_{s}$ along with Part A of Lemma \ref{Lem-Mono} in the
main text (which implies that $\underline{W}(\cdot,q)$ is weakly
decreasing over $[\theta_{s},\bar{\theta}]$). The second inequality
follows from the assumption of Lemma \ref{lem:thetam}. This implies
that $\underline{W}(\theta^{\star}_{s},q^{\star}_{s})>\underline{w}_{q^{\star}_{s}}$.
Hence, $\theta^{m}_{s}<\theta^{\star}_{s}$.

\noindent\textbf{Part 2.} Because $\theta^{m}_{s}<\theta^{\star}_{s}$,
we have that $q^{\star}_{s}(\theta^{m}_{s})=\qbm(\theta^{m}_{s})$.
This property, together with arguments similar to those in the proof of Lemma \ref{lem:qstardlowint} in the
main text, then implies the result. \hfill{}$\blacksquare$

Equipped with the last lemma, we now establish Parts 2(a) and 2(b)
of the proposition.

\noindent\textbf{Parts 2(a) and 2(b)}. Part 2(a) follows from Part
2(c) of Proposition \ref{prop:SL_local_uncertainty} because any robustly
optimal mechanism belongs to $\mathcal{M}^{{\rm SL}}$. Thus consider
Part 2(b). We consider two cases. First, suppose that $q^{\textsc{BM}}(\theta_{s})\ge\text{q}^{s}_{\ell}$.
By the definition of $\theta^{\star}_{s}$, we then have that $\theta^{\star}_{s}=\theta_{s}$,
and hence the interval $(\theta^{\star}_{s},\theta_{s})$ is empty
and the result applies vacuously. Next, suppose that $q^{\textsc{BM}}(\theta_{s})<\text{q}^{s}_{\ell}$.
Then, $\theta^{\star}_{s}<\theta_{s}$. Now, assume, toward a contradiction,
that there exists $\theta'\in(\theta^{\star}_{s},\theta_{s})$ such
that $\qopt(\theta')>\text{q}^{s}_{\ell}$. Monotonicity of $\qopt$
then implies that $\qopt(\theta)>\text{q}^{s}_{\ell}$ for all $\theta\in[\theta^{\star}_{s},\theta'].$
This means that there exists a set of types of positive Lebesgue measure
such that $\qopt(\theta)>\text{q}^{s}_{\ell}$. Then, consider the
IC and IR mechanism $\widetilde{M}=(\tilde{q},\tilde{u})$ where the
quantity schedule is given by 
\begin{align*}
\tilde{q}(\theta)=\begin{cases}
q^{{\rm {OPT}}}(\theta) & \textrm{if}~~\theta<\theta^{\star}_{s}\\
\text{q}^{s}_{\ell} & \textrm{if}~~\theta\in[\theta^{\star}_{s},\theta_{s}]\\
\underline{q}^{\textsc{BM}}(\theta)=q^{{\rm {OPT}}}(\theta) & \textrm{if}~~\theta\ge\theta_{s},
\end{cases}
\end{align*}
and where the rents $\tilde{u}$ are given by $\tilde{u}(\theta)=\int\limits^{\overline{\theta}}_{\theta}\tilde{q}(y)dy$
for all $\theta$. The buyer's payoff from $\widetilde{M}$ (under
the conjectured model $(V^{\star},F^{\star})$) is equal to $\int\limits^{\overline{\theta}}_{\underline{\theta}}\Big[V^{\star}(\tilde{q}(\theta))-z^{\star}(\theta)\tilde{q}(\theta)\Big]F^{\star}(\mathrm{d}\theta)$
which is strictly higher than under $M^{{\rm {OPT}}}$. This follows
from the fact that, for any $\theta\in[\theta^{\star}_{s},\theta_{s}]$,
$\text{q}^{s}_{\ell}$ maximizes $V^{\star}(\text{q})-z^{\star}(\theta)\text{q}$
over $\text{q}\ge\text{q}^{s}_{\ell}$, along with the fact that $F^{\star}$
is absolutely continuous. Thus, to produce a contradiction to the
robust optimality of $M^{\textsc{OPT}}$, it suffices to show that
$\widetilde{M}\in\mathcal{M}^{\textsc{SL}}$, which we show next.

By definition, $\widetilde{M}$ satisfies Conditions (a)-(c) in Part
2 of Proposition \ref{prop:SL_local_uncertainty}. As for Conditions
(d) and (e), recall that these conditions are equivalent to $\underline{w}_{\tilde{q}}\ge\max\{G^{*}_{s},G^{**}_{s}\}$,
as established above. That $\underline{w}_{\tilde{q}}\ge G^{*}_{s}$
follows from arguments analogous to those in the proof of Lemma \ref{lem:weakqoptbound}
in the main text, along with the fact that $\tilde{q}(\theta)\le\qopt(\theta)$
for all $\theta$. To establish that $\underline{w}_{\tilde{q}}\ge G^{**}_{s}$,
notice that, for all $\theta\in[\theta^{\star}_{s},\theta_{s}]$,
$\underline{W}(\theta,\tilde{q})=\underline{W}(\theta_{s},\underline{q}^{\textsc{BM}})\ge G^{**}_{s}$.
Thus, it suffices to focus on $\theta<\theta^{\star}_{s}$. Observe
that $\underset{\theta<\theta^{\star}_{s}}{\inf}~\underline{W}(\theta,\tilde{q})\ge\underset{\theta<\theta^{\star}_{s}}{\inf}~\underline{W}(\theta,q^{\textsc{OPT}})\ge G^{**}_{s}$,
where the first inequality follows from the fact that $\tilde{q}(\theta)=q^{\textsc{OPT}}(\theta)$
for $\theta<\theta^{\star}_{s}$, along with the fact that $\tilde{q}(\theta)\le\qopt(\theta)$
for all $\theta$, which implies that $\tilde{u}(\theta)\leq u^{\textsc{OPT}}(\theta)$
for all $\theta$. The second inequality follows from the fact that
$M^{\textsc{OPT}}\in\mathcal{M}^{\textsc{SL}}$.

\noindent\textbf{Part 2(c)}. From Part 2(b), $\qopt(\theta)=\text{q}^{s}_{\ell}$
for all $\theta\in(\theta^{\star}_{s},\theta_{s})$. Now suppose there
is a set $S\subseteq (\underline{\theta},\theta^{\star}_{s})$ of positive
Lebesgue measure such that $q^{{\rm OPT}}(\theta)>q^{\star}_{s}(\theta)=\qbm(\theta)$
for all $\theta\in S$. Consider the IC and IR mechanism $\widetilde{M}=(\tilde{q},\tilde{u})$
where the quantity schedule is given by 
\begin{align*}
\tilde{q}(\theta)=\min\{q^{\star}_{s}(\theta),q^{{\rm {OPT}}}(\theta)\}~\qquad~\forall~\theta\in\Theta,
\end{align*}
and where $\tilde{u}(\theta)=\int^{\overline{\theta}}_{\theta}\tilde{q}(y){\rm d}y$
for all $\theta$. Parts 2(a) and 2(b) in the proposition then imply
that $\tilde{q}(\theta)=q^{{\rm {OPT}}}(\theta)=q^{\star}_{s}(\theta)$
for all $\theta\ge\theta^{\star}_{s}$.

The buyer's payoff under $\widetilde{M}$ is strictly higher than
under $M^{{\rm {OPT}}}$ (this follows from $q^{\star}_{s}$ maximizing
virtual surplus). Clearly, $\widetilde{M}$ satisfies Conditions (a)-(c)
of Part 2 of Proposition \ref{prop:SL_local_uncertainty}. The next
two claims establish that $\widetilde{M}$ also satisfies Conditions
(d) and (e), that is, $\underline{w}_{\tilde{q}}\ge\max\{G^{*}_{s},G^{**}_{s}\}$
or, equivalently, $\underline{W}(\theta,\tilde{q})\ge\max\{G^{*}_{s},G^{**}_{s}\}$
for all $\theta\le\theta_{s}$. First, observe that, for every $\theta\in[\theta^{\star}_{s},\theta_{s}]$,
$\tilde{q}(\theta)=q^{\star}_{s}(\theta)=\text{q}^{s}_{\ell}$ and
$\underline{W}(\theta,\tilde{q})=\underline{W}(\theta_{s},\tilde{q})\geq\max\{G^{*}_{s},G^{**}_{s}\}$.
Below we establish that $\underline{W}(\theta,\tilde{q})\geq\max\{G^{*}_{s},G^{**}_{s}\}$
also for $\theta<\theta^{\star}_{s}$. The result is a consequence
of the following two claims.

\begin{claim} \label{cl:left0-g} Suppose $\theta<\theta^{\star}_{s}$
is such that either $\tilde{q}(\theta)=q^{{\rm OPT}}(\theta)$ or
$\underline{D}(\theta)\le\tilde{q}(\theta)=q^{\star}_{s}(\theta)<q^{{\rm OPT}}(\theta)$.
Then $\underline{W}(\theta,\tilde{q})\ge\max\{G^{*}_{s},G^{**}_{s}\}$.
\end{claim}

\noindent\textbf{Proof of Claim \ref{cl:left0-g}}. Pick $\theta<\theta^{\star}_{s}$.
It suffices to show that $\underline{W}(\theta,\tilde{q})\ge\underline{W}(\theta,\qopt)$;
because $\underline{W}(\theta,\qopt)\ge\max\{G^{*}_{s},G^{**}_{s}\}$,
the claim then follows. Note that $q^{\star}_{s}(\theta)=q^{{\rm BM}}(\theta)$.
For any $\theta$ such that $\tilde{q}(\theta)=q^{{\rm OPT}}(\theta)$,
the fact that $\tilde{q}(y)\leq q^{{\rm {OPT}}}(y)$ for all $y\geq\theta$,
implies that $\underline{W}(\theta,\tilde{q})\ge\underline{W}(\theta,\qopt)$.
Thus, consider a $\theta$ for which $\underline{D}(\theta)\le\tilde{q}(\theta)=q^{\star}_{s}(\theta)=q^{{\rm BM}}(\theta)<q^{{\rm OPT}}(\theta)$.
The quasi-concavity of the function $\underline{V}(\text{q})-\theta\text{q}$
in $\text{q}$ implies that $\underline{V}(q^{\star}_{s}(\theta))-\theta q^{\star}_{s}(\theta)>\underline{V}(q^{{\rm OPT}}(\theta))-\theta q^{{\rm OPT}}(\theta)$.
Because $\tilde{q}(y)\leq q^{{\rm {OPT}}}(y)$ for all $y\geq\theta$,
we have that $\underline{W}(\theta,\tilde{q})\ge\underline{W}(\theta,q^{{\rm OPT}})$.\hfill{}$\blacksquare$

\noindent\begin{claim} \label{cl:left1} If $\theta<\theta^{\star}_{s}$
and $q^{\star}_{s}(\theta)<\min\{\underline{D}(\theta),q^{{\rm {OPT}}}(\theta)\}$,
then $\underline{W}(\theta,\tilde{q})\ge\max\{G^{*}_{s},G^{**}_{s}\}$.\end{claim}

\noindent\textbf{Proof of Claim \ref{cl:left1}.} The proof considers
two cases to establish the existence of $\theta'>\theta$ such that
$W(\cdot,\tilde{q})$ is weakly decreasing on $[\theta,\theta']$
with $W(\theta',\tilde{q})\ge\max\{G^{*}_{s},G^{**}_{s}\}$.

\noindent\textbf{Case 1.} Suppose $\qbm(\theta_{s})\le\text{q}^{s}_{\ell}=\underline{D}(\theta_{s})$.
Because $q^{\star}_{s}$ and $\underline{D}$ are both continuous,
there exists $\theta<\theta'\le\theta_{s}$ such that $q^{\star}_{s}(y)\le\underline{D}(y)$
for all $y\in[\theta,\theta']$, with $q^{\star}_{s}(\theta')=\underline{D}(\theta')$.
Thus, $\tilde{q}(\theta')=\min\{\underline{D}(\theta'),q^{{\rm {OPT}}}(\theta')\}$.
Furthermore, for all $y\in[\theta,\theta']$, $\tilde{q}(y)=\min\{q^{{\rm {OPT}}}(y),q^{\star}_{s}(y)\}\leq\underline{D}(y)$.
Part A of Lemma \ref{Lem-Mono} in the main text implies that $\underline{W}(\cdot,\tilde{q})$
is weakly decreasing over $[\theta,\theta']$ whereas Claim \ref{cl:left0-g}
implies that $\underline{W}(\theta',\tilde{q})\ge\max\{G^{*}_{s},G^{**}_{s}\}$.
Hence, $\underline{W}(\theta,\tilde{q})\ge\max\{G^{*}_{s},G^{**}_{s}\}$.

\noindent\textbf{Case 2.} Now suppose $\qbm(\theta_{s})>\text{q}^{s}_{\ell}=\underline{D}(\theta_{s})$.
Then, because $q^{\star}_{s}(\theta)<\underline{D}(\theta)$, and
$\underline{D}$ and $q^{\star}_{s}$ are continuous (the latter due
to the regularity of $F^{\star}$), there exists $\theta<\hat{\theta}<\overline{\theta}$
such that $q^{\star}_{s}(\hat{\theta})=\underline{D}(\hat{\theta})$
and $q^{\star}_{s}(y)>\underline{D}(y)$ for all $y\in(\hat{\theta},\theta_{s})$.
Again, just like we argued in Case 1, there exists $\theta<\theta'\le\hat{\theta}$
such that $q^{\star}_{s}(y)\le\underline{D}(y)$ for all $y\in[\theta,\theta']$
with $q^{\star}_{s}(\theta')=\underline{D}(\theta')$. The same arguments
as in Case 1 then imply that $\underline{W}(\theta,\tilde{q})\ge\max\{G^{*}_{s},G^{**}_{s}\}$.\hfill{}$\blacksquare$

\noindent The above two claims establish that $\tilde{q}$ is such
that $\underline{w}_{\tilde{q}}\ge\max\{G^{*}_{s},G^{**}_{s}\}$.
From Proposition \ref{prop:SL_local_uncertainty}, we thus conclude
that $\widetilde{M}=(\tilde{q},\tilde{u})\in\mathcal{M}^{\mathrm{SL}}$.

The arguments above imply that $\qopt(\theta)\le\qbm(\theta)$ for
almost all $\theta\in(\underline{\theta},\theta^{\star}_{s}).$ Because
$\qopt$ is weakly decreasing and $\qbm$ is continuous, the inequality
must in fact hold for all $\theta\in(\underline{\theta},\theta^{\star}_{s})$. 

Next, we show that there exists a set of types $I\subseteq (\underline{\theta},\theta^{\star}_{s})$
of positive Lebesgue measure such that $\qopt(\theta)<\qbm(\theta)$
for all $\theta\in I$. To see this, assume, towards a contradiction,
that $\qopt(\theta)=\qbm(\theta)$ for almost all $\theta\in (\underline{\theta},\theta^{\star}_{s})$.
Because $q^{\star}_{s}$ and $\qopt$ are weakly decreasing and $q^{\star}_{s}$
is continuous, this means that $\qopt(\theta)=\qbm(\theta)$ for all
$\theta\in(\tlb,\theta^{\star}_{s})$. This however implies that $M^{\star}_{s}$
is robustly optimal, a contradiction. That $I$ is in fact a subset
of $(\theta^{m}_{s},\theta^{\star}_{s})$ follows from what just established
along with part (d) in the proposition, which we prove next. 

\noindent\textbf{Part 2(d)}. Assume, toward a contradiction, that
there exists a $\theta\in(\underline{\theta},\theta^{m}_{s})$ such
that $\qopt(\theta)\neq\qbm(\theta)$. Because $\qopt(y)\leq\qbm(y)$
for all $y\in(\underline{\theta},\theta^{m}_{s})$ (by virtue of Part
2(c) of Proposition \ref{prop:robust-quantity-mechanism-general-cost}), and because $\qbm$ and $\qopt$ are weakly
decreasing and $\qbm$ is continuous, this means that there exists
a set of types $I\subseteq(\underline{\theta},\theta^{m}_{s})$ of
positive Lebesgue measure such that $\qopt(y)<\qbm(y)$ for all $y\in I$.
 Then, let $\widetilde{M}=(\tilde{q},\tilde{u})$ be the mechanism
where the quantity schedule is given by 
\[
\tilde{q}(\theta)=\begin{cases}
q^{{\rm {BM}}}(\theta) & \textrm{if}~~\theta\in[\underline{\theta},\theta^{m}_{s}]\\
q^{{\rm {OPT}}}(\theta) & \textrm{otherwise}
\end{cases}
\]
and where $\tilde{u}(\theta)=\int^{\overline{\theta}}_{\theta}\tilde{q}(y)dy$
for all $\theta$. Clearly, $\widetilde{M}$ is IC and IR. Below,
we show that $\widetilde{M}$ yields a higher payoff to the buyer
than $M^{{\rm {OPT}}}$ and $\widetilde{M}\in\mathcal{M}^{\mathrm{SL}}$,
contradicting the optimality of $M^{{\rm {OPT}}}$.

Because, for any $\theta$, $q^{{\rm {BM}}}(\theta)$ is the unique
maximizer of $V^{\star}(\text{q})-z^{\star}(\theta)\text{q}$, 
\[
\int\limits^{\overline{\theta}}_{\underline{\theta}}\Big[V^{\star}(\tilde{q}(\theta))-z^{\star}(\theta)\tilde{q}(\theta)\Big]F^{\star}(\mathrm{d}\theta)>\int\limits^{\overline{\theta}}_{\underline{\theta}}\Big[V^{\star}(q^{{\rm {OPT}}}(\theta))-z^{\star}(\theta)q^{{\rm {OPT}}}(\theta)\Big]F^{\star}(\mathrm{d}\theta).
\]

We now show that $\tilde{q}$ is such that $\underline{w}_{\tilde{q}}\ge\max\{G^{*}_{s},G^{**}_{s}\}$.
To do so, it suffices to show that $\underline{W}(\theta,\tilde{q})\ge\max\{G^{*}_{s},G^{**}_{s}\}$
for $\theta\le\theta^{m}_{s}$ (in fact, for $\theta>\theta^{m}_{s}$,
$\underline{W}(\theta,\tilde{q})=\underline{W}(\theta,q^{{\rm {OPT}}})\ge\max\{G^{*}_{s},G^{**}_{s}\}$,
where the inequality follows from the fact that $M^{\textsc{OPT}}\in\mathcal{M}^{\textsc{SL}}$).
Thus consider $\theta\in[\underline{\theta},\theta^{m}_{s}]$. For
any such $\theta$, $\tilde{q}(\theta)=\qbm(\theta)=q^{\star}_{s}(\theta)$.
The latter equality follows from Lemma \ref{lem:thetam}, which establishes
that $\theta^{m}_{s}<\theta^{\star}_{s}$. Moreover, 
\begin{align*}
 & \underline{W}(\theta,\tilde{q})-\underline{W}(\theta,q^{\star}_{s})=~~~\int\limits^{\overline{\theta}}_{\theta^{m}_{s}}q^{\star}_{s}(y)dy-\int\limits^{\overline{\theta}}_{\theta^{m}_{s}}q^{{\rm {OPT}}}(y)dy\\
 & \ge_{(i)}\Big[\underline{V}(q^{{\rm OPT}}(\theta^{m}_{s}))-\theta^{m}_{s}q^{{\rm OPT}}(\theta^{m}_{s})\Big]-\Big[\underline{V}(q^{{\rm BM}}(\theta^{m}_{s}))-\theta^{m}_{s}q^{{\rm BM}}(\theta^{m}_{s})\Big]\\
 & ~~~~~+\int\limits^{\overline{\theta}}_{\theta^{m}_{s}}q^{\star}_{s}(y)dy-\int\limits^{\overline{\theta}}_{\theta^{m}_{s}}q^{{\rm {OPT}}}(y)dy=_{(ii)}\underline{W}(\theta^{m}_{s},q^{{\rm {OPT}}})-\underline{W}(\theta^{m}_{s},q^{\star}_{s})\\
 & \ge_{(iii)}\max\{G^{*}_{s},G^{**}_{s}\}-\underline{W}(\theta^{m}_{s},q^{\star}_{s})\ge_{(iv)}\max\{G^{*}_{s},G^{**}_{s}\}-\underline{W}(\theta,q^{\star}_{s}).
\end{align*}
Inequality $(i)$ follows from the fact that $\underline{D}(\theta^{m}_{s})$
maximizes $\underline{V}(\text{q})-\theta^{m}_{s}\text{q}$ over all
$[0,\bar{\text{q}}]$ and $\qbm(\theta^{m}_{s})=\underline{D}(\theta^{m}_{s})$
(Lemma \ref{lem:thetam}). Equality $(ii)$ follows from the fact
that $q^{\star}_{s}(\theta)=\qbm(\theta)$. Inequality $(iii)$ follows
from the fact that $M^{{\rm {OPT}}}\in\mathcal{M}^{\mathrm{SL}}$
which implies that $q^{{\rm {OPT}}}(y)$ satisfies the robustness
constraints $\underline{W}(\theta^{m}_{s},q^{{\rm {OPT}}})\ge\max\{G^{*}_{s},G^{**}_{s}\}$.
Inequality $(iv)$ follows from the definition of $\theta^{m}_{s}$.
Hence, $\underline{W}(\theta,\tilde{q})\geq\max\{G^{*}_{s},G^{**}_{s}\}$.
We conclude that $\widetilde{M}\in\mathcal{M}^{\mathrm{SL}}$ and
yields a higher payoff to the buyer than $M^{{\rm {OPT}}}$, contradicting
the optimality of $M^{{\rm {OPT}}}$. \hfill{}Q.E.D.}

\newpage{} 
\appendix
{
\setcounter{section}{4}
\section{Supplement II}\label{Sec:additional_supplement}
\global\long\def\thesublemma{\thelemma\Alph{sublemma}}%
\global\long\def\thelemma{E.\arabic{lemma}}%
\global\long\def\thesection{E.\arabic{section}}%
\global\long\def\thesubsection{E.\arabic{subsection}}%
\global\long\def\theproposition{E.\arabic{proposition}}%
\global\long\def\thedefn{E.\arabic{defn}}%
\global\long\def\thetheorem{E.\arabic{theorem}}%
\global\long\def\thecorollary{E.\arabic{corollary}}%
\global\long\def\theremark{E.\arabic{remark}}%
\global\long\def\theclaim{E.\arabic{claim}}%
\global\long\def\thefigure{E.\arabic{figure}}%
\global\long\def\theequation{E.\arabic{equation}}%
\global\long\def\theobs{E.\arabic{obs}}%

The notation in this supplement is the same as in the original article.
All sections, definitions, displayed conditions, and results specific
to this document have the prefix ``E'' to avoid confusion with the
corresponding parts in the main text. Section \ref{sec:existence}
establishes existence of robustly optimal mechanisms. Section \ref{sec:undom}
studies the connection between robustly optimal and undominated mechanisms.
Section \ref{sec:compstat} examines comparative statics of robustly
optimal mechanisms with respect to the lowest distribution $\underline{F}$
in the feasible set $\mathcal{F}$ (in the main text such a distribution
is a Dirac measure at $\bar{\theta}$).

\subsection{Existence of robustly optimal mechanisms}

\label{sec:existence}

\begin{proposition}\label{Prop:exist} A robustly optimal mechanism
exists.\footnote{We thank Issan Patri for providing helpful comments and suggestions
on this part.} \end{proposition}

\noindent\textbf{Proof of Proposition \ref{Prop:exist}}. Recall
that a mechanism $M=(q,u)$ is robustly optimal if and only if $u$
satisfies $u(\theta)=\int^{\bar{\theta}}_{\theta}q(s){\rm d}s$ for
all $\theta\in\Theta$, with $q$ solving the following program (where
we use the fact that $F^{\star}$ is absolutely continuous over $\mathbb{R}$,
with density $f^{\star}(\theta)>0$ if and only if $\theta\in\Theta$):
\[
\begin{array}{c}
\max_{q:[\underline{\theta},\overline{\theta}]\rightarrow[0,\bar{{\rm q}}]}\int\limits^{\overline{\theta}}_{\underline{\theta}}\Big[V^{\star}(q(\theta))-z^{\star}(\theta)q(\theta)\Big]f^{\star}(\theta){\rm d}\theta\\
\textrm{s.t. }q~\textrm{weakly decreasing and \ensuremath{\underline{V}(q(\theta))-\theta q(\theta)-\int\limits^{\overline{\theta}}_{\theta}q(y){\rm d}y\ge G^{*}\quad\forall\theta\in\Theta.} }
\end{array}
\]
Hereafter, we refer to the set of schedules $q$ satisfying the restrictions
in the above problem as the ``feasible set''.

Because each $q$ in the feasible set is uniformly bounded, i.e.,
$0\le q(\theta)\le{\rm \bar{q}}$ for all $\theta\in[\underline{\theta},\overline{\theta}]$,
by Helly's selection theorem, the set of weakly decreasing schedules
$q$ is sequentially compact under the point-wise convergence topology.
Furthermore, because $\underline{V}$ is continuous and $q$ is uniformly
bounded, by the dominated convergence theorem, for any $\theta\in\Theta$,
the function $H(q;\theta)$ defined by $H(q;\theta)\equiv\underline{V}(q(\ensuremath{\theta}))-\theta q(\theta)-\int^{\overline{\theta}}_{\theta}q(y){\rm d}y$
is sequentially continuous in $q$. Hence, the feasible set is sequentially
compact under the point-wise convergence topology. That this set is
non-empty is immediate given that the function $q$ satisfying $q(\theta)=\text{q}_{\ell}$
for all $\theta\in\Theta$ meets all the requirements.

Next, observe that the objective function in the above program is
sequentially continuous in $q$. To see this, let $\phi$ be the function
defined, for all $(\theta,{\rm q})\in\Theta\times[0,{\rm \bar{q}}]$
by 
\begin{align*}
\phi(\theta,{\rm q})=\Big[V^{\star}({\rm q})-z^{\star}(\theta){\rm q}\Big]f^{\star}(\theta).
\end{align*}
Note that, given any function $q$ in the feasible set, the value
of the objective function is 
\begin{align*}
\int\limits^{\overline{\theta}}_{\underline{\theta}}\phi(\theta,q(\theta)){\rm d}\theta.
\end{align*}
Clearly, $\phi(\theta,{\rm q})$ is continuous over $\Theta\times[0,{\rm \bar{q}}]$.
By Weierstrass’ theorem, for all $\theta\in\Theta$, $\underline{\phi}(\theta)\equiv\min_{{\rm q}\in[0,{\rm \bar{q}}]}\phi(\theta,{\rm q})$
and $\bar{\phi}(\theta)\equiv\max_{{\rm q}\in[0,{\rm \bar{q}}]}\phi(\theta,{\rm q})$
exist. Furthermore, by the theorem of the maximum, $g(\theta)\equiv\max\{|\underline{\phi}(\theta)|,|\bar{\phi}(\theta)|\}$
is continuous. Because $\Theta$ is compact, $g$ is bounded and hence
integrable over $\Theta.$ Now take a sequence $(q_{n})$ of feasible
schedules converging to $q$ under the point-wise convergence topology.
Then, for every $\theta\in[\underline{\theta},\overline{\theta}]$,
we have that $\lim_{n\rightarrow\infty}\phi(\theta,q_{n}(\theta))=\phi(\theta,q(\theta))$
by continuity of $\phi$ in the second argument. Furthermore, for
each $q_{n}$ in the sequence, we have that $|\phi(\theta,q_{n}(\theta))|\le g(\theta)$
for all $\theta\in\Theta$. Then, by the dominated convergence theorem,
\begin{align*}
\lim_{n\rightarrow\infty}\int^{\overline{\theta}}_{\underline{\theta}}\phi(\theta,q_{n}(\theta)){\rm d}\theta=\int^{\overline{\theta}}_{\underline{\theta}}\phi(\theta,q(\theta)){\rm d}\theta.
\end{align*}
This establishes the sequential continuity of the objective function
under the point-wise convergence topology. Because the range of the
objective function is a subset of $\mathbb{R}$, from the extreme
value theorem, we conclude that the above optimization program has
a solution, i.e., a robustly optimal mechanism exists. This completes
the proof of the proposition. \hfill{}Q.E.D.

\subsection{Undomination and robustness}

\label{sec:undom}

We formally define what it means for a mechanism to be undominated,
and then establish that robustly optimal mechanisms are undominated.

\begin{defn} For any pair of mechanisms $M=(q,u)$ and $\widehat{M}=(\hat{q},\hat{u})$,
$M$ \textbf{dominates} $\widehat{M}$ if, for every $(V,F)\in\mathcal{A}$,\footnote{As in the main text, we assume that $\mathcal{F}$ is the collection
of all cdfs supported on $\Theta$ and that $(V^{\star},F^{\star})\in\mathcal{A}\equiv\mathcal{V}\times\mathcal{F}.$} 
\begin{align*}
\int^{\overline{\theta}}_{\underline{\theta}}\Big[V(q(\theta))-\theta q(\theta)-u(\theta)\Big]F(\mathrm{d}\theta) & \ge\int^{\overline{\theta}}_{\underline{\theta}}\Big[V(\hat{q}(\theta))-\theta\hat{q}(\theta)-\hat{u}(\theta)\Big]F(\mathrm{d}\theta),
\end{align*}
with the inequality strict for some $(V,F)$. A mechanism $\widehat{M}\in\mathcal{M}$
is \textbf{undominated} if there does not exist a mechanism $M\in\mathcal{M}$
that \textbf{dominates} it. \end{defn}

The following lemma points to an internal consistency property of
the set of robustly optimal mechanisms: each robustly optimal mechanism
is either undominated, or it is dominated by another robustly optimal
mechanism.

\begin{lemma} \label{lem:dom1} Suppose $M^{\mathrm{OPT}}=(\qopt,\uopt)$
is a robustly optimal mechanism and $M=(q,u)\in\mathcal{M}$ dominates
$M^{\mathrm{OPT}}$. Then $M$ is a robustly optimal mechanism. \end{lemma}

\noindent\textbf{Proof of Lemma \ref{lem:dom1}}. Because $M^{\mathrm{OPT}}$
is a robustly optimal mechanism, for all $\theta\in\Theta$, $\underline{V}(\qopt(\theta))-\theta\qopt(\theta)-\uopt(\theta)\geq G^{*}$.
Now pick any $\theta\in\Theta$. Because $M$ dominates $\text{\ensuremath{M^{\mathrm{OPT}}}}$,
under $V=\underline{V}$ and $F=\delta_{\theta}$ (where $\delta_{\theta}$
is the Dirac distribution that puts unit point mass at $\theta$),
we have that $\underline{V}(q(\theta))-\theta q(\theta)-u(\theta)\geq\underline{V}(\qopt(\theta))-\theta\qopt(\theta)-\uopt(\theta)$.
Combining the two inequalities, we have that $\underline{V}(q(\theta))-\theta q(\theta)-u(\theta)\geq G^{*}$.
Because the last inequality holds for all $\theta$, $M\in\mathcal{M}^{{\rm SL}}$.

Next, pick any $\theta\in\Theta$. Because $M$ dominates $M^{\mathrm{OPT}}$,
under $V=V^{\star}$ and $F=\delta_{\theta}$, we have that $V^{\star}(q(\theta))-\theta q(\theta)-u(\theta)\geq V^{\star}(\qopt(\theta))-\theta\qopt(\theta)-\uopt(\theta)$,
which implies that 
\begin{align*}
\int\limits^{\overline{\theta}}_{\underline{\theta}}\Big[V^{\star}(q(\theta))-\theta q(\theta)-u(\theta)\Big]F^{\star}(\mathrm{d}\theta)\geq\int\limits^{\overline{\theta}}_{\underline{\theta}}\Big[V^{\star}(\qopt(\theta))-\theta\qopt(\theta)-\uopt(\theta)\Big]F^{\star}(\mathrm{d}\theta).
\end{align*}
Because $M\in\mathcal{M}^{{\rm SL}}$ and $M^{\mathrm{OPT}}$ is robustly
optimal, the above inequality implies that $M$ is also robustly optimal.
In turn, this implies that the above inequality is an equality. This
completes the proof of the lemma. \hfill{}Q.E.D.

\begin{proposition} \label{Prop:dom} If the Baron-Myerson-with-quantity-floor
mechanism $M^{\star}=(q^{\star},u^{\star})$ is robustly optimal,
it is undominated. \end{proposition}

\noindent\textbf{Proof of Proposition \ref{Prop:dom}}. Suppose $M^{\star}$
is robustly optimal and $M=(q,u)$ dominates it. By Lemma \ref{lem:dom1},
$M$ is also robustly optimal. By Corollary \ref{cor:qstaropt} in
the main text, $q(\theta)=q^{\star}(\theta)$ for all $\theta>\underline{\theta}$.
This implies that $u(\theta)=u^{\star}(\theta)$ for all $\theta$.

Consider the pair $(V^{\star},\delta_{\underline{\theta}})$, where
$\delta_{\underline{\theta}}$ is the Dirac distribution that puts
unit mass at $\underline{\theta}$. Then $V^{\star}(q^{\star}(\underline{\theta}))-\underline{\theta}q^{\star}(\underline{\theta})-u^{\star}(\underline{\theta})>V^{\star}(q(\underline{\theta}))-\underline{\theta}q(\underline{\theta})-u(\underline{\theta})$.
The inequality holds because $u(\underline{\theta})=u^{\star}(\underline{\theta})$
and because $q^{\star}(\underline{\theta})\equiv\qbm(\underline{\theta})$
uniquely maximizes surplus $V^{\star}({\rm q})-\underline{\theta}{\rm q}$.
This inequality, however, contradicts the fact that $M=(q,u)$ dominates
$(q^{\star},u^{\star})$. \hfill{}Q.E.D.

Hence, when the conditions in Proposition \ref{prop:BM-floor} in
the main text are satisfied, the Baron-Myerson-with-quantity-floor
mechanism is not only robustly optimal but also undominated. This
result generalizes to other robustly optimal mechanisms, albeit under
a mild condition, which is satisfied by $M^{\star}=(q^{\star},u^{\star})$.
Let 
\[
Q^{\#}(\underline{\theta})\equiv\cup_{V\in\mathcal{V}}\left\{ \arg\max_{{\rm q}\in[0,\ensuremath{\bar{{\rm q}}}]}\left(V({\rm q})-\underline{\theta}{\rm q}\right)\right\} 
\]
denote the set of output levels that, when $\theta=\underline{\theta}$,
deliver the highest total surplus for some $V\in\mathcal{V}$.

\begin{proposition}\label{prop-undomination-gen} Let $M^{\mathrm{OPT}}=(\qopt,\uopt)$
be a robustly optimal mechanism in which $\qopt$ is left-continuous
and, if $\theta^{m}>\tlb$, $\qopt(\underline{\theta})\in Q^{\#}(\underline{\theta})$.
Then $M^{\mathrm{OPT}}$ is undominated. \end{proposition}

\noindent\textbf{Proof of Proposition \ref{prop-undomination-gen}}.
Towards a contradiction, assume that $M^{\mathrm{OPT}}$ is dominated
by another mechanism $\widehat{M}=(\hat{q},\hat{u})$. By Proposition
\ref{Prop:dom}, we then have that $M^{\mathrm{OPT}}\neq M^{\star}$.
Furthermore, by Lemma \ref{lem:dom1}, $\widehat{M}$ is also robustly
optimal.

\noindent\textbf{Case A}. Suppose first that $M^{\star}$ is robustly
optimal, in which case Lemma \ref{lemm-theta_m-star} in the main
text implies that $\theta^{m}=\tub>\tlb$. Because $\widehat{M}$
is also robustly optimal, Corollary \ref{cor:qstaropt} in the main
text then implies that $\qopt(\theta)=\hat{q}(\theta)$ for all $\theta>\underline{\theta}$.
This in turn implies that, for all $\theta$, $\uopt(\theta)=\hat{u}(\theta)$.
Because $\qopt(\underline{\theta})\in Q^{\#}(\underline{\theta})$,
there exists $V\in\mathcal{V}$ such that ${V}(\qopt(\underline{\theta}))-\underline{\theta}\qopt(\underline{\theta})-\uopt(\underline{\theta})>V(\hat{q}{\rm (\underline{\theta})})-\underline{\theta}{\rm \hat{q}(\underline{\theta})-}\hat{u}(\underline{\theta})$,
where the strict inequality follows from the concavity of $V({\rm q)-\tlb{\rm q}}$
in ${\rm q}$. Because Nature can always pair $V$ with a Dirac distribution
selecting $\underline{\theta}$ with probability one, the above inequality
contradicts the assumption that $\widehat{M}$ dominates $M^{\mathrm{OPT}}$.

\noindent\textbf{Case B}. Next, suppose that $M^{\star}$ is not
robustly optimal. Because $M^{\mathrm{OPT}}$ and $\widehat{M}$ are
both robustly optimal and $M^{\star}$ is not, Proposition \ref{prop:robust-quantity-mechanism-general}
in the main text then implies that $\hat{q}$ and $\qopt$ can disagree
only over the set $\{\underline{\theta}\}\cup[\theta^{m},\theta^{\star})$,
with the thresholds $\theta^{m}$ and $\theta^{\star}$ as defined
in the main text.

\begin{lemma}\label{lem:Dqr} Suppose $M^{\mathrm{OPT}}=(\qopt,\uopt)$
is a robustly optimal mechanism such that $\qopt$ is left-continuous
and, if $\theta^{m}>\underline{\theta}$, $\qopt(\underline{\theta})\in Q^{\#}(\underline{\theta})$.
If $M^{\mathrm{OPT}}$ is dominated by $\widehat{M}=(\hat{q},\hat{u})$,
then the set 
\[
\Theta_{\hat{q},\qopt}\equiv\{\theta\in[\theta^{m},\theta^{\star}):\hat{q}(\theta)<\qopt(\theta)\}
\]
has positive Lebesgue measure. \end{lemma}

\noindent\textbf{Proof of Lemma \ref{lem:Dqr}}. Assume, toward a
contradiction, that $\Theta_{\hat{q},\qopt}$ has zero Lebesgue measure.
Then, either (1) $\hat{q}(\theta)\geq\qopt(\theta)$ for all $\theta\in[\theta^{m},\theta^{\star})$,
or (2) $\hat{q}(\theta)<\qopt(\theta)$ only on countably many $\theta\in[\theta^{m},\theta^{\star})$.
Below, we establish a contradiction to the fact that $M^{\mathrm{OPT}}$
is dominated by $\widehat{M}$ in each of these two cases.

\noindent\textbf{Case 1.} Suppose that $\hat{q}(\theta)\geq\qopt(\theta)$
for all $\theta\in[\theta^{m},\theta^{\star})$. If this last inequality
holds as an equality for all $\theta\in[\theta^{m},\theta^{\star})$,
then, because $M^{{\rm OPT}}\neq\widehat{M}$, it must be that $\theta^{m}>\tlb$.
Furthermore, because $\hat{q}(\theta)=\qopt(\theta)={\rm q}_{\ell}$
for all $\theta\ge\theta^{\star}$, $\hat{q}$ and $\qopt$ disagree
only at $\theta=\underline{\theta}$, implying that $\uopt(\theta)=\hat{u}(\theta)$
for all $\theta$. The assumption that $\qopt(\underline{\theta})\in Q^{\#}(\underline{\theta})$
then implies that $\widehat{M}$ does not dominate $M^{\mathrm{OPT}}$,
a contradiction to the assumption it does (the arguments are identical
to those of Case A above). Thus there must exist $\theta'\in[\theta^{m},\theta^{\star})$
such that $\hat{q}(\theta')>\qopt(\theta')$.

First suppose that $\theta^{m}>\underline{\theta}$. Because $\qopt$
is left-continuous and $\hat{q}$ is weakly decreasing, and because
$\hat{q}(\theta)=\qopt(\theta)$ for all $\theta\in(\underline{\theta},\theta^{m})$,
it must be that $\theta'>\theta^{m}$. That $\hat{q}(\theta')>\qopt(\theta')$,
along with the fact that $\qopt$ is left continuous and $\hat{q}$
is weakly decreasing, implies that there exists a subset of $(\theta^{m},\theta^{\star})$
of positive Lebesgue measure over which $\hat{q}(y)>\qopt(y)$. Together
with the fact that $\hat{q}(\theta)\ge\qopt(\theta)$ for all $\theta\ge\theta^{m}$,
this last property implies that $\hat{u}(\theta^{m})>\uopt(\theta^{m})$.
Next, observe that, because $\qopt$ is left continuous, and because
$\qopt(\theta)=\qbm(\theta)$ for all $\theta\in(\tlb,\theta^{m})$
(by virtue of Proposition \ref{prop:robust-quantity-mechanism-general}
in the main text), we have that $\qopt(\theta^{m})=\qbm(\theta^{m})$.
Furthermore, Lemma \ref{lem:qstardlowint} in the main text implies
that $\qbm(\theta^{m})=\underline{D}(\theta^{m})$. Because $\widehat{M}$
is robustly optimal, Lemma \ref{lem:weakqoptbound} in the main text
then implies that $\hat{q}(\theta^{m})\le\qbm(\theta^{m})$. Jointly,
the last three properties imply that $\hat{q}(\theta^{m})\le\qbm(\theta^{m})=\underline{D}(\theta^{m})=\qopt(\theta^{m})$.
This last inequality, together with the fact that $\hat{u}(\theta^{m})>\uopt(\theta^{m})$,
and the fact that $\underline{D}(\theta^{m})$ is the unique maximizer
of $\underline{V}({\rm q})-\theta^{m}{\rm q}$, implies that 
\[
\underline{V}(\hat{q}(\theta^{m}))-\theta^{m}\hat{q}(\theta^{m})-\hat{u}(\theta^{m})<\underline{V}(\qopt(\theta^{m}))-\theta^{m}\qopt(\theta^{m})-\uopt(\theta^{m}).
\]
This means that $W(\widehat{M};\underline{V},\delta_{\theta^{m}})<W(M^{\mathrm{OPT}};\underline{V},\delta_{\theta^{m}})$,
a contradiction to the assumption that $\widehat{M}$ dominates $M^{\mathrm{OPT}}$.

Next suppose that $\theta^{m}=\underline{\theta}$. Claim \ref{claim:qopt_tlb}
(stated and proved below at the end of the proof) establishes that
$\qopt(\underline{\theta})\geq\underline{D}(\underline{\theta})$.
Hence, $\underline{D}(\underline{\theta})\le\qopt(\underline{\theta})\le\hat{q}(\underline{\theta})$,
where the second inequality follows from the fact, by assumption,
$\hat{q}(\theta)\geq\qopt(\theta)$ for all $\theta\in[\theta^{m},\theta^{\star})$.
Now recall that we have established above that there must exist $\theta'\in[\theta^{m},\theta^{\star})$
such that $\hat{q}(\theta')>\qopt(\theta')$.

If $\theta'=\theta^{m}$, because $\theta^{m}=\underline{\theta}$
and because $\underline{D}(\underline{\theta})\le\qopt(\underline{\theta})\le\hat{q}(\underline{\theta})$,
we then have that $\underline{D}(\underline{\theta})\le\qopt(\underline{\theta})<\hat{q}(\underline{\theta})$.
This last property, together with the fact that $\hat{u}(\underline{\theta})\geq\uopt(\underline{\theta})$
(by virtue of the fact that $\hat{q}(\theta)\geq\qopt(\theta)$ for
all $\theta\geq\theta^{m}=\underline{\theta}$) and the property that
$\underline{V}({\rm q})-\underline{\theta}{\rm q}$ is concave in
${\rm q}$ with a unique maximum at $\underline{D}(\underline{\theta})$
then implies that $W(\widehat{M};\underline{V},\delta_{\theta^{m}})<W(M^{\mathrm{OPT}};\underline{V},\delta_{\theta^{m}})$,
which contradicts the assumption that $\widehat{M}$ dominates $M^{\mathrm{OPT}}$.

If, instead, $\theta'>\theta^{m}$, then, as argued above, $\hat{u}(\theta^{m})>\uopt(\theta^{m})$.
This property, together with the fact that $\underline{D}(\underline{\theta})\le\qopt(\underline{\theta})\leq\hat{q}(\underline{\theta})$
and the concavity of $\underline{V}({\rm q})-\underline{\theta}{\rm q}$
in ${\rm q}$ then implies again that $W(\widehat{M};\underline{V},\delta_{\theta^{m}})<W(M^{\mathrm{OPT}};\underline{V},\delta_{\theta^{m}})$,
a contradiction to the assumption that $\widehat{M}$ dominates $M^{\mathrm{OPT}}$.

\noindent\textbf{Case 2.} Now suppose that there exists $\theta'\in[\theta^{m},\theta^{\star})$
such that $\hat{q}(\theta')<\qopt(\theta')$ but the set 
\[
\{\theta\in[\theta^{m},\theta^{\star}):\hat{q}(\theta)<\qopt(\theta)\}
\]
has zero Lebesgue measure. 
Because $\hat{q}(\theta)\geq\qopt(\theta)$ for almost all $\theta\geq\theta^{m}$,
$\hat{u}(\theta')\geq\uopt(\theta')$.

\noindent Suppose $\theta'>\underline{\theta}$. Lemma \ref{lem:weakqoptbound}
in the main text then implies that $\hat{q}(\theta')<\qopt(\theta')\leq\qbm(\theta')\leq D^{\star}(\theta')$.
Along with the concavity of the function $V^{\star}({\rm q})-\theta'{\rm q}$
in ${\rm q}$ (reaching a maximum at $D^{\star}(\theta'$)), these
last series of inequalities implies that 
\[
V^{\star}(\hat{q}(\theta'))-\theta'\hat{q}(\theta')-\hat{u}(\theta')<V^{\star}(\qopt(\theta'))-\theta'\qopt(\theta')-\uopt(\theta').
\]
Hence, $W(\widehat{M};V^{\star},\delta_{\theta'})<W(M^{\mathrm{OPT}};V^{\star},\delta_{\theta'})$,
a contradiction to the assumption that $\widehat{M}$ dominates $M^{\mathrm{OPT}}$.

Finally, suppose that $\theta'=\underline{\theta}$. Because $\theta'\in[\theta^{m},\theta^{\star})$,
this implies that $\theta^{m}=\tlb$. Thus, we have that $\hat{q}(\tlb)<\qopt(\tlb)\le D^{\star}(\tlb)$,
where the last inequality follows from Claim \ref{claim:qopt_tlb}.
The same arguments as above then imply that $W(\widehat{M};V^{\star},\delta_{\theta'})<W(M^{\mathrm{OPT}};V^{\star},\delta_{\theta'})$,
a contradiction to the assumption that $\widehat{M}$ dominates $M^{\mathrm{OPT}}$.

\noindent Combining the two cases, we thus conclude that $\Theta_{\hat{q},\qopt}$
has positive Lebesgue measure.\hfill{}$\blacksquare$

\noindent We now use Lemma \ref{lem:Dqr} to establish a contradiction
to the assumption that $\widehat{M}$ dominates $M^{\mathrm{OPT}}$.
Let 
\[
\theta_{h}\equiv\sup\{\theta\in{\Theta}_{\hat{q},\qopt}\}.
\]
Observe that, because ${\Theta}_{\hat{q},\qopt}$ has positive Lebesgue
measure, $\theta_{h}>\theta^{m}$. Furthermore, by definition, $\theta_{h}\le\theta^{\star}$.
Proposition \ref{prop:robust-quantity-mechanism-general} in the main
text then implies that 
\begin{align}
\qopt(\theta_{h})\le\qbm(\theta_{h})<D^{\star}(\theta_{h}).\label{eq:e1}
\end{align}
The rest of the proof below shows that there exists $\theta$ such
that $W(M^{{\rm OPT}};V^{\star},\delta_{\theta})>W(\widehat{M};V^{\star},\delta_{\theta})$.

First, consider the case in which $\hat{q}(\theta_{h})<\qopt(\theta_{h})$.
Condition (\ref{eq:e1}) then implies that $\hat{q}(\theta_{h})<\qopt(\theta_{h})<D^{\star}(\theta_{h})$.
Because $V^{\star}(\mathrm{q})-\theta_{h}\mathrm{q}$ is concave in
$\mathrm{q}$, with a maximum at $D^{\star}(\theta_{h})$, $V^{\star}(\hat{q}(\theta_{h}))-\theta_{h}(\hat{q}(\theta_{h}))<V^{\star}(\qopt(\theta_{h}))-\theta_{h}\qopt(\theta_{h})$.
Furthermore, because, for all $y>\theta_{h}$, $\hat{q}(y)\geq\qopt(y)$,
$\int^{\bar{\theta}}_{\theta_{h}}\hat{q}(y)dy\geq\int^{\bar{\theta}}_{\theta_{h}}\qopt(y)dy$.
Jointly, these properties imply that $W(\widehat{M};V^{\star},\delta_{\theta_{h}})<W(M^{{\rm {OPT}}};V^{\star},\delta_{\theta_{h}})$,
a contradiction to the assumption that $\widehat{M}$ dominates $M^{{\rm {OPT}}}$.

Next, suppose that $\hat{q}(\theta_{h})=\qopt(\theta_{h})$. The left-continuity
of $\qopt$ along with the fact that $\qopt(\theta_{h})<D^{\star}(\theta_{h})$
(by Condition (\ref{eq:e1})) then imply that there exists $\theta_{\ell}\in(\theta^{m},\theta_{h})$
such that, for all $\theta\in\mathcal{N}\equiv[\theta_{\ell},\theta_{h})$,
$\qopt(\theta)<D^{\star}(\theta_{h})$. Then let $\Delta\equiv\sup_{y\in\mathcal{N}}\big[\qopt(y)-\hat{q}(y)\big]$.
Because $\mathcal{N}\cap\Theta_{\hat{q},\qopt}\neq\emptyset$, $\Delta>0$.
By the definition of $\Delta$, there exists $\theta\in\mathcal{N}$
and $\epsilon>0$, with 
\begin{equation}
\epsilon<\Delta\frac{P^{\star}(\qopt(\theta_{\ell}))-\theta_{h}}{P^{\star}(\qopt(\theta_{\ell}))-\theta^{m}}<\Delta,\label{eq:new-inequality-APP}
\end{equation}
such that $\qopt(\theta)-\hat{q}(\theta)>\Delta-\epsilon$. Note that
the second inequality in (\ref{eq:new-inequality-APP}) follows from
the fact that $\qopt(\theta_{\ell})<D^{\star}(\theta_{h})$ (by virtue
of the fact that $\theta_{\ell}\in\mathcal{N}$) which in turn implies
that $P^{\star}(\qopt(\theta_{\ell}))>\theta_{h}>\theta^{m}$. Moreover,
because $\Delta-\epsilon>0$, $\theta\in\Theta_{\hat{q},\qopt}$,
which implies that $\qopt(\theta)-\hat{q}(\theta)>0$. Next, observe
that 
\begin{align*}
 & W(M^{\mathrm{OPT}};V^{\star},\delta_{\theta})-W(\widehat{M};V^{\star},\delta_{\theta})=\int^{\qopt(\theta)}_{\hat{q}(\theta)}\big(P^{\star}(z)-\theta\big){\rm d}z-\int^{\overline{\theta}}_{\theta}\big(\qopt(y)-\hat{q}(y)\big){\rm d}y\\
 & =\int^{\qopt(\theta)}_{\hat{q}(\theta)}\big(P^{\star}(z)-\theta\big){\rm d}z-\int^{\theta_{h}}_{\theta}\big(\qopt(y)-\hat{q}(y)\big){\rm d}y-\int^{\overline{\theta}}_{\theta_{h}}\big(\qopt(y)-\hat{q}(y)\big){\rm d}y\\
 & \ge\int^{\qopt(\theta)}_{\hat{q}(\theta)}\big(P^{\star}(z)-\theta\big){\rm d}z-\int^{\theta_{h}}_{\theta}\big(\qopt(y)-\hat{q}(y)\big){\rm d}y\\
 & \ge\big(P^{\star}(\qopt(\theta_{\ell}))-\theta\big)(\qopt(\theta)-\hat{q}(\theta))-\Delta(\theta_{h}-\theta)\\
 & \ge\big(P^{\star}(\qopt(\theta_{\ell}))-\theta\big)(\Delta-\epsilon)-\Delta(\theta_{h}-\theta)=\Delta\big(P^{\star}(\qopt(\theta_{\ell}))-\theta_{h}\big)-\epsilon\big(P^{\star}(\qopt(\theta_{\ell}))-\theta\big)\\
 & \ge\Delta\big(P^{\star}(\qopt(\theta_{\ell}))-\theta_{h}\big)-\epsilon\big(P^{\star}(\qopt(\theta_{\ell}))-\theta^{m}\big)>0.
\end{align*}
The first two equalities follow from the definitions of $V^{\star}$,
$W(M^{\mathrm{OPT}};V^{\star},\delta_{\theta})$, and $W(\widehat{M};V^{\star},\delta_{\theta})$.
The first inequality follows from the fact that, by definition of
$\theta_{h}$, $\ensuremath{\qopt(y)\le\hat{q}(y)}$ for all $y>\theta_{h}$.
The second inequality follows from the monotonicity of $\qopt$, the
fact that $\theta>\theta_{\ell}$, and the definition of $\Delta$.
The third inequality follows from the choice of $\theta$. The third
equality is immediate. The fourth inequality follows from the fact
that $\theta>\theta^{m}$. The last inequality follows from the fact
that $\epsilon<\Delta\frac{P^{\star}(\qopt(\theta_{\ell}))-\theta_{h}}{P^{\star}(\qopt(\theta_{\ell}))-\theta^{m}}$.
We thus conclude that $W(M^{\mathrm{OPT}};V^{\star},\delta_{\theta})>W(\widehat{M};V^{\star},\delta_{\theta})$.
This, however, contradicts the assumption that $\widehat{M}$ dominates
$M^{{\rm {OPT}}}$.

Because a contradiction to the assumption that $\widehat{M}$ dominates
$M^{{\rm {OPT}}}$ is reached under all possible scenarios, we conclude
that $M^{{\rm {OPT}}}$ is undominated. \hfill{}Q.E.D.

We now state and prove a claim that we made in the proof of Lemma
\ref{lem:Dqr} above.

\begin{claim}\label{claim:qopt_tlb} Let $M^{{\rm OPT}}\equiv(\qopt,\uopt)$
be a robustly optimal mechanism. (A) Then $\qopt(\tlb)\geq\underline{D}(\tlb)$.
(B) Furthermore, if $\theta^{m}=\tlb$, then $\qopt(\tlb)\le D^{\star}(\tlb)$.
\end{claim}

\noindent\textbf{Proof of Claim \ref{claim:qopt_tlb}} Part (A).
Assume, towards a contradiction, that $\qopt(\tlb)<\underline{D}(\tlb)$.
Then $\qopt(\tlb)<\underline{D}(\tlb)\leq D^{\star}(\tlb)=\qbm(\tlb)$.
Continuity of $\qbm$ and $\underline{D}$, along with the monotonicity
of $\qopt$, then imply that there exists $0<\epsilon<\theta^{\star}-\tlb$
such that, for all $\theta\in[\tlb,\tlb+\epsilon]$, $\qopt(\theta)<\min\{\qbm(\theta),\underline{D}(\theta)\}$.
Then consider the mechanism $\widetilde{M}\equiv(\tilde{q},\tilde{u})$,
where the quantity schedule is given by 
\begin{align*}
\tilde{q}(\theta)=\begin{cases}
\min\{\qbm(\theta),\underline{D}(\theta)\} & \textrm{if}~\theta\in[\tlb,\tlb+\epsilon]\\
\qopt(\theta) & \textrm{otherwise,}
\end{cases}
\end{align*}
and $\tilde{u}(\theta)=\int^{\tub}_{\theta}\tilde{q}(y){\rm d}y$
for all $\theta\in\Theta$. We claim that $\widetilde{M}\in{\mathcal{M}}^{{\rm SL}}$.
To see this, first observe that $\tilde{q}$ is weakly decreasing
because $\qopt$, $\underline{D}$, and $\qbm$ are weakly decreasing,
and $\tilde{q}(\tlb+\epsilon)>\qopt(\tlb+\epsilon)$. Hence, $\widetilde{M}$
is IC and IR. Next, observe that, because $\tlb+\epsilon<\theta^{\star}$,
$\tilde{q}(\tub)=\qopt(\tub)={\rm q}_{\ell}$. Because $\qbm(\tlb)=D^{\star}(\tlb)$,
$\min\{\qbm(\tlb),\underline{D}(\theta)\}=\underline{D}(\theta)$
and hence $\tilde{q}(\tlb)=\underline{D}(\tlb)$. This means that
$\underline{{\rm DWL}}(\tlb,\tilde{q}(\tlb))=0$. Lemma \ref{lemma:majorizations}
in the main text then implies that the robustness constraint for every
$\theta\in[\tlb,\tub)$ reduces to $\int^{\tub}_{\theta}\left(\underline{D}(y)-\tilde{q}(y)\right){\rm d}y\ge0$.
For $\theta>\tlb+\epsilon$, the inequality holds because $\tilde{q}=\qopt$
and $\qopt$ is robustly optimal. For $\theta\in[\tlb,\tlb+\epsilon]$,
we have that 
\[
\int\limits^{\tub}_{\theta}\left(\underline{D}(y)-\tilde{q}(y)\right){\rm d}y=\int\limits^{\tlb+\epsilon}_{\theta}\left(\underline{D}(y)-\tilde{q}(y)\right){\rm d}y+\int\limits^{\tub}_{\tlb+\epsilon}\left(\underline{D}(y)-\qopt(y)\right){\rm d}y\ge0,
\]
where the inequality follows from the fact that the first integral
is non-negative because of the definition of $\tilde{q}$ and the
second integral is non-negative because of the fact that $\qopt$
is a robustly optimal schedule. Together the above properties thus
imply that $\widetilde{M}\in{\mathcal{M}}^{{\rm SL}}$.

Now observe that, by the definition of $\tilde{q}$, for all $\theta\in[\tlb,\tlb+\epsilon]$,
$\qopt(\theta)<\tilde{q}(\theta)\leq\qbm(\theta)$. Because, for all
$\theta$, $V^{\star}({\rm q})-z^{\star}(\theta){\rm q}$ is concave
with a maximum at $\qbm(\theta)$, and because $\tilde{u}(\tub)=0$,
we conclude that $W(\widetilde{M};V^{\star},F^{\star})>W(M^{{\rm OPT}};V^{\star},F^{\star})$,
a contradiction to the robust optimality of $M^{{\rm OPT}}$. Hence,
$\qopt(\tlb)<\underline{D}(\tlb)$, as claimed.

Part (B). Next we establish that, if $\theta^{m}=\tlb$, then $\qopt(\tlb)\le D^{\star}(\tlb)$.
Because $\theta^{m}=\tlb$, Lemma \ref{lemm-theta_m-star} in the
main text implies that $M^{\star}$ is not robustly optimal. Consider
the following two cases: (i) $\theta^{m}=\theta^{\star\star}$, and
(ii) $\theta^{m}<\theta^{\star\star}$.

First, suppose that $\theta^{m}=\theta^{\star\star}$. Then, $\qopt(\theta)=\underline{D}(\theta)$
for all $\theta>\tlb$. Thus, 
\[
S(\tlb;\qopt)\equiv\int\limits^{\tub}_{\theta}\left(\underline{D}(y)-\qopt(y)\right){\rm d}y=0,
\]
and, therefore, 
\[
R(\qopt)=S(\tlb;\qopt)-\underline{{\rm DWL}}(\tlb,\qopt(\tlb))=-\underline{{\rm DWL}}(\tlb,\qopt(\tlb)).
\]
Because $\underline{{\rm DWL}}(\tlb,\qopt(\tlb))\ge0$, and because
$R(\qopt)\geq0$, we conclude that $\underline{{\rm DWL}}(\tlb,\qopt(\tlb))=0$,
which implies that $\qopt(\tlb)=\underline{D}(\tlb)\le D^{\star}(\tlb)$,
as claimed.

Next suppose that $\theta^{m}<\theta^{\star\star}$, and assume, towards
a contradiction, that $\qopt(\tlb)>D^{\star}(\tlb)$. Consider the
mechanism $\widetilde{M}\equiv(\tilde{q},\tilde{u})$, where 
\[
\tilde{q}(\theta)=\begin{cases}
D^{\star}(\theta) & \mbox{if}~\theta=\tlb\\
\qopt(\theta) & \mbox{otherwise}
\end{cases}~~~~~~~~\tilde{u}(\theta)=\int\limits^{\tub}_{\theta}\tilde{q}(y){\rm d}y~~~\forall~\theta\in\Theta.
\]
Note that $\tilde{q}$ is weakly decreasing because $\qopt$ is weakly
decreasing and, for every $\theta\in(\tlb,\theta^{\star})$, $\qopt(\theta)\le\qbm(\theta)<\qbm(\tlb)=D^{\star}(\tlb)$,
where the first inequality follows from Lemma \ref{lem:weakqoptbound},
whereas the second inequality follows from the fact that $\qbm$ is
decreasing. Next, observe that 
\begin{equation}
\underline{V}(D^{\star}(\tlb))-\tlb D^{\star}(\tlb)-\uopt(\tlb)>\underline{V}(\qopt(\tlb))-\tlb\qopt(\tlb)-\uopt(\tlb)\ge G^{*}.\label{eq:qopt_tlb}
\end{equation}
The first inequality follows from the fact that $\qopt(\tlb)>D^{\star}(\tlb)\ge\underline{D}(\tlb)$
along with the the fact that $\underline{V}({\rm q})-\tlb{\rm q}$
is concave in ${\rm q}$ attaining a maximum at $\underline{D}(\tlb)$.
The second inequality holds because $M^{{\rm OPT}}\in\mathcal{M}^{{\rm SL}}$.
Because $\tilde{u}(\tlb)=\int\limits^{\tub}_{\tlb}\tilde{q}(y){\rm d}y=\uopt(\tlb)$
and $\tilde{q}(\tlb)=D^{\star}(\tlb)$, we thus have that 
\[
R(\tilde{q})=\underline{V}(\tilde{q}(\tlb))-\tlb\tilde{q}(\tlb)-\int\limits^{\tub}_{\tlb}\tilde{q}(y){\rm d}y-G^{*}>0.
\]
Furthermore, because for all $\theta>\tlb$, $\tilde{q}(\theta)=\qopt(\theta)$,
and because $\mathcal{M}^{{\rm OPT}}$ is robustly optimal, we have
that $S(\theta;\tilde{q})=S(\theta;\qopt)\ge0$ for all $\theta\in(\tlb,\tub)$
and $\tilde{q}(\tub)={\rm q}_{\ell}$. We thus conclude that $\widetilde{M}\in\mathcal{M}^{{\rm SL}}.$
Finally, by definition of $\widetilde{M}$, we have that $W(\widetilde{M};V^{\star},F^{\star})=W(M^{{\rm OPT}};V^{\star},F^{\star})$.
Thus, $\widetilde{M}$ is also robustly optimal.

Because $\widetilde{M}$ is robustly optimal, the analysis in Section
\ref{subsec:Stronger-Prop-2} of \nameref{Sec:OS} implies that there
exist $\widetilde{\mu}\ge0$ and a weakly increasing, right continuous
function $\widetilde{\Lambda}$, with $\widetilde{\Lambda}(\tlb)=0$,
such that $\tilde{q}$ satisfies the analog of Conditions (\ref{eq:lobj})
and (\ref{eq:cs1}) with $\muopt$ replaced by $\widetilde{\mu}$
and $\Lopt$ replaced by $\widetilde{\Lambda}$. The complementary
slackness conditions along with the fact that $R(\tilde{q})>0$ implies
that $\widetilde{\mu}=0$. Moreover, because $R(\tilde{q})=S(\tlb;\tilde{q})-\underline{{\rm DWL}}(\tlb,\tilde{q}(\tlb))>0$
and because $\underline{{\rm DWL}}(\tlb,\tilde{q}(\tlb))\ge0$, we
have that $S(\tlb;\tilde{q})>0$. Continuity of $S(\cdot;\tilde{q})$
then implies that there exists $\delta>0$ such that $S(\theta;\tilde{q})>0$
for all $\theta\in[\tlb,\tlb+\delta]$. Thus, by the complementary
slackness conditions, we have that $\widetilde{\Lambda}(\theta)=0$
for all $\theta\in[\tlb,\tlb+\delta]$. We thus have that $\widetilde{\Lambda}(\theta)+\widetilde{\mu}=0$
for all $\theta\in[\tlb,\tlb+\delta]$. Because $\tlb=\theta^{m}$,
Observation \ref{obs:leftof_tmin} in \nameref{Sec:OS} then implies
that $\tilde{q}(\theta)=\qbm(\theta)$ for all $\theta\in(\theta^{m},\theta^{m}+\delta]$.
Because $\tilde{q}(\theta)=\qopt(\theta)$ for all $\theta\in(\theta^{m},\theta^{m}+\delta]$,
this implies that $\qopt(\theta)=\qbm(\theta)$ for all $\theta\in(\theta^{m},\theta^{m}+\delta]$.
However, this contradicts the fact that $\qopt(\theta)<\qbm(\theta)$
for all $\theta\in(\theta^{m},\min\{\theta^{\star},\theta^{\star\star}\})$
as established in Section \ref{subsec:Stronger-Prop-2} of \nameref{Sec:OS}.\footnote{Recall that, for any robustly optimal quantity schedule $q$, $\theta^{\star\star}_{q}\equiv\inf\{\theta\in[\theta^{m},\tub]:q(y)=\underline{D}(y)~\forall~y\in[\theta,\tub]\}$.
For $\tilde{q}$ and $\qopt$ the cut-off is the same, that is, $\theta^{\star\star}_{\tilde{q}}=\theta^{\star\star}_{\qopt}=\theta^{\star\star}$.} We thus conclude that $\qopt(\tlb)\le D^{\star}(\tlb)$, as claimed.\hfill{}Q.E.D.

\noindent We conclude by commenting on the assumption that, when $\theta^{m}>\underline{\theta}$,
$\qopt(\underline{\theta})\in Q^{\#}(\underline{\theta})$. Proposition
\ref{prop:robust-quantity-mechanism-general} in the main text implies
that 
\[
\lim_{\theta\downarrow\underline{\theta}}\qopt(\theta)=\qbm(\underline{\theta})=D^{\star}(\underline{\theta})=\arg\max_{{\rm q}\in[0,\ensuremath{\bar{{\rm q}}}]}\{V^{\star}({\rm q})-\underline{\theta}{\rm q}\}.
\]
Because $\qopt$ is weakly decreasing, $\qopt(\underline{\theta})\geq D^{\star}(\underline{\theta})$.
Because $D^{\star}(\underline{\theta})\geq\underline{D}(\underline{\theta})$
and 
\[
\underline{D}(\underline{\theta})=\arg\max_{{\rm q}\in[0,\ensuremath{\bar{{\rm q}}}]}\left\{ \underline{V}({\rm q})-\underline{\theta}{\rm q}-\int^{\bar{\theta}}_{\underline{\theta}}\qopt(y)\mathrm{d}y\right\} 
\]
by choosing $\qopt(\underline{\theta})=D^{\star}(\underline{\theta})$,
the designer can then relax the robustness constraint at $\underline{\theta}$
with no effect on the robustness constraints for any $\theta>\underline{\theta}$
and on the buyer's payoff under the conjectured model. This means
that, when $\theta^{m}>\underline{\theta}$, starting from any robustly
optimal mechanism $M^{{\rm OPT}}$ in which $\qopt$ is left-continuous,
one can construct another robustly optimal mechanism $\mathring{M}^{{\rm OPT}}$
in which $\mathring{q}^{{\rm OPT}}$ is left-continuous and satisfies
$\mathring{q}^{{\rm OPT}}(\underline{\theta})=D^{\star}(\underline{\theta})$.
Because $D^{\star}(\underline{\theta})\in Q^{\#}(\underline{\theta})$,
Proposition \ref{prop-undomination-gen} then implies that the mechanism
$\mathring{M}^{{\rm OPT}}$ is undominated.

\subsection{Non-monotonicity of the procured quantity in $\underline{F}$}

\label{sec:compstat}

As shown in \nameref{Sec:OS}, the quantity procured under robustly
optimal quantity mechanisms depends on the set $\mathcal{F}$ of feasible
technologies only through $F^{\star}$ and $\underline{F}$. In this
section, we show that, as $\underline{F}$ becomes larger (in a sense
made precise below), the quantity procured under robustly optimal
mechanisms may change in a non-monotonic way. To simplify the exposition,
we assume $V^{\star}=\underline{V}$, which implies that Baron-Myerson-with-quantity-bridge
mechanisms of Definition \ref{def-BM-bridge} are robustly optimal.
Below, we analyze how the quantity procured under such mechanisms
changes as $\underline{F}$ changes.

Fix $F^{\star}$ and consider a sequence $(\underline{F}_{n})$ of
cdfs (corresponding to the lowest elements of the set $\mathcal{F}$
of feasible technologies) with the following properties: 
\begin{enumerate}
\item[(a)] for every $n$ there exists ${\underline{\theta}}_{n}\in\Theta$
and $\delta_{n}\ge0$ such that, 
\begin{itemize}
\item[(1)] $\underline{F}_{n}$ is absolutely continuous over $(-\infty,\overline{\theta})$,
with density $\underline{f}_{n}(\theta)>0$ for all $\theta\in[{\underline{\theta}}_{n},\overline{\theta})$;
if there is no atom at $\overline{\theta}$, we let $\underline{f}_{n}(\overline{\theta})>0$
denote the density of $\underline{F}_{n}$ at $\theta=\overline{\theta}$, 
\item[(2)] $\underline{F}_{n}(\theta)=0$ for all $\theta<{\underline{\theta}}_{n}$,
$\underline{F}_{n}(\theta)=1$ for all $\theta\geq\overline{\theta}$, 
\item[(3)] $\lim_{\theta\uparrow\overline{\theta}}\underline{F}_{n}(\theta)=1-\delta_{n}$, 
\end{itemize}
\item[(b)] for every $n$, $\underline{\theta}_{n+1}\ge\underline{\theta}_{n}$,
and for every $\theta\in(\underline{\theta},\overline{\theta})$,
there exists $n$ such that $\theta<\underline{\theta}_{n}<\overline{\theta}$,
\item[(c)] for every $n$, $\delta_{n+1}\geq\delta_{n}$, 
\item[(d)] there exists $\overline{n},\overline{\overline{n}}\in\mathbb{N}\cup\{+\infty\}$
with $\overline{\overline{n}}>\overline{n}$ such that $\underline{\theta}_{n}=\underline{\theta}$
if, and only if, $n\leq\overline{n}$, and $\delta_{n}>0$ if, and
only if, $n\geq\overline{\overline{n}}$. 
\item[(e)] for every $n$, the function $\underline{z}_{n}:[{\underline{\theta}}_{n},\overline{\theta}]\rightarrow\mathbb{R}$
defined by 
\begin{align*}
\underline{z}_{n}(\theta)\equiv\begin{cases}
\theta+\underline{F}_{n}(\theta)/\underline{f}_{n}(\theta) & ~~\textrm{if}~\theta\in[{\underline{\theta}}_{n},\overline{\theta})\\
\overline{\theta}+1/\underline{f}_{n}(\overline{\theta}) & ~~\textrm{if}~\theta=\overline{\theta}~\textrm{and}~\delta_{n}=0\\
\overline{\theta}+(1-\delta_{n})/\delta_{n} & ~~\textrm{if}~\theta=\overline{\theta}~\textrm{and}~\delta_{n}>0
\end{cases}
\end{align*}
is increasing over $[{\underline{\theta}}_{n},\overline{\theta}]$
and continuous over $[{\underline{\theta}}_{n},\overline{\theta})$. 
\item[(f)] for all $\theta\in[\underline{\theta}_{n+1},\overline{\theta}]$,
\begin{equation}
\frac{\underline{F}_{n+1}(\theta)}{\underline{f}_{n+1}(\theta)}\le\frac{\underline{F}_{n}(\theta)}{\underline{f}_{n}(\theta)},\label{eq:rhr}
\end{equation}
and for every $n$ and every $\theta\in[\underline{\theta}_{n},\overline{\theta}]$,
\begin{equation}
\underline{z}_{n}(\theta)<z^{\star}(\theta),
\label{eq:rhr-star}
\end{equation}
which is the case when $\frac{\underline{F}_{n}(\theta)}{\underline{f}_{n}(\theta)}<\frac{F^{\star}(\theta)}{f^{\star}(\theta)}$. 
\end{enumerate}
Figure \ref{fig:sequence} provides an illustration of the sequence
$(\underline{F}_{n})$. Note that property (f) above means that the
technologies are ranked in the reverse-hazard-rate order. The sequence
can thus be interpreted as capturing an increase in the severity of
the buyer's (downside) uncertainty over the technology that determines
the seller's cost.

\begin{figure}
\centering \includegraphics[width=0.7\linewidth]{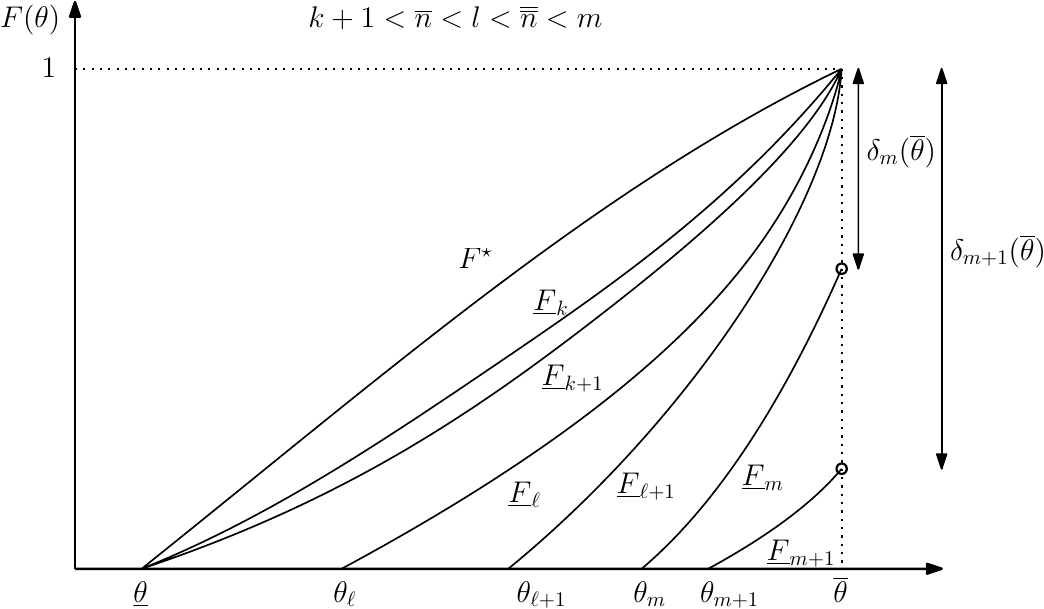}
\caption{Pictorial depiction of the sequence $(\underline{F}_{n})$.}
\label{fig:sequence} 
\end{figure}

Let $\qopt_{n}$ be a robustly optimal quantity schedule when the
lowest technology in $\mathcal{F}$ is $\underline{F}_{n}$. The following
proposition establishes that the quantity procured under any robustly
optimal quantity mechanism is not monotone in the lowest distribution
$\underline{F}_{n}$. This property holds despite the fact that, as
is well known, the Baron-Myerson quantity schedule $\underline{q}^{\textsc{BM}}_{n}$
defined, for all $\theta\in[\underline{\theta}_{n},\overline{\theta})$,
by 
\[
\underline{q}^{\textsc{BM}}_{n}(\theta)\equiv\arg\max_{\text{q}\in[0,\bar{\mathrm{q}}]}\left\{ V^{\star}(\text{q})-\underline{z}_{n}(\theta)\text{q}\right\} 
\]
is increasing in the inverse-hazard rare order: for any $n,n'\in\mathbb{N}$,
with $n'>n$ and any $\theta\geq\underline{\theta}_{n'}$, $\underline{q}^{\textsc{BM}}_{n'}(\theta)\geq\underline{q}^{\textsc{BM}}_{n}(\theta)$.
That is, when the buyer's model over the technology governing the
seller's cost coincides with the distribution $\underline{F}_{n}$,
an increase in the distribution (in the inverse-hazard-rate order)
leads to an increase in the output procured.

\begin{proposition}[Non-mon. of output in cost uncertainty] \label{prop:changes-cost-uncertainty}
Suppose that $V^{\star}=\underline{V}$. Let $(\underline{F}_{n})$
be any sequence of cdfs satisfying properties (a)-(f) above and let
$(M^{\textsc{opt}}_{n})$ be any sequence of mechanisms such that,
for each $n$, $M^{\textsc{opt}}_{n}\equiv(q^{\textsc{opt}}_{n},u^{\textsc{opt}}_{n})$
is a robustly optimal mechanism when the lowest distribution in $\mathcal{F}$
is $\underline{F}_{n}$. Then, for every $\theta\in(\underline{\theta},\overline{\theta})$, 
\begin{enumerate}
\item there exists $n(\theta)\in\mathbb{N}$ such that $\qopt_{n}(\theta)$
is weakly increasing in $n$ (respectively, weakly decreasing in $n$)
over $n\le n(\theta)-1$ (respectively, over $n>n(\theta)$). 
\item there exists $j,k\in\mathbb{N}$ with $j<k$ such that $\qopt_{j}(\theta)>\qopt_{k}(\theta)$. 
\end{enumerate}
\end{proposition}

\begin{figure}
\centering \includegraphics[width=0.7\linewidth]{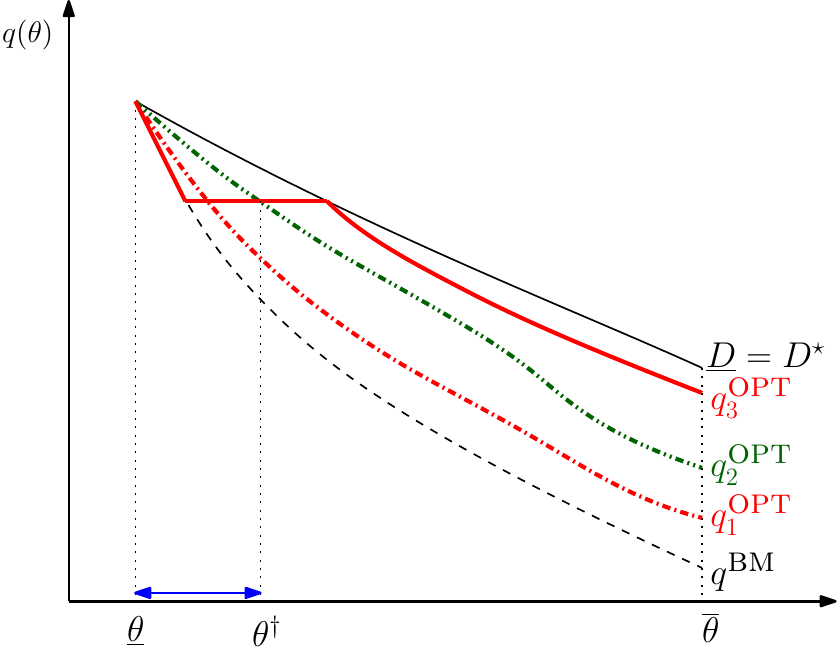}
\caption{Illustration of Proposition \ref{prop:changes-cost-uncertainty}.}
\label{fig:prop_pessimism} 
\end{figure}

Figure \ref{fig:prop_pessimism} illustrates the result in Proposition
\ref{prop:changes-cost-uncertainty}. For any $\theta\in[\underline{\theta},\theta^{\dag}]$,
as the lowest technology changes from $\underline{F}_{1}$ to $\underline{F}_{2}$,
the quantity procured increases. In fact, the robustly optimal quantity
schedule changes from the dash-dotted line to the dash-double-dotted
line. Note that both $\underline{F}_{1}$ to $\underline{F}_{2}$
have support $\Theta$; a reduction in the inverse of reverse hazard
rate then implies a reduction in the value of containing the rents
for the most efficient types, leading to an increase in the output
procured under the optimal mechanism. When the lowest technology changes
from $\underline{F}_{2}$ to $\underline{F}_{3}$, the robustly optimal
quantity schedule changes from the dash-double-dotted line to the
solid line and the quantity procured from types in the range $[\underline{\theta},\theta^{\dag}]$
is reduced. This is because the support of $\underline{F}_{3}$ no
longer contains low-cost types. The buyer can then afford to procure
less output from these types without jeopardizing robustness. Thus,
the quantity procured from types in the range $[\underline{\theta},\theta^{\dag}]$
is not monotone in $n$, equivalently, in the worst possible technology.\medskip{}

\noindent\textbf{Proof of Proposition \ref{prop:changes-cost-uncertainty}.}
Each part below establishes the corresponding part in the proposition.

\noindent\textbf{Part 1}. For any $\theta\in(\underline{\theta},\overline{\theta})$,
let $n(\theta)$ be the largest $n\ge\overline{n}$ such that $\underline{\theta}_{n}\le\theta<\underline{\theta}_{n+1}$.
Existence of such an $n(\theta)$ is guaranteed by Condition (b) in
the definition of the sequence $(\underline{F}_{n})$, which implies
that $\underline{\theta}_{n}\le\underline{\theta}_{n+1}<\overline{\theta}$
and $\lim_{n\rightarrow\infty}\underline{\theta}_{n}=\overline{\theta}$.
For any $n\le n(\theta)-1$, we have that $\theta\in[\underline{\theta}_{n},\overline{\theta}]$
and $\theta\in[\underline{\theta}_{n+1},\overline{\theta}]$. As established
in \nameref{Sec:OS}, $\qopt_{n}(\theta)=\underline{q}^{\textsc{BM}}_{n}(\theta)$
and $\qopt_{n+1}(\theta)=\underline{q}^{\textsc{BM}}_{n+1}(\theta)$.
Condition (\ref{eq:rhr}) in turn implies that $\underline{q}^{\textsc{BM}}_{n}(\theta)\le\underline{q}^{\textsc{BM}}_{n+1}(\theta)$,
that is, $\qopt_{n}(\theta)$ is weakly increasing in $n$ for $n\le n(\theta)-1$.

For any $n>n(\theta)$, $\theta<\underline{\theta}_{n}$, and therefore,
$\qopt_{n}(\theta)=\max\{\qbm(\theta),\underline{D}(\underline{\theta}_{n})\}$.
The quantity $\underline{D}(\underline{\theta}_{n})$ is weakly decreasing
in $n$ because $\underline{\theta}_{n}\le\underline{\theta}_{n+1}<\overline{\theta}$
for every $n$. Consequently, $\qopt_{n}(\theta)$ is also weakly
decreasing in $n$.

\noindent\textbf{Part 2.} To establish the second part of the proposition,
it suffices to exhibit a pair $j,k\in\mathbb{N}$, with $j<k$, such
that $\qopt_{j}(\theta)>\qopt_{k}(\theta)$. To do so, consider the
following two cases.

\noindent\textbf{Case 1.} Suppose $\qbm(\theta)\geq\underline{D}(\underline{\theta}_{n(\theta)+1})$.
Then let $j=n(\theta)$ and $k=n(\theta)+1$, and observe that 
\[
\qopt_{j}(\theta)=\underline{q}^{\textsc{BM}}_{j}(\theta)=\underline{D}\left(\theta+\frac{\underline{F}_{j}(\theta)}{\underline{f}_{j}(\theta)}\right)>\underline{D}\left(\theta+\frac{F^{\star}(\theta)}{f^{\star}(\theta)}\right)=\qbm(\theta)=\qopt_{k}(\theta),
\]
where the inequality follows from (\ref{eq:rhr-star}).

\noindent\textbf{Case 2.} Suppose $\qbm(\theta)<\underline{D}(\underline{\theta}_{n(\theta)+1})$.
Then let $j=n(\theta)+1$ and let $k$ be such that $\underline{\theta}_{k}>\underline{\theta}_{j}$.
Existence of such an $k$ is ensured by Condition (b) in the definition
of the sequence $(\underline{F}_{n})$. Then

\noindent
\[
\qopt_{k}(\theta)=\max\{\qbm(\theta),\underline{D}(\underline{\theta}_{k})\}<\underline{D}(\underline{\theta}_{j})=\qopt_{j}(\theta).
\]
To see this, observe that $\underline{\theta}_{k}>\underline{\theta}_{j}$
implies that $\underline{D}(\underline{\theta}_{k})<\underline{D}(\underline{\theta}_{j})$.
Hence, if $\qopt_{k}(\theta)=\text{\ensuremath{\underline{D}}(\ensuremath{\underline{\theta}_{k}})}$,
then $\qopt_{k}(\theta)=\underline{D}(\underline{\theta}_{k})<\underline{D}(\underline{\theta}_{j})=\qopt_{j}(\theta)$.
If, instead, $\qopt_{k}(\theta)=\qbm(\theta)$, the result follows
from the fact that, by assumption, $\qbm(\theta)<\underline{D}(\underline{\theta}_{j})$.
$\hfill Q.E.D.$

\end{document}